\def\smallsection#1{\smallskip\noindent\textbf{#1}.}
\newcommand{\xqedhere}[1]{%                                           
    \rlap{%                                                           
         \hbox to#1{%                                                 
           \hfil
           \llap{%                                                    
               \ensuremath{\square}
           }%                                                         
       }%                                                             
   }%                                                                 
}
\newcommand{\nwc}{\newcommand}
\nwc{\blds}{\boldsymbol}
\newcommand{\bep}{\blds{\epsilon}}
\newcommand{\hph}{{\gamma}}
\newcommand{\hG}{{ G_0}}
\newcommand{\Op}{{\operatorname{Op}^{{w}}_h}}
\newcommand{\Opt}{{\operatorname{Op}^{{w}}_{\tilde h}}}
\newcommand{\be}{\begin{equation}}
\newcommand{\ee}{\end{equation}}
\newcommand{\ra}{\rangle}
\newcommand{\la}{\langle}
\newcommand{\CI}{{\mathcal C}^\infty }
\newcommand{\CIc}{{\mathcal C}^\infty_{\rm{c}} }
\newcommand{\cH}{{\mathcal H}}
\newcommand{\cG}{{\mathcal G}}
\newcommand{\cK}{\mathcal K}
\newcommand{\cL}{\mathcal L}
\newcommand{\cO}{{\mathcal O}}
\newcommand{\Oo}{{\mathcal O}} 
\newcommand{\cS}{{\mathscr S}}
\newcommand{\cT}{{\mathcal T}}
\newcommand{\cU}{{\mathcal U}}
\newcommand{\VV}{{\mathcal V}}
\newcommand{\CC}{{\mathbb C}}
\newcommand{\NN}{{\mathbb N}}
\newcommand{\IN}{{\mathbb N}}
\newcommand{\RR}{{\mathbb R}}
\newcommand{\IR}{{\mathbb R}}
\newcommand{\TT}{{\mathbb T}}
\newcommand{\ZZ}{{\mathbb Z}}
\newcommand{\ta}{\widetilde a}
\newcommand{\tF}{\varphi_{t_0} }
\renewcommand{\th}{{\tilde h}}
\newcommand{\tN}{\widetilde N}
\newcommand{\tQ}{q}
\newcommand{\tS}{\widetilde S}
\newcommand{\tT}{\widetilde T}
\newcommand{\tU}{\widetilde U}
\newcommand{\tkappa}{{\widetilde \kappa}}
\newcommand{\tchi}{\widetilde \chi}
\newcommand{\tPsi}{\widetilde \Psi}
\newcommand{\tP}{\widetilde P}
\newcommand{\trho}{{\tilde \rho}}
\newcommand{\teta}{{\tilde \eta}}
\newcommand{\ty}{{\tilde y}}
\newcommand{\tw}{{\tilde w}}
\newcommand{\bj}{\boldsymbol{j}}
\newcommand{\bT}{\boldsymbol{T}}
\newcommand{\bu}{\boldsymbol{u}}
\newcommand{\defeq}{\stackrel{\rm{def}}{=}}
\newcommand{\sgn}{\operatorname{sgn}}
\newcommand{\vol}{\operatorname{vol}}
\newcommand{\supp}{\operatorname{supp}}
\renewcommand{\dim}{\operatorname{dim}}
\newcommand{\WF}{\operatorname{WF}}
\newcommand{\WFh}{\operatorname{WF}_h}
\newcommand{\comp}{{\operatorname{comp}}}
\newcommand{\rest}{\!\!\restriction}
\renewcommand{\Re}{\mathop{\rm Re}\nolimits}
\renewcommand{\Im}{\mathop{\rm Im}\nolimits}
\newcommand{\ad}{\operatorname{ad}}
\newcommand{\neigh}{\operatorname{neigh}}
\theoremstyle{plain}
\newtheorem{thm}{Theorem}
\newtheorem{prop}{Proposition}[section]
\newtheorem{cor}[thm]{Corollary}
\newtheorem{lem}[prop]{Lemma}
\theoremstyle{definition}
\newtheorem{rem}[prop]{Remark}%[section]
\numberwithin{equation}{section}
\def\bbbone{{\mathchoice {1\mskip-4mu {\rm{l}}} {1\mskip-4mu {\rm{l}}}
{ 1\mskip-4.5mu {\rm{l}}} { 1\mskip-5mu {\rm{l}}}}}
\def\squarebox#1{\hbox to #1{\hfill\vbox to #1{\vfill}}}
\newcommand{\Id}{{ I}}
\newcommand{\eps}{{\epsilon}}
\newcommand{\vareps}{{\varepsilon}}
\renewcommand{\emptyset}{\varnothing}
\title[Decay of correlations]
{Decay of correlations for normally hyperbolic trapping}
\author[S. Nonnenmacher]
{St\'ephane Nonnenmacher}
\author[M. Zworski]
{Maciej Zworski}
\address{Institut de Physique Th\'eorique\\
CEA/DSM/IPhT, Unit\'e de recherche associ\'ee au CNRS\\
CEA-Saclay\\
91191 Gif-sur-Yvette, France}
\email{snonnenmacher@cea.fr}
\address{Mathematics Department, University of California \\
Evans Hall, Berkeley, CA 94720, USA}
\email{zworski@math.berkeley.edu}
\begin{document}    

\begin{abstract}
We prove that for evolution problems with
normally hyperbolic trapping in phase space, 
correlations decay exponentially in time. Normally
hyperbolic trapping means that the trapped
set is smooth and symplectic and that the flow is 
hyperbolic in directions transversal to it.  
Flows with this structure include 
contact Anosov flows \cite{f-sj},\cite{Tsu10},\cite{Ts}, classical flows in 
molecular dynamics \cite{GeSj2},\cite{GSWW10}, 
and null geodesic flows for black holes metrics \cite{Dya01},\cite{Dya02},\cite{WuZ}. 
The decay of correlations is a consequence of the
existence of resonance free strips for Green's functions
(cut-off resolvents) and polynomial bounds on the growth
of those functions in the semiclassical parameter.

\end{abstract}

\maketitle

\section{Statement of results}
\label{sor}

\subsection{Introduction}
\label{int}

We prove the existence of resonance free strips for general
semiclassical problems with normally hyperbolic trapped sets. 
The width of the strip is related to certain Lyapunov exponents and,
for the spectral parameter in that strip, the Green's function (cut-off resolvent) is polynomially bounded.
Such estimates are closely related to exponential decay of 
correlations in classical dynamics and in scattering problems.
The framework to which our result applies covers both settings. 

To illustrate the results consider 
\begin{equation}
\label{eq:PP}
 P = - h^2 \Delta + V ( x) , \ \ \ V \in \CIc ( \RR^n ; \RR )
.\end{equation}
The classical flow $ \varphi_t : ( x ( 0 ) , \xi ( 0 ) ) \mapsto ( x (
t ) , \xi ( t ) ) $ is obtained by solving Newton's equations $ x' ( t
) 
( t ) = 2 \xi ( t ) $, $  \xi' ( t ) = - \nabla V ( x ( t ) )
$. The trapped set at energy $ E $, $ K_E $,  is defined as the
set of $ ( x, \xi ) $ such that $p(x,\xi)\defeq \xi^2 + V ( x ) = E $ and $
\varphi_t ( x, \xi ) \not \to \infty $, as $ t \to \infty $ and as 
$ t \to - \infty $.

The flow $ \varphi_t $ is said to be {\em normally hyperbolic near
  energy $ E $}, if for some $ \delta > 0 $, 
\begin{gather}
\label{eq:NN}
\begin{gathered}
K^\delta \defeq \bigcup_{ |  E - E' | < \delta } K_{E'}  \ \text{ is a
  smooth symplectic manifold, and } \\
\text{ the flow $\varphi_t $ is hyperbolic in the directions
  transversal to $ K^\delta $,}
\end{gathered}
\end{gather}
see  \eqref{eq:NH} below for a precise definition, and 
\cite{GSWW10} for physical motivation for considering such dynamical 
setting.
A simplest consequence of Theorems \ref{t:1} and \ref{t:2'} is
the following result about decay of correlations.

\begin{thm}
\label{t:0}
Suppose that $ P $ is given by \eqref{eq:PP} and that \eqref{eq:NN}
holds, that is the classical flow is normally hyperbolic near energy $
E $. Then for $ \psi \in \CIc ( ( E - \delta/2 , E + \delta/ 2 ) )$, 
and any $ f , g \in  L^2 ( \RR^n )  $,  with 
$ \| f \|_{L^2 } = \| g \|_{L^2} = 1 $,  $ \supp f , \supp g \subset B
( 0 , R ) $, 
\begin{equation}
\label{eq:t0}  \left| \langle  e^{ - i t P / h } \psi (  P )  f , g \rangle_{
    L^2 ( \RR^n  ) } \right|   \leq  \frac {C_R  \log  ( 1/h) } { h^{1
    + \gamma c_0 }}    e^{ -  \gamma t} +   C_{R, N }  h^N  , \ \ t > 0 , 
\end{equation}
for any $ \gamma < \lambda_0/2 $ and for all $ N $. Here $ \lambda_0 $ and
$ c_0 $ are the same as in \eqref{eq:t1} and $ C_R $, $
C_{R, N } $ are 
constants depending on $ R $ and on $ R $ and $ N $, respectively.
\end{thm}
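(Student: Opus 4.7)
The plan is to derive \eqref{eq:t0} from Theorems~\ref{t:1} and \ref{t:2'} by the standard contour-deformation argument in the spectral parameter. Stone's formula gives
\[
\langle e^{-itP/h}\psi(P)f,g\rangle = \frac{1}{2\pi i}\int_\RR e^{-it\lambda/h}\psi(\lambda)\langle[R(\lambda+i0)-R(\lambda-i0)]f,g\rangle\,d\lambda,
\]
where $R(z)=(P-z)^{-1}$. Since $\supp f,\supp g\subset B(0,R)$, I pick $\chi\in \CIc(\RR^n)$ with $\chi\equiv 1$ on $B(0,R)$, so that the matrix elements above become those of the cut-off resolvent $\chi R(z)\chi$. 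By Theorem~\ref{t:1}, $\chi R(z)\chi$ extends holomorphically from $\{\Im z>0\}$ across $\supp\psi\subset(E-\delta/2,E+\delta/2)$ to the strip $\{\Im z>-\gamma h\}$ (for any $\gamma<\lambda_0$ and $h$ small enough), and by Theorem~\ref{t:2'} it satisfies $\|\chi R(z)\chi\|_{L^2\to L^2}\le C h^{-1-\gamma c_0}\log(1/h)$ throughout.

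Next, split the integral as $I_+(t)-I_-(t)$ according to the two boundary values, and pick an almost analytic extension $\tilde\psi\in \CIc(\CC)$ of $\psi$ with $\bar\partial\tilde\psi(z)=\mathcal{O}(|\Im z|^\infty)$, supported in a small complex neighborhood of $\supp\psi$. Green's theorem deforms the contour for $I_+$ from $\RR$ to $\Gamma=\{\Im z=-\gamma h\}\cap\supp\tilde\psi$, joined to $\RR$ by short vertical segments outside $\supp\psi$:
\[
I_+(t) = \frac{1}{2\pi i}\int_\Gamma e^{-itz/h}\tilde\psi(z)\langle\chi R(z)\chi f,g\rangle\,dz - \frac{1}{\pi}\iint_\Omega e^{-itz/h}\bar\partial\tilde\psi(z)\langle\chi R(z)\chi f,g\rangle\,dm(z),
\]
with $\Omega$ the thin strip between $\RR$ and $\Gamma$. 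On $\Gamma$ one has $|e^{-itz/h}|=e^{-\gamma t}$, so the contour term, bounded using the resolvent estimate of Theorem~\ref{t:2'}, produces exactly the leading contribution in \eqref{eq:t0}. The area integral is $O(h^N)$ for every $N$, because $|\bar\partial\tilde\psi(z)|\le C_N|\Im z|^N\le C_N(\gamma h)^N$, $|\Omega|=O(h)$, $|e^{-itz/h}|\le 1$, and the resolvent has only polynomial growth in $h^{-1}$.

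The term $I_-(t)$ is handled by the identity $R(\lambda-i0)=R(\lambda+i0)^*$ (self-adjointness of $P$). After complex conjugation and swapping $f$ and $g$, the integrand involves $e^{it\lambda/h}\langle\chi R(\lambda+i0)\chi g,f\rangle$; deforming the contour \emph{upward} to $\Im z=\gamma h$, where $\|R(z)\|\le 1/\Im z$, bounds this piece by $Ce^{-\gamma t}/h$, which is dominated by the main term already obtained.

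The main technical subtlety I expect to wrestle with is the matching of cut-offs: the $\chi$ dictated by $\supp f,\supp g$ is a position-space cut-off, whereas the resolvent estimates in Theorems~\ref{t:1}--\ref{t:2'} are likely stated with cut-offs microlocally adapted to the trapped set $K^\delta$. The natural remedy is a microlocal partition of unity adapted to $K^\delta$: away from $K^\delta$ the flow is non-trapping, so standard non-trapping resolvent bounds of size $h^{-1}$ hold in a full strip $\Im z>-Ch$, and after deformation these contributions combine with the infinite-order vanishing of $\bar\partial\tilde\psi$ to produce only $C_{R,N}h^N$ errors of the required form.
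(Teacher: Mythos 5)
Your proposal follows essentially the same route as the paper: Stone's formula for $\psi(P)e^{-itP/h}$, an almost-analytic extension $\tilde\psi$, and Green's-formula deformation of the boundary values into the strip $\{-\gamma h\le\Im z\le 0\}$, with the contour term controlled by the cut-off resolvent bound of Theorem~\ref{t:2'} on $\Im z=-\gamma h$ and the $\bar\partial_z\tilde\psi$ area term contributing $\mathcal{O}(h^\infty)$. Two minor remarks: the paper keeps the $R(\lambda-i0)$ piece on the same downward contour and simply uses $\|\chi R_-(z)\chi\|\le C/|\Im z|$ there (equivalent to your upward deformation after conjugation); and the cut-off mismatch you flag at the end is not actually an issue, since Theorem~\ref{t:2'} is already stated for spatial cut-offs $\chi\in\CIc(Y)$ (the Datchev--Vasy gluing is done inside its proof), so no additional microlocal partition of unity is required.
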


This means that the correlations decay rapidly in the semiclassical
limit: we start with a state localized in space (the support
condition) and energy, $ \psi ( P ) f $, propagate it, and test it
against another spatially localized state $ g$.
The estimate \eqref{eq:t0}  is a consequence of the existence of 
a band without scattering resonances and estimates on cut-off
resolvent given in Theorem \ref{t:1}.
When there is no trapping, that is when $ K_E = \emptyset $,  then 
the right hand side in \eqref{eq:t0} can be replaced by 
$ {\mathcal O} ( (h/t)^\infty ) $, provided that $ t > T_E $, for some
$ T_E $ -- see for instance \cite[Lemma 4.2]{NSZ}. On the other
hand when strong trapping is present, for instance when the 
potential has an interaction region separated from infinity by a 
barrier, then the correlation does not decay -- see \cite{NSZ} and references given there.

More interesting quantitative results can be obtained for
the wave equation or for decay of classical correlations: see \S \ref{mot} for motivation and 
\cite[Theorem 3]{WuZ} and Corollary~\ref{t:4} below for examples. When 
the outgoing and incoming sets at energy $E$,
\[  \Gamma_E^\pm \defeq \{ ( x , \xi ) : p(x,\xi)=E,\ 
\varphi_t ( x, \xi ) \not \to \infty ,  t \to \mp \infty \} , \]
 are
sufficiently regular and of codimension one, Theorem \ref{t:0} and Theorem \ref{t:1} below (without the specific constant
$ \lambda_0 $) are already a  consequence of 
earlier work by Wunsch--Zworski 
\cite[Theorem 2]{WuZ}\footnote{Recently Dyatlov \cite{Dya2} provided a much simpler proof of 
that result, including the optimal size of the gap
established in this paper and the optimal resolvent bound $ o ( h^{-2} )$, for smooth and orientable stable 
and unstable manifolds.}
and, in the case of closed trajectories, Christianson \cite{Chr1},\cite{Chr2}.
For a survey of other recent results
on resolvent estimates in the presence of weak trapping 
we refer to \cite{Wu}.

When normal hyperbolicity is strengthened
to $ r $-normal hyperbolicity for large $ r $ (which implies that $
\Gamma_E^\pm $ are $ C^r $ manifolds) and provided a certain
pinching condition on Lyapunov exponents is satisfied, much 
stronger results have been obtained by Dyatlov \cite{Dya1}. 
In particular, \cite{Dya1} provides an asymptotic counting law for scattering
resonances {\em below} the band without resonances given in 
Theorems \ref{t:1}, \ref{t:3} and \ref{t:2'}. It shows the optimality 
of the size of the band in a large range of settings, for instance, 
for perturbations of Kerr--de Sitter black holes.

Similar results on asymptotic counting laws in strips have
  been proved by Faure--Tsujii
in the case of Anosov diffeomorphisms \cite{f-ts}, and recently
announced in the case of contact Anosov flows
\cite{f-ts2}. In the latter situation, described in
Theorem~\ref{t:3} below, 
the trapped set is a normally hyperbolic smooth symplectic manifold,
but the dependence of the stable and unstable subspaces on points
on the trapped set is typically nonsmooth, but $C^1$ or
H\"older continuous (see Remark \ref{rem:regularity} below). 
For compact manifolds of constant negative curvature 
Dyatlov--Faure--Guillarmou \cite{DFG} have provided a precise
description of Pollicott--Ruelle resonances in terms of 
eigenvalues of the Riemannian Laplacian acting on section
of certain natural vector bundles.

In this paper we do {\em not} assume any regularity on $ \Gamma_E^\pm $ 
and provide a quantitative estimate on the resonance free strip.
For operators with analytic coefficients this result was already 
obtained by G\'erard--Sj\"ostrand \cite{GeSj2} with even
weaker assumptions on $ K^\delta $. A new component
here, aside from dropping the analyticity assumption,  is the
polynomial bound on the Green's function/resolvent that
allows applications to the decay of correlations.

The proof is given first for an operator with a complex
absorbing potential. This allows very general assumptions
which can then be specialized to scattering and dynamical 
applications.

Finally we comment on the comparison between the resonance free regions in this paper and the results of \cite{NZ3,NZ4} where the
existence of a resonance free strip was given for {\em hyperbolic} trapped
sets, provided a certain {\em pressure condition} was satisfied. 
In the setting of \cite{NZ3} the trapped set is typically very 
irregular but, the assumptions of \cite{NZ3}  also include the situation where
$ K^\delta $ is a smooth symplectic submanifold, and  the flow is 
hyperbolic both transversely to $K^\delta$ and along each $ K_E $. 
In that case the resonance gap obtained in \cite{NZ3} involves a
topological pressure associated with the full (that is, longitudinal and
transverse) unstable Jacobian, namely 
\be\label{e:P(1/2)}
\mathcal{P}\big(-\frac12 (\log J^+_\parallel + \log J^+_\perp)\big)=\sup_{\mu}\Big(H(\mu)-\frac12\int  (\log J^+_\parallel + \log J^+_\perp)\,d\mu \Big)\,,
\ee
where the supremum is taken over all flow-invariant probability measures on $K^\delta$ and $H(\mu)$ is the Kolmogorov--Sinai entropy of the measure $\mu$ with respect to the flow. The bound is nontrivial only if this pressure is negative.
In the case of mixing Anosov flows discussed in \S\ref{decay} the transverse
and longitudinal unstable Jacobians are equal to each other; the
above pressure is then equal to the pressure $\mathcal{P}(-\log J^+_\parallel)$, equivalent with the pressure $\mathcal{P}(-\log J^u)$ of the Anosov flow,
which is known to vanish \cite[Proposition 4.4]{BoRu75}, and hence 
gives only a trivial bound. For
this situation, our spectral bound (Theorem~\ref{t:3}) is thus sharper than the pressure bound. 
On the other hand, one can construct 
examples where the longitudinal and transverse unstable Jacobians are 
independent of one another, and such that the pressure \eqref{e:P(1/2)} is more negative - hence sharper - than the value $-\lambda_0$ given in \eqref{eq:t11}, which may be expressed as
$-\lambda_0= \sup_{\mu} \Big(- \frac12 \int  \log J^+_\perp\,d\mu\Big)$.

\smallsection{Notation} We use the following notation $ g =  \cO_k ( f )_V $ means that
$ \|g \|_V  \leq C_k  f $ where the norm (or any seminorm) is in the
space $ V $, an the $ C_k $ depends on $ k$. When either $ k $ or
$ V $ are dropped then the constant is universal or the estimate is
scalar, respectively. When $ F = \cO_k ( f )_{V\to W} $ then 
the operator $ F : V \to W $ has its norm bounded by $ C_k f $.

\subsection{Motivation}
\label{mot} 
To motivate the problem we consider the following elementary 
example. Let $ X = \RR $ and $ P = - \partial_x^2 $. A
wave evolution is given by 
$  U ( t ) \defeq   { \sin (\sqrt P t)  }/ { \sqrt P} $.
Then for $ f , g \in \CIc ( \RR) $ and any time $t\in\IR$ we define the wave {\em correlation
function} as 
\begin{equation}
\label{eq:LP0} 
C ( f , g ) ( t ) \defeq 
  \int_\RR  [U ( t ) f](x)\, g(x) \, dx 
\end{equation}
In this 1-dimensional setting, the correlation function becomes very
simple for large times. Indeed, for a certain $T>0$ depending on the
support of $ f $ and $ g $, it satisfies
$$
\forall t\geq T,\qquad C ( f , g ) ( t ) = \frac12 \int_{\IR} f(x)\,\int_{\IR} g(x)\, dx 
$$
This particular behaviour is due to the fact that the resolvent
of $ P $,
\[ R ( \lambda ) \defeq ( P - \lambda^2 )^{-1} : L^2 ( \RR) \to L^2 ( \RR
) , \ \ \Im \lambda > 0 , \]
continues meromorphically to $ \CC $ in $ \lambda $ as an operator
$ L^2_{\rm comp} \to L^2_{\rm loc} $ and has a pole at $ \lambda = 0 $. 
In this basic case we see this from an explicit formula, 
\[   [R ( \lambda ) f] ( x) = \frac{ i } { 2 \lambda} \int_\RR e^{ i
  \lambda | x - y | } f( y ) dy . \]

More generally, we can consider $ P = - \partial^2_x  + V ( x ) $, $ V
\in L_{\rm{c}} ^\infty ( \RR ) $,  with $ V \geq 0 $, for simplicity. 
 With the same definition of $  U ( t ) $ we now have the Lax--Phillips
 expansion generalizing \eqref{eq:LP0}:
\begin{equation}
\label{eq:LP}
 C ( f , g ) ( t ) = \int_\RR U ( t ) f\, g \, dx  = \sum_{ \Im \lambda_j > - A } 
e^{ - i \lambda_j t } \int_\RR f\, u_j \, dx\, \int_\RR g\, u_j \, dx  + 
{\mathcal O} ( e^{ - A t } ) , \end{equation}
where $ \lambda_j $ are the poles of the meromorphic 
continuation of $ R ( \lambda ) = ( P - \lambda^2)^{-1} $
(for simplicity assumed to be simple), and $ u_j $ are
solutions to $ ( P - \lambda_j^2 ) u_j = 0 $ satisfying
$ u_j ( x ) = a_{\sgn{x} } e^{ i \lambda | x| } $ for $ | x| \gg 1 $.
Since $ u_j $ are 
not in $ L^2 $ their normalization is a bit subtle: they appear
in the residues of $ R ( \lambda ) $ at $ \lambda_j $.

\begin{figure}
\begin{center}
\includegraphics{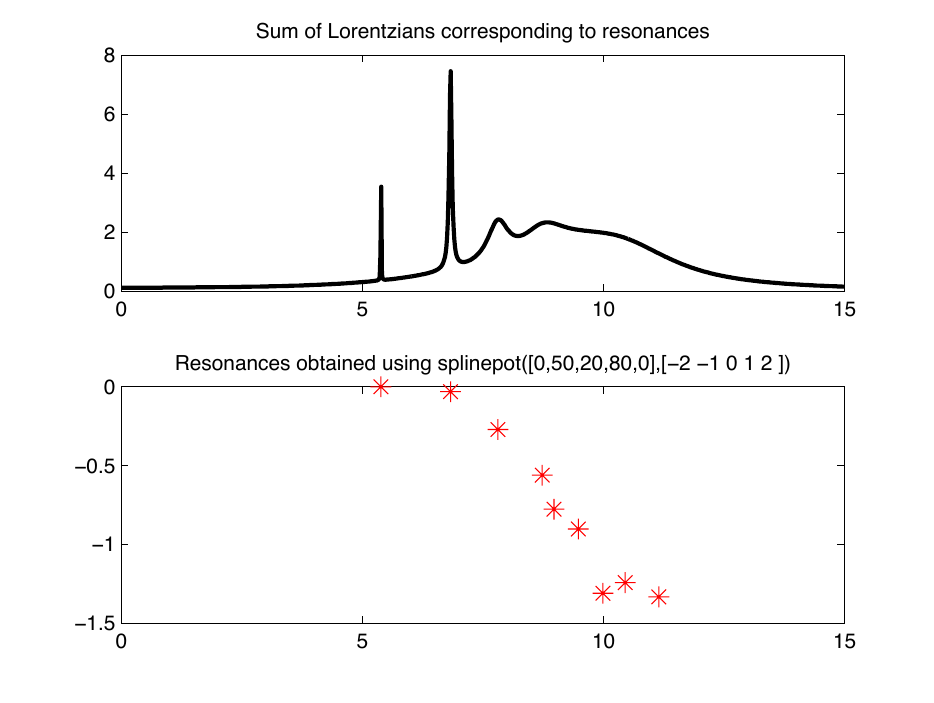}
\end{center}
\caption{The effect of resonances on on the Fourier transform
of correlations as described in \eqref{eq:FTC}. The resonances are
computed using the code {\tt scatpot.m} \cite{BZ}.}
\end{figure}

The expansion \eqref{eq:LP} makes sense since the number
of poles of $ R ( \lambda ) $ with $ \Im \lambda > - A $ is 
finite for any $ A $. If we define $ C ( f , g ) $ to be $ 0 $ for $
t \leq 0 $, the Fourier transform of \eqref{eq:LP} gives (provided $ 0 $ is not a pole of 
$ R ( \lambda ) $), 
\begin{gather}
\label{eq:FTC}
\begin{gathered}
\widehat{ C ( f, g ) } ( - \lambda ) =
  \sum_{ \Im \lambda_j > - A }  \frac{ c_j} { \lambda_j - \lambda } 
+ {\mathcal O} \left(\frac 1  A \right), \ \ \
c_j \defeq  - i  \int_\RR f u_j \, dx \, \int_\RR g u_j \, dx .
\end{gathered}
\end{gather}
The Lorentzians 
\[ \frac{ |\Im \lambda_j |}{  | \lambda - \lambda_j |^2 }  = - 2 \Im \frac 1 {
  \lambda_j - \lambda } , \]
 peak
at $ \lambda = \Re \lambda_j $ and are more pronounced for $ \Im
\lambda_j $ small. This stronger response in the spectrum of
correlations is one of the reasons for calling $ \lambda_j $ (or 
$ \lambda_j^2 $) {\em scattering resonances}. 

In more general situations, to have a finite expansion of type
\eqref{eq:LP}, 
modulo some exponentially decaying error $ {\mathcal O} ( e^{ - \gamma
  t} ) $, we need to know that the number of poles of $ R ( \lambda ) $
is finite in a strip $ \Im \lambda > - \gamma $.  Hence exponential
decay of correlations is closely related to {\em resonance free
  strips}.

This elementary example is related through our approach to 
recent results of Dolgopyat \cite{Dol}, Liverani \cite{Liv}, and Tsujii
\cite{Tsu10},\cite{Ts} on the decay of correlations in classical 
dynamics. 

Let $ X $ be a compact contact manifold of (odd) dimension $
n$, and let $ \gamma_t $ be an Anosov flow on $ X $ preserving 
the contact structure -- see \S \ref{decay} for details. 
The standard example is the geodesic flow on the cosphere 
bundle $ X = S^* M $, where $ ( M ,g ) $ is a smooth negatively curved
Riemannian manifold. Let $ U( t ) : \CI ( X ) \to \CI ( X ) $ 
be defined by $ U ( t ) f = \gamma_t ^*f = f\circ \gamma_t $ and let $ d x $ be 
the measure on $X$ induced by the contact structure and normalized so that 
 $ \vol ( X ) = 1 $. The results
of \cite{Dol},\cite{Liv} show that, for any test functions $f,g\in
C^\infty(X)$, the correlation function satisfies the following
asymptotical behavior for large times:
\begin{equation}
\label{eq:LP1} 
C ( f , g ) ( t ) \defeq 
  \int_X  [U ( t ) f](x)\, g(x) \, dx =   \int_X  f \, dx \,
\int_X  g \, dx + \cO ( e^{ - \Gamma t } ) ,  \ \ t \to \infty \,,
\end{equation}
and the exponent $\Gamma$ is independent of $f,g$.
In other words, the Anosov flow is exponentially mixing with respect
to  the invariant measure $dx$.

From the microlocal point of view of Faure--Sj\"ostrand \cite{f-sj},
this result is related to a resonance free strip for the generator of
the  flow $ \gamma_t $.  The resonances in this setting are called
{\em Pollicott--Ruelle resonances}.

In this paper we consider general semiclassical operators 
modeled on $ P $ given in \eqref{eq:PP}, for which 
the classical flow has a normally hyperbolic trapped set. 
Schr\"odinger operators for which \eqref{eq:NN} holds 
appear in molecular dynamics --- see the recent review \cite{GSWW10}
for an introduction and references. In particular, \cite[Chapter
5]{GSWW10} discusses the resonances in some model cases 
and the relation between the size of the resonance free strip
and the transverse Lyapounov exponents. As reviewed in 
\S \ref{decay}, the setting can be extended such as to include the
generator of the Anosov flow of \eqref{eq:LP1}, namely the operator $ P(h) $ on $ X $
such that $ U ( t ) = \gamma_t^*  = \exp ( - i t P / h ) $.

%%%%%%%%%%%%%%%%%%%%%%%%%%%%%%%%%%%
\subsection{Assumptions and the result}
\label{aatr}
The general result, Theorem \ref{t:1},  is proved for operators modified using a {\em
  complex absorbing potential} (CAP). Results about such 
operators can then be used
for different problems using {\em resolvent gluing techniques}
of 
Datchev--Vasy
\cite{DatVas10_1}
--- 
see Theorems \ref{t:2} and \ref{t:3}. The assumptions on the manifold
$ X $, operator $ P$,
and the complex absorbing potential may seem unduly general, they are justified by the 
broad range of applications.

Let $ X $ be a smooth compact manifold with a density $ dx $ and let
 \[  P = P ( x, h D ) \in \Psi^m ( X ) , \ \ m > 0 , 
 \]
be an unbounded self-adjoint semiclassical pseudodifferential operator 
on $ L^2 (X , dx) $ 
(see \S \ref{10c} and \cite[\S 14.2]{e-z} for background and notations), with 
principal symbol $p(x,\xi)$ independent of $ h $. Let 
\[  W = W ( x, h D ) \in \Psi^k ( X ) 
, \quad \  0 \leq
k \leq m , \quad \ W \geq 0 , \]
be another operator, also self-adjoint and with $h$-independent principal symbol $w(x,\xi)$, 
which we  call a (generalized) complex absorbing potential (CAP). 
We should stress that $ W $ plays a purely {\em auxiliary} role 
and can be chosen quite freely.

If the principal symbols $ p ( x, \xi ) \in S^m ( T^* X )  $ and $ w ( x , \xi ) \in S^k (
T^*X )  $, we assume that, for some fixed $ C_0>0 $ and for any phase space point
$( x, \xi )\in T^*X$,
\begin{equation}
\label{eq:CAPe} 
\begin{gathered}
  |  p ( x, \xi ) - i w ( x , \xi ) | \geq \langle \xi \rangle^m /C_0
- C_0 \,, \qquad
 1 + w ( x, \xi )  \geq \langle \xi \rangle^k / C_0 , \\
 \exp ( t H_p ) ( x, \xi ) \text{ is defined for all $ t \in \RR $.} 
\end{gathered}
\end{equation}
Here, for $ \xi \in T_x^* X $ we have denoted $  \langle \xi \rangle ^2 = 1  +\| \xi
\|^2_x $  for some smoothly varying metric on $ X $, $x \mapsto \| \bullet
\|_x^2 $, and by $ H_p $ the Hamilton vector field of $ p $. The map
$\exp(t H_p):T^*X\to T^*X$ is the corresponding flow at time $t$.
This flow will often be denoted by
$\varphi_t$, the Hamiltonian $p(x,\xi)$ being clear from the context.

For technical reasons (see Lemma~\ref{l:iw}) we will need an additional smoothness assumption on $ w $:
\begin{equation}
\label{eq:aux}   | \partial^\alpha w ( x  , \xi ) | \leq C_\alpha w ( x , \xi )^{
  1 - \gamma }  , \ \  \ 0 < \gamma < \textstyle{\frac12} ,  \end{equation}
when $ w ( x, \xi) \leq 1$. This can be easily arranged and is invariant under
changes of variables.

We call the operator
\begin{equation}
\label{eq:CAP}
\widetilde P  = P - i W \in \Psi^m ( X ), 
\end{equation}
the CAP-modified $ P $. 
The condition \eqref{eq:CAPe} means that the CAP-modified $ P $ is classically elliptic and that for any fixed $
z \in \CC $  
\[ \{ ( x , \xi ) \; : \; \widetilde p(x,\xi) - z = p(x,\xi) - i w(x,\xi) - z  = 0 \}
\Subset T^* X . \]
We define the {\em trapped set} at energy $ E $ as  
\begin{equation}
\label{eq:trapped} K_E \defeq \{ \rho=( x, \xi ) \; : \; \rho \in p^{-1} ( E ) , \ \
\varphi_{\RR} (\rho) \subset  w^{-1} (  0 )  \}\,. 
\end{equation} 
$ K_E $ is compact and consists of  points in $ p^{-1} ( E ) $ which never
reach the {\em damping region}  $\{\rho\in T^*X\,:\, w(\rho) > 0\} $ in backward or forward
propagation by the flow $ \varphi_t$.  

We illustrate this setup  with two simple examples: 

\medskip
\noindent
{\bf Example 1.} Suppose that $ P_0 = -h^2 \Delta + V $, $ V \in \CIc
( \RR^n ; \RR ) $, $ \supp V \Subset B ( 0 , R_0 ) $.  Define the
torus $ X = 
\RR^n / ( 6 R_0 \ZZ )^n $, and $ W \in \CI ( X; [ 0 , \infty ) ) $,  satisfying
\[  
W (x) = 0 , \quad  x \in  B  ( 0 , R_0 ), \qquad  W( x )  = 1 , \quad
 x \in X \setminus B ( 0 , 2 R_0 ) , \quad \partial^{\alpha } W = 
{\mathcal O}_\alpha ( W^{2/3 } ), 
\]
(here we identified the balls in $ \RR^n $ with subsets of the
torus).  The last condition can be arranged by 
taking $ W( x) = \chi ( |x|^2 - R_0^2 )  \psi( x) $
 where $ \chi ( x ) = \exp ( - x^{-1}) \bbbone_{\IR_+}(x) $, and $ \psi 
\in \CI ( X , ( 0 , \infty ) ) $ is suitably chosen. The power of $ W
$ on the right hand side can be any number greater than $ \frac12$.

Because of the support properties of $ V $, $ P \defeq -h^2 \Delta
+ V \in \Psi^2 ( X ) $ and $ P - i W $ satisfy all the properties 
above. The trapped set $ K_E $ can be identified with a subset
of $ T^*_{ B( 0 , R_0) } \RR^n $ and is then equal to the trapped set of
scattering theory:
\[  
K_E = \{ ( x , \xi ) \in T^* \RR^n \; : \;  \xi^2 + V ( x ) = E, \
\  x(t) \not \to \infty , \ \ t \to \pm \infty\}\,.
\]

\begin{rem}
\label{rem:NHIM}Normally hyperbolic trapped sets occur 
in the semiclassical theory of chemical reaction dynamics, where they are usually called
Normally Hyperbolic Invariant Manifolds (NHIM). They are of fundamental importance to quantitatively understand the kinetics of the chemical reaction. See for instance \cite{uzer-etal02} for a description of the classical phase
space structure, and \cite{GSWW10} and references given there
for the adaptation to the quantum framework. 
The focus there is on examples for which the Hamiltonian flow 
exhibits a 
\[ \text{saddle $\times$ saddle $\times $ \ldots $\times$ center \ldots 
$\times$ center} \]
fixed point: after an appropriate linear symplectic change of coordinates, the quadratic expansion of the Hamiltonian $p(x,\xi)$ near the fixed point (set at the origin) reads as:  
$$
p_{\rm{quad}}(x,\xi)= \frac{1}{2}
\sum_{i=1}^{d- d_\perp }(\xi_i^2 + \omega_i^2x_i^2) + \sum_{i= 
d-d_\perp+1}^d \frac{1}{2}(\xi_i^2 - \lambda_i^2 x_i^2)\,.
$$
For this quadradic model the NHIM at a positive energy $E>0$\footnote{For the distribution of resonances at the fixed point energy
$ E= 0 $ see
\cite{KaKe} and \cite{SjLNM}.}, is given by 
\[ p^{-1}(E)\cap\{\xi_{d-d_\perp+1}=x_{d-d_\perp+1}= \ldots = x_{d} = \xi_{d} = 0 \} \]
which is a $2d-2d_\perp -1$-dimensional sphere. 
The stable/unstable distributions are $d_\perp$-dimensional (see 
\eqref{eq:NH} below), and are generated by the vectors $\{{\partial}/{\partial \xi_i}\pm \lambda_i {\partial}/{\partial x_i}\}_{i=d-d_\perp +1 }^{d} $.
For this quadratic model the flow along the NHIM is completely integrable. This implies that the latter is structurally stable to perturbations (it is then $r$-normally hyperbolic for any $r\in\IN$), meaning that for any given regularity $r>0$, a small enough perturbation of $p_{\rm{quad}}$ will still lead to the presence of a NHIM of regularity $C^r$ %, with $C^r$ differentiable distributions 
\cite{HPS}. 
However, the flow on the perturbed NHIM is generally not integrable.
This situation occurs if one considers the full Hamiltonian $p$ with quadratic expansion $p_{\rm{quad}}$: for small positive energies $p$ will still exhibit a NHIM,
which is a deformed sphere. 
%Besides, assuming nonresonance conditions between the frequencies $\omega_i$, one can expand the full Hamiltonian $p$ into normal forms of arbitrary order: the flow on the NHIM is they quasi-integrable.

Physical systems featuring this type of fixed point are presented in the literature: for instance 
%the ionization dynamics of the hydrogen atom in crossed electric and magnetic fields  or 
the isomerization of hydrogen cyanide \cite{WBW04} or the
 quantum dynamics of the nitrogen-nitrogen exchange \cite{GSWW10}. 
Strictly speaking the potentials appearing in these physical
models are more complicated than the ones allowed here. However,
the behaviour near the NHIM determines the
phenomena which are studied here and which are relevant in physics.

We conclude this remark by recalling that when $ d_\perp =1 $
(most relevant from the point of view of \cite{GSWW10})
and when the system is $r$-normally hyperbolic for sufficiently 
large $ r $ very precise results on the distribution of 
resonances have been obtained by 
Dyatlov \cite{Dya1},\cite{Dya2}. 
\end{rem}

\medskip

\noindent
{\bf Example 2.} Suppose that $ X $ is a compact manifold with a
volume form $ dx $ and a vector field $ \Xi $ generating 
a volume preserving flow ($ {\mathcal L}_\Xi dx = 0 $).  Then 
$ P = -i h \Xi $ is a selfadjoint operator on $ L^2 ( X, dx ) $, and
the corresponding propagator $\exp ( - i t P / h )$
is the push-forward of the flow
$\gamma_t=\exp ( t \Xi )$ generated by $\Xi$ on functions $f\in L^2(X,dx)$:  
$\exp ( - i t P / h )f = f\circ \gamma_{-t}$.

To define the CAP in this setting we choose a Riemannian
metric $ g $ on $ X $, and a function
\begin{equation}
\label{eq:fCAP}  \begin{gathered}
f \in \CI ( \RR , [0, \infty ) ),  \ \ \  
 | f^{(k)} ( s ) | \leq C_k f( s )^{1-\gamma} , \ \text{ for some
   $\gamma\in (0,1/2)$} , \\
f^{-1}(0) =
[-\infty,M] \ \text{ for some $M>0$,} \ \ \  f ( s ) = \sqrt s , \ \
s > 2M .
\end{gathered}
\end{equation}
If $ \Delta_g $ is the corresponding
Laplacian on $X$, we set $ W ( x, h D ) = f ( - h^2 \Delta_g ) $.

Then the operator $ P - i W \in \Psi^1 ( X ) $ satisfies the
assumptions above. The principal symbols read $p(x,\xi)=\xi ( \Xi_x)$,
$w(x,\xi)=f(\|\xi\|_x^2)$, where the norm $\|\bullet\|_{x}$ is associated
with the metric $g$.

At a given energy $E\in\IR$, the trapped set is given by the points which never enter the 
absorbing region:
\[ K_E = \{ ( x, \xi ) \in T^* X \; : \; \xi ( \Xi_x ) = E , \ \ 
\| (\gamma_{-t})_* \xi \|_g \leq M , \ \ \forall t \in \RR \}. \]
At this stage the trapped set seems to depend on the choice of $M$. Below we
will be concerned with $\exp(t\Xi)$ being an Anosov flow, in which
case this explicit dependence will disappear, as long as we choose $M$
large enough compared with the energy $E$ (see the second assumption~\eqref{eq:trapped1} below).

\medskip

Returning to general considerations we also define 
\begin{equation}
\label{eq:Kdel}  K^\delta \defeq \bigcup_{ |E|\leq \delta } K_E , \end{equation}
which is a compact subset to $ T^*X $ and 
assume that 
\begin{equation}
\label{eq:trapped1} 
 dp \rest_{ K^\delta } \neq 0 , \qquad
K^\delta \cap \WF_h ( W ) =  \emptyset . 
\end{equation}
The first assumption implies that for $ | E | \leq
\delta $, the energy shell $ p^{-1} ( E ) $  is a smooth hypersurface close to $ w^{-1}( 0 )$. The 
second assumption is consistent with the definition \eqref{eq:trapped} of $K_E$. It implies that
the latter is contained in the interior of the region $w^{-1}(0)$, a property which is stable when enlarging 
$K_E$ to $K^\delta$, or when slightly modifying the support of $w$.

We now make the following {\em normal hyperbolicity} assumption on $
K^\delta $:
\begin{equation}
\label{eq:NH0}
\text{ $K^\delta $ is a smooth symplectic submanifold of $ T^*X $,}
\end{equation}
and there exists a continuous distribution of
linear subspaces 
\[ K^\delta \ni \rho \longmapsto  E^{\pm}_\rho
\subset T_\rho  ( T^* X ) , \]
invariant
under the flow, 
\[ \forall t \in \RR,\quad  (\varphi_t)_* E^\pm_{ \rho } = E^\pm_{ \varphi_t ( \rho ) }  ,
\]
and satisfying, for some $\lambda>0$, $C>0$ and any point $\rho\in K^\delta$,
\begin{gather}
\label{eq:NH}
\begin{gathered}
 T_\rho K^\delta \cap E^\pm_\rho =  E^+_\rho
\cap E^-_\rho = \{ 0 \} \,, \quad \dim E^\pm_\rho = d_{\perp}  , 
\ \ \ 
T_{\rho}(T^{*}X) = T_{\rho}K^{\delta} \oplus  E_{\rho}^{+} \oplus 
E_{\rho}^{-} , \\
\forall v \in E^\pm_\rho,\ \ \forall t>0,\quad \| d\varphi_{\mp t}(\rho) v \|_{\varphi_{\mp t} ( \rho ) }  \leq C
e^{ - \lambda t } \| v \|_\rho \,.
\end{gathered}
\end{gather}
Here $ \rho \mapsto  \| \bullet \|_\rho $ is any smoothly varying norm on $ T_\rho ( T^*X ) $, $ \rho
\in K^\delta $. The choice of norm may affect
$ C $ but not $ \lambda $. 

\begin{rem} \label{rem:regularity}
A large class of examples for which the distributions $ 
\rho \mapsto E_\rho^\pm $ are not smooth is provided by 
considering contact Anosov flows on compact manifolds
--- see \cite{f-sj},\cite{Ts} and \S \ref{gest} below for the 
natural appearance of normally hyperbolic trapping for 
the flow lifted to the cotangent bundle of the 
manifold. The regularity is inherited from the regularity 
of the stable and unstable distributions tangent to the manifold, which in general are only 
known to be H\"older continuous \cite{An67}. 
More is known on the regularity of these distributions when the manifold is 3-dimensional (and preserves a contact structure). 
In this situation, Hurder-Katok showed \cite{HuKa} that there is a dichotomy (or ``rigidity"): 
either the stable/unstable distributions are $C^{2-\eps}$ for any $\eps>0$ but not $C^2$ 
(this is due to a certain obstruction, namely the {\em Anosov cocycle} is not cohomologous to zero), or the distributions are as smooth as the flow. 
If that 3-dimensional flow is the geodesic flow on a surface of negative curvature, then following Ghys \cite{Gh} they show (Corollary.~3.7) that the latter case imposes a metric of {\em constant} negative curvature. Hence, for the geodesic flow on a surface of {\em nonconstant} negative curvature, the stable/unstable distributions, and hence their lifts $E_\rho^\pm $, are not $C^2$.

We do not know
of an example of a Schr\"odingier operator (that
is of a classical Hamiltonian of the form 
$ p ( x, \xi) = |\xi|^2 + V(x)$) 
for which the trapped set is smooth --- or 
sufficiently regular: as with all microlocal results
a certain high level of regularity, depending on the 
dimension, is sufficient --- and the distributions 
$ \rho \mapsto E_\rho^\pm $ are irregular. However there is 
no general result which prevents that possibility.
Interesting regular examples of $ E_\rho^\pm $ of any dimension 
$ 1 \leq d_\perp \leq d-1 $ were discussed in Remark~\ref{rem:NHIM}.

We also remark that higher dimensional distributions can lead to complicated topological issues, which 
would make the global approach of \cite{Dya1},\cite{Dya2},\cite{WZ}
difficult. 
This is visible already for flows on constant curvature manifolds
for which smooth foliations may have nontrivial topology 
\cite[\S 2.2]{DFG}. 

Except for the construction of the
escape function, for which we need to use \cite{NSZ2} and \cite{SZ10},
the analysis in \S\S \ref{ats} and \ref{micr} would  not be simplified
by a smoothness assumption on the distributions.
\end{rem} 

We can now state our main result.

\begin{thm}
\label{t:1}
Suppose that $ X $ is a smooth compact manifold and that
 $ P $ and $ W $ satisfy the assumptions above. If the 
trapped set $ K^{\delta } $ given by
\eqref{eq:trapped},\eqref{eq:Kdel}
is normally hyperbolic, in the sense that \eqref{eq:NH0} and
\eqref{eq:NH} hold, then for any $ \epsilon_0 > 0 $ there exists $ h_0
$, $ c_0 $, 
$ C_1 $, 
such that for $ 0 < h < h_0 $, 
\begin{gather}
\label{eq:t1} 
\begin{gathered}   \|  ( P - i W - z )^{-1} \|_{L^2 \to L^2 } \leq C_1
  h^{-1  + c_0 \Im z / h } \log (1/h) ,\\ 
\text{ for } \ \  z \in [ - \delta + \epsilon_0 , \delta - \epsilon_0 ] - 
i h [ 0 , \lambda_0/ 2 - \epsilon_0 ] , 
\end{gathered}
\end{gather}
 where $\lambda_0>0$ is the minimal transverse unstable expanding rate:
\begin{equation}
\label{eq:t11} 
\lambda_0 \defeq \liminf_{ t \to \infty } \frac 1 t \inf_{ \rho \in K^\delta } \log
\det\left(d\varphi_t\rest_{ E^+_ \rho } \right)\, . 
\end{equation}
Here $ \det $ is taken using any fixed volume form on $ E^+_ \rho$,
the value of $\lambda_0>0$ being independent of the choice of volume forms. 
\end{thm}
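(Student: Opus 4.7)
The plan is to establish the microlocal resolvent estimate by combining a standard escape-function argument far from the trapped set with a refined microlocal analysis in a neighborhood of $K^\delta$ that exploits its smooth symplectic structure. The width $\lambda_0/2$ of the resonance-free band reflects the $L^2$-decay rate of the transverse quantum evolution, which is one half of the rate of exponential expansion of the transverse unstable Jacobian along trajectories on $K^\delta$.

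Fix $\epsilon > 0$ and let $K = K^\delta \cap p^{-1}([-\delta+\epsilon/2,\,\delta-\epsilon/2])$. Outside any prescribed neighborhood $U$ of $K$, every trajectory of $\exp(tH_p)$ in the energy range $[-\delta+\epsilon,\delta-\epsilon]$ enters the damping region $\{w > 0\}$ in forward or backward time, so one can construct an escape function $G_0 \in \CIc(T^*X)$, supported away from $\WFh(W)$, with $H_p G_0 \le 0$ globally and $H_p G_0 \le -c < 0$ off $U$. Conjugating $\widetilde P - z$ by $\exp(M \Op(G_0)/h)$ for $M$ sufficiently large, and combining with the damping by $-W$, yields a standard semiclassical elliptic estimate $\|v\|\leq C h^{-1}\|(\widetilde P - z) v\|$ whenever $v$ is microlocally supported outside $U$; it then remains to control $v$ microlocalized in $U$.

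Near $K^\delta$, Weinstein's symplectic neighborhood theorem provides local smooth canonical coordinates $(x,\xi;y,\eta)$ in which $K^\delta = \{(y,\eta) = 0\}$, and quantization by an $h$-Fourier integral operator conjugates $P - E$ to an operator whose symbol vanishes on $\{(y,\eta)=0\}$ and whose transverse Hessian has signature $(d_t,d_t)$, with eigenvalues agreeing with the Lyapunov exponents of the linearized flow on $E^\pm$. The key ingredient is a transverse escape function $G_\perp(x,\xi;y,\eta)$ calibrated to the time-averaged hyperbolic splitting in the transverse directions, of size $O(\log(1/h))$, chosen so that $H_p G_\perp$ is bounded below by half of the transverse unstable divergence rate away from $\{(y,\eta)=0\}$. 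Conjugating $\widetilde P - z$ by $\exp(\Op(G_\perp)/h)$ and integrating along the flow up to the Ehrenfest time $T = c_0\log(1/h)$, the compactness of $K^\delta$ together with the definition \eqref{eq:t11} of $\lambda_0$ gives an effective positive imaginary part at least $h\lambda_0/2 - \epsilon h$ on modes with non-trivial transverse support; combining with longitudinal control provided by the exterior estimate propagated along $K^\delta$ yields the bound $C h^{-1 + c_0\Im z/h} \log(1/h)$ in the claimed strip.

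The principal obstacle is the lack of smoothness of $E^\pm_\rho$ as $\rho$ varies over $K^\delta$: the transverse hyperbolic splitting is only continuous, so there is no smooth reduction to a fixed transverse normal form and the naive escape function is not a smooth symbol. I would circumvent this by building $G_\perp$ using not the instantaneous splitting $E^\pm_\rho$ but its image under the flow at a large but finite time $T$; forward iteration regularizes the unstable direction, at the cost of introducing a time scale that must be tuned to $\log(1/h)$ to keep the resulting symbolic errors within a calculus with a mildly exotic class of symbols. The $\liminf$ in \eqref{eq:t11} is precisely what arises from this averaging, and it is what makes the time-dependent construction sharp; the logarithmic loss and the explicit dependence of the exponent on $\Im z$ both come from the same Ehrenfest-time tuning.
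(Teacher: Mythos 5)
Your outline correctly identifies the two main regimes (exterior escape function far from the trapped set, and microlocal analysis in a $(h/\tilde h)^{1/2}$-neighbourhood of $K^\delta$), and the idea of regularizing the unstable direction by flowing for a time comparable to the Ehrenfest time, rather than using the non-smooth $E^\pm_\rho$ pointwise, is indeed what the paper does in spirit. But the crucial step — obtaining the gap width $\lambda_0/2$ — does not follow from the escape-function/commutator mechanism you propose, and this is where the argument breaks.

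You assert that one can build a transverse weight $G_\perp$ so that ``$H_p G_\perp$ is bounded below by half of the transverse unstable divergence rate away from $\{(y,\eta)=0\}$'', and then extract an effective positive imaginary part $\geq h(\lambda_0/2-\epsilon)$ from the conjugation. This cannot work: $K^\delta$ is flow-invariant, so any weight that is bounded (uniformly in $h$) near $K^\delta$ must have $H_p G_\perp$ averaging to zero along trajectories on $K^\delta$, and hence cannot have a uniform positive lower bound there. The escape function constructed in the paper (Proposition~\ref{p:esc}) gives only $G(\varphi_{t_0}\rho)-G(\rho)\geq 2\Gamma$ when $d(\rho,K^\delta)^2\gtrsim h/\tilde h$, with $G(\varphi_{t_0}\rho)-G(\rho)\geq -C_0$ (not $\geq 0$!) arbitrarily close to $K^\delta$. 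So the commutator estimate controls escape to infinity in the transverse directions but produces \emph{no} gap on the trapped set itself. More fundamentally, the exponent $\lambda_0/2$, i.e.\ one half of the unstable Jacobian divergence rate, is a quantum-mechanical dispersive effect that is invisible to any classical/commutator argument: it appears as the $|\det\Lambda^n|^{-1/2}$ factor in the Schur-test bound for the product of two narrow cutoffs conjugated by the metaplectic operator quantizing the transverse linearization (Lemma~\ref{l:sympl}, feeding into \eqref{e:bound2}). The paper's actual mechanism near $K^\delta$ is Proposition~\ref{p:crucial-bound}: the cutoff propagator $\chi^w e^{-itP/h}\chi^w$ is decomposed, via a partition of unity and Darboux charts, into Fourier integral operators with \emph{operator-valued} (metaplectic) symbols; the $L^2$-decay at rate $e^{-t\lambda_0/2}$ is then a norm estimate on these operator-valued symbols, not a consequence of any positive commutator. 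To close the argument one then needs a nontrivial combinatorial/iterative scheme (the $U_{G,\pm}$ decomposition in \S\ref{pr}) to absorb the polynomial loss $\tilde h^{-d_t/2}$ coming from the cutoff-propagator estimate using the exponential gain $e^{-\Gamma}$ provided by the escape function away from $K^\delta$; your outline does not address how those two estimates are to be reconciled.

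In short: your plan treats the gap width as a commutator positivity statement, but the $\lambda_0/2$ rate is a propagator $L^2$-norm estimate coming from the square-root-Jacobian decay of coherent states under the linearized hyperbolic flow, and the escape function's role is only to kill the $\tilde h^{-d_t/2}$ prefactor, not to create the gap. To fix the argument you would need to replace the conjugation-by-$G_\perp$ step by a direct analysis of $\chi^w e^{-itP/h}\chi^w$ as an FIO with metaplectic-operator-valued symbol and bound its norm via the determinant of the transverse linearization, along the lines of \S\ref{ats}.
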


This theorem will be proved in \S \ref{micr} after preparation in \S\S
\ref{cld},\ref{ats}. The bound $ \log (1/h) /h $ 
on the real axis is optimal as shown in \cite{BBR}.
Using the methods of \cite{DatVas10_1} 
the estimate \eqref{eq:t1} almost immediately
applies to the setting of scattering theory. As an example 
we present an application to scattering on asymptotically 
hyperbolic manifolds, which will be proved in \S \ref{recap}:
\begin{thm}
\label{t:2} 
Suppose $ ( Y , g ) $ is a conformally compact $n$-manifold with even 
power metric: $ Y $ is compact, $ \partial Y = \{ x = 0 \} $, $
dx\rest_{\partial Y } \neq 0 $, $ g = ( dx^2 + h ) /x^2 $ where $ h $
is a smooth 2-tensor on $ Y $ with only even powers of $ x $ appearing
in its Taylor expansion at $ x = 0 $.  
If the trapped set for the geodesic flow on $ Y $ is
normally hyperbolic, then the following resolvent estimate holds:
\[   \| x^{ k_0}  \big( - \Delta_g - ( n - 1)^2/4 - \lambda^2  \pm i 0 \big)^{-1} x^{k_0} \|_{ L^2
  \to L^2 } \leq C_0 \frac {\log \lambda } \lambda, \quad \lambda > 1
. \]
\end{thm}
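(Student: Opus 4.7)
The plan is to combine three ingredients: (a) Vasy's extension of an asymptotically hyperbolic resolvent problem to a semiclassical problem on a compact manifold, (b) Theorem~\ref{t:1} applied on that extended manifold with an auxiliary CAP, and (c) the Datchev--Vasy resolvent gluing to transfer the estimate back to the weighted resolvent $x^{k_0}(-\Delta_g - (n-1)^2/4 - \lambda^2)^{-1} x^{k_0}$.

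First I would rescale to the semiclassical regime by setting $h = 1/\lambda$ and $P(h) = -h^2 \Delta_g - h^2 (n-1)^2/4$, so that the target estimate becomes
\[
\| x^{k_0} (P(h) - 1)^{-1} x^{k_0} \|_{L^2 \to L^2} \leq C\, h^{-1} \log(1/h).
\]
The evenness of the metric in $x$ is exactly the hypothesis required so that Vasy's construction applies: one extends $\overline Y$ across the boundary $\{x=0\}$ to a compact manifold $X$, and (after conjugation by an appropriate weight in $x$) produces an operator $\tilde P(h) \in \Psi^2(X)$ that agrees with $P(h)$ on the physical side and is of real-principal-type across $\partial Y$, with classical flow extending smoothly. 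This reduces the analysis to a semiclassical problem on a closed manifold, at the spectral parameter $z = 1$ (up to trivial rescaling).

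Next I would introduce a CAP $W \in \Psi^k(X)$, $W \geq 0$, with $\WF_h(W) \subset X \setminus \overline Y$ (so $W$ lives entirely on the non-physical side), chosen to make $\tilde P(h) - i W$ satisfy the ellipticity conditions \eqref{eq:CAPe} and the smoothness condition \eqref{eq:aux}. A key point is that the trapped set $K^\delta$ for $\tilde P(h) - i W$ at the energy shell $\{p = 1\}$ then agrees with (the cosphere-bundle lift of) the trapped set for the geodesic flow on $Y$ at energy $1$: any trajectory that escapes to $\{x = 0\}$ under the extended flow enters the CAP region and so is removed. By hypothesis this trapped set is normally hyperbolic, so the assumptions \eqref{eq:NH0}--\eqref{eq:NH} hold, and Theorem~\ref{t:1} yields
\[
\|(\tilde P(h) - i W - 1)^{-1}\|_{L^2(X) \to L^2(X)} \leq C\, h^{-1} \log(1/h).
\]

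Finally I would invoke the Datchev--Vasy gluing. Away from the trapped set the geodesic flow on $Y$ is non-trapping (escapes to $x \to 0$ in both time directions), so standard semiclassical outgoing/incoming estimates on asymptotically hyperbolic manifolds provide a polynomially bounded weighted resolvent for a suitable model problem near infinity. Gluing this non-trapping estimate at infinity to the compact estimate above (with the CAP serving exactly the role of the ``auxiliary model'' in \cite{DatVas10_1}) yields the stated bound on $x^{k_0}(P(h) - 1)^{-1} x^{k_0}$, with $k_0$ determined by the weights needed in the gluing identity. The main obstacles are verifying that Vasy's extension does not destroy the normal hyperbolicity structure of $K^\delta$ (which it does not, since the extension is smooth and the CAP cuts off before reaching the trapped set) and the careful bookkeeping of weights in $x$ so that the $h^{-1}\log(1/h)$ loss is preserved after gluing --- both essentially mechanical once the microlocal setup is in place.
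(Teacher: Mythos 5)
Your proposal correctly identifies the key ingredients --- a CAP on an auxiliary compact manifold, Theorem~\ref{t:1}, Vasy's asymptotically hyperbolic estimates, and Datchev--Vasy resolvent gluing --- but the way you combine them contains a genuine gap. You propose to use Vasy's extension $\tilde P(h)$ across $\{x=0\}$ as the operator on the compact manifold and then apply Theorem~\ref{t:1} to $\tilde P(h) - iW$. However, Theorem~\ref{t:1} requires the operator $P$ in \eqref{eq:CAP} to be \emph{self-adjoint} on $L^2(X,dx)$ with $h$-independent real principal symbol (see the hypotheses preceding \eqref{eq:CAPe}). Vasy's extended operator, obtained by conjugating by a spectral-parameter-dependent power of the boundary defining function and then extending across $\partial Y$, is neither self-adjoint nor independent of the spectral parameter $z$; the imaginary part of its subprincipal symbol at the radial set is in fact essential to its Fredholm theory. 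So it does not fit the hypotheses of Theorem~\ref{t:1} as stated, and this cannot be waved away as bookkeeping.

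The paper (\S\ref{recap}, via the more general Theorem~\ref{t:2'}) sidesteps this entirely. The auxiliary compact manifold $X$ is any closed manifold containing $\overline Y$ --- the double suffices --- and the operator $P \in \Psi^2(X)$ is any self-adjoint extension of the form $-h^2\Delta_{g_X} + V_X$ agreeing with $P_g$ where $\rho > 1$; it bears no relation to Vasy's construction. The CAP $W$ is placed where $\rho < 1$, i.e.\ near infinity \emph{on the physical side} as well as on all of $X \setminus Y$, and Theorem~\ref{t:1} is applied to this self-adjoint $P$ minus $iW$. Vasy's work enters only to verify the abstract assumptions \eqref{eq:prope1}--\eqref{eq:prope3} on a \emph{second} operator $P_\infty$, a non-trapping modification of $P_g$ used in the Datchev--Vasy parametrix; the evenness hypothesis is what makes Vasy's results apply to $P_\infty$, not a license to extend $P_g$ itself across $\partial Y$. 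Your step (c) also conflates two distinct transfer mechanisms: had you truly performed Vasy's extension, the passage from $L^2(X)$ estimates to the weighted resolvent $x^{k_0}(\cdot)^{-1}x^{k_0}$ would come from Vasy's Fredholm and boundary-pairing theory, not from a Datchev--Vasy parametrix with a separate non-trapping model at infinity. Gluing with such a model is the device that \emph{replaces} Vasy's extension of the trapping operator, not something to be used alongside it.
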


The next application is a rephrasing of a recent theorem of Tsujii
\cite{Tsu10,Ts}; it will be proved in \S\ref{decay}. We take the point of view
of Faure--Sj\"ostrand \cite{f-sj}, see also
\cite{dadyz}.

\begin{thm}
\label{t:3}
Suppose $ X $ is a compact manifold and $ \gamma_t : X \to X $
a contact Anosov flow on $ X$. Let $ \Xi $ be the vector field generating
$\gamma_t $, and $ P = - i h \Xi $ the corresponding semiclassical
operator, self-adjoint on $L^2(X,dx)$ for $dx$ the volume form
derived from the contact structure. 

Define the minimal asymptotic unstable expansion rate
\begin{equation}
\label{eq:gammd}  \lambda_0 \defeq \liminf_{ t \to \infty } \frac 1 t \inf_{ x \in X  } \log
\det \left( d\gamma_t \rest_{ E_u ( x ) } \right) ,
\end{equation}
with $ E_u ( x ) \subset T_x X $ the unstable subspace of the flow at
$ x $.

For any $t>0$ there exists a Hilbert space, $H_{t\cG}$  (see \eqref{eq:defHG}), 
\[     \CI ( X ) \subset H_{t\cG} ( X ) \subset {\mathcal D}'  ( X)\,,  \]
such that $ ( P - z )^{-1} : H_{t\cG} \to H_{t\cG } $ is meromorphic in the half-space
$ \{ \Im z > - t h\}$. 

Then for any small $ \epsilon_0,\delta > 0 $,  there exist  $ h_0, c_0>0$ and $
C_1 >0$ such that, taking any $t>\lambda_0/2$ and any $ 0 < h < h_0 $,
\begin{gather}
\label{eq:t3}
\begin{gathered}   \|  ( P - z )^{-1}  \|_{ H_{t\cG} \to H_{t\cG} }  \leq C_1
  h^{-1   + c_0 \Im z / h } \log (1/h) ,  \qquad z \in  [ \delta , \delta^{-1}   ] -  i h [ 0 , \lambda_0/ 2 -
 \epsilon_0 ]\,.
\end{gathered}
\end{gather}
\end{thm}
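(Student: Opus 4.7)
The strategy is to reduce Theorem \ref{t:3} to Theorem \ref{t:1} by combining the CAP framework (Example 2) with the anisotropic Sobolev space construction of Faure-Sj\"ostrand \cite{f-sj}. First I verify the hypotheses of Theorem \ref{t:1}. The principal symbol $p(x,\xi) = \xi(\Xi_x)$ generates the symplectic lift $\Phi_t(x,\xi) = (\gamma_t(x), (d\gamma_{-t})^*\xi)$ of the Anosov flow. For fixed $E_0 > 0$, the trapped set is $K_{E_0} = \{(x, E_0\alpha_x) : x \in X\}$, the image of $X$ under the section $E_0\alpha$, since any covector with a component transverse to the Reeb direction lies in the dual (co)stable or (co)unstable subspace and its norm blows up under forward or backward flow. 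The enlarged set $K^\delta = \{(x, E\alpha_x) : |E - E_0| < \delta\}$ is a smooth symplectic submanifold of $T^*X$: restricted to each slice $K_E$, the canonical symplectic form pulls back to $E\, d\alpha$, which by the contact condition is symplectic modulo the one-dimensional Reeb direction, and this Reeb direction pairs nontrivially with the transverse energy direction of $K^\delta$. The normally hyperbolic transversal subspaces $E^\pm_\rho \subset T_\rho(T^*X)$ are the cotangent lifts of the Anosov stable/unstable subspaces of $TX$, and the transverse expansion rate in \eqref{eq:t11} coincides with $\lambda_0$ from \eqref{eq:gammd}.

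Next, following Example 2, set $W = f(-h^2\Delta_g)$ with $f$ satisfying \eqref{eq:fCAP} for some Riemannian metric $g$ on $X$, with threshold $M$ large enough that $K^\delta \cap \WF_h(W) = \emptyset$. Then $\tP = P - iW \in \Psi^1(X)$ meets all hypotheses of Theorem \ref{t:1}, giving
\[ \|(\tP - z)^{-1}\|_{L^2 \to L^2} \leq C_1 h^{-1+c_0\Im z/h}\log(1/h) \]
for $z$ in a strip of half-width $\lambda_0/2 - \epsilon$ centered at $E_0$. Covering $[\delta, \delta^{-1}]$ with finitely many such energy windows then yields the bound on the full real energy range stated in \eqref{eq:t3}.

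Finally, transfer this $L^2$ bound to the anisotropic space. Following \cite{f-sj}, one defines $H_{tG}(X) \defeq e^{-tG^w(x,hD)}L^2(X)$, where $G \in \CI(T^*X)$ is an escape function of logarithmic growth in $|\xi|$, equal to $\pm m(\rho)\log\langle\xi\rangle$ on the incoming and outgoing tails of the fiber-radial compactification, with $H_p G \leq -1$ outside any prescribed neighborhood of $K^\delta$. The conjugation yields $P_t \defeq e^{tG^w} P e^{-tG^w} = P - iW_t + \cO(h^2)_{\Psi^0}$ with $W_t = -th(H_p G)^w \geq 0$ microlocally supported away from $K^\delta$. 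Since both $W$ and $W_t$ are microlocally supported outside the trapped set, a Datchev-Vasy resolvent gluing argument \cite{DatVas10_1} transfers the bound from $\tP$ on $L^2$ to $P_t$ on $L^2$, which by the definition of $H_{tG}$ is equivalent to the claimed bound for $(P - z)^{-1}$ on $H_{tG}$. The main technical obstacle lies in this last step: one must reconcile the $h$-dependent effective CAP $W_t \sim th$ produced by the conjugation with the $h$-independent CAP framework of Theorem \ref{t:1}. This is addressed by tracking how the constants $c_0, C_1$ depend on the lower bound of $w$ on the escape region, and by choosing $t > \lambda_0/2$ so that the damping provided by $W_t$ dominates the target strip width $\lambda_0/2 - \epsilon$.
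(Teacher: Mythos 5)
Your overall strategy is the same as the paper's: verify that the lifted Hamiltonian flow has a normally hyperbolic trapped set $K^\delta$ identifiable with the image of $E\alpha$, invoke the CAP $W=f(-h^2\Delta_g)$ from Example~2 to apply Theorem~\ref{t:1}, introduce the anisotropic space $H_{tG}=e^{-tG^w}L^2$ with the Faure--Sj\"ostrand weight, and glue \`a la Datchev--Vasy. The geometric verification in your first paragraph and the choice of CAP in the second are both correct and match \S\ref{gest} and \S\ref{aatr}.

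The gap is in your last paragraph. You correctly note that the conjugation contributes an effective damping $W_t\approx -th(H_pG)^w$ of size $\cO(th)$, but the ``resolution'' you sketch --- tracking how $c_0,C_1$ depend on the lower bound of $w$, and choosing $t$ large enough --- is not what is needed and would not work as stated: the strip in Theorem~\ref{t:1} is governed by the transverse Lyapunov rate $\lambda_0$, not by the size of the CAP, so there is no way to ``dominate the target strip width'' by cranking up $t$. What is actually missing is the second half of the gluing pair. The paper introduces an auxiliary operator $P_\infty = P_{tG} - iW_\infty$, where $W_\infty\in\Psi^\comp(X)$ is an $\cO(1)$ CAP with wave-front set in $(G^{-1}(0))^\circ$ covering a neighbourhood of $K^\delta$ but disjoint from $\WFh(W)$. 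This $P_\infty$ is globally non-trapping, so Lemma~\ref{l:lemA} (the analogue of \eqref{eq:prope1}--\eqref{eq:prope2}) gives a polynomial resolvent bound and propagation of singularities for $P_\infty$, supplied by the non-trapping escape function of Lemma~\ref{l:cef3} rather than by Theorem~\ref{t:1}. The parametrix $F(z)=A_1(P-iW-z)^{-1}A_0 + B_1(P_{tG}-iW_\infty-z)^{-1}B_0$, with pseudodifferential cut-offs $A_j,B_j$ nested so that $\WFh(A_1)\cap\WFh(B_0)=\emptyset$ etc., then closes the argument exactly as in the proof of Theorem~\ref{t:2'}. You need to (i) construct $W_\infty$ and $P_\infty$ explicitly, (ii) prove Lemma~\ref{l:lemA} for $P_\infty$, and (iii) build the parametrix; without this intermediate operator the gluing between $\tP$ on $L^2$ and $P_{tG}$ on $L^2$ does not go through, because $P_{tG}$ alone has no $\cO(1)$ damping to control the region away from $K^\delta$.
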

The Hilbert space $H_{t\cG}$ in the above theorem is not optimal
as far as sharp resolvent estimates are concerned\footnote{We are grateful to
Fr\'ed\'eric Faure for this remark.}. It is obtained by
applying a microlocal weight $e^{t\cG^w}$ on $L^2$, with a function
$\cG(x,\xi)$ vanishing in a fixed neighbourhood of the trapped set.
In \cite{Tsu10} Tsujii constructed Hilbert spaces $B^\beta$ leading to resolvent
estimates $\|( P - z )^{-1}\|_{B^\beta}\leq C_1\,h^{-1}$ in the same
region. A similar resolvent estimate could be obtained in our
framework, by further modifying $H_{t\cG}$ using the ``sharp'' escape function $G$ presented
in \S\ref{out} (see the estimate \eqref{e:resolvent-tPG}).

Under a pinching condition on the Lyapunov exponents, the recent 
results announced by Faure--Tsujii \cite{f-ts2} provide a much more
precise description of the spectrum of $P=-ih\Xi$ on $H_{t\cG}$: 
the Ruelle--Pollicott resonances are localized in horizontal strips
below the real axis, and the number of resonances in each strip
satisfies a Weyl's law asymptotics. That is analogous to the result
proved by 
Dyatlov \cite{Dya1}, which was motivated by quasinormal modes for black holes.

Theorems \ref{t:2} and \ref{t:3} have applications to the decay of 
correlations, respectively for the wave equation and for contact Anosov flows. As an
example we state a refinement of the decay of correlation result \eqref{eq:LP1}
of Dolgopyat \cite{Dol} and Liverani \cite{Liv}. 
\begin{cor}
\label{t:4}
Suppose that $ \gamma_t : X \to X $ is a contact Anosov flow on a 
compact manifold $ X $ (see \S \ref{gest} for the definitions) and 
that $ \lambda_0 $ is given by \eqref{eq:gammd}. 

Then there exist a sequence of complex numbers, $ \mu_j $,  
\[  0 > \Im \mu_{j} \geq \Im \mu_{j+1} , \]
and of distributions $ u_{j,k}, v_{j,k} \in {\mathscr D}' ( X ) $, $ 0 \leq k
\leq K_j $, 
such that, for any $ \epsilon_0 > 0 $, there exists $J(\eps_0)\in\IN$ such that  
for any $ f , g \in \CI ( X ) $, 
\begin{equation}
\label{eq:t4}
\begin{split}
& \int_X f(x)  \,  \gamma_t^* 
g
 ( 
x ) 
\,  d x = \\
& \ \ \ \ \ \int_X f   dx \, \int_X g
  dx + \sum_{ j=1}^{ J ( \epsilon_0 ) }  \sum_{ k=1}^{K_j } t^{k}
e^{ - i t \mu_{j} } u_{j,k} ( f ) v_{j,k} ( g ) + \cO_{f,g} ( e^{ - t(
  \lambda_0 - \epsilon_0 ) /2 } ), 
\end{split} 
\end{equation}
for $ t > 0 $. 
Here $ dx $ is the measure on $ X $ induced by the contact form and
normalized so that $ \vol ( X) = 1 $, and $ u ( f ) $, $ u \in
{\mathcal D'} ( X) $, $ f \in \CI ( X) $ denotes the distributional pairing.
\end{cor}
The exponential mixing estimate \eqref{eq:t4} has
been obtained by Tsujii \cite[Corollary 1.2]{Tsu10} in the more general
case of contact Anosov flows of regularity $C^r$.
We restate it here to stress its analogy with 
resonance expansions in wave scattering, see for instance \cite{TZ2}.

For information about microlocal structure of the distributions 
$ u_{j,k} $ and $ v_{j,k}$ the reader should consult \cite{f-sj}. Here
we only mention that (with the standard wave front set of \cite{Hor2})
\[   
\WF ( u_{j,k} ) \subset E_s^* , \quad  \WF ( v_{j,k} ) \subset
E_u^* , 
\]
where $ E_\bullet^* = \bigcup_{x\in X} E_\bullet^* ( x ) $, and $ 
E_\bullet ^*  ( x)\subset T_x^* X  $ is the annihilator of $ \IR\Xi_x  + E_\bullet ( x
) \subset T_x X $, $ \bullet = u,s$. The spaces $ E_\bullet ( x ) $
appear in the Anosov decomposition of the tangent space \eqref{eq:Ande}

\section{Outline of the proof of Theorem \ref{t:1}}
\label{out}

The proof proceeds via the analysis of the propagator for the operator
\[  
\tP_{  G } \defeq e^{ - G^w ( x, h D ) }  ( P - i W ) e^{ G^w ( x, h D )} \, , 
\]
where the function $ G(x,\xi;h) $ belongs to a certain exotic class of symbols.
Our $ G $ is closely related to the escape
function constructed in \cite{NSZ2}, it depends on an additional
small parameter, $ \tilde h $, which will be chosen
independently of $ h$.

For a large $ t_0 $, any fixed $ \Gamma >0 $ and
$\eps>0$,
we can construct $ G $ so that, for some constant $ C_0$, the following
holds uniformly in $ 0 < h < h_0$, $ 0 < \tilde h < \tilde h_0 $:
\begin{equation}
\label{eq:G} 
\begin{split} 
&  G ( \rho ) = \cO ( \log ( 1/ h ) ) , 
\quad    G (\rho ) - G ( \varphi_{ - t_0} ( \rho ) ) \geq - C_0 ,
  \quad \rho \in T^* X , 
  \\
& G ( \rho ) - G ( \varphi_{ - t_0} ( \rho ) )  \geq 2 \Gamma , \quad \rho \in  
p^{-1} ( [ - \delta, \delta ] ), \quad d ( \rho, K^\delta ) > (h/\tilde 
h )^{\frac12} , \ \ w ( \rho ) < \epsilon \,,
\end{split}
\end{equation}
where $d(\bullet,\bullet)$ is any given distance function in $T^*X$.

The proof of Theorem \ref{t:1} is based on the following
estimate. For some $ \epsilon_1 > 0 $, take an operator $ A \in \Psi^0
( X ) $ such that $ \WFh ( A ) \subset   p^{ -1} ( ( - \delta, \delta
) ) \cap w^{-1} ([ 0 , \epsilon_1 ) ) $. 
We will prove the following norm estimate: for any $ \epsilon_0 $ 
and $ M $ there 
exists $ M_{\epsilon_0 }$ and $ \th_0>0,\ h_0 > 0$ such that for any $\th<\th_0$, $h<h_0$, we have the estimate
\begin{gather}
\label{eq:main}
\begin{gathered}
\| \exp ( - i t \tP_G /h ) A \|_{L^2 ( X ) \to L^2 ( X ) } \leq %C\,
 e^{ - t ( \lambda_0 - \epsilon_0) / 2 } ,\\
\text{uniformly for times} \ \   M_{\epsilon_0} \log \frac 1 {\th } \leq t \leq 
\max(M, M_{\epsilon_0} ) \log \frac 1 \th .
\end{gathered}
\end{gather}
%
%
%$ \epsilon_0 > 0 $ is arbitrary determining the lower threshold, $M_{\epsilon_0} $, and $ M $ is arbitrary provided that $ h$ is small enough. 

As a result,  for $ \Im z >  -  ( \lambda_0 - 2 \epsilon_0 )/ 2 $, 
\begin{equation}
\label{eq:intt} \begin{split}  ( \tP_G - z ) \frac i h \int_0^T e^{ - i t ( \tP_G - z )
    / h } A dt &  = 
( I - e^{ - i T( \tP_G  - z ) / h }) A   =  A - {\mathcal O} ( e^{ - T  \epsilon_0 } ) _{L^2 \to
  L^2} \,.
\end{split}\end{equation}
Hence, by taking $ T $ large enough and using the ellipticity
of $ \tP_G - z $ away from $p^{ -1} ( ( - \delta, \delta
) ) \cap w^{-1} ([ 0 , \epsilon_1 ) )$, we obtain
\be\label{e:resolvent-tPG}
( \tP_G - z)^{-1} ={\mathcal O}  ( h^{-1} ) _{L^2 \to L^2} , \quad 
 \Im z >  -  ( \lambda_0 - 2 \epsilon_0 )/ 2  \,. 
\ee
Since $ e^{ \pm G^w } = {\mathcal O} ( h^{-M_0}  ) _{L^2 \to L^2}  $ from
the growth condition on $ G $, a polynomial bound for $ ( P - i W - z
)^{-1} $ follows. The more precise bound \eqref{eq:t1} follows from
a semiclassical maximum principle.

\bigskip

To prove the estimate \eqref{eq:main} we proceed in a number of steps:

\medskip

\noindent
{\bf Step 1.} The most delicate part of the argument concerns the
evolution near the trapped set. 
For some fixed $R>1$, we introduce a cut-off function $ \chi \in \tS_{\frac12} $ supported in 
the set 
\[ \{ \rho \in p^{ -1} ( ( - \delta, \delta ) ) \; : \; 
d ( \rho, K^\delta ) \leq 2 R ( h /\tilde h)^{\frac12} \} \,.
\]
This cut-off is quantized into an operator $ \chi^w \defeq \chi^w ( x, h D ) $.

We then claim that 
for any $ \epsilon_0 > 0 $  and $ M > 0 $, 
there exists  $ C>0 $ such that, for $\th<\th_0$ and $h<h_0(\th)$, 
\begin{gather}
\label{eq:tricky}
\begin{gathered}
\| \chi^w  e^{ - i t P / h } \chi^w  \|_{ L^2 \to L^2 } \leq C \tilde h^{ -d_{\perp}/2} e^{
-  t ( \lambda_0 - \epsilon_0 / 2 ) / 2 }  
%+ {\mathcal O} ( \tilde h ^\infty ) 
, \\ 
\text{uniformly for $ 0 \leq t \leq M \log \frac 1 {\tilde h } $. }
\end{gathered}
\end{gather}
The proof of this bound is provided in \S \ref{ats}.

\medskip
\noindent
{\bf Step 2.}  For the weighted operator we obtain an 
improved estimate, now with a fixed large time $  t_0 $ related to the 
construction of $ G $,  and for $ \chi $ which in addition 
satisfies
\[  
\chi ( \rho ) = 1 \  \text{ for  $ d ( \rho , K^\delta ) \leq R  (
  h /\tilde h)^{\frac12} $, $ | p(\rho) | \leq \delta /2 $.}  
\]
Using Egorov's theorem and (from \eqref{eq:G}) the positivity of $G-G\circ\varphi_{-t_0}$
on the set $\supp(1-\chi)\cap \WFh (A) $, we get following the weighted estimate:
\begin{equation}
\label{eq:wei}  \| ( 1 -\chi^w  ) e^{ - i t_0 \tP_G  / h }  A \| \leq  e^{- \Gamma } , 
\end{equation}
When constructing the function $G$ it is essential to choose $\Gamma$ such that
\[ \Gamma > \frac { t_0 \lambda_0 } 2 . \]
We also show that 
\begin{equation}
\label{eq:wei0} 
\|  e^{ - i t_0 \tP_G / h } A \| \leq e^{2  C_0 } , 
\end{equation}
for a constant $ C_0 $ independent of $h,\th$.
Formally, these results follow from Egorov's theorem but 
care is needed as $ G $ is a symbol in an exotic class.
To obtain \eqref{eq:wei} and \eqref{eq:wei0} we proceed
as in the proof of \cite[Proposition 3.11]{NSZ2}.  This is 
done in \S \ref{micr}.

\medskip
\noindent
{\bf Step 3.} The last step combines the two previous estimates,
by decomposing
\begin{gather*}   e^{ - i  n t_0 \tP_G / h } = (  U_{G, +} + U_{G,-} )^{n}, \\ 
U_{G,+} \defeq e^{ - i t_0 \tP_G /h  } \chi^w  , \ \ 
U_{G,-} \defeq  e^{ - i t_0 \tP_G /h  } ( 1 -\chi^w ) .
\end{gather*}
In order to apply \eqref{eq:tricky} we use the fact that
\[  \chi^w e^{-G^w } e^{ - i t (P - i W )/h  } e^{ G^w } \chi^w  =
\chi_{G,1}^w e^{ -i t P / h} \chi_{G,2}^w + \cO (\th^\infty ) +  \cO (h^\frac 12) \,, 
\]
where the symbols $ \chi_{G,i} $ have the properties required in \eqref{eq:tricky}.
A clever expansion of $ ( e^{ in t_0 \tP_G/ h} )^n $ into terms involving 
$ U_{G, \pm } $ and an application of Steps 1 and 2 lead to the estimate
\eqref{eq:main}
for $ t = n t_0 $.
The argument is presented in \S \ref{pr}. 

\section{Preliminaries}
\label{prel}

In this section we will briefly recall basic concepts of semiclassical 
quantization on manifolds with detailed references to previous papers.

\subsection{Semiclassical quantization}
\label{10c}
The semiclassical pseudodifferential operators on a compact
manifold $ X$ are quantizations of functions belonging to 
the symbol classes $ S^m  $ modeled on symbol
classes for $ \RR^n $:
\[ \begin{split} 
S^{m} ( T^*\RR^n ) =\big\{ & a \in \CI( T^* \RR^n \times (0, 1]_h
  ) :\ \forall \alpha,\beta\in\NN^n,
|\partial_x ^{ \alpha } \partial _\xi^\beta a ( x, \xi ;h ) | \leq
C_{\alpha \beta} ( 1 + |\xi|)^{m -|\beta| } \big\}  \,, 
\end{split} \]
see \cite[\S 14.2.3]{e-z}. The Weyl quantization, 
which we informally write as
\[   S^m ( T^* X )  \ni a ( x, \xi ) \longmapsto a^w ( x , h D) 
\in \Psi^m ( X ), \]
maps symbols to pseudodifferential operators.
It is modeled
on the quantization on $ \RR^n $: 
\be\label{eq:weyl}
\begin{split}
[a^w u](x) & = a^w(x,hD) u(x)  =  [\Op (a) u ] ( x) \\
& \defeq \frac1{ ( 2 \pi h )^d } 
  \int \int  a \Big( \frac{x + y  }{2}  , \xi \Big) 
e^{ i \la x -  y, \xi \ra / h } u ( y ) dy d \xi \,,  \ \ u \in
\cS  ( \RR^n ) . 
\end{split} 
\ee

 The symbol map
\[ \sigma : \Psi^m ( X 
)   \to S^m ( T^* X ) / h
S^{m-1} ( T^* X ) , \]
is well defined as an equivalence class and its kernel is
$ h \Psi^{m-1} ( X) $ -- see 
\cite[Theorem 14.3]{e-z}. If $ \sigma ( A ) $ 
has a representative independent of $ h $ we call that
invariantly defined element of $ S^m ( T^* X ) $ the 
{\em principal symbol} of $A$.

Following \cite{Da-Dy} we define the class of 
{\em compactly microlocalized operators}
\[ 
\Psi^{\rm{comp}} ( X) \defeq \{ 
a^w ( x , h D ) : a \in ( S^0 \cap \CIc)  ( T^* X )  \} + h^\infty
\Psi^{-\infty } ( X ) .
\]
These operators have well defined semiclassical wave front sets:
\[   \Psi^{\rm{comp}} ( X)  \ni A \longmapsto 
\WFh ( A ) \Subset  T^* X ,\]
see \cite[\S 3.1]{Da-Dy} and \cite[\S 8.4]{e-z}.

Let $ u=u(h) $, $ \| u ( h ) \|_{ L^2} = {\mathcal O} ( h^{-N} ) $
(for some fixed $ N$) be a wavefunction microlocalized in 
a compact set in $ T^* X $, in the sense that for some
$ A \in \Psi^{\rm{comp}}  $, one has $ u = A u + 
\Oo_{\CI } ( h^\infty ) $. The semiclassical wavefront set of $u$ is
then defined as:
\be\label{eq:defWF}
\WFh ( u ) =   \complement \big\{  \rho \in  T^*X 
\; : \; \exists \, a \in S^0 ( T^* X ) \,, \  \ a ( x, \xi ) =1
\,, \ \| a^w \, u \|_{L^2} = \Oo ( h^\infty) \big\}
 \,.
\ee
When $ A \in \Psi^\comp ( X )$ we also define 
\[  \WFh ( I - A ) := \bigcup_{ B \in \Psi^\comp ( X )} \WFh ( B ( I - A ) ), \]
and note that $ \WFh ( B ) \cap \WFh ( A ) $ is defined for any $ 
B \in \Psi^m ( X ) $ as 
\[  \WFh ( B ) \cap \WFh ( A ) := \WFh ( C B ) \cap \WFh ( A ) , \ \
C \in \Psi^\comp, \ \ \WFh ( I - C ) \cap \WFh ( A ) = \emptyset. \]

Semiclassical Sobolev spaces, $ H^s_h ( X) $ are defined using 
the norms
\begin{gather}
\label{eq:ssn}
\begin{gathered}
\| u \|_{H^s_h (X ) } = \| ( \Id - h^2 \Delta_{g} )^{s/2} u
\|_{L^2 ( X) } \,,  
\end{gathered}
\end{gather}
for some choice of Riemannian metric $ g $ on $X$ 
(notice that $ H^s_h ( X) $ represents the same vector space as the usual Sobolev space $H^s(X)$).

\subsection{\boldmath$S_{\frac12}$ calculus with two parameter}
\label{12c}
Another  standard space of symbols  $ S_{\delta} ( \RR^{2n}) $, $0<\delta\leq 1/2$, 
is defined  by demanding that $ \partial^\alpha a = { \mathcal O} (
h^{-|\alpha|\delta} ) $. The quantization procedure $ a \mapsto \Op a $ 
gives well defined operators and $ \Op a \circ \Op b = \Op c $
with $ c \in S_{\delta} $. 

For $0<\delta<1/2$ we still have a pseudodifferential calculus, with asymptotic  expansions in powers of $h$.
However, for $\delta=1/2$ we are at the border of
the uncertaintly principle, and there is no asymptotic calculus - 
see \cite[\S 4.4.1]{e-z}.
To obtain an asymptotic calculus the standard $ S_{\frac12} $ 
spaces is replaced by a symbol space where a second 
asymptotic parameter is introduced:
\[ \widetilde S_{\frac12} (  \RR^{2n} ) \defeq 
\big\{ a = a ( \rho, h , \tilde h ) \in \CI ( \RR^{2n}_\rho \times (0, 1]_{h}
\times ( 0, 1]_{\tilde h}  ) : |\partial_\rho^\alpha a | \leq C_\alpha
( h/\tilde h)^{-|\alpha|/2 } \big\}. \]
Then the quantization $ a \mapsto a^w ( x, h D) \in \widetilde
\Psi_{\frac12} ( \RR^n ) $ is 
unitarily equivalent to 
\begin{equation}
\label{eq:stre}  \tilde a \mapsto \tilde a ^w ( \tilde x , 
\tilde h D ) = \Opt(\ta) , \ \ \  \tilde a ( \rho ) = a ( (h/\tilde
h)^{\frac12} \rho ) . \ \ \ \tilde a \in S(\IR^{2n}) , 
\end{equation} -- see \cite[\S\S
4.1.1,4.7.2]{e-z}. Hence, we now have expansions
in powers of $ \th $, as in the standard calculus, 
with better properties (powers of $(h\th)^{\frac 12}$) when operators in $ \widetilde \Psi_{\frac12} $
and $ \Psi $ are composed -- see \cite[Lemma 3.6]{SZ10}. 

For the case of manifolds we refer to 
\cite[\S 5.1]{Da-Dy}  which generalizes and clarifies
the presentations in \cite[\S 3.3]{SZ10} and \cite[\S 3.2]{WuZ}. 
The basic space of symbols, and the only one needed here, is
\[ \begin{split} \widetilde S_{\frac12}^{\rm{comp} } ( T^*X )  = & \big\{ a \in \CIc ( T^*X ) : 
V_1 \cdots V_k a = {\mathcal O} ( (h/\tilde h)^{-\frac k 2} ) , \ \  \forall \; k ,\\
&
V_j \in \CI ( T^*X , T ( T^* X ) ) \big \} + h^\infty S^{-\infty } ( T^* X )
. 
\end{split} \]
The quantization procedure
\[   \widetilde S_{\frac12}^{\rm{comp} } ( T^*X ) \ni a \to \Op ( a ) \in \widetilde \Psi_{\frac12}^{\rm{comp} } ( X  )
 \]
defines the class of operators $\widetilde \Psi_{\frac12}^{\rm{comp} } ( X ) $ modulo $h^\infty\Psi^{-\infty}(X)$,
and the symbol map:
\begin{equation}
\label{eq:wisi}  \widetilde \sigma : 
\widetilde \Psi_
{\frac12}^{\rm{comp} } ( X  ) \longrightarrow 
 \widetilde S_{\frac12}^{\rm{comp} } ( T^*X ) / h^{\frac12} \tilde
 h^{\frac12}  \widetilde S_{\frac12}^{\rm{comp} } ( T^*X )  . 
\end{equation}
The properties of the resulting calculus are listed in 
\cite[Lemma 5.1]{Da-Dy} and we will refer to those results later on.

When $ \tilde h = 1 $ we use the notation 
$ S^{\comp}_{\frac12} ( T^* X ) $ for symbols and denote by 
$ \Psi^{\rm{comp}}_{\frac12} ( X ) $ 
the corresponding class of pseudodifferential 
operators.
 The symbol map 
\[  \sigma : \Psi^{\comp}_{\frac12} ( X ) \longrightarrow S_{\frac12}^{\comp}
( T^*X ) / h^{\frac12} S^{\comp}_{\frac12} ( T^*X ) , \]
is still well defined but the operators in this class do not enjoy a proper symbol
calculus in the sense that $ \sigma ( A B ) $ cannot be related to $
\sigma ( A ) \sigma ( B) $. However, when $ A \in
\Psi^{\rm{comp}}_{\frac12} ( X) $ and  $ B \in \Psi ( X )  $ then 
$ \sigma ( A B ) = \sigma ( A) \sigma ( B ) + {\mathcal O}( h^{\frac12} ) _{
  S_{\frac12} ( T^* X ) }  $ -- see \cite[Lemma
3.6]{SZ10} or \cite[Lemma 5.1]{Da-Dy}.

\subsection{Fourier integral operators}
\label{s:fio}

In this paper we will consider Fourier integral operators associated
to canonical transformations.  It will also be sufficient to consider
operators which are compactly microlocalized as we will always
work near $ p^{-1} ( [ - 2 \delta, 2 \delta ] ) \cap w^{-1} ( 0 ) $
which by assumption \eqref{eq:CAPe} is a compact subset of
$ T ^* X $. 

Suppose that $ Y_1, Y_2  $ are two compact smooth manifolds ($ Y_j = X $ or $
Y_j = \TT^n  $ in what follows) and that, $ U_j \subset T^* Y_j  $ are open subsets.
Let 
\[  \kappa : U_1 \to U_2 , \ \  \Gamma_\kappa'  \defeq \{ ( x, \xi , y, - \eta  ) :
(x,\xi ) = \kappa (  y , \eta ) ,  ( y , \eta )  \in U_1 \} \subset T^*Y_2 \times T^* Y_1 , \]
be a symplectic transformation, for instance $ \kappa = \varphi_t $, $
U_1 = U_2 = T^* X $. Here $ \Gamma_\kappa $ is the graph of $ \kappa $
and $ {}' $ denotes the twisting $ \eta \mapsto - \eta $. This follows
the standard convention \cite[Chapter 25]{Hor2}.

Following \cite[\S 5.2]{Da-Dy} we introduce the class of 
  {\em compactly microlocalized $h$-Fourier integral
  operator quantizing $ \kappa $},  $I^{\rm{comp}}_h ( Y_2 \times Y_1, \Gamma_\kappa'  ) $.
If $ T \in I^{\rm{comp}}_h ( Y_2 \times Y_1, \Gamma_\kappa'  )   $ then 
it has the following properties: $ T = {\mathcal O}( 1
) _{L^2 ( Y_1) \to
  L^2 ( Y_2 ) }  $;  there exist 
$  A_j \in \Psi^{\rm{comp}} ( Y_j ) $, $  \WFh (A_j ) \Subset U_j $ such
  that
\[  A_2 T = T + {\mathcal O}(h^\infty) _{ {\mathcal D}' ( Y_1 )  \to \CI ( Y_2 ) }, \ \ 
T A_1 = T + {\mathcal O}(h^\infty) _{ {\mathcal D}' ( Y_1 )  \to \CI ( Y_2 ) } ; \]
for any $ B_j  \in \Psi^m ( Y_j ) $, 
\be
\label{eq:Egor} \begin{split}  &  T B_1 = C_1 T + h T_1 , \ \  \sigma ( C_1 ) = \sigma (
B_1 )\circ \kappa^{-1} , \\  &  B_2 T = T C_2 + h T_2 , \ \  \sigma ( C_2 ) =
\sigma ( B_2 )\circ\kappa , \ \ T_j \in I^{\rm{comp}}_h ( Y_2 \times Y_1, \Gamma_\kappa'
). \end{split} \ee
The last statement is a form of Egorov theorem.

When $ B_j \in \widetilde \Psi_{\frac12}^{\comp} ( X ) $ then an
analogue of \eqref{eq:Egor} still holds in a modified form 
\be \label{eq:Egor12} \begin{split}  &  T B_1 = C_1 T +h^\frac12 \tilde
  h^{\frac12}  D_1 T_1 , \ \  \sigma ( C_1 ) = \sigma (
B_1 )\circ \kappa^{-1} , \\  &  B_2 T = T C_2 + h^\frac12 \tilde
  h^{\frac12}  T_2 D_2 , \ \  \sigma ( C_2 ) =
\sigma ( B_2 )\circ\kappa , 
\\
& \ \ \ \ T_j \in I^{\rm{comp}}_h ( Y_2 \times Y_1, \Gamma_\kappa' 
), \ \  C_j,  D_j \in \widetilde \Psi^\comp_{\frac12} ( X ) , 
\end{split} \ee
see Proposition \ref{p:EgorE} (applied with $ g \equiv 0 $).

An example is given by the operators
\begin{equation}
\label{eq:Aexp} A\, e^{ - i t P/h} ,\quad e^{ - i tP/h } A \in I^{\rm{comp}} ( X \times
X , \Gamma_{\varphi_t}' ), \ \text{ if $ A \in \Psi^{\rm{comp}} ( X )
  $. }
\end{equation}

In \S \ref{ats} we will also need a local representation of elements of 
$ I^\comp $ as oscillatory integrals -- see \cite{A},\cite[\S 3.2]{DyaGu} and
references given there. If $ T \in I^{\comp} ( \RR^n \times \RR^n , 
\Gamma_\kappa' ) $ is microlocalized to a sufficiently small 
neighbourhood $ \kappa ( U ) \times U \subset T^* \RR^n \times T^* \RR^n $
 (\cite[8.4.5]{e-z}) then 
\begin{equation}
\label{eq:osc-int} 
T u ( x ) = ( 2 \pi h )^{- \frac{k+n} 2}   \int_{\RR^{k}}
  \int_{\RR^n} e^{ \frac i h  \psi ( x , y,  \theta ) }  a ( x, y,
  \theta )  u ( y ) d y d \theta+{\mathcal O} (
  h^\infty) _{\mathcal S} \| u \|_{H^{-M} }  ,
\end{equation} 
for any $ M $. Here $ a \in \CIc ( \RR^{2n} \times \RR^{k } ) $, $ \psi \in \CI ( \RR^{2n} \times \RR^k ) $, and near $ \kappa
( U ) \times U $, the graph of $ \kappa $ is given by 
\begin{equation}
\label{eq:GammaK} 
\begin{gathered}   \Gamma_\kappa = \{ ( ( x, d_x \psi ( x, y ,
  \theta ) ) , ( y , 
- d_y \psi ( x, y , \theta ) ) \; : \;( x, y , \theta ) \in C_\psi
 \} , \\
C_\psi \defeq \{ ( x, y ,\theta ) \; : \; d_\theta \psi
( x, y , \theta ) = 0 ,\}  , \\ d_{ x, y,\theta } ( \partial_{\theta_j}
\psi ),  \ \ j = 1, \cdots, k , \ \text{ are linearly
  independent,} 
\end{gathered}
\end{equation}
For given symplectic coordinates $ ( x, \xi ) $ and $ ( y , \eta ) $
in neighbourhoods of $ \kappa ( U )$ and $ U $ respectively, 
such a representation exists with an extra variable of dimension $
k$, where $ 0 \leq k \leq n $, and $ n + k $
is equal to the rank of the projection \[  \Gamma_\kappa \ni ( ( x, \xi ) , ( y , \eta) ) \longmapsto ( x,
\eta ) , \]
assumed to be constant in the neighbourhood of $ \kappa ( U ) \times
U $ -- see for instance \cite[Theorem 2.14]{e-z}.  Since $
\Gamma_\kappa $ in \eqref{eq:GammaK} is the graph of a symplectomorphism
it follows that for some $ y' =(  y_{j_1} , \cdots, y_{j_{n-k} } ) \in
\RR^{n-k} $, 
\begin{equation}
\label{eq:nond}
D_\psi ( x ,y , \theta )  \defeq \det \Big( \frac{\partial^2 \psi}{\partial x_i \partial y'_{j'}} , \frac{\partial^2 \psi}{\partial x_i \partial \theta_j} \Big )   
\neq 0 . 
\end{equation}
For the use in \S \ref{ats} we record the following lemma, proved
using standard arguments (see for instance \cite{A}):

\begin{lem}
\label{l:fio}
Suppose that $ T $ is given by \eqref{eq:osc-int} and that $ B \in
\widetilde \Psi_{\frac12} ( \RR^n ) $. Then for any $ u\in L^2 $ with $ \| u
\|_{L^2} = 1$, 
\begin{equation*}
\begin{split}   B  T  u ( x ) & = ( 2 \pi h )^{- \frac{k+n} 2}   \int_{\RR^{k}}
  \int_{\RR^n} e^{ \frac i h  \psi ( x , y,  \theta ) }  a ( x, y, 
  \theta )\, b ( x, d_x \psi ( x, y , \theta ) )  u ( y ) d y
  d \theta + {\mathcal O} (
  h^{\frac12} \tilde h ^{\frac12}   ) _{ L^2}, \\
  T B u ( x ) & =  ( 2 \pi h )^{- \frac{n+k} 2}   \int_{\RR^{k}}
  \int_{\RR^n} e^{ \frac i h  \psi ( x , y,  \theta ) }  a ( x, y,
  \theta )\, b ( y, -d_y \psi ( x, y , \theta ) )  u ( y ) d y
  d \theta + {\mathcal O} (
  h^{\frac12} \tilde h ^{\frac12}  ) _{ L^2} , 
\end{split}
\end{equation*}
where $ b = \sigma ( B ) $. 
\end{lem}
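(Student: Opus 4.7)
The plan is to prove the first identity by direct composition; the second follows by a symmetric argument (via $T^*$ in place of $T$, or by performing the stationary phase on the other half of the phase). Using the Weyl quantization formula \eqref{eq:weyl} for $B$ together with the local representation \eqref{eq:osc-int} for $T$, one writes $BT$ as an iterated oscillatory integral
\begin{equation*}
BTu(x) = (2\pi h)^{-n - \frac{n+k}{2}}\!\!\int e^{iS/h}\, b\!\left(\tfrac{x+z}{2},\xi\right) a(z,y,\theta)\, u(y)\, dz\, d\xi\, dy\, d\theta + \cO(h^\infty)_{L^2},
\end{equation*}
with combined phase $S(x,z,\xi,y,\theta) = \langle x-z,\xi\rangle + \psi(z,y,\theta)$.

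Next I would apply stationary phase in the $2n$ variables $(z,\xi)$, treating $(y,\theta)$ as parameters. The critical-point equations $\partial_\xi S = x - z = 0$ and $\partial_z S = -\xi + d_z\psi(z,y,\theta) = 0$ give the unique critical point $(z_0,\xi_0) = (x,\, d_x\psi(x,y,\theta))$; the Hessian has determinant $(-1)^n$ and vanishing signature, so the stationary-phase prefactor is exactly $(2\pi h)^n$ with trivial Maslov factor. Evaluating the amplitude at this critical point yields the claimed main term
\begin{equation*}
(2\pi h)^{-\frac{n+k}{2}}\!\!\int e^{i\psi(x,y,\theta)/h}\, b(x,d_x\psi(x,y,\theta))\, a(x,y,\theta)\, u(y)\, dy\, d\theta.
\end{equation*}

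The crucial remaining task is to bound the stationary-phase remainder in the $L^2$ operator norm by $\cO(h^{1/2}\tilde h^{1/2})$. The idea is to Taylor-expand $b\bigl(\tfrac{x+z}{2},\xi\bigr) - b(x,\xi_0)$ only to first order around $(x,\xi_0)$, so that the residual amplitude carries a linear factor $(z-x)$ or $(\xi-\xi_0)$ multiplied by a single derivative $\nabla b$, of size $\cO((h/\tilde h)^{-1/2})$ in the $\widetilde S_{\frac12}$ class. One integration by parts using $(z-x)e^{iS/h} = ih\,\partial_\xi e^{iS/h}$ and $(\xi-\xi_0)e^{iS/h} = ih(\partial_z + \psi_{zz}\,\partial_\xi)e^{iS/h} + \cO((z-x)^2)e^{iS/h}$ (obtained from $\partial_z S = -(\xi-\xi_0)+\psi_{zz}(z-x)+\cdots$) converts the linear factor into a factor $h$ acting on an amplitude of size $\cO((h/\tilde h)^{-1/2})$, producing an amplitude of size $h \cdot (h/\tilde h)^{-1/2} = h^{1/2}\tilde h^{1/2}$ lying in the $\widetilde S_{\frac12}^{\comp}$ class. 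The Calder\'on--Vaillancourt-type $L^2$ bound for operators in $\widetilde\Psi_{\frac12}^{\comp}$ \cite[Lemma 5.1]{Da-Dy} then yields the desired estimate, with the $\cO((z-x)^2)$ terms absorbed by a further integration by parts.

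The main obstacle is precisely this balance: a naive second-order Taylor expansion of $b$ would yield a remainder of size $h\cdot|\partial^2 b| = \cO(\tilde h)$, strictly weaker than the claim, so one must stop the Taylor expansion at first order and exploit the $h$-gain from a single integration by parts combined with the $(h/\tilde h)^{-1/2}$ scaling of $\nabla b$ in $\widetilde S_{\frac12}$. The $TB$ identity is proved identically, with stationary phase producing the critical value $\xi_0 = -d_y\psi(x,y,\theta)$ coming from the source endpoint of the canonical relation.
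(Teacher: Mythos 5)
Your overall strategy --- writing $BT$ as an iterated oscillatory integral, doing stationary phase in the internal variables $(z,\xi)$, and then controlling the remainder via Taylor expansion of $b$ around the critical point combined with integration by parts --- is exactly the standard route, and is what the paper means by ``standard arguments (see \cite{A})''.  The critical point, Hessian, and main-term computations are all fine.

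The gap is in the remainder bookkeeping.  You correctly observe that a blind second-order stationary-phase expansion yields a correction $h\,L_1 A\big|_{(x,\xi_0)}$ whose dominant pieces involve two derivatives of $b$ and therefore have size $h\,(h/\tilde h)^{-1}=\tilde h$, and you try to avoid this by stopping the Taylor expansion at first order and performing a single integration by parts.  But the first-order linear term $\nabla b(x,\xi_0)\cdot\bigl(\tfrac{z-x}{2},\,\xi-\xi_0\bigr)$ has, as its Taylor remainder, a genuine second-order contribution $\tfrac12\partial^2 b(x,\xi_0)[\Delta]^2$, and its mixed piece
\[
\partial_w\partial_\xi b(x,\xi_0)\cdot\tfrac{z-x}{2}\,(\xi-\xi_0)\,a(z,y,\theta)
\]
does \emph{not} gain a factor $h^2$ from integration by parts: using $(z-x)\,e^{iS/h}=ih\,\partial_\xi e^{iS/h}$ once, the derivative $\partial_\xi$ hits $(\xi-\xi_0)$ and produces the identity, so one lands on
$-\tfrac{ih}{2}\,\partial_w\partial_\xi b(x,\xi_0)\,a$, of size $h\,(h/\tilde h)^{-1}=\tilde h$.  (The same $\tilde h$ appears if one uses the integral form of the first-order remainder, since the outgoing $\partial_{z,\xi}$ from the single IBP can then hit the $\nabla b$ factor inside the integral, again giving $\partial^2 b$.)  A second $O(\tilde h)$ contribution of the same nature is $\tfrac12\psi_{zz}:\partial_\xi^2 b(x,\xi_0)\,a$, coming from the $\partial_\xi^2$ part of the stationary-phase operator $\langle H^{-1}D,D\rangle$ when $\psi$ is not linear in its first argument.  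Your sentence ``producing an amplitude of size $h\cdot(h/\tilde h)^{-1/2}=h^{1/2}\tilde h^{1/2}$'' implicitly assumes the derivative only ever sees one derivative of $b$, which the cross term violates; the ``$\cO((z-x)^2)$ absorbed by a further integration by parts'' likewise misses that here the first IBP already consumed both $\Delta$-factors but only produced one power of $h$.

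So as written the argument establishes $\cO(\tilde h)_{L^2}$ but not $\cO(h^{1/2}\tilde h^{1/2})_{L^2}$.  Worth noting: the $\cO(\tilde h)$ obstruction is genuine.  Already for $T$ a microlocal cutoff (so $\psi(z,y,\eta)=\langle z-y,\eta\rangle$, $a\approx 1$), the main term of the lemma is the left quantization $b(x,hD)$ while $B=b^{w}$ is the Weyl quantization, and their difference has Weyl symbol $\tfrac{ih}{2}\partial_x\partial_\xi b + \cO(\tilde h^2)$, i.e.\ operator norm $\asymp \tilde h$ for generic $b\in\widetilde S_{\frac12}$.  Thus the single-IBP scheme cannot be pushed to $\cO(h^{1/2}\tilde h^{1/2})$; the sharp bound from this approach is $\cO(\tilde h)$.  (For the applications in \S\ref{ats} this weaker bound appears to suffice, since the final estimates there are stated in powers of $\tilde h$ anyway, but that is a separate point from whether your proof establishes the claimed estimate.)
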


\subsection{Fourier integral operators with operator valued symbols}
\label{fioo}

In \S \ref{ats} we will also use a class of Fourier integral operators
with {\em operator valued} symbols. We present what we need in an
abstract form in this section. Only local aspects of the theory will
be relevant to us and we opt for a direct presentation.

Suppose that $ {\mathcal H} $ is a separable Hilbert space and $ Q $ is an
(unbounded) self-adjoint operator with domain $ {\mathcal D} \subset
{\mathcal H} $.  We assume that $ Q :  {\mathcal D} \to {\mathcal H} $
is invertible and we put $ {\mathcal D}^\ell \defeq Q^{-\ell} {\mathcal H}
$, for $ \ell \geq 0 $. For $ \ell < 0 $, we define 
$ {\mathcal D}^{\ell } $ as the completion of $ \mathcal H $
with respect to the norm $ \| Q^\ell u \|_{\mathcal H} $.  

We define the following class of operator valued symbols:
\begin{equation}
\label{eq:ovs}
{\mathcal S}_{\delta} ( \RR^{2n} \times \RR^{k}, {\mathcal H},
{\mathcal D} ), 
\end{equation}
to consist of operator valued functions 
\[  \RR^{n} \times \RR^n \times \RR^{k} \ni ( x, y , \theta ) \longmapsto N
( x, y , \theta ) : {\mathcal D}^\infty \longrightarrow  {\mathcal H} \]
which satisfy the following estimates:
\begin{equation}
\label{eq:NO}   \partial^\alpha_{x, y , \theta } N ( x, y , \theta ) = {\mathcal
  O}_{\alpha , \ell } ( 1 ) :   
 {\mathcal D}^{ \ell + \delta | \alpha| }
\longrightarrow {\mathcal D}^\ell  , \end{equation}
for any multiindex $\alpha$ and $ \ell\in\ZZ $, uniformly in $ ( x, y , \theta ) $.  
We note that this class is closed under pointwise composition of the operators:
if $ N_j \in {\mathcal S}_\delta $ then $ N_j $ defines a family of
operators $ \mathcal D^\ell \to \mathcal D^\ell $,  hence
so does their product $N_1N_2$; the estimate \eqref{eq:NO} follows for the composition,
since for $ |\beta | + | \gamma | = | \alpha| $,
\[  \partial^\beta N_1 \partial^\gamma N_2 = 
{\mathcal O}( 1 )_{ \mathcal D^{ \ell + \delta | \alpha| } 
\to \mathcal D^{ \ell + \delta ( | \alpha | - |\beta | )} }
{\mathcal O}( 1 )_{ \mathcal D^{ \ell + \delta( | \alpha| 
- |\beta|)} 
\to \mathcal D^{ \ell } } = 
{\mathcal O} ( 1 )_{
 {\mathcal D}^{ \ell + \delta | \alpha| }
\longrightarrow {\mathcal D}^\ell}  . \]
Proposition \ref{p:meta} at the end of this section 
describes a class which will be used in 
\S \ref{ats}.

Suppose that $ \psi $ satisfies \eqref{eq:GammaK} and
\eqref{eq:nond}. We can 
assume that $ \psi $ is defined on $ \RR^{2n } \times \RR^{k} $.
For $ N \in {\mathcal S}_\delta $ and $ a \in \CIc ( \RR^{2n} \times
\RR^{k} ) $ we define the operator
\[   T : L^2 ( \RR^n ) \otimes {\mathcal H} \longrightarrow  L^2 ( \RR^n ) \otimes
{\mathcal H} ,  \ \ \ \ L^2 ( \RR^n ) \otimes \mathcal H 
\simeq L^2 ( \RR^n , \mathcal H) , \]
(the second identification is valid as $ \mathcal H $ is separable 
\cite[Theorem II.10]{resi} but it is convenient in definitions to 
use the tensor product notation) by 
\begin{equation}
\label{eq:defT}
T ( u \otimes v ) \defeq 
( 2 \pi h )^{- \frac{n+k} 2}   \int_{\RR^{k}}
  \int_{\RR^n} e^{ \frac i h  \psi ( x , y,  \theta ) }  a ( x, y,
  \theta )  \left( u ( y ) \otimes N ( x, y , \theta ) v \right) d y d
  \theta.
\end{equation}
This operator is well-defined since $ a $ is compactly supported, but to
obtain a norm estimate which is uniform in $ h$  we need to assume
that $ N \in \mathcal S_0 $:
\begin{lem}
\label{l:normf}
Suppose that $ N \in \mathcal S_0 ( \RR^{2n+k}, {\mathcal H}, \mathcal
D )
 $ and that $ T $ is given by \eqref{eq:defT}. Then 
\begin{equation}
\label{eq:normf} \| T \|_{ L^2 ( \RR^n ) \otimes \mathcal H \to L^2 ( \RR^n )
  \otimes \mathcal H } = \max_{ C_\psi} \frac{ \  \  \   | a | \| N
  \|_{\mathcal H \to \mathcal H}   } {
  \sqrt{| D_\psi |} } + {\mathcal O} ( h) , 
\end{equation} 
where $ C_\psi \defeq \{ ( x, y , \theta ) : \partial_\theta
\psi = 0 \} $, and $ D_\psi $ is given by \eqref{eq:nond}. 

If $ 
N \in \mathcal S_\delta ( \RR^{2n + k } , \mathcal H, \mathcal D) $
then 
\begin{equation}
\label{eq:normf1}
T = {\mathcal O} ( 1 ) : L^2 ( \RR^n ) \otimes {\mathcal D}^{ \delta m_n
  + \ell } \longrightarrow 
L^2 ( \RR^n ) \otimes {\mathcal D}^{ \ell} , \end{equation}
where $ m_n $ depends only on the dimension $ n $.
 \end{lem}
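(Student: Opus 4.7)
The plan is to use the $T^{*}T$ argument combined with stationary phase, adapted to operator-valued amplitudes. Since $N\in\mathcal{S}_0$ has all derivatives uniformly bounded on $\mathcal{H}$, the classical scalar proof of $L^2$-boundedness for compactly supported FIO kernels should transfer term by term, with the scalar amplitude $|a|$ replaced by $\|N\|_{\mathcal{H}\to\mathcal{H}}$ at the leading order.

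First, I would compute the Schwartz kernel of $T^{*}T:L^2(\RR^n)\otimes\mathcal{H}\to L^2(\RR^n)\otimes\mathcal{H}$. It has the form
\begin{equation*}
K(y,y') = (2\pi h)^{-(n+k)}\int\!\!\int\!\!\int e^{\frac{i}{h}(\psi(x,y',\theta')-\psi(x,y,\theta))}\bar a(x,y,\theta)a(x,y',\theta')\,N(x,y,\theta)^{*}N(x,y',\theta')\,dx\,d\theta\,d\theta',
\end{equation*}
where the integrand is an operator in $\mathcal{B}(\mathcal{H})$ depending smoothly on $(x,y,y',\theta,\theta')$. Since $\Gamma_\kappa$ is the graph of a symplectomorphism and $D_\psi\neq 0$, the phase in $(x,\theta,\theta')$ is non-degenerate at its critical set, which (after setting $y=y'$) coincides with the diagonal of $C_\psi\times C_\psi$ and lies above the identity canonical relation $T^{*}\RR^n\to T^{*}\RR^n$.

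Second, I would apply an operator-valued stationary phase to the $(x,\theta,\theta')$-integration. Because $N$ and all its $(x,y,\theta)$-derivatives are uniformly bounded in $\mathcal{B}(\mathcal{H})$ (this is exactly the content of $N\in\mathcal{S}_0$), the standard scalar expansion goes through with bounds in operator norm; the leading contribution identifies $T^{*}T$ as a semiclassical pseudodifferential operator on $\RR^n$ with $\mathcal{B}(\mathcal{H})$-valued symbol whose value on $T^{*}\RR^n$ is
\begin{equation*}
q(y,\eta) = \frac{|a|^{2}\,N^{*}N}{|D_\psi|}\bigg|_{C_\psi},
\end{equation*}
modulo $h\mathcal{S}_0$, where the critical-set parameters are expressed in $(y,\eta)$. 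An operator-valued version of the Calder\'on--Vaillancourt theorem (whose scalar proof, commuting derivatives through the quantization, extends verbatim with operator-norm estimates in $\mathcal{H}$) then yields
\begin{equation*}
\|T^{*}T\|_{L^{2}\otimes\mathcal{H}\to L^{2}\otimes\mathcal{H}} \le \sup_{C_\psi}\frac{|a|^{2}\,\|N\|_{\mathcal{H}\to\mathcal{H}}^{2}}{|D_\psi|} + \mathcal{O}(h),
\end{equation*}
and taking square roots gives \eqref{eq:normf}.

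Third, for the case $N\in\mathcal{S}_\delta$, each $\partial^\alpha N$ now only maps $\mathcal{D}^{\ell+\delta|\alpha|}\to\mathcal{D}^\ell$, so one loses $\delta|\alpha|$ units on the $\mathcal{D}^\ell$-scale for each derivative used in the stationary phase expansion and the Calder\'on--Vaillancourt argument. Cutting off the expansion at an order depending only on $n$ and controlling the remainder through integration by parts, the total loss amounts to a fixed $m_n\in\NN$ depending only on $n$, yielding \eqref{eq:normf1}.

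The main obstacle is making the operator-valued stationary phase and Calder\'on--Vaillancourt rigorous in a way compatible with the scale of spaces $\mathcal{D}^\ell$, in particular tracking how derivatives of $N$ compose and produce bounded operators between the appropriate $\mathcal{D}^\ell$. The standard proofs adapt, since all manipulations commute smoothly with functional calculus of $Q$ and the estimates \eqref{eq:NO} are closed under composition, but care is needed to identify the optimal $m_n$ and to verify that the remainder term from stationary phase is indeed $\mathcal{O}(h)$ in the bounded case $\delta=0$.
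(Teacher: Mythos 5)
Your treatment of \eqref{eq:normf} takes essentially the same route as the paper: a $T^*T$ argument, stationary phase, and the semiclassical norm theorem \cite[Theorem 13.13]{e-z}, combined with the observation that the $\mathcal{S}_0$ estimates \eqref{eq:NO} (at $\ell=0$) let the operator-valued amplitude $N$ be handled exactly like a scalar one in $\mathcal{B}(\mathcal{H})$-norm.

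For \eqref{eq:normf1}, however, your plan has a gap. You propose to re-run the stationary phase and Calder\'on--Vaillancourt estimates tracking the loss of $\delta$ units on the $\mathcal{D}^\ell$-scale per derivative of $N$, and to conclude that the "total loss amounts to a fixed $m_n$." But the $T^*T$ device you relied on for \eqref{eq:normf} presupposes that $T$ acts on a single Hilbert space, which is precisely what fails when $N\in\mathcal{S}_\delta$: the operator is supposed to map $L^2\otimes\mathcal{D}^{\ell+\delta m_n}$ to $L^2\otimes\mathcal{D}^{\ell}$. Reformulating $T^*T$ and the symbol calculus between different members of the scale is not automatic, and you do not say how it closes. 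The paper's mechanism is cleaner and is the one you should use: observe that $Q^{-L}T$ is again of the form \eqref{eq:defT}, with $N$ replaced by $Q^{-L}N(x,y,\theta)$, and that \eqref{eq:NO} gives
\[
\partial^\alpha_{x,y,\theta}\big(Q^{-L}N(x,y,\theta)\big)=\mathcal{O}(1):\mathcal{H}\to\mathcal{H}
\quad\text{for } |\alpha|\,\delta\leq L.
\]
Since the norm estimate \eqref{eq:normf} consumes only finitely many ($m_n$) derivatives of the amplitude, choosing $L\geq m_n\delta$ places $Q^{-L}T$ in the already proved $\mathcal{S}_0$ case, giving $T=\mathcal{O}(1):L^2\otimes\mathcal{H}\to L^2\otimes\mathcal{D}^{-L}$. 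The general $\ell$ then follows by applying the same reasoning to $Q^{\ell}TQ^{-\ell-L}$, whose operator-valued symbol $Q^{\ell}NQ^{-\ell-L}$ again lies in $\mathcal{S}_0$ in the relevant range of derivatives. This keeps the entire argument at the level of a single Hilbert space $L^2\otimes\mathcal{H}$ and avoids re-doing any of the stationary-phase bookkeeping.
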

\begin{proof}
The  estimate \eqref{eq:normf} follows from 
a standard argument based  on considering $ T ^* T $ and from
\cite[Theorem 13.13]{e-z}. The estimates \eqref{eq:NO} with $ 
\delta = 0 $ and $ \ell = 0 $ show that the operators can be treated 
just as scalar symbols. 

To obtain \eqref{eq:normf1} we note that
\[  \partial^\alpha_{ x, y , \theta } \left( Q^{-L} N ( x, y , \theta )
\right) = {\mathcal O} ( 1 ) : \mathcal H \to \mathcal H , \ \ 
\text{ for $ | \alpha | \delta  \leq L $.} \]
To obtain the norm estimate \eqref{eq:normf} we only need
a finite number of derivatives, $M$, depending only on 
the dimension. Taking $ m_n\delta \geq L$, we can then apply \eqref{eq:normf} to the operator $ Q^{-L} T $, which gives the bound \eqref{eq:normf1}
for $ T $.
\end{proof}

A special case of is given by $ \kappa = id $. In that
case we deal with pseudodifferential operators with 
operator valued symbols. 
The following lemma summarizes their basic properties:
\begin{lem}
\label{l:prodop}
Suppose that $ N_j \in \mathcal S_{\delta_j } ( \RR^{2n} ) $, $j=1,2$.
For $ u \in {\mathscr S} 
 ( \RR^n )$ and $ v \in {\mathcal D}^\infty $ we define
\[ \Op ( N_j ) ( u \otimes v ) \defeq 
\frac{1}{( 2 \pi h)^n } \int e^{ \frac i h \langle x - y , \xi 
\rangle } \left[ N_j ( \textstyle{ \frac{x+y} 2 } , \xi ) v \right] 
u ( y ) dy d \xi . \]
These operators extend to 
\begin{equation}
\label{eq:normN}
 \Op ( N_j ) = {\mathcal O} ( 1 ) : L^2 ( \RR^n ) \otimes {\mathcal D}^{ \ell 
+ m_n \delta_j } \to L^2 ( \RR^n ) \otimes {\mathcal D}^\ell , 
\end{equation}
and satisfy the following product formula:
\begin{equation}
\label{eq:propop}
\Op (N_1 ) \Op ( N_2 ) = \Op ( N_1 N_2 ) + h R , \ \ \ 
R = {\mathcal O} ( 1 ) : 
L^2 ( \RR^n ) \otimes {\mathcal D}^{ \ell 
+ m_n ( \delta_1 + \delta_2 )  } \to L^2 ( \RR^n ) \otimes {\mathcal D}^\ell .
\end{equation}
Here and in \eqref{eq:normN}, 
$ \ell$ is arbitray and  $ m_n $ depends only on the dimension $ n $.
\end{lem}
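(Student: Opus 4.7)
The first bound \eqref{eq:normN} is a direct consequence of Lemma \ref{l:normf}. Writing the Weyl quantization $\Op(N_j)$ as an oscillatory integral \eqref{eq:defT} with phase $\psi(x,y,\theta) = \langle x - y, \theta\rangle$ on $\RR^n_x \times \RR^n_y \times \RR^n_\theta$ (so $k = n$), the critical set $C_\psi$ reduces to $\{x = y\}$ and the Jacobian $D_\psi$ of \eqref{eq:nond} is identically $1$, so the hypotheses of Lemma \ref{l:normf} hold with $\kappa = \mathrm{id}$. The estimate \eqref{eq:normf1} then immediately yields $\Op(N_j) = \mathcal O(1) : L^2(\RR^n) \otimes \mathcal D^{\ell + m_n \delta_j} \to L^2(\RR^n) \otimes \mathcal D^{\ell}$.

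For the composition formula \eqref{eq:propop}, the plan is to run the standard Weyl--Moyal composition argument while carefully tracking the regularity shifts in the $\mathcal D^\bullet$-scale caused by the non-commuting operator-valued symbols. First I would compute the kernel of $\Op(N_1)\Op(N_2)$ as an iterated oscillatory integral and, after a linear change of variables, recognize it as the Weyl quantization of the operator-valued Moyal product
\begin{equation*}
(N_1 \#_h N_2)(x,\xi) = \frac{1}{(\pi h)^{2n}} \iint e^{-\tfrac{2i}{h}\sigma((z_1,\zeta_1),(z_2,\zeta_2))}\, N_1(x+z_1,\xi+\zeta_1)\,N_2(x+z_2,\xi+\zeta_2)\, dz\,d\zeta,
\end{equation*}
where the product of the operator values is taken in the indicated order. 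A Taylor expansion of the integrand at the critical point $(z_j,\zeta_j) = (0,0)$, combined with integration by parts against the non-degenerate phase, yields $N_1 \#_h N_2 = N_1 N_2 + h\,R_h + \mathcal O(h^\infty)$, where $R_h$ is a finite sum of oscillatory integrals whose integrands involve $\partial^\alpha N_1 \cdot \partial^\beta N_2$ with $|\alpha|+|\beta|$ bounded by a constant depending only on $n$.

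The core technical check is then to verify that $R_h$ is an operator-valued symbol in the class $\mathcal S_{\max(\delta_1,\delta_2)}(\RR^{2n},\mathcal H,\mathcal D)$ with a total regularity loss of at most $c_n(\delta_1+\delta_2)$ steps in the $\mathcal D^\bullet$-scale, uniformly in $h$. This follows because each derivative applied to $N_j$ costs $\delta_j$ regularity steps by \eqref{eq:NO}, and because the oscillatory integral defining $R_h$ is uniformly controlled by the method of stationary phase in the operator-valued setting (the bounds \eqref{eq:NO} being preserved under pointwise composition, as noted after \eqref{eq:NO}). Once this is established, one applies \eqref{eq:normN} to $R_h$, absorbing into $m_n$ the additional derivatives needed for the Calder\'on--Vaillancourt-type bound built into Lemma \ref{l:normf}; this gives $R = \Op(R_h) = \mathcal O(1) : L^2(\RR^n) \otimes \mathcal D^{\ell + m_n(\delta_1+\delta_2)} \to L^2(\RR^n) \otimes \mathcal D^{\ell}$.

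The main obstacle is essentially bookkeeping: one must make sure that the non-commutativity of the operator values does not disrupt the Weyl composition formula (it only affects the ordering of the factors), and that the regularity losses from the $\delta_j$-derivatives in \eqref{eq:NO} compose correctly through the oscillatory integral expansion, so that a single constant $m_n$ depending only on $n$ suffices in both \eqref{eq:normN} and \eqref{eq:propop}.
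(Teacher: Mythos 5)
Your plan for the composition formula heads in a workable direction but misses the key simplifying idea of the paper's proof, and your derivation of \eqref{eq:normN} has a technical flaw.

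For \eqref{eq:normN}: you cannot simply invoke Lemma~\ref{l:normf}, because that lemma requires a compactly supported scalar amplitude $a$, whereas the Weyl quantization $\Op(N_j)$ has no such amplitude (or rather $a\equiv 1$, which is not compactly supported, so both the $T^*T$ argument and the invocation of \cite[Theorem 13.13]{e-z} in the proof of \eqref{eq:normf} break down as stated). The paper instead gets \eqref{eq:normN} by noting that \eqref{eq:NO} gives $\partial^\alpha_{x,\xi} Q^{-L} N_j Q^{-M} = \mathcal O(1):\mathcal D^\ell\to\mathcal D^\ell$ for $|\alpha|\delta_j\le L+M$ (this is \eqref{eq:QN1N2}), so that after conjugation by powers of $Q$ one is in the $\delta=0$ situation where the vector-valued Calder\'on--Vaillancourt argument (\cite[\S\S 4.4--4.5]{e-z}) applies directly with finitely many seminorms.

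For \eqref{eq:propop}: you propose to run the Moyal expansion for the $\delta_j>0$ operator-valued symbols directly and track the regularity losses through the stationary-phase remainder. This is in principle doable, but it is precisely the laborious bookkeeping (operator-valued stationary phase with non-commuting, unbounded values, uniform control of the remainder in the $\mathcal D^\bullet$-scale) that the paper avoids with a one-line reduction. The paper's key observation is that $Q$ acts only on the $\mathcal H$-factor and hence commutes with the Weyl quantization in $x$, so
\[
\Op(N_1)\Op(N_2) \;=\; Q^{M}\,\Op\!\bigl(Q^{-M}N_1\bigr)\,\Op\!\bigl(N_2 Q^{-M}\bigr)\,Q^{M}.
\]
Since $Q^{-M}N_1$ and $N_2 Q^{-M}$ are $\delta=0$ symbols (by \eqref{eq:QN1N2}, for $M$ large depending on $n$), the already-established $\delta=0$ composition formula applies to the middle factor, and conjugating back by $Q^{\pm M}$ produces the regularity shift $\mathcal D^{p+M}\to\mathcal D^{p-M}$, with $2M\le m_n(\delta_1+\delta_2)$. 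This completely bypasses the operator-valued stationary phase analysis you identify as the ``core technical check'' but do not actually carry out. Your proposal as written therefore has a genuine gap at exactly that point; the conjugation trick is the ingredient that closes it.
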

\begin{proof}
When $ \delta_1 = \delta_2 = 0 $ the proof is an immediate
vector valued adaptation of the standard arguments presented
in \cite[\S\S4.4,4.5]{e-z} where we note that only a finite
number (depending on the dimension) of seminorms of symbols 
is needed.
In general, \eqref{eq:NO} gives
\begin{equation} 
\label{eq:QN1N2}  \partial_{x,\xi}^\alpha  Q^{-L } N_j Q^{-M } = {\mathcal O} ( 1 ) 
: \mathcal D^\ell \to \mathcal D^\ell , \ \  | \alpha| \delta_j \leq L + M , \end{equation}
and the norm estimates \eqref{eq:normN} follows.  
To obtain the product formula we note that, using \eqref{eq:QN1N2},
 it applies to $ Q^{-M} N_1 $ and $ N_2 Q^{-M} $ for $ M $
 sufficiently large depending on $ n $. Hence
\[ \begin{split} \Op ( N_1 ) \Op ( N_2 ) & = Q^{M} \Op ( Q^{-M} N_1 ) \Op ( N_2 Q^{-M}
) Q^M \\
& =  Q^M \Op ( Q^{-M} N_1 N_2 Q^{-M} ) Q^M + Q^M {\mathcal O}
(h)_{L^2 \otimes \mathcal D^p \to L^2 \otimes \mathcal D^p}  Q^M\\
&  =
\Op ( N_1 N_2 ) + \mathcal O ( h)_{ L^2 \otimes \mathcal D^{p+M} \to
L^2 \otimes  \mathcal D^{p-M}  } ,
\end{split}
\]
which gives \eqref{eq:propop} provided $ m_n(\delta_1+\delta_2) \geq 2M $.
\end{proof}

We can also factorize the operator $ T $ using the pseudodifferential
operators described in Lemma \ref{l:prodop}, the proof being
an adaptation of the standard argument.
 When $ S : L^2 ( \RR^n ) \to
L^2 ( \RR^n ) $ we also write $ S $ for  $ S \otimes \Id_\cH :
L^2 ( \RR^n ) \otimes \cH \to L^2 ( \RR^n ) \otimes \cH $.
\begin{lem}
\label{l:fact}
Suppose that $ T $ is given by \eqref{eq:defT} with $ N \in \mathcal
S_\delta $. Then 
\[  T =  T^\parallel \Op (  N_1 ) + h R_1 = \Op ( N_2 ) T^\parallel +
h R_2 , \]
where 
\begin{equation}
\label{eq:fact}
\begin{gathered}
T^\parallel\in I^\comp ( \RR^n \times \RR^n , \Gamma_\kappa'), \ \ 
T^\parallel u ( x )   = ( 2 \pi h )^{- \frac{k+n} 2}   \int_{\RR^{k}}
  \int_{\RR^n} e^{ \frac i h  \psi ( x , y,  \theta ) }  a ( x, y,
  \theta )  u ( y ) d y d \theta, \\
N_2 ( x , d_x \psi ( x, y, \theta )) = 
N_1 ( y , - d_y \psi ( x, y, \theta )) = N( x, y , \theta ) , \quad  ( x, y, \theta )\in C_\psi \\
R_j =  {\mathcal O} ( 1 ) : L^2 ( \RR^n ) \otimes {\mathcal D}^{ \delta m_n
  + \ell } \longrightarrow 
L^2 ( \RR^n ) \otimes {\mathcal D}^{ \ell} .
\end{gathered}
\end{equation}
Here, $ N_j \in \mathcal S_\delta ( \RR^n \times \RR^n , {\mathcal H},
\mathcal D) $,  and 
\[ \Op ( N_j ) = {\mathcal O} ( 1 ) :  L^2 ( \RR^n ) \otimes {\mathcal
  D}^{\delta m_n + \ell}   \longrightarrow L^2
( \RR^n ) \otimes {\mathcal D}^\ell  . \]
\end{lem}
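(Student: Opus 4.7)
The plan is to construct $N_1$ and $N_2$ first, and then verify each factorization by a vector-valued stationary phase argument, which essentially reduces to the scalar composition formula for pseudodifferential operators with Fourier integral operators.

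\textbf{Step 1: Construction of $N_1$ and $N_2$.} Since $\Gamma_\kappa$ is the graph of a symplectomorphism, the two projections
\[ C_\psi \ni (x,y,\theta) \longmapsto (y, -d_y\psi(x,y,\theta))\in T^*\RR^n, \qquad (x,y,\theta) \longmapsto (x, d_x\psi(x,y,\theta)) \in T^*\RR^n \]
are local diffeomorphisms onto their images (see \eqref{eq:GammaK}, \eqref{eq:nond}). I would define $N_1(y,\eta)$ in a neighborhood of the first image by pulling $N$ back through the inverse of the first projection, then cut off and extend by zero to all of $T^*\RR^n$ (multiplying by an appropriate smooth cutoff equal to $1$ near the essential support of $a$ projected to $(y,-d_y\psi)$). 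The bounds \eqref{eq:NO} on $N$ transfer to the same bounds on $N_1$ since the construction is smooth and $h$-independent; hence $N_1 \in \mathcal S_\delta$. The construction of $N_2$ is analogous.

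\textbf{Step 2: Reduction of $T^\parallel \Op(N_1)$ to an oscillatory integral with operator-valued amplitude.} Inserting the definition of $\Op(N_1)$ into $T^\parallel$ gives
\[ T^\parallel \Op(N_1)(u\otimes v)(x) = (2\pi h)^{-(n+k)/2 - n}\int e^{\frac{i}{h}[\psi(x,y,\theta) + \langle y-y',\eta\rangle]} a(x,y,\theta)\, [N_1\bigl(\tfrac{y+y'}{2},\eta\bigr)v]\, u(y')\, dy\, d\theta\, dy'\, d\eta . \]
I would apply stationary phase in the variables $(y,\eta)\in\RR^{2n}$, treating $(x,\theta,y')$ as parameters. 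The critical set is $\{y=y',\ \eta = -d_y\psi(x,y,\theta)\}$, the critical value is $\psi(x,y',\theta)$, and the Hessian is the block matrix $\left(\begin{smallmatrix}\psi_{yy} & I \\ I & 0\end{smallmatrix}\right)$, with determinant $(-1)^n$ and signature $0$, so the stationary phase constant is unity. The leading term is
\[ (2\pi h)^{-(n+k)/2}\int e^{\frac{i}{h}\psi(x,y',\theta)}a(x,y',\theta)\,[N_1(y', -d_y\psi(x,y',\theta))v]\, u(y')\, dy'\, d\theta , \]
which by the defining property of $N_1$ on $C_\psi$ equals $T(u\otimes v)(x)$. (The amplitude $a(x,y,\theta)N_1(\tfrac{y+y'}{2},\eta)$ agrees with $a(x,y',\theta)N(x,y',\theta)$ at $y=y'$, $\eta=-d_y\psi$, which is enough for the leading order.)

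\textbf{Step 3: Remainder estimate.} The $h$-remainder in stationary phase is expressible as $h$ times an oscillatory integral whose amplitude involves two derivatives of the full symbol $a\cdot N_1$. The scalar part contributes bounded factors; the operator part contributes $\partial^\alpha N_1$ with $|\alpha|\leq m_n$ (for some $m_n$ depending only on $n$, equal to the number of integration-by-parts steps needed to justify stationary phase in the operator-valued setting; compare Lemma~\ref{l:normf} and Lemma~\ref{l:prodop}). By \eqref{eq:NO} each such derivative maps $\mathcal D^{\ell+\delta m_n}\to\mathcal D^\ell$ boundedly. Combined with the scalar $L^2$-boundedness of the remaining oscillatory integral (via a vector-valued analogue of Lemma~\ref{l:normf} with the $h$ factor absorbed), this gives
\[ R_1 = \mathcal O(1)\,:\, L^2(\RR^n)\otimes \mathcal D^{\ell + \delta m_n} \longrightarrow L^2(\RR^n)\otimes \mathcal D^\ell, \]
uniformly in $h$. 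The factorization $T = \Op(N_2) T^\parallel + h R_2$ follows by the same argument with stationary phase performed in the $(x,\xi)$ variables instead, using the defining property of $N_2$.

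\textbf{Main obstacle.} The nontrivial point is making the vector-valued stationary phase rigorous with amplitudes in $\mathcal S_\delta$. The classical scalar proof proceeds via integration by parts or via a Fourier-side Taylor expansion, and both schemes introduce derivatives of the amplitude; for operator-valued $N_1$, one must track these derivatives within the $\mathcal D^\ell$-scale and verify that only finitely many ($m_n$) of them are used. Once this bookkeeping is done---following the template already set up in Lemma~\ref{l:prodop}, where the same issue arose for the composition of operator-valued pseudodifferential operators---the proof is a direct adaptation of the scalar case and the remainder estimates \eqref{eq:fact} follow.
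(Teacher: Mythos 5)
The paper does not spell out a proof of this lemma---it simply says ``the proof being an adaptation of the standard argument''---and your proposal supplies exactly that standard argument: construct $N_1,N_2$ by pullback from $C_\psi$, write out $T^\parallel\Op(N_1)$, perform stationary phase in $(y,\eta)$, and track derivatives of the operator-valued symbol in the $\mathcal D^\ell$-scale as in Lemma~\ref{l:prodop}. The Hessian computation and the signature/determinant are correct, and the plan for the remainder is the right one.

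There is one gap in Step~2 that should be repaired. You assert that the leading term of the stationary phase,
\[
(2\pi h)^{-(n+k)/2}\int e^{\frac{i}{h}\psi(x,y',\theta)}\,a(x,y',\theta)\,\bigl[N_1\bigl(y',-d_{y}\psi(x,y',\theta)\bigr)v\bigr]\,u(y')\,dy'\,d\theta\,,
\]
equals $T(u\otimes v)(x)$ ``by the defining property of $N_1$ on $C_\psi$.'' But $N_1\bigl(y',-d_y\psi(x,y',\theta)\bigr)$ coincides with $N(x,y',\theta)$ only when $(x,y',\theta)\in C_\psi$, i.e.\ when $d_\theta\psi=0$; the integral above runs over all $(x,y',\theta)$ in the support of $a$, and off $C_\psi$ the two operator-valued amplitudes differ (there are in general not enough variables $(y',\eta)$ to reconstruct $(x,\theta)$ away from $C_\psi$). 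The discrepancy is an oscillatory integral whose amplitude vanishes on $C_\psi$; by Taylor's theorem it is a smooth combination of the components of $d_\theta\psi$, and one more integration by parts in $\theta$ (using the independence of the $d_{x,y,\theta}(\partial_{\theta_j}\psi)$ in \eqref{eq:GammaK}) converts it into an $O(h)$ contribution, which is then absorbed into $hR_1$ along with the stationary-phase remainder. This is just the usual fact that the amplitude of a Fourier integral operator is determined modulo $O(h)$ only by its restriction to $C_\psi$, but it should be stated; the extra derivative it introduces is covered by the same $\mathcal S_\delta$-bookkeeping you describe in Step~3.
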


In our applications we will have 
\begin{equation}
\label{eq:hos}   {\mathcal H} = L^2 ( \RR^{d_{\perp}} , d \tilde y )  , \ \ \
 Q = - \th^2 \Delta_{\tilde y} + \tilde y^2 + 1 , \end{equation}
so that  $ {\mathcal D}^\ell $ are analogous to Sobolev spaces (see 
\cite[\S 8.3]{e-z}).
In the rest of this section (as well in section~\ref{ats}), we will use the shorthand notations $\rho_\parallel = (x,y,\theta)$ in order to shorten the expressions, and to differentiate between 
these variables and the ``transversal variables" $(\ty,\teta)$.

We consider a specific class of metaplectic operators:
\be
\label{eq:simple}  N ( \rho_\parallel ) u ( \tilde y ) = ( 2 \pi  \tilde h)^{-
    d_{\perp}} \int_{\RR^{d_{\perp}} } \int_{\RR^{d_{\perp}}}  
(\det \partial^2_{\ty, \teta}q_{\rho_\parallel} ) ^{\frac12}  e^{
  \frac i \th ( q_{\rho_\parallel} ( \tilde y , \tilde \eta )  - \langle \tilde
  \eta , \tilde y' \rangle )} u ( \tilde y' ) d \tilde y' ,  
\end{equation}
where $ q_{\rho_\parallel}(\ty,\teta) $ is a real quadratic form in the variables $\ty,\teta$, with coefficients depending on $\rho_\parallel$, being in the class
$ S ( \RR^{2n+k} ) $, and the matrix of coefficients $ \partial^2_{\ty, \teta}  q_{\rho_\parallel}$ is assumed to be uniformly non-degenerate for all $\rho_\parallel$.  The definition involves a {\em choice} of the branch of 
the square root -- see Remark \ref{rem:sign} for further discussion of that.
For any fixed $\rho_\parallel$ these operators are unitary on $ \mathcal H$  (see for instance \cite[Theorem 11.10]{e-z}).  

The next proposition shows that this class fits nicely into
our framework:
\begin{prop}
\label{p:meta}
The operators $ N (\rho_\parallel) $ given by 
\eqref{eq:simple} satisfy
\begin{equation}
\label{eq:meta1}  \partial^\alpha_{\rho_\parallel} N (\rho_\parallel) = {\mathcal
 O}_{\alpha, \ell}  ( \th^{ - | \alpha|  } ) :    {\mathcal D}^{ | \alpha| +
 \ell  }
\longrightarrow {\mathcal D}^\ell  , \end{equation}
for all $ \ell$, That means that  \eqref{eq:NO} holds with $ \delta = 1 $ (the loss in $ \tilde
h $ is considered as dependence on $ \alpha $).  

If $ \tilde \chi \in
{\mathscr  S} ( \RR^{2d_{\perp} } ) $ is  fixed, $ \Lambda >1 $, and 
$ \tilde \chi_\Lambda ( \bullet ) \defeq \tilde \chi ( \Lambda^{-1}
\bullet ) $, 
then for any $\ell$ and $k\geq 0$,
\begin{equation}
\label{eq:meta2}
\widetilde \chi_\Lambda^w ( \tilde y , \tilde h D_{\tilde y } ) = 
\mathcal O_\ell ( \Lambda^{2k} ) : \mathcal D^{ \ell  } \to  \mathcal
D^{ \ell + k } , 
\end{equation}
so that 
\begin{equation}
\label{eq:meta3} \begin{split} 
&   \partial^\alpha_{\rho_\parallel} \left( \tilde  \chi^w_ \Lambda( \tilde y , 
\th  D_{\tilde y } )N ( \rho_\parallel ) \right) = {\mathcal
 O}_{\alpha, \ell}  ( \Lambda^{ 2  |\alpha| } \th^{ - | \alpha|  } ) :    {\mathcal D}^{ 
 \ell  } \longrightarrow {\mathcal D}^\ell \,, \\
&  \partial^\alpha_{\rho_\parallel } \left( N ( \rho_\parallel ) \tilde  \chi^w_ \Lambda( \tilde y , 
\th  D_{\tilde y } ) \right) = {\mathcal
 O}_{\alpha, \ell}  ( \Lambda^{  2 |\alpha| } \th^{ - | \alpha|  } ) :    {\mathcal D}^{ 
 \ell  } \longrightarrow {\mathcal D}^\ell .
\end{split}
\end{equation}
\end{prop}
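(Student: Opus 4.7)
The plan is to treat $N(\rho_\parallel)$ as an oscillatory integral of metaplectic type and track how $\rho_\parallel$-differentiation inserts polynomial factors into the integrand, while exploiting the harmonic-oscillator structure of $Q$ and the Schwartz character of $\tilde\chi$. A preliminary observation I would establish first is that $N(\rho_\parallel)$ maps each $\mathcal D^\ell$ into itself with operator norm uniform in $\rho_\parallel$. Indeed $N(\rho_\parallel)$ is unitary on $\mathcal H=\mathcal D^0$ (a metaplectic operator associated to the linear symplectic map $\kappa_{\rho_\parallel}$ generated by $q_{\rho_\parallel}$), and the exact Egorov identity for metaplectic operators gives $N(\rho_\parallel)^{-1}Q\,N(\rho_\parallel)=\tilde Q_{\rho_\parallel}$, where $\tilde Q_{\rho_\parallel}$ has principal symbol $(\tilde y^2+\tilde\eta^2+1)\circ\kappa_{\rho_\parallel}$, an elliptic positive-definite quadratic polynomial plus $1$. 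Because $q_{\rho_\parallel}\in S(\mathbb R^{2n+k})$ and its Hessian $\partial^2_{\tilde y,\tilde\eta}q_{\rho_\parallel}$ is uniformly non-degenerate, the matrix of $\kappa_{\rho_\parallel}$ is uniformly bounded, so $\tilde Q_{\rho_\parallel}^\ell$ has a norm equivalent to $Q^\ell$ with uniform constants. Hence $\|Q^\ell N(\rho_\parallel)u\|=\|N(\rho_\parallel)\tilde Q_{\rho_\parallel}^\ell u\|\lesssim \|Q^\ell u\|$.

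For \eqref{eq:meta1}, I differentiate the integrand of \eqref{eq:simple} in $\rho_\parallel$. Derivatives on the prefactor $(\det\partial^2_{\tilde y,\tilde\eta}q_{\rho_\parallel})^{1/2}$ yield $(\tilde y,\tilde\eta)$-independent smooth functions of $\rho_\parallel$, while each derivative on the phase produces a factor $i\tilde h^{-1}\partial_{\rho_\parallel}q_{\rho_\parallel}(\tilde y,\tilde\eta)$, a quadratic form in $(\tilde y,\tilde\eta)$ with bounded coefficients. Iterating and applying the Leibniz rule, $\partial^\alpha_{\rho_\parallel} N(\rho_\parallel)$ is a finite sum of oscillatory integrals of the form $\tilde h^{-|\alpha'|}\,p_{\alpha'}(\tilde y,\tilde\eta)\cdot[\text{integrand of }N]$, with $|\alpha'|\le|\alpha|$ and $p_{\alpha'}$ a polynomial of degree at most $2|\alpha'|\le 2|\alpha|$. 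Using the identity $\tilde\eta_j\,e^{-i\langle\tilde\eta,\tilde y'\rangle/\tilde h}=\tilde h D_{\tilde y'_j}e^{-i\langle\tilde\eta,\tilde y'\rangle/\tilde h}$ and integrating by parts in $\tilde y'$, each $\tilde\eta$-monomial in $p_{\alpha'}$ is converted into $\tilde h D_{\tilde y}$ applied on the \emph{left} of $N(\rho_\parallel)$, so that
\[
 \partial^\alpha_{\rho_\parallel} N(\rho_\parallel)=\sum_{|\alpha'|\le|\alpha|}\tilde h^{-|\alpha'|}\,P^w_{\alpha,\alpha'}(\tilde y,\tilde h D_{\tilde y})\,N(\rho_\parallel),
\]
with $P_{\alpha,\alpha'}$ polynomials of degree $\le 2|\alpha|$ whose coefficients are uniformly bounded in $\rho_\parallel$. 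Since a polynomial Weyl operator of degree $2m$ has order $2m$ in the harmonic-oscillator Sobolev scale and therefore maps $\mathcal D^{\ell+m}\to\mathcal D^\ell$ boundedly, combining with the uniform $\mathcal D^\ell$-boundedness of $N(\rho_\parallel)$ yields \eqref{eq:meta1} with the loss $\tilde h^{-|\alpha|}$.

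Estimate \eqref{eq:meta2} is obtained by semiclassical calculus. The Moyal product gives that the symbol of $Q^k\tilde\chi_\Lambda^w$ equals $(\tilde y^2+\tilde\eta^2+1)^k\tilde\chi_\Lambda(\tilde y,\tilde\eta)$ modulo $\tilde h^2$-corrections of the same structural type. Writing $\tilde\chi_\Lambda(\tilde y,\tilde\eta)=\tilde\chi(\tilde y/\Lambda,\tilde\eta/\Lambda)$ and using that $\tilde\chi\in\mathcal S(\mathbb R^{2d_t})$, the symbol $(\tilde y^2+\tilde\eta^2+1)^k\tilde\chi_\Lambda$ is Schwartz with seminorms bounded by $C_N\Lambda^{2k}\langle(\tilde y,\tilde\eta)/\Lambda\rangle^{-N}$ for every $N$. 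Calderón--Vaillancourt then gives $\|Q^k\tilde\chi_\Lambda^w\|_{L^2\to L^2}=\mathcal O(\Lambda^{2k})$, hence $\tilde\chi_\Lambda^w:\mathcal H\to\mathcal D^k$ with norm $\Lambda^{2k}$. The same calculation applied to the conjugated operator $Q^\ell\tilde\chi_\Lambda^w Q^{-\ell}$ (whose symbol is $\tilde\chi_\Lambda$ up to $\tilde h$-corrections of identical structure) extends this to the bound $\mathcal D^\ell\to\mathcal D^{\ell+k}$ with the same norm $\Lambda^{2k}$, uniformly in $\ell$.

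For \eqref{eq:meta3} I use that $\tilde\chi_\Lambda^w$ is independent of $\rho_\parallel$, so $\partial^\alpha_{\rho_\parallel}(\tilde\chi^w_\Lambda N)=\tilde\chi_\Lambda^w\,\partial^\alpha_{\rho_\parallel}N$, and factor the composition as
\[
\mathcal D^\ell\xrightarrow{\ \partial^\alpha_{\rho_\parallel} N\ }\mathcal D^{\ell-|\alpha|}\xrightarrow{\ \tilde\chi^w_\Lambda\ }\mathcal D^\ell,
\]
with cost $\tilde h^{-|\alpha|}$ on the first arrow by \eqref{eq:meta1} (applied at index $\ell-|\alpha|$) and cost $\Lambda^{2|\alpha|}$ on the second by \eqref{eq:meta2} (with $k=|\alpha|$). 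The symmetric factorization $\mathcal D^\ell\xrightarrow{\tilde\chi_\Lambda^w}\mathcal D^{\ell+|\alpha|}\xrightarrow{\partial^\alpha_{\rho_\parallel} N}\mathcal D^\ell$ gives the bound for $N\,\tilde\chi_\Lambda^w$. The main technical obstacle I anticipate is Step~2: the careful bookkeeping required to identify the polynomial factors produced by differentiation of the phase with Weyl-quantized operators acting on the left of $N(\rho_\parallel)$, controlling the errors from the integrations by parts in $\tilde y'$, so as to obtain a clean formula of the form above. Once that identification is made, the harmonic-oscillator mapping properties of polynomial Weyl operators and the Schwartz-symbol calculus for $\tilde\chi_\Lambda^w$ yield all three conclusions.
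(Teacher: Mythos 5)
Your proposal is correct and follows essentially the same route as the paper: both rest on the exact Egorov theorem for metaplectic operators and the mapping properties of polynomial Weyl operators on the harmonic oscillator scale $\mathcal D^\ell$, with the $\Lambda^{2k}$ gain for $\tilde\chi^w_\Lambda$ obtained by comparison with a rescaled oscillator (the paper uses $Q_\Lambda=1+\Lambda^{-2}((\tilde h D_{\tilde y})^2+\tilde y^2)$, while your Moyal-expansion version of this is equivalent). One small slip worth fixing in your Step 2: integrating by parts in $\tilde y'$ via $\tilde\eta_j e^{-i\langle\tilde\eta,\tilde y'\rangle/\tilde h}=-\tilde h D_{\tilde y'_j}e^{-i\langle\tilde\eta,\tilde y'\rangle/\tilde h}$ moves each $\tilde\eta$-factor onto $u$, i.e.\ to the \emph{right} of $N(\rho_\parallel)$, while the $\tilde y$-factors remain as multiplication on the output and sit on the left; so the clean representation is not $\partial^\alpha_{\rho_\parallel}N=\sum \tilde h^{-|\alpha'|}P^w_{\alpha,\alpha'}N$ but rather a sum of terms $\tilde h^{-m}\,\tilde y^{\beta_1}\,N(\rho_\parallel)\,(\tilde h D_{\tilde y})^{\beta_3}$ with $|\beta_1|+|\beta_3|\le 2m\le 2|\alpha|$ --- which, combined with your preliminary uniform boundedness $N:\mathcal D^\ell\to\mathcal D^\ell$, still yields \eqref{eq:meta1} by exactly the same counting.
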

\begin{proof}
We see that
$ \partial^\alpha_{\rho_\parallel} N ( \rho_\parallel ) $ is an operator of
the same form as \eqref{eq:simple} but with the amplitude 
multiplied by 
\[  \sum_{ | \beta| \leq 2 | \alpha | }
\th^{ - m_\beta  } 
\ty^{\beta_1} (\ty' )^{\beta_2 } \teta^{\beta_3}
q_\beta ( \rho_\parallel ) , \ \ q_\beta \in S ( \RR^{2n + k } ) , 
\ \ \beta = ( \beta_1, \beta_2 , \beta_3 )\in \NN^{3d_{\perp} },  \ \
\beta_j \in \NN^{d_{\perp}} \,,
 \]
where $ m_\beta \leq | \alpha | $.
Hence to obtain \eqref{eq:meta1}, it is enough to prove that 
\[  Q^{  \ell }   \ty^{\beta_1} N ( \rho_\parallel ) \left( (\ty' )^{\beta_2 }
( \th D_{\ty' })^{\beta_3} Q^{-\ell - | \alpha| } v ( \ty' ) \right) =
{\mathcal O} ( \| v \|_{ \mathcal H }  )_{\mathcal H} . \]
Using the exact Egorov's theorem for metaplectic 
operators (see for instance \cite[Theorem 11.9]{e-z}) we
see that the left hand side is equal to 
\[  N ( \rho_\parallel ) \left ( p^w_\beta ( \ty' , \th D_{\ty' } )
\left( K_{q }^* Q \right) ^\ell Q^{-\ell - |\alpha | } v ( \ty' )
\right),  \ \  K_q : ( \partial_{\teta } q, \teta )
\mapsto  ( \ty , \partial_{\ty } q ) , \]
$  q = q_{\rho_\parallel}$ and where $ p_\beta  $ is a polynomial of
degree less than or equal to $ | \beta|$.   Since $ |\beta | \leq
2 |\alpha| $, the operator $ p_\beta^w (K^*_q Q)^\ell Q^{-\ell - | \alpha |
}$
is bounded on $ {\mathcal H}$ (see for instance \cite[Theorem
8.10]{e-z}) so the unitarity of $ N $ gives the boundedness in 
$\mathcal H $.

To obtain \eqref{eq:meta2} we first note that $ 
\tchi_\Lambda \in S ( \RR^{2 d_{\perp} } ) $ uniformly 
in $ \Lambda > 1 $. Hence $ Q^{-\ell } \tchi^w_\Lambda Q^\ell
= \mathcal O ( 1 )_{ \mathcal H \to \mathcal H}  $, 
uniformly in $ \Lambda $ (again, see \cite[Theorem 8.10]{e-z}).
This gives \eqref{eq:meta2} for $ k  = 0 $. For the
general case we put $ Q_\Lambda =  1 + \Lambda^{-2}( ( \tilde h D_{\ty})^2  +  \ty^2 )$, and note that for any $ M$,
$ Q_\Lambda^M \tchi^w_\Lambda = \tchi^w_{\Lambda, M }  $, where
$ \tchi_{\Lambda, M }  \in S ( \RR^{2d_{\perp}})  $ uniformly in $ \Lambda $. Hence
it is bounded on $ L^2 ( \RR^{d_{\perp}} ) $ uniformly in $ \Lambda $ and $ \th $. 
We then 
write
\[ \begin{split}  Q^{k } \chi^w_\Lambda  ( \ty , \th D_{ \ty} ) & = 
 Q^{ k } Q_\Lambda^{ - k } Q_\Lambda^{k }
\chi^w_\Lambda  ( \ty , \th D_{ \ty} ) \\
& =  
 ( 1 + ( \th D_{\ty })^2 +  \ty^2 )^{k } 
 ( 1 + \Lambda^{-2}  ( \th D_{\ty })^2 + 
\Lambda^{-2}  \ty^2 )^{-k  } \chi_{\Lambda, k  }^w (
\ty , \th D_{\ty } ) \\
& = 
 \Lambda^{ 2 k } 
 ( 1 + ( \th D_{\ty })^2 +  \ty^2 )^{k} 
 ( \Lambda^2 +   ( \th D_{\ty })^2 + 
  \ty^2 )^{-k  }  \chi_{\Lambda, k  }^w (
\ty , \th D_{\ty } )  \\
& = {\mathcal O} ( \Lambda^{2 k } 
 )_{ L^2 ( \RR^{d_{\perp}} ) 
\to  L^2 ( \RR^{d_{\perp}} )  } ,
\end{split} \]
completing the proof of \eqref{eq:meta2}.
\end{proof}

%%%%%%%%%%%%%%%%%%%%%%%%%%%%%%%%%%%%%%%%%%%
\section{Classical dynamics}
\label{cld}

In this section we will describe the consequences of the normal
hyperbolicity assumption \eqref{eq:NH0},\eqref{eq:NH} needed in the 
proof of Theorem \ref{t:1}.

\subsection{Stable and unstable distributions}
\label{sud} 

Let $ K^\delta $ be the trapped set \eqref{eq:Kdel} and $ E_\rho^\pm
\subset T_\rho X $, $ \rho \in K^\delta $, the distributions 
in \eqref{eq:NH}. We recall our notation
$\varphi_t \defeq \exp t H_p$ for the Hamiltonian flow generated by
the function $p(x,\xi)$.

We start with a simple 
\begin{lem}
\label {l:basicd}
If $ \omega $ is the canonical symplectic form on $ T^* X$ then 
\begin{equation}
\label{eq:NH1}
\omega_\rho \restriction_{ E_\rho^\pm } = 0  ,
\end{equation}
that is $ E_\rho^\pm $ are isotropic.

Without loss of generality we can assume that the distributions
$E^\pm_\rho$ satisfy
\begin{equation}
\label{eq:NH2}    E_\rho^+ \oplus E_\rho^- = ( T_\rho K^\delta)^\perp , 
\end{equation}
where $ V^\perp $ denotes the symplectic orthogonal of $ V $.
\end{lem}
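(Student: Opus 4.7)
The two assertions are essentially independent and I would handle them in order, the first by a flow-averaging argument and the second by projecting the given distributions onto $(TK^\delta)^\perp$.

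For the isotropy \eqref{eq:NH1}, the plan is to exploit the flow-invariance $\varphi_t^*\omega = \omega$ together with the exponential contraction on $E^+$. For $v,w\in E^+_\rho$, one writes
$$\omega_\rho(v,w) = \omega_{\varphi_{-t}(\rho)}\bigl(d\varphi_{-t}(\rho)v,\ d\varphi_{-t}(\rho)w\bigr),$$
and bounds the right-hand side by $\|\omega\|_{C^0(K^\delta)}\cdot\|d\varphi_{-t}v\|\cdot\|d\varphi_{-t}w\|$, which by \eqref{eq:NH} is $\leq C^{2}e^{-2\lambda t}\|v\|\|w\|$; here the compactness of $K^\delta$ (and the fact that $\varphi_{-t}(\rho)\in K^\delta$ for all $t$) gives the uniform bound on $\omega$. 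Letting $t\to\infty$ and using that the left-hand side is independent of $t$ forces $\omega_\rho(v,w)=0$. The case $v,w\in E^-_\rho$ is symmetric, using the identity with $\varphi_t$ in place of $\varphi_{-t}$ and $t\to+\infty$.

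For \eqref{eq:NH2}, I would \emph{replace} the given $E^\pm$ by their projections onto $(TK^\delta)^\perp$ and verify that the replacement still satisfies all hypotheses of \eqref{eq:NH}. Since $K^\delta$ is $\varphi_t$-invariant, $TK^\delta$ is $d\varphi_t$-invariant; since $d\varphi_t$ is symplectic, the complement $(TK^\delta)^\perp$ is $d\varphi_t$-invariant as well. Let $\pi_\rho$ denote the projection onto $(T_\rho K^\delta)^\perp$ along $T_\rho K^\delta$, so $d\varphi_t\circ\pi_\rho = \pi_{\varphi_t(\rho)}\circ d\varphi_t$. Define $\widetilde E^\pm_\rho\defeq\pi_\rho(E^\pm_\rho)$. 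The hypothesis $T_\rho K^\delta\cap E^\pm_\rho=\{0\}$ makes $\pi_\rho|_{E^\pm_\rho}$ injective, so $\dim \widetilde E^\pm_\rho=d_t$ and $\widetilde E^\pm$ inherits continuity (and flow-invariance, via the commutation) from $E^\pm$. The original direct-sum decomposition $T_\rho(T^*X)=T_\rho K^\delta\oplus E^+_\rho\oplus E^-_\rho$ forces $\widetilde E^+_\rho\cap\widetilde E^-_\rho=\{0\}$, and a dimension count then gives $\widetilde E^+_\rho\oplus\widetilde E^-_\rho=(T_\rho K^\delta)^\perp$.

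The one step that needs care, and which I expect to be the main obstacle, is verifying that the replacement $E^\pm\to\widetilde E^\pm$ preserves the exponential bound in \eqref{eq:NH} with the same rate $\lambda$ (and only a possibly larger constant $C$). This uses the uniform boundedness of both $\pi_\rho$ and its inverse $\pi_\rho^{-1}:\widetilde E^\pm_\rho\to E^\pm_\rho$ over the compact set $K^\delta$: any $\widetilde v\in\widetilde E^\pm_\rho$ lifts to a unique $v\in E^\pm_\rho$ with $\|v\|\leq C'\|\widetilde v\|$, and the bound on $d\varphi_{\mp t}v$ then gives the same bound on $d\varphi_{\mp t}\widetilde v=\pi_{\varphi_{\mp t}(\rho)}d\varphi_{\mp t}v$. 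Once this is in place, $\widetilde E^\pm$ satisfies all of \eqref{eq:NH} together with \eqref{eq:NH2}, justifying the ``without loss of generality.''
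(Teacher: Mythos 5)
Your proof is correct and follows the same route as the paper: for \eqref{eq:NH1} you use flow-invariance of $\omega$ together with the exponential contraction on $E^\pm$, and for \eqref{eq:NH2} you replace $E^\pm_\rho$ by its image under the symplectic projection $\pi_\rho$ onto $(T_\rho K^\delta)^\perp$, using the commutation $\pi_{\varphi_t(\rho)}\circ d\varphi_t = d\varphi_t\circ\pi_\rho$. The paper is more terse; you have usefully spelled out the verification that the projected distributions still satisfy all the properties in \eqref{eq:NH}, including the exponential bound with the same rate $\lambda$, via the uniform boundedness of $\pi_\rho$ and its inverse over the compact $K^\delta$.
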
 
\begin{proof}
The property \eqref{eq:NH1} 
follows from the fact that $\varphi_t$ preserves the symplectic
structure ($ \varphi_t^* \omega = \omega $).
For $ X , Y \in E_\rho^\pm $, 
\begin{gather*}   \omega_\rho ( X , Y ) = \omega_{ \varphi_{\mp t }  (
    \rho ) } ( ( 
d\varphi_{\mp t})(\rho)  X  ,  d\varphi_{\mp t}(\rho)  Y  )
\rightarrow 0 \,, \ \ t \rightarrow + \infty  \,. 
\end{gather*}

To see that we can assume \eqref{eq:NH2} we note that 
the distribution $\{(T_\rho K^\delta )^\perp,\,\rho\in K^\delta\}$ is invariant by
the flow: $ d \varphi_t ( \rho ) : T_\rho K \to T_{\varphi_t ( \rho )
} K $, and $ d \varphi_t (\rho ) $ is a symplectic transformation.
If 
$ \pi_\rho : T_\rho ( T^* X ) \to ( T_\rho K^\delta)^\perp $ is the symplectic
projection, then $ \pi_{ d\varphi_t ( \rho) }  \circ d \varphi_t (\rho)= 
d \varphi_t(\rho) \circ \pi_\rho $. This means that we may safely replace
$ E_\rho^\pm $  with $ \pi_\rho (  E_\rho^\pm  )$, without altering
the properties \eqref{eq:NH}. 
\end{proof}

\subsection{Construction of the escape function}
\label{cef}

To construct the escape function near the trapped set 
we need a lemma concerning invariant cones near $ K^\delta$.
To define them we introduce a Riemannian metric on $ T^*X $ and use
the tubular neighbourhood theorem (see for instance 
\cite[Appendix C.5]{Hor2}) to make the identifications
\begin{equation}
\label{eq:neK}  \begin{split}
 \neigh ( K^\delta ) & \simeq N^* K^\delta \cap \{  ( \rho, \zeta ) \in
T^* (T^*X)  :  \| \zeta \|_\rho  \leq 
\epsilon_1 \} \\
& \simeq ( T K^\delta )^\perp
\cap \{  ( \rho, z  ) \in
T (T^*X)  :  \| z \|_\rho  \leq 
\epsilon_1 \}  \\
&\simeq \{(m,z) : m\in K^\delta,\ z\in\IR^{2d_{\perp}},\ \| z \|_\rho  \leq 
\epsilon_1 \}\,.
\end{split}
\end{equation}
Here $( T K^\delta )^\perp$ denotes the symplectic orthogonal of $ T K^\delta
\subset T_{ K^\delta} ( T^* X )  \subset T (T^*X )$. Since $ K^\delta$
is symplectic, the symplectic form identifies $ (T K^\delta)^\perp $
with the conormal bundle $ N^* K^\delta $.  The norm $ \| \bullet
\|_\rho $ is a smoothly varying norm on $ T_\rho ( T^* X )$. 
We write $ d_\rho  ( z, z') = \| z - z'\|_\rho $ and introduce a
distance function $ d : \neigh ( K^\delta ) \times \neigh ( K^\delta )
\to [0 , \infty ) $ obtained by choosing a Riemannian metric on 
$ \neigh ( K^\delta ) $. We  have 
$ d ( ( m , z ) , ( m , z' ) ) \sim d_{m} ( z, z') $ and the
notation $ a \sim b $, 
here and below, means that there exists a constant $ C\geq 1 $ (independendent
of other parameters)  such that $ b/C \leq a \leq C b $.

Assuming that $ E_\rho^\pm $ are chosen so that \eqref{eq:NH2} 
holds we can define (closed) cone fields by putting
\begin{gather}
\label{eq:Cpmd}  
\begin{gathered} C_\rho^\pm \defeq \{ z \in (T_\rho K^\delta )^\perp :  d_\rho ( z ,
E_\pm^\rho ) \leq \epsilon_2 \| z \|_\rho ,\ \| z \|_\rho  \leq \epsilon_1 \}, \\ 
 C^\pm \defeq \bigcup_{\rho \in
  K^\delta } C_\rho^\pm  \subset \neigh (K^\delta )  ,
\end{gathered}
\end{gather}
where we used the identification \eqref{eq:neK}. Since the 
maps 
$ \rho \mapsto E_\rho^\pm $ are continuous, $ C^\pm $ are 
closed. 

The basic properties $ C^\pm $ are given in the following 
\begin{lem}
\label{l:Cpm}
There exists $ t_0 > 0 $ and $\eps_2^0>0$ such that, for every $ t > t_0 $ there
exists  $ \epsilon_1^0 $ such that if one chooses
$ \epsilon_j < \epsilon_j^0 $, $ j=1,2 $ in the 
definition of $ \neigh (K^\delta ) $ and $ C^\pm $, then 

\begin{equation}
\label{eq:Cpm1}
\rho \in C_\pm  , \  \varphi_{\pm t } ( \rho ) \in \neigh( K^\delta )
\ \Longrightarrow \ \varphi_{\pm t} ( \rho ) \in C_\pm .  
\end{equation}
In fact a stronger statement is true: for some constant 
$ \lambda_1 > 0 $ and any $t\geq t_0$,
\begin{equation}
\label{eq:Cpm2}
\rho  ,   \varphi_{\pm t } ( \rho ) \in \neigh( K^\delta )
\ \Longrightarrow \
d ( \varphi_{\pm t} ( \rho ), C^\pm ) \leq e^{ - \lambda_1  t} d (
 \rho  , C^\pm ) . 
\end{equation}
Finally,
\begin{equation}
\label{eq:Cpm3}
d ( \rho, C^+ )^2 + d ( \rho, C^- )^2 \sim d ( \rho, K^\delta)^2 .
\end{equation}
\end{lem}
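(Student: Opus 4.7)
The plan is to reduce everything to fiberwise linear algebra on the symplectic orthogonal bundle $(TK^\delta)^\perp$ via the tubular neighborhood identification in \eqref{eq:neK}. In those coordinates the flow reads $\varphi_t(m, z) = (\bar\varphi_t(m), F_t(m, z))$, where $\bar\varphi_t \defeq \varphi_t\rest_{K^\delta}$ and $F_t(m, \cdot)$ is a local diffeomorphism of the fiber with $F_t(m, 0) = 0$ and derivative $D_z F_t(m, 0) = d\varphi_t(m)\rest_{(T_mK^\delta)^\perp}$. By smoothness of $\varphi_t$ and compactness of $K^\delta$, one gets $F_t(m, z) = d\varphi_t(m)z + R_t(m, z)$ with $\|R_t(m, z)\| \leq C(t)\|z\|^2$ uniformly in $m$. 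The linearization preserves the splitting $E^+ \oplus E^-$ of each fiber; by \eqref{eq:NH} and its inverse, it contracts $E^-$ by $\leq Ce^{-\lambda t}$ and expands $E^+$ by $\geq C^{-1}e^{\lambda t}$ for $t > 0$.

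For \eqref{eq:Cpm1} and \eqref{eq:Cpm2} I would first establish strict cone invariance for the \emph{linearized} map: for $t_0$ large enough, $d\varphi_{t_0}(m)(C^+_m)$ sits strictly inside $C^+_{\bar\varphi_{t_0}(m)}$ with a definite margin. This is quantitative: writing $z = z^+ + z^-$ in the splitting, the condition $z \in C^+_m$ is equivalent (up to a constant depending only on the angle between $E^\pm$) to $\|z^-\| \leq c\epsilon_2\|z^+\|$, and under $d\varphi_{t_0}$ the ratio $\|z^-\|/\|z^+\|$ shrinks by at least $C^2 e^{-2\lambda t_0}$. I would then shrink $\epsilon_1$ so that the quadratic nonlinear remainder $R_{t_0}$ is dominated by this margin, preserving invariance for the full flow; iterating in steps of $t_0$ yields \eqref{eq:Cpm1}. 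For \eqref{eq:Cpm2} the key observation is that $d((m, z), C^+) \sim \max(\|z^-\| - c\epsilon_2\|z^+\|, 0)$ with $c$ depending only on the angle between $E^+$ and $E^-$; under $d\varphi_{t_0}$ this excess decays by a factor $\leq Ce^{-\lambda t_0}$, and absorbing constants into $\lambda_1 < \lambda$ yields the claimed exponential contraction.

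Claim \eqref{eq:Cpm3} reduces to linear algebra in each fiber. Since $E^+_m \cap E^-_m = \{0\}$ and $E^\pm$ vary continuously over the compact set $K^\delta$, the angle between $E^+_m$ and $E^-_m$ is uniformly bounded below, so $\|z\|^2 \sim \|z^+\|^2 + \|z^-\|^2$ uniformly in $m$. The upper bound $d(\rho, C^\pm)^2 \leq d(\rho, K^\delta)^2 = \|z\|^2$ is immediate since $K^\delta \subset C^\pm$. For the lower bound, when $\rho$ lies outside both cones, the estimate $d(\rho, C^+) \sim \|z^-\| - c\epsilon_2\|z^+\|$ and its analogue for $C^-$ combined with AM--GM on the cross term $\|z^+\|\|z^-\|$ give
$$d(\rho, C^+)^2 + d(\rho, C^-)^2 \geq (1 - O(\epsilon_2))(\|z^+\|^2 + \|z^-\|^2).$$
The case when $\rho$ lies in one cone and not the other follows similarly, with only one summand contributing on the left but with a stronger bound on the other component. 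Taking $\epsilon_2 < \epsilon_2^0$ small enough then completes \eqref{eq:Cpm3}.

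The principal subtlety is that the splitting $E^\pm$ is only assumed continuous in $\rho$, not smooth, as the introduction emphasizes; hence the hyperbolic structure cannot be straightened by any smooth change of coordinates. The cone-based formulation above is designed precisely to accommodate this: it uses only continuity of $E^\pm$ together with smoothness of $\varphi_t$ itself. A secondary bookkeeping issue is the order of parameter choices: first $t_0$ is chosen large enough to produce a strict linear hyperbolic margin, then $\epsilon_2 < \epsilon_2^0$ is fixed according to that margin, and finally $\epsilon_1 < \epsilon_1^0(t_0, \epsilon_2)$ is shrunk as needed to dominate the nonlinear remainders uniformly in $m \in K^\delta$.
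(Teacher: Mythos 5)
Your proposal is correct and follows essentially the same strategy as the paper: identify a tubular neighborhood with the normal bundle, Taylor-expand $\varphi_t$ around the zero section to separate the transverse linearization from a quadratic remainder, use normal hyperbolicity to get a definite margin of cone contraction under $d_\perp\varphi_{t_0}$, and then shrink $\epsilon_1$ (after $t_0$ and $\epsilon_2$ are fixed) so the remainder cannot undo that margin. The one place you do something a bit cleaner is the explicit formula $d((m,z),C^+)\sim\max(\|z^-\|-c\epsilon_2\|z^+\|,0)$, which is a sharper description than the paper's $\|z_-\|(1-\bbbone_{C^+_m}(z))$ near the cone boundary; both lead to the same conclusion because, near the boundary, the image point lands strictly inside the cone and \eqref{eq:Cpm2} is trivial, but your formula avoids having to invoke that case separately. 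Your AM--GM argument for \eqref{eq:Cpm3} spells out what the paper treats as immediate. These are presentational refinements rather than a different route.
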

The conclusions \eqref{eq:Cpm2} and \eqref{eq:Cpm3}  are 
similar to \cite[Lemma 4.3]{NSZ2} and \cite[Lemma 5.2]{SjDuke} but
the proof does not use foliations by stable and unstable manifolds
which seem different under our assumptions.
\begin{proof}
For $ \rho \in \neigh( K^\delta ) $ let $ ( m, z ) $, $ m \in K^\delta
$ and $ z \in \RR^{2d_{\perp} } \simeq ( T_m K^\delta)^\perp$ be local coordinates near $ \rho
$. Similarly let $ ( \tilde m , \tilde z ) $ be local coordinates 
near $ \varphi_t ( \rho ) \in \neigh ( K^\delta ) $ (by assumption in 
\eqref{eq:Cpm1}). Then if for each $m$ we put $d_\perp \varphi_t ( m )\defeq d
\varphi_t ( m ) \rest_{ ( T_m K^\delta)^\perp }$, the map $\varphi_t$
can be written as,
\begin{equation}
\label{eq:varphi} \begin{split}  \varphi_t ( m , z ) & = \big( \varphi_t ( m ) +{ \mathcal O}_t ( \| z \|^2
) ,  d_\perp \varphi_t ( m )  z  + 
{ \mathcal O}_t ( \| z \| ^2 ) \big)\\
& = \big(  \varphi_t( m_1 )  ,  d_\perp \varphi_t (m_1   ) z + 
{ \mathcal O}_t ( \| z \| ^2 ) \big),  
\ \ \ m_1 =  m  + {\mathcal O}_t  ( \| z \|^2) \,. 
\end{split}
 \end{equation}
(Here we identify $ (T_m K^\delta)^\perp $ with $ \RR^{2 d_{\perp} } $ and
and consider $ d_\perp\varphi_t ( m ) : \RR^{2d_{\perp} } \to \RR^{2d_{\perp} } $, with
similar identification near $ \varphi_t ( \rho ) $. The norm $ \|
\bullet \| $ is now fixed in that neighbourhood.)

Let $ z = z_+ + z_- $ be the decomposition of $ z $ corresponding to 
$ ( T_{m_1} K^\delta)^\perp = E^+_{m_1} \oplus E^-_{m_1} $ (we assumed
without  loss of generality that \eqref{eq:NH2} holds). The continuity
of $ \rho \mapsto E^\pm_\rho $ and the definition of $ C^+_{\rho} $ show
that if $ \epsilon_1 $ is small enough depending on $ t $ (so that $ d
( m, m_1 ) = {\mathcal O}_t ( \|z\| ^2
) $ is small)
\begin{equation}
\label{eq:Cz+}  z \in C^+_{m} \ \Longrightarrow \ \| z_- \|_{m_1}  \leq 2 \epsilon_2 \|
z_+ \|_{m_1}  . \end{equation}
Since
\[
 d_\perp \varphi_t (m_1   ) z
= \sum_{\pm} d_\perp \varphi_t (m_1   ) z_\pm , 
 \quad  d_\perp \varphi_t (m_1   ) z_{\pm}  \in E_{\varphi_t ( m_1) } ^\pm , 
\]
normal hyperbolicity implies that for some $C>0$ and $\lambda_1>0$ 
\begin{equation}
\label{eq:gath}
\begin{split}
\| d_\perp \varphi_t  (m_1   ) z_+ \| &\geq \frac {1}{C} e^{ 2 \lambda_1 t } \| z_+\|, \\  
\| d_\perp \varphi_t  (m_1   ) z_-\| & \leq {C} e^{ - 2 \lambda_1 t } \| z_-\| ,
\end{split}
\end{equation}
for all positive times $ t$.

If $z\in C^+_{m}$, then this and \eqref{eq:Cz+}  show
$$
\|d_\perp \varphi_t (m_1   ) z_-\| \leq 2C^2\,e^{-4\lambda_1
  t}\,\eps_2 \|d_\perp \varphi_t (m_1   ) z_+\|\,.
$$
Let us take $t_0$ such that $2C^2\,e^{-4\lambda_1 t_0}<1/2$. For $
t\geq t_0 $ and $ \epsilon_1 $ small enough depending on $ t$ 
this shows that 
\begin{equation}
\label{eq:dC+}  z\in C^+_{m }\ \text{and}\ \|z\|  \leq \eps_1, 
\ \| d_\perp \varphi_t (m_1   ) z \| \leq \epsilon_1  \Longrightarrow 
 d_\perp \varphi_t (m_1   ) z 
+ {\mathcal O}_t ( \| z \|^2 ) \in C^+_{ \varphi_t ( m_1 ) } ,
\end{equation}
which in view of \eqref{eq:varphi}  proves \eqref{eq:Cpm1} in the $ + $ case with the $ - $ case being
essentially the same. 

To obtain \eqref{eq:Cpm2} we note that for $ ( m , z ) \in
\neigh( \rho, K^\delta ) $, 
\[  d ( ( m , z ) , C^{+} ) \sim 
    d_{m}  ( z , C^+_{m}  ) \sim \|z_-\| ( 1 -  \bbbone_{ C^+_m} ( z
) ),  \ \ z = z_+ + z_- , \ \ z_\pm \in E_m^\pm ,
\]
where $ \bbbone_A $ is the characteristic function of a set $ A$. 
(To see the first $ \sim $ we need to show that $ d ( ( m, z ) ,
C^+) \leq  c_0 d_m ( z , C_m^+ ) $ for some $ c_0 $, which follows
from an argument by 
condradiction using pre-compactness of $ K^\delta $.)

We also observe that if $ d \varphi_t ( m_1 ) z \in \neigh( K^\delta )
$ then \eqref{eq:dC+} gives, for $ \epsilon_1 $ small enough depending
on $ t $, 
\[   1 - \bbbone_{ C_{\varphi_t ( m_1)}  ^+} \!\!\! \left( d_\perp \varphi_t ( m_1 ) z
+ {\mathcal O}_t ( \| z \|^2 ) \right) 
\leq 1 - \bbbone_{ C_m^+ } ( z ) . \]
Hence, using \eqref{eq:varphi} and \eqref{eq:gath}, writing $ z =
z_- + z_+ $ as before, and taking $ \epsilon_1 $ sufficiently small 
depending on $ t \geq t_0 $,
\[ \begin{split} 
d ( \varphi_t ( m , z ) , C^+) & \sim d_{\varphi_t ( m_1) } \big( d_\perp
\varphi_t (m_1) z + 
{\mathcal O}_t ( \|z\|^2)  , 
C_{\varphi_t ( m_1 ) }^+ \big)
 \\
& \sim   \| d_\perp \varphi_t ( m_1) z_-  \| \left( 1 + {\mathcal O}_t (
  \| z_- \|) \right) 
\big( 1 - \bbbone_{C^+_{\varphi_t ( m_1 ) }} ( d_\perp \varphi_t (m_1) z +
{\mathcal O}_t ( \|z\|)^2 ) \big) \\
& \leq C  e^{ - 2 \lambda_1 t } \| z_- \| \left( 1 + {\mathcal O}_t (
  \| z_- \| ) \right) ( 1 - \bbbone_{C^+_{ m_1} } ( z ) ) \\
& \leq  C' e^{- 2\lambda_1 t } d_{m_1}  ( z , C_{m_1}^+ ) \sim
C' e^{-2 \lambda_1 t } d_m (  z , C_{m}^+ ) \\
& \leq  e^{ - \lambda_1 t } d ( ( m , z ) , C^+ ) .
\end{split}
\]
Here in the second line we used the fact that $ \| z \| \leq C \|z_-
\| $ if the distance is non zero (with $C$ depending on $\eps_2$). 
In the fourth line we used the
continuity of the cone field, $ m \mapsto C^+_m $.

This proves \eqref{eq:Cpm2}.
The last claim \eqref{eq:Cpm3} is immediate from the construction of
$ C^\pm $ and the fact that $ E_\rho^+ \cap E^-_\rho = \{ 0 \} $.
\end{proof}

We now regularize $
d ( \rho , C^\pm )^2 $ uniformly with respect to a parameter $
\epsilon $. It will eventually be taken to be $  h/\tilde h $, where 
$\tilde h $ is a small constant independent of $ h$.
Lemma \ref{l:Cpm}  and the arguments of 
\cite[\S 4]{NSZ2} and \cite[\S 7]{SZ10} immediately give
\begin{lem}
\label{l:regu}
There exists $ t_0 > 0 $ such that for any $ t > t_0 $, there exists
a neighbourhood $ \VV_t  $ of $ K^{2 \delta} $ and a constant  $C_0>0$
such that the following holds.

For any
small $ \epsilon > 0 $ there
exist functions $ \hph_\pm \in \CI ( \VV_t \cup \varphi_t ( \VV_t ) ) $
such that
for $ \rho \in \VV_t \cap p^{ -1} ([-\delta, \delta]) $, 
\begin{equation}
\label{eq:es1'}
\begin{split}
& \hph_\pm  ( \rho )  \sim d( \rho  , C_\pm )^2 + \epsilon , \ \
\gamma_\pm ( \rho ) \geq \epsilon 
 \,, \\
& \pm (\hph_\pm ( \rho )  - \hph_\pm ( \varphi_t ( \rho ) )  + C_0 \epsilon
\sim  \hph_\pm ( \rho )
\,,\\
& \partial^\alpha
 \hph_\pm (\rho ) =\Oo ( \hph_\pm ( \rho )^{ 1
- |\alpha|/2} )\,,  \\
&  \hph_+ ( \rho ) + \hph_- ( \rho )
\sim d( \rho ,  K^\delta )^2 +  \epsilon \,.
\end{split}
\end{equation}
\end{lem}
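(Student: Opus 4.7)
The plan is to construct $\hph_\pm$ as a smoothed version of $d(\cdot,C^\pm)^2+\epsilon$ at scale $\sqrt{\epsilon}$, following closely \cite[\S 4]{NSZ2} and \cite[\S 7]{SZ10}. Using the tubular neighborhood identification \eqref{eq:neK} and a partition of unity subordinate to a finite cover of $K^{2\delta}$ by Euclidean charts, fix a nonnegative mollifier $\phi\in\CIc(\RR^{2n})$ with $\int\phi=1$, $\supp\phi\subset B(0,1)$, and $\phi$ even. In these charts define, chart by chart,
\[
\hph_\pm(\rho)\defeq \epsilon + \int_{\RR^{2n}} \phi(z)\, d(\rho + \sqrt{\epsilon}\,z,\ C^\pm)^2\, dz,
\]
which is smooth, satisfies $\hph_\pm\geq\epsilon$, and is defined on a neighbourhood $\VV_t$ of $K^{2\delta}$ small enough that $\VV_t\cup\varphi_t(\VV_t)$ still lies in the tubular neighbourhood where \eqref{eq:Cpm2} and \eqref{eq:Cpm3} apply; the chart ambiguities are harmless since we are only asking for equivalence $\sim$, not equality.

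The size estimate $\hph_\pm(\rho)\sim d(\rho,C^\pm)^2+\epsilon$ follows from the $1$-Lipschitz bound on $d(\cdot,C^\pm)$:
\[
\bigl|d(\rho+\sqrt{\epsilon}z,C^\pm)^2 - d(\rho,C^\pm)^2\bigr|
\leq 2\sqrt{\epsilon}|z|\,d(\rho,C^\pm) + \epsilon|z|^2,
\]
which after integration and AM--GM gives the two-sided comparison. The fourth property $\hph_++\hph_-\sim d(\rho,K^\delta)^2+\epsilon$ then follows directly from \eqref{eq:Cpm3}.

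For the derivative bounds, I differentiate under the integral, transfer derivatives onto $\phi$ by the change of variables $z\mapsto \sqrt{\epsilon}z$, and use $\int\partial^\alpha\phi=0$ for $|\alpha|\geq 1$ (and the parity of $\phi$ to kill the first-order moments when $|\alpha|=2$) to subtract the relevant Taylor polynomial of $d(\cdot,C^\pm)^2$ at $\rho$ before estimating. For $|\alpha|=1$ the Lipschitz bound alone yields $|\partial\hph_\pm|\leq C(d+\sqrt{\epsilon})\leq C\hph_\pm^{1/2}$. For $|\alpha|\geq 2$ the crucial input is that $d(\cdot,C^\pm)^2$, while not globally $C^2$, is semiconcave with second-order Taylor remainder bounded by $C|h|^2$ on each region where the nearest-point map to $C^\pm$ is locally single-valued, and the bad set is lower dimensional and handled by the mollifier. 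After the moment cancellation one obtains $|\partial^\alpha\hph_\pm|\leq C\epsilon^{1-|\alpha|/2}+C\hph_\pm^{1-|\alpha|/2}\lesssim \hph_\pm^{1-|\alpha|/2}$, since $\hph_\pm\geq\epsilon$. Making this semiconcavity argument rigorous for the closed, generally nonconvex cones $C^\pm$ is the main technical obstacle; this is exactly where the construction of \cite[\S 4]{NSZ2} is to be invoked.

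Finally, for the flow contraction property, apply \eqref{eq:Cpm2} to every point $\rho+\sqrt{\epsilon}z$ in the support of the mollifier. Since on the compact set $\VV_t\cup\varphi_t(\VV_t)$ the map $\varphi_t$ is a smooth diffeomorphism with $\|d\varphi_t\|\leq C_t$, one has $\varphi_t(\rho+\sqrt{\epsilon}z)=\varphi_t(\rho)+\sqrt{\epsilon}\,\zeta(z,\rho)$ with $|\zeta|\leq C_t(|z|+\sqrt{\epsilon})$, so the mollification of $d(\cdot,C^+)^2$ at $\varphi_t(\rho)$ inherits (with constants adjusted) the contraction
\[
\hph_+(\varphi_t(\rho))\leq e^{-2\lambda_1 t}\hph_+(\rho) + C'\epsilon.
\]
Choosing $t_0$ so that $e^{-2\lambda_1 t_0}\leq \tfrac12$, and any $t\geq t_0$, one gets
\[
\hph_+(\rho)-\hph_+(\varphi_t(\rho)) \geq \tfrac12\,\hph_+(\rho) - C''\epsilon,
\]
so for $C_0$ large enough the lower bound in $\hph_+(\rho)-\hph_+(\varphi_t(\rho))+C_0\epsilon\sim\hph_+(\rho)$ holds; the matching upper bound is trivial since $\hph_+\geq\epsilon$. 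The minus case is identical, using \eqref{eq:Cpm2} with $t$ replaced by $-t$ for $C^-$, which gives growth of $\hph_-$ under forward flow and yields the sign $\pm$ in the claimed inequality.
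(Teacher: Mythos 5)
Your construction (mollify $d(\cdot,C^\pm)^2+\epsilon$ at scale $\sqrt{\epsilon}$) is the right general shape, and your arguments for the size estimate $\hph_\pm\sim d(\cdot,C^\pm)^2+\epsilon$, for the sum property from \eqref{eq:Cpm3}, and for the flow contraction from \eqref{eq:Cpm2} are essentially correct (the latter up to a bookkeeping issue with the equivalence constants that forces $t_0$ larger, which you already flag). The paper itself defers the construction to \cite[\S 4]{NSZ2} and \cite[\S 7]{SZ10}, so you are trying to supply more detail than the paper gives.

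The problem is the derivative bound, and it is not just a matter of "making the semiconcavity argument rigorous." First, your final arithmetic is backwards: since $\hph_\pm\geq\epsilon$ and $1-|\alpha|/2\leq 0$ for $|\alpha|\geq 2$, one has $\hph_\pm^{1-|\alpha|/2}\leq\epsilon^{1-|\alpha|/2}$, so the sum $C\epsilon^{1-|\alpha|/2}+C\hph_\pm^{1-|\alpha|/2}$ is $\sim\epsilon^{1-|\alpha|/2}$, not $\lesssim\hph_\pm^{1-|\alpha|/2}$. Second, and more seriously, a fixed-scale mollification cannot produce even the weaker estimate $\partial^\alpha\hph_\pm=\Oo(1)$ at $|\alpha|=2$. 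The cones $C^\pm_\rho=\{\|z_-\|\lesssim\epsilon_2\|z_+\|\}$ are not convex (already for $d_t=1$ they are "bowties"), so $d(\cdot,C^\pm)^2$ is only Lipschitz, not $C^{1,1}$: across the medial axis of $C^\pm$ the gradient of $d^2$ has a jump of size $\sim d$. A concrete model is $d^2=\frac{(y-c|x|)^2}{1+c^2}$ near the $y$-axis, where $\partial_x d^2$ jumps by $\sim y$ at $x=0$. Mollifying that jump at the fixed width $\sqrt{\epsilon}$ yields $|\partial^2\hph_+|\sim d/\sqrt{\epsilon}$ on the medial axis, which blows up when $d\gg\sqrt{\epsilon}$ — and $\VV_t$ is a fixed neighbourhood of $K^{2\delta}$ chosen before $\epsilon$, so $d$ is $\Oo(1)$ there. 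The lower-dimensional bad set is not "handled by the mollifier": its contribution is exactly what destroys the estimate. The construction in \cite[\S 4]{NSZ2}, \cite[\S 7]{SJDuke-style} achieves the exotic estimate $\partial^\alpha\hph_\pm=\Oo(\hph_\pm^{1-|\alpha|/2})$ precisely by regularizing at the variable scale $\sim\sqrt{\hph_\pm}$ (or an equivalent Whitney-type/flow-averaged device), so that near the medial axis at distance $d$ the smoothing width is $\sim d$ rather than $\sqrt{\epsilon}$. That adaptive regularization is the missing technical ingredient, and it is not recoverable from a fixed-scale convolution.
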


Following \cite[\S 4]{NSZ2} and \cite[\S 7]{SZ10} again this gives
us an escape function for a small neighbourhood of the trapped set.
We record this in 
\begin{prop}
\label{p:esc}
Let $ \hph_\pm $ be the functions given in Lemma \ref{l:regu}.
For $ L \gg 1 $ independent of 
$ \epsilon $, define 
\begin{equation}
\label{eq:es2}
\hG \defeq
\log ( L \epsilon + \hph_- ) - \log (  L \epsilon + \hph_+ )
\end{equation}
on a  neighbourhood $ \VV $ of the trapped set $ K^{2 \delta } $.

For any $ t_0 $ large enough, and $ L$ depending on $
t_0 $,  we can find a neighbourhood of 
$ U_1 \Subset \VV $ of  $ K^{2 \delta } $  and $ c_1, c_2 , C_1 , C_2, 
 > 0 $, independent of $ L$, 
such that 
\[ 
\hG=\Oo(\log(1/\eps)),\quad \partial_\rho^\alpha \hG =
 \Oo (   \min (
\gamma_+ , \hph_- )^{-\frac{|\alpha|}2
} )  = \Oo (  \eps^{ -\frac{|\alpha|}2
} ) \,, \quad  |\alpha | \geq 1 \,, \]
and such that for $ \rho \in   U_1 \cap p^{-1} ( [ - \delta, \delta ]
)   $, 
\begin{gather}
\label{eq:es4}
\begin{gathered}
\partial_\rho^\alpha ( \hG (\tF ( \rho ) ) - \hG ( \rho )) =
 \Oo(   \min (\hph_+ , \hph_- )^{-\frac{|\alpha|}2
} )  = \Oo(  \eps^{ -\frac{|\alpha|}2
} ) \,, \quad |\alpha | \geq 0 \,,  \\
  d ( \rho,
K^\delta )^2 \geq C_1 \eps\ \ 
\Longrightarrow \  \hG (\tF ( \rho ) ) - \hG ( \rho ) 
\geq c_1 / L   \,,  \\
  d ( \rho,
K^\delta )^2 \leq  c_2 L  \eps\ \ 
\Longrightarrow \  | G ( \rho ) | \leq C_2 . 
\end{gathered}
\end{gather}
\end{prop}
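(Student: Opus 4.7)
The plan is to derive each of the four assertions from the definition \eqref{eq:es2} of $\hG$ and the structural properties of $\hph_\pm$ recorded in \eqref{eq:es1'}. The central identity is
\begin{equation*}
\hG(\tF(\rho)) - \hG(\rho) = \log\frac{M\eps + \hph_-(\tF(\rho))}{M\eps + \hph_-(\rho)} + \log\frac{M\eps + \hph_+(\rho)}{M\eps + \hph_+(\tF(\rho))},
\end{equation*}
with both terms expected to be non-negative up to an error controlled by $1/M$. I work on a compact neighbourhood $U_1 \Subset \VV$ of $K^{2\delta}$ chosen small enough that $\tF(U_1) \subset \VV_t$, which is possible since $K^{2\delta}$ is flow-invariant.

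The size bound $\hG = \Oo(\log(1/\eps))$ is immediate: on $U_1$ both $\hph_\pm$ are $\Oo(1)$, while $\hph_\pm \geq \eps$, so each logarithm lies in $[\log\eps + \Oo(1), \Oo(1)]$. For the derivative estimate, I apply the Fa\`a di Bruno formula to $s \mapsto \log(M\eps + s)$ composed with $\hph_\pm$: using $\partial^\alpha \hph_\pm = \Oo(\hph_\pm^{1 - |\alpha|/2})$ from \eqref{eq:es1'} together with $M\eps + \hph_\pm \sim \hph_\pm$, this yields $\partial^\alpha \log(M\eps + \hph_\pm) = \Oo(\hph_\pm^{-|\alpha|/2})$, and hence $\partial^\alpha \hG = \Oo(\min(\hph_+, \hph_-)^{-|\alpha|/2})$. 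The derivative estimate for $\hG \circ \tF - \hG$ then follows by the triangle inequality applied to each summand, since $\tF$ is a smooth diffeomorphism with uniformly bounded derivatives on $U_1$ and $\hph_\pm(\tF(\rho)) \geq \eps$ as well.

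The heart of the argument is the strict lower bound in \eqref{eq:es4}. Writing $a_\pm = \hph_\pm(\rho)$ and $a_\pm' = \hph_\pm(\tF(\rho))$, Lemma~\ref{l:regu} provides two-sided bounds
\begin{equation*}
c\,a_- - C_0\eps \leq a_-' - a_- \leq C\,a_-, \qquad c\,a_+ - C_0\eps \leq a_+ - a_+' \leq C\,a_+,
\end{equation*}
for some constants $0 < c \leq C$. A direct estimation shows that each of the two logarithms in the identity above is $\geq \log(1 - C_0/M)$ universally, hence $\geq -2C_0/M$ for $M$ large. Moreover, when $\max(a_+, a_-) \geq A\eps$ with $A \gg M/c$, the logarithm corresponding to the large variable exceeds $\log(1 + c/4)$. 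By the last line of \eqref{eq:es1'}, the hypothesis $d(\rho, K^\delta)^2 \geq C_1 \eps$ forces $a_+ + a_- \gtrsim C_1 \eps$, so one of $a_\pm$ exceeds a constant multiple of $C_1 \eps$; choosing $M$ first (large relative to $c, C, C_0$) and then $C_1$ (large relative to $M$) yields the bound $\hG(\tF(\rho)) - \hG(\rho) \geq 1/C_1$.

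The main obstacle is the ordering of the constants: $M$ must be fixed first so that the universal error $-2C_0/M$ is negligible, and only afterwards can $C_1$ be chosen large enough that the distance hypothesis guarantees at least one of $\hph_\pm(\rho)$ dominates $M\eps$, making the corresponding logarithm larger than the error. The parameter $t_0$ enters only through the constants of Lemma~\ref{l:regu}, so any sufficiently large value will do.
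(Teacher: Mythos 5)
The paper does not actually give a proof of this proposition: it simply states that it follows as in \cite[\S 4]{NSZ2} and \cite[\S 7]{SZ10}, so your proof is filling in a gap rather than being compared against a written argument. That said, your derivation is essentially the standard one behind those references, and the key step — the ordering of the constants — is handled exactly right: $M$ must be fixed first, using only the constants $c,C,C_0$ of Lemma~\ref{l:regu}, so that the ``bad'' logarithm is bounded below by $-2C_0/M$, and only then can $C_1$ be chosen, using the last line of \eqref{eq:es1'}, so that $d(\rho,K^\delta)^2\geq C_1\eps$ forces one of $\hph_\pm(\rho)$ to dominate $M\eps$ and produce a definite gain. Your two-sided inequalities on $a_\pm' - a_\pm$ translate Lemma~\ref{l:regu} correctly, and the monotonicity observation (the ratios $\frac{M\eps+a'}{M\eps+a}$ are increasing in $a$) gives the uniform lower bounds cleanly.

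Two small caveats. First, you appeal to the Fa\`a di Bruno formula for the derivative bounds on $\hG$; this is correct but relies on noting $M\eps+\hph_\pm \sim \hph_\pm$ (since $\hph_\pm \geq \eps$), so that the denominators in the expansion are comparable to $\hph_\pm$ — you state this but do not spell it out. Second, and more substantively: for the estimate on $\partial^\alpha_\rho(\hG\circ\tF - \hG)$ with $|\alpha|=0$, the stated bound $\Oo(\min(\hph_+,\hph_-)^0)=\Oo(1)$ is stronger than what your ``triangle inequality'' argument yields — and in fact stronger than what Lemma~\ref{l:regu} alone supports. The escape property gives an upper bound $\hph_+(\tF(\rho))\lesssim (1-c)\hph_+(\rho)+C_0\eps$ but only the trivial lower bound $\hph_+(\tF(\rho))\geq\eps$; in a linear hyperbolic model with $\rho$ near $\tF^{-1}(\partial C^+)$ but far from $C^+$ one can make $\hph_+(\tF(\rho))\sim\eps$ while $\hph_+(\rho)\sim\hph_-(\rho)\sim\hph_-(\tF(\rho))\gg\eps$, giving $\hG\circ\tF-\hG\sim\log(1/\eps)$, not $\Oo(1)$. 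This loss is harmless for the way the proposition is used (only the two-sided bound $|\hG|=\Oo(\log 1/\eps)$, the $|\alpha|\geq 1$ derivative estimates, and the lower bound on the increment are needed downstream in Proposition~\ref{p:esc2} and \S\ref{micr}), so your proof covers everything that is actually consumed by the rest of the paper, but you should be aware that the displayed claim is slightly overstated at $|\alpha|=0$ and not a consequence of your argument.
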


\medskip

\begin{rem}\label{rem:constants}
For the reader's convenience we make some comments on the
constants in Proposition \ref{p:esc} referring to the 
proof of \cite[Lemma 4.4]{NSZ2} for details. The constant 
$ L$ has to be large enough depending on the implicit 
constants in \eqref{eq:es1'}. The constants $ C_1 , C_2 $ have
to be large enough, and constants $ c_1, c_2 $ small 
enough, depending on the implicit constants in \eqref{eq:es1'}.
%Consequently the constant $ C_3 $ in \eqref{eq:defG} below has
%to be larger that $ C_3'L $ but that is not an issue in what
%follows. 
In \S \ref{eat} it matters that we can 
take $ c_2 L > C_1 $ which is certainly possible. 
\end{rem}

In the intermediate region between $ U_1 $ and $ \{x: w(x) > 0 \} $ 
we need an escape function similar to the one constructed in 
\cite[\S 4]{DatVas10} and \cite[Appendix]{GeSj}.  We work here
under the general assumptions of \S \ref{aatr} and present a
slightly modified argument.
\begin{lem}
\label{l:cef3}
Suppose that  $ X $ is a compact smooth manifold, 
$ p \in S^m ( T^*X; \RR ) $, $ w \in S^k ( T^* X ;
[0, \infty )) $, $ k \leq m $, and that \eqref{eq:CAPe} holds. 
For any open neighbourhood  $ V_1 $ of $ K^{ 3 \delta } $, there exists  $ \epsilon_1 > 0 $ and a function 
$G_1 \in \CIc \big( p^{-1} ( ( - 2 \delta , 2 \delta ) ) \big)$ such that
\begin{gather}
\label{eq:lcef3} 
\begin{gathered}  G_1(\rho)=0\ \ \text{for $\rho$ in some neighbourhood of $K^{3\delta}$,}\\
H_p G_1 ( \rho ) \geq 0  \ \text{  for $ \rho
  \notin w^{ -1} ( ( \epsilon_1 , \infty ))  $}, \\
H_p G_1 ( \rho ) > 0 \ \text{ for $ \rho \in 
p^{ -1 } ( [ - \delta, \delta ] ) \setminus \big( V_1 \cup w^{-1} ( (
\epsilon_1 , \infty )  ) \big)$.}
\end{gathered}
\end{gather}
\end{lem}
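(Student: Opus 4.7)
The plan is to construct $G_1$ as a finite-time averaged escape function, in the spirit of \cite[Appendix]{GeSj} and \cite[\S 4]{DatVas10}. The starting point is a uniform escape lemma: by classical ellipticity \eqref{eq:CAPe} the set $p^{-1}([-2\delta,2\delta])$ is a compact subset of $T^*X$, and any $\rho$ in this compact set lying outside $V_1$ also lies outside $K^{3\delta}$, so by definition of the trapped set $w(\varphi_{t_\rho}(\rho))>0$ for some $t_\rho\in\IR$. A standard finite-cover argument—for each $\rho$ pick an interval $I_\rho\ni t_\rho$ and neighbourhood $U_\rho\ni\rho$ on which $w\circ\varphi_t$ is bounded below, then extract a finite subcover—yields uniform constants $T_0,c_0>0$ such that every such $\rho$ admits $t_\rho\in[-T_0,T_0]$ with $w(\varphi_{t_\rho}(\rho))>c_0$. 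Since $|H_pw|$ is bounded on the compact energy shell, choosing $\epsilon_0\in(0,c_0/2)$ small enough guarantees a uniform $\epsilon>0$ with $|t_\rho|\geq\epsilon$ whenever additionally $w(\rho)\leq\epsilon_0$.

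Next I would fix $\eta\in\CIc(T^*X;[0,\infty))$ supported in $\{w>c_0/2\}\cap p^{-1}((-2\delta,2\delta))$ and strictly positive on $\{w>c_0/2\}\cap p^{-1}([-\delta,\delta])$; a bump $\chi_0\in\CIc((\epsilon/2,2T_0);[0,\infty))$ with $\chi_0>0$ on $[\epsilon,T_0]$ and $\int\chi_0=1$; the antiderivative $\Xi(t)\defeq\int_t^{2T_0}\chi_0(s)\,ds$, which is smooth with $\Xi(0)=1$, $\Xi(2T_0)=0$; and an energy cutoff $\chi_1\in\CIc((-2\delta,2\delta);[0,1])$ equal to $1$ on $[-\delta,\delta]$. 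Set
\[
\tilde G(\rho)\defeq\int_0^{2T_0}\bigl(\eta(\varphi_t(\rho))-\eta(\varphi_{-t}(\rho))\bigr)\Xi(t)\,dt,\qquad G_1(\rho)\defeq\chi_1(p(\rho))\,\tilde G(\rho).
\]
Using $H_p\chi_1(p)=\chi_1'(p)H_pp=0$ and integration by parts in $t$,
\[
H_pG_1(\rho)=\chi_1(p(\rho))\Bigl[-2\eta(\rho)+\int_0^{2T_0}\chi_0(t)\bigl(\eta(\varphi_t(\rho))+\eta(\varphi_{-t}(\rho))\bigr)\,dt\Bigr].
\]
When $\rho\notin\supp\eta$—in particular when $w(\rho)\leq\epsilon_0<c_0/2$—the $-2\eta(\rho)$ term vanishes and the bracket is manifestly $\geq 0$. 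For $\rho\in p^{-1}([-\delta,\delta])\setminus(V_1\cup\{w>\epsilon_0\})$ one has $\chi_1(p(\rho))=1$, and the escape lemma supplies $t_\rho\in[-T_0,T_0]\setminus(-\epsilon,\epsilon)$ with $\varphi_{t_\rho}(\rho)\in\{w>c_0\}\cap p^{-1}([-\delta,\delta])\subset\{\eta>0\}$; since $\chi_0(|t_\rho|)>0$ and the integrand is continuous and non-negative, one of the two integrals is strictly positive, so $H_pG_1(\rho)>0$.

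The delicate point—and the reason for truncating at finite time $2T_0$ rather than using exponential damping on $[0,\infty)$—is the requirement $G_1\equiv 0$ on a neighbourhood of $K^{3\delta}$: with an infinite-time construction, trajectories starting arbitrarily close to but off the trapped set eventually escape and would force $G_1\neq 0$. Since $K^{3\delta}$ is compact, $\varphi_t$-invariant, and disjoint from $\supp\eta\subset\{w>0\}$, continuity of $\varphi_t$ on the bounded interval $[-2T_0,2T_0]$ supplies an open neighbourhood $V_0\subset V_1$ of $K^{3\delta}$ with $\varphi_t(V_0)\cap\supp\eta=\varnothing$ for all $|t|\leq 2T_0$; the integrand defining $\tilde G$ vanishes identically on $V_0$, so $G_1\equiv 0$ there. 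Finally, $\supp G_1\subset\supp\chi_1(p)\cap\bigcup_{|t|\leq 2T_0}\varphi_t^{-1}(\supp\eta)$ is a compact subset of $p^{-1}((-2\delta,2\delta))$, and smoothness of $G_1$ follows from the smoothness of $\varphi_t$ on this finite time interval together with the smoothness of $\eta,\Xi,\chi_1$.
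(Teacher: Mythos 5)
Your construction is genuinely different from the paper's. The paper builds \emph{local} escape functions $g_\rho(\varphi_t(m)) = \chi_\rho(t)\psi_\rho(m)$ on flow tubes $\Omega_\rho = \varphi_{(T_-,T_+)}(\Gamma_\rho)$ and sums them over a finite cover of the relevant compact set; you instead build a single \emph{global}, time-averaged escape function $\tilde G = \int_0^{2T_0}(\eta\circ\varphi_t - \eta\circ\varphi_{-t})\,\Xi\,dt$. Your integration-by-parts identity
\[
H_p\tilde G(\rho) \;=\; -2\eta(\rho) \;+\; \int_0^{2T_0}\chi_0(t)\bigl(\eta(\varphi_t(\rho)) + \eta(\varphi_{-t}(\rho))\bigr)\,dt
\]
is correct and cleanly separates the nonnegative contribution from the single negative term $-2\eta(\rho)$, which vanishes exactly where one needs it to; the observation that a \emph{finite} averaging window is what gives the vanishing near $K^{3\delta}$ is a good one. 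Both strategies buy roughly the same thing, but yours avoids the partition of unity and the local tube coordinates.

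There is, however, a genuine gap right at the start. You assert that \eqref{eq:CAPe} forces $p^{-1}([-2\delta,2\delta])$ to be compact, and you then run the finite-subcover escape argument over $p^{-1}([-2\delta,2\delta])\setminus V_1$. This is false: \eqref{eq:CAPe} makes the CAP-modified symbol $\tilde p = p - iw$ classically elliptic, not $p$ itself. On a level set $\{|p|\leq 2\delta\}$, as $|\xi|\to\infty$ one learns only that $w(x,\xi)\to\infty$; the set is unbounded in $\xi$. Without compactness the finite subcover cannot be extracted, so the uniform $T_0,c_0$ are not available, and the whole subsequent construction of $\eta$, $\chi_0$, $\Xi$ has no footing. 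The paper sidesteps this precisely by always working on $p^{-1}([-2\delta,2\delta])\setminus(V_1\cup U_0)$ with $U_0 = w^{-1}((0,\infty))$, which \eqref{eq:CAPe} \emph{does} make compact. Your argument can be repaired by running the escape/finite-cover step on, e.g., the compact set $p^{-1}([-2\delta,2\delta])\cap w^{-1}([0,1])\setminus V_1$ (compactness now following from both halves of \eqref{eq:CAPe}), and then taking $\epsilon_0 < \min(c_0/2,1)$; the target set $p^{-1}([-\delta,\delta])\setminus(V_1\cup\{w>\epsilon_0\})$ then sits inside the corrected compact set and the rest goes through. A related minor imprecision in the same vein: $\{w>c_0/2\}\cap p^{-1}([-\delta,\delta])$ is also unbounded, so a compactly supported $\eta$ cannot be strictly positive on all of it; strict positivity is needed only on the bounded escape set $\bigcup_{|t|\leq T_0}\varphi_t\bigl(p^{-1}([-\delta,\delta])\cap w^{-1}([0,\epsilon_0])\setminus V_1\bigr)\cap\{w>c_0\}$, which is exactly what your finite cover does supply once the compactness is fixed.
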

\begin{proof} Call $ U_0 \defeq w^{-1} ( ( 0 , \infty ) ) $ and suppose $ \rho \in 
p^{ -1 } ( [ - 2\delta, 2\delta ] ) \setminus ( V_1 \cup U_0 ) $. 
We first claim that there exist $ T_\pm = T_\pm ( \rho ) $, 
$ T_- < 0 < T_+ $,  such 
that 
\begin{gather} 
\label{eq:T1}  \varphi_{T_+}  ( \rho ) \in U_0 \ \text{ or } \ 
\varphi_{ T_-} ( \rho )  \in U_0 \,, \\
\label{eq:T2}  \varphi_{ T_\pm }( \rho ) \in V_1 \cup U_0  \,. 
%\\ \label{eq:T3} \varphi_{ [ T_-  , T_+ ] } ( \rho ) \setminus  U_0 \ \text{ is connected.} 
\end{gather}
(Here and below we use the notation $ \varphi_A ( \rho ) = \{ \varphi_t ( \rho ) : t
\in A \} $.)

To justify these claims we first note that  since $ \rho \notin K^{2 \delta } $,
$ \varphi_{ \RR} ( \rho ) \cap U_0 \neq \emptyset  $ which 
implies that 
\begin{equation} 
\label{eq:T11} \exists \, T_1,  \ \  \varphi_{ T_1 } ( \rho ) \in U_0. 
\end{equation} 
Assuming that $ T_1  < 0 $ we want to show that $ \varphi_{ T_2 } ( \rho )
\in V_1 \cup U_0 $ for some $ T_2  > 0 $. Suppose that this is not true, 
that is 
\begin{equation}
\label{eq:empty}  \varphi_{ (  0 , \infty  )}( \rho ) \cap ( V_1 \cup
U_0 ) =  \emptyset\, .\end{equation}
Then 
for any $ t_j \to \infty $, 
\[  \rho_j \defeq \varphi_{t_j} (
\rho ) \in p^{-1} ( [ -  2 \delta, 2 \delta ] ) \setminus 
( V_1 \cup U_0 ), \ \ \varphi_{ [ 0 , \infty )} ( \rho_j ) \cap ( V_1
\cup U_0 ) = \emptyset . \]
By \eqref{eq:CAPe} the set $ p^{-1} ( [ -  2 \delta, 2 \delta ] ) \setminus 
( V_1 \cup U_0 ) $ is compact and hence, by passing to a subsequence, 
we can assume that $ \rho_j \to \bar \rho \notin V_1 \cup U_0 $. 
We have $ \varphi_ t ( \rho_j ) \to \varphi_ t ( \bar
\rho ) $, as $ j \to \infty $, uniformly for $ |t | \leq T $, and 
it follows that $ \varphi_{  [ 0 , \infty )} ( \bar \rho)  \cap ( V_1 \cup U_0 )
=  \emptyset $.  For $ t \geq -t_j $, 
\[  \varphi_t ( \rho_j )  = \varphi_{ t + t_j}( \rho ) \subset
\varphi_{  [ 0 , \infty )}( \rho ) \subset p^{-1} ( [ -2  \delta , 2 \delta
] ) \setminus ( V_1 \cup U_0 ) , \]
which means that $ \varphi_ t ( \bar \rho ) \notin V_1 \cup U_0 $
for $ t > - t_j \to - \infty$. We conclude that
$$ 
\varphi_ \RR  ( \bar \rho ) \cap  V_1 \cup U_0 =\emptyset
\ \Longrightarrow \ \varphi_\RR ( \bar \rho ) \in K^{3 \delta } . 
$$
This contradicts the property $\bar{\rho}\not\in V_1$, and proves the existence of $T_2>0$ such that 
$\varphi_{T_2}(\rho)\in V_1\cup U_0$. We call $T_-(\rho)=T_1$, $T_+ (\rho)= T_2$.

%We may optimize $T_1$,
%namely consider the maximal interval $[a,0]$ such that $\varphi_{[a,0]}(\rho)\cap U_0=\emptyset$, and take $T_-=a-\gamma$ for a sufficiently small $\gamma>0$, such that $\varphi_{T_-}(\rho)\in U_0$. We may similarly optimize $T_2$, that is consider the largest $[0,b]$ such that 
%$\varphi_{[0,b]}(\rho)\cap (V_1\cap U_0)=\emptyset$, and take $T_+ = b+\gamma$. This optimization ensures property \eqref{eq:T3}.

In the case  $ T_1 $ in \eqref{eq:T11} is
positive, a similar argument shows the existence of $T_2<0$ such that
$ \varphi_{ T_2}( \rho ) \in ( V_1 \cup
U_0 ) \neq  \emptyset$. In this case we call $ T_- ( \rho ) =T_2$, $T_+(\rho)=T_1$.

For each $ \rho \in p^{-1} ( [ - 2\delta, 2\delta ] ) $ we can find 
an open hypersuface $ \Gamma_\rho $,  
transversal to $ H_p $ at $ \rho $,
 such that, if $ \varphi_{ T_\pm} ( \rho ) 
\in U_0 $,  then for $ \rho' \in \Gamma_\rho  $, 
\begin{gather*} 
 \varphi_{ T_\pm} ( \rho' ) \in U_0 , \ \ \ 
\varphi_{  T_\mp}  ( \rho' ) \in V_1 \cup U_0 \,.
% \varphi_{ [ T_-  , T_+ ]}  ( \rho' ) \setminus U_0 \ \text{ is connected.} 
\end{gather*}
Notice that the closure of the tube
$\Omega_\rho \defeq \varphi_{ ( T_-  , T_+ )} ( \Gamma_\rho )$ does not intersect $K^{3\delta}$.
Using this tube, we construct a local escape functions $ g_\rho \in  \CIc ( \Omega_\rho ) $, with the following properties: 
for some $ \epsilon_\rho > 0$, and an slightly smaller tube
$ \Omega'_\rho \subset \Omega_\rho  $ containing $ \varphi_{ ( T_- , T_+ ) }(\rho ) $, 
\begin{equation}
\label{eq:grho}
H_p g_\rho  ( \rho' ) \geq 0 , \ \ \rho' \notin w^{-1} ( (
\epsilon_\rho, \infty  ) ) ,  
\ \
H_p g_\rho ( \rho' ) > 0 , \ \ \ \rho' \in \Omega_\rho' \setminus 
( w^{-1}( ( \epsilon_\rho, \infty  ) ) \cup V_1 )   .
\end{equation}
Here $ \epsilon_\rho $ is chosen so that if $ \varphi_ {T_\pm} ( \rho )
\in U_0 $ then 
$ \varphi_{  T_\pm}  (\Gamma_\rho ) \subset w^{ -1} ( ( 2 \epsilon_\rho,
\infty )) $. 

To construct $ g_\rho $ we take $ ( t, m  ) \in ( T_-,
T_+) \times \Gamma_\rho   $ as local coordinates:
$ ( t, m ) \mapsto \varphi_t  ( m ) \in \Omega_\rho $.  
Suppose that $ \varphi_{  T_-} ( \rho ) \in U_0 $, and that
$ \varphi_ {( T_- , T_- + \gamma ) } ( \Gamma_\rho )   \subset w^{-1}( (\eps_\rho,\infty) ) $ and 
$ \varphi_{ ( T_+ - \gamma , T_+ )}(\Gamma_\rho) \subset V_1\cup U_0  $. 
Choose $ \chi_\rho \in \CIc ( ( T_-, T_+ ) ) $ which 
is strictly increasing on $ ( T_- + \gamma, T_+ - \gamma ) $ and non-decreasing on $ ( T_+ - \gamma, T_+ ) $.
Also, choose $ \psi_\rho \in \CIc ( \Gamma_\rho ) $ 
with $ \psi_\rho ( \rho ) =1 $. Then put 
$   g_\rho ( \varphi_ t  ( m )) \defeq  \chi_\rho ( t ) \psi_\rho ( m ) $.
Since     $ H_p g_\rho = \chi_\rho' ( t ) \psi_\rho ( m ) $,
\eqref{eq:grho} holds. A similar construction can be applied in the case where 
$\varphi_{  T_-} ( \rho ) \in V_1$, $\varphi_{  T_+} ( \rho ) \in U_0$.

From the open cover 
\[  p^{-1} ( [ - \delta, \delta ] ) \setminus ( V_1 \cup U_0 ) \subset
\bigcup \left\{ \Omega_\rho  : {\rho \in  p^{-1} ( [ - \delta, \delta ] ) \setminus ( V_1
  \cup U_0 ) }  \right\} 
, \] 
one may extract a finite subcover $\bigcup_{j=1}^L\Omega_{\rho_j}$. 
The closure of this cover does not intersect $K^{3\delta}$, so that
the function
$  G_1 ( \rho ) \defeq \sum_{ l=1}^L g_{\rho_L} ( \rho ) $ 
satisfies \eqref{eq:lcef3}, for $\eps_0=\min_j \eps_{\rho_j}$.
\end{proof}

We conclude this section with a global escape function which 
combines the ones in Proposition \ref{p:esc} and  Lemma \ref{l:cef3}.
The estimates will be needed to justify the quantization of the escape
function in \S \ref{micr}. The proof is an immediate adaptation 
of the proof of \cite[Proposition 4.6]{NSZ2} and is omitted.

\begin{prop}
\label{p:esc2}
Let $\VV $, $ U_1 $, $\hG$ and $ t_0 $ be as in Proposition ~\ref{p:esc}, and 
let $ W_1 $ be a neighbourhood of $ K^{2 \delta }$ such that
$ W_1 \Subset U_1$,  $W_1 \cup \varphi_{t_0} (W_1) \Subset \VV$.
 
Take $ \chi\in \CIc ( \VV ) $ equal to $ 1 $
in $W_1 \cup \varphi_{t_0} (W_1)$, and let $ G_1 $ 
be the escape function constructed in Lemma~\ref{l:cef3}  for $ V_1 =
W_1 $. Then for any $ \Gamma > 1 $, $  G \in \CIc ( T^*X ; \RR ) $ defined by 
\begin{equation}
\label{eq:defG}
G \defeq \chi C_3  \Gamma \hG + C_4\log(1/\eps)\, G_1 
\end{equation}
where $ C_3 $ and $ C_4 $ are sufficiently large, 
satisfies the following estimates
\begin{gather}
\label{eq:es3}
\begin{gathered}
 |G(\rho)|  \leq C_6\,\log ( 1 / \epsilon ) \,, \quad
\partial^\alpha G  = 
\Oo( \eps^{-|\alpha|/2} ) \,, \quad |\alpha | \geq 1 \,, 
  \\
\rho\in W_1 \;\Longrightarrow \; \quad G ( \varphi_{t_0}  ( \rho)  ) - G ( \rho )  \geq - C_7\,,
\\
\rho \in W_1 \cap p^{-1} ( [ - \delta, \delta ] ) \,, \quad
  d ( \rho, K^\delta)^2 \geq C_1 \eps
\;  \Longrightarrow \;  G ( \varphi_{t_0}  ( \rho)  ) - G ( \rho )
\geq  2 \Gamma 
\,,\\
\rho\in p^{ -1 } ( [ - \delta, \delta ] ) \setminus \big( W_1 \cup w^{-1} ( (\epsilon_1 , \infty )  ) \big)
 \;\Longrightarrow \; G(\varphi_{t_0} (\rho)) - G (\rho)
\geq C_8\,\log(1/\eps)\,,
\end{gathered}
\end{gather}
with $ C_8 > 0 $.

In addition we have
\be\label{eq:orderf}
 \frac{\exp G ( \rho ) }{ \exp G( \rho' ) } \leq C_9 \left( 1 
+ \frac {  d( \rho , \rho') }{\sqrt \epsilon} \right)^{N_1} \,,
\ee
for some constants $ C_9 $ and $ N_1 $.
\end{prop}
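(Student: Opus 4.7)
The plan is to verify each of the estimates in \eqref{eq:es3} and \eqref{eq:orderf} by splitting the formula defining $G$ and separately handling the contribution from $\chi C_3 \Gamma \hG$ and from $C_4\log(1/\eps)\,G_1$, using the key properties of $\hG$ (strict growth near, but outside of, $K^\delta$, from Proposition~\ref{p:esc}) and of $G_1$ (strict growth in the intermediate region, from Lemma~\ref{l:cef3}).

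The size bound $|G|\leq C_6\log(1/\eps)$ is immediate since $|\hG|=\Oo(\log(1/\eps))$ while $\chi$ and $G_1$ are bounded functions independent of $\eps$. For the derivative bound $\partial^\alpha G=\Oo(\eps^{-|\alpha|/2})$, I would apply Leibniz to $\chi C_3\Gamma\hG$: $\chi$ has $\Oo(1)$ derivatives, $\partial^\alpha \hG=\Oo(\eps^{-|\alpha|/2})$ by Proposition~\ref{p:esc}, and the $\eps$-independent $G_1$ contributes $\Oo(\log(1/\eps))$ which is absorbed.

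The heart of the argument is the growth estimates. I would write
\[
G(\varphi_{t_0}(\rho))-G(\rho) = C_3\Gamma\,\bigl[(\chi\hG)\circ\varphi_{t_0}-\chi\hG\bigr](\rho) + C_4\log(1/\eps)\,\bigl[G_1\circ\varphi_{t_0}-G_1\bigr](\rho)
\]
and treat the two regimes separately. For $\rho\in W_1$: by the defining property $\chi\equiv 1$ on $W_1\cup\varphi_{t_0}(W_1)$, the first term reduces to $C_3\Gamma(\hG\circ\varphi_{t_0}-\hG)(\rho)$, which Proposition~\ref{p:esc} controls ($\geq -C$ always, and $\geq 1/C_1$ when $d(\rho,K^\delta)^2\geq C_1 \eps$). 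The second term vanishes on $W_1$ since Lemma~\ref{l:cef3} provides $G_1\equiv 0$ in a neighbourhood of $K^{3\delta}$, and one may shrink $W_1$ so that $W_1\cup\varphi_{t_0}(W_1)$ lies inside this neighbourhood. Choosing $C_3\geq 2C_1$ then promotes the strict bound to $\geq 2\Gamma$.

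The main obstacle is the fourth line of \eqref{eq:es3}, for $\rho$ in the intermediate region $p^{-1}([-\delta,\delta])\setminus(W_1\cup w^{-1}((\eps_0,\infty)))$. This set is compact by the classical ellipticity \eqref{eq:CAPe}, and $H_p G_1>0$ there by Lemma~\ref{l:cef3}; hence a compactness and time-averaging argument produces a uniform lower bound $G_1\circ\varphi_{t_0}-G_1\geq c>0$, provided $t_0$ is large enough for each trajectory to accumulate sufficient increase of $G_1$ before possibly re-entering a zero-derivative portion of the construction. Since the first term is at worst $\Oo(\Gamma\log(1/\eps))$, a sufficiently large choice of $C_4$ (relative to $C_3\Gamma$) makes the second term dominate, yielding the $\geq C_8\log(1/\eps)$ bound. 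Finally, \eqref{eq:orderf} follows by a standard dichotomy: for $d(\rho,\rho')\leq\sqrt\eps$, integrate $|\partial G|=\Oo(\eps^{-1/2})$ along a geodesic to obtain $|G(\rho)-G(\rho')|=\Oo(d/\sqrt\eps)$; for larger $d$, the global bound $|G|\leq C_6\log(1/\eps)$ can be absorbed into $(1+d/\sqrt\eps)^{N_1}$ by taking $N_1$ large.
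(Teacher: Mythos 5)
The paper omits the proof of this proposition, referring to the corresponding result in [NSZ2], so the comparison is against the natural line of reasoning rather than a written proof. Your overall decomposition of $G$ into $\chi C_3\Gamma\hG$ and $C_4\log(1/\eps)\,G_1$, with Proposition~\ref{p:esc} controlling the former near the trapped set and Lemma~\ref{l:cef3} the latter in the intermediate region, is indeed the intended strategy. However, two of your steps do not hold up.

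First, the claim that one ``may shrink $W_1$ so that $W_1\cup\varphi_{t_0}(W_1)$ lies inside'' the neighbourhood where $G_1\equiv 0$ is circular: $G_1$ is produced by applying Lemma~\ref{l:cef3} with $V_1=W_1$, so changing $W_1$ changes $G_1$. Nor is there any reason for $G_1$ to vanish on $W_1$: the tubes $\Omega_\rho$ are based at points \emph{outside} $W_1\cup U_0$, but they flow for time $(T_-,T_+)$ and can pass through $W_1$ in the middle, so $\supp G_1$ may well meet $W_1$. What you actually need for the second and third lines of \eqref{eq:es3} is only the \emph{inequality} $G_1(\varphi_{t_0}(\rho))-G_1(\rho)\geq 0$ on $W_1$, and this is supplied by the monotonicity $H_pG_1\geq 0$ outside $w^{-1}((\eps_0,\infty))$, provided the trajectory $\{\varphi_s(\rho):0\leq s\leq t_0\}$ stays out of the damping region --- a dynamical property which can be built into the choice of $W_1,\VV$ as small neighbourhoods of $K^{2\delta}$ (using $K^\delta\cap\WF_h(W)=\emptyset$), not by shrinking $W_1$ after the fact. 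That monotonicity argument, not vanishing of $G_1$, is the one to run, and your write-up never invokes it.

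Second, the dichotomy you propose for \eqref{eq:orderf} does not close. For $d(\rho,\rho')$ just above $\sqrt\eps$ the factor $(1+d/\sqrt\eps)^{N_1}$ is $\Oo(1)$ while the global bound gives only $\exp|G(\rho)-G(\rho')|\leq\eps^{-2C_6}\to\infty$; and for small $d$, exponentiating $|G(\rho)-G(\rho')|=\Oo(d/\sqrt\eps)$ produces $\exp(Cd/\sqrt\eps)$, which is not a power of $1+d/\sqrt\eps$. The estimate \eqref{eq:orderf} is really a statement that $\exp G$ is an order function, and it hinges on the logarithmic structure $\hG=\log(M\eps+\hph_-)-\log(M\eps+\hph_+)$ together with the order-function bound $\hph_\pm(\rho)\leq C(\hph_\pm(\rho')+d(\rho,\rho')^2)$, a consequence of $|\partial\hph_\pm|=\Oo(\hph_\pm^{1/2})$ from Lemma~\ref{l:regu}. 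This gives $\exp(\hG(\rho)-\hG(\rho'))\leq C(1+d/\sqrt\eps)^{4}$, genuinely polynomial; the $\chi$ cutoff and the term $C_4\log(1/\eps)\,G_1$ contribute additional polynomial factors because $\chi$ and $G_1$ are fixed Lipschitz functions of compact support. The naive derivative-plus-global-bound dichotomy does not recover this.
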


%%%%%%%%%%%%%%%%%%%%%%%%%%%%%%%%%%%%%%%%%%%%%%
\section{Analysis near the trapped set}
\label{ats}

In this section we will analyse the cut-off propagator
\be\label{e:cutoff-propag}
  \chi^w \exp ( - i t P / h ) \chi^w \,, 
\ee
where $ \chi^w = \Op ( \chi ) $, $ \chi \in \CIc \cap \widetilde
S_{\frac12} $ and $ \supp \chi \subset \{ \rho : d ( \rho, K^{\delta }
) < R ( h / \tilde h)^{\frac12} \}$ for some $R>1$ independent of
$\th,\,h$. We could take two different cut-offs on both sides, as long
as they share the above properties.

Our objective is to prove the following bound (announced in \eqref{eq:tricky}):
\begin{prop}\label{p:crucial-bound}
For any $ \epsilon_0 > 0 $ and $ M > 0 $, there exist $C_0 >0$, 
$ \th_0 $, and a function $ \th \mapsto h_0 ( \th ) >0 $, such that 
for $0<\th<\th_0$ and $0<h<h_0(\th)$, 
\begin{gather}
\label{eq:tricky2}
\begin{gathered}
\| \chi^w  e^{ - i t P / h } \chi^w  \|_{ L^2 \to L^2 } \leq C_0 \, \tilde h^{ -d_{\perp}/2} \,
\exp\left(- \textstyle{\frac12}  t {(\lambda_0 - \epsilon_0)}  \right) , 
\quad
 0 \leq t \leq M \log 1/ \th  \,, 
\end{gathered}
\end{gather}
where $ \lambda_0 $ is given by \eqref{eq:t11}. 
\end{prop}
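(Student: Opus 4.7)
\emph{Proof proposal.} The plan is to reduce the evolution near $K^\delta$ to a microlocal normal form in which the transversal dynamics is generated by a quadratic Hamiltonian with hyperbolic linearization, and then apply the operator-valued Fourier integral operator calculus of \S\ref{fioo} to factor out the transversal evolution and estimate it via an explicit metaplectic bound. Since $K^\delta$ is a smooth symplectic submanifold of $T^*X$, a symplectic tubular neighborhood construction yields Darboux coordinates $(x,\xi;\ty,\teta)$ near $K^\delta$ in which $K^\delta = \{\ty=\teta=0\}$ and $(\ty,\teta)\in\RR^{2d_t}$ are transversal symplectic coordinates. Conjugating $P$ by a compactly microlocalized FIO $T\in I_h^{\comp}$ quantizing this change of variables, and iteratively removing higher-order transversal terms using \eqref{eq:Egor}, brings the operator to the form
\begin{equation*}
T^{-1}PT \;=\; p_\parallel^w(x,hD_x) + Q^w_{(x,\xi)}(\ty,hD_\ty) + R \pmod{h\Psi^\comp}\,,
\end{equation*}
where $p_\parallel = p|_{K^\delta}$, the real nondegenerate quadratic form $Q_{(x,\xi)}(\ty,\teta)$ has linear hyperbolic Hamilton flow with stable/unstable subspaces $E^\pm_{(x,\xi)}$ and expansion rates matching \eqref{eq:NH}, and $R$ vanishes cubically at $\ty=\teta=0$.

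Since $\supp\chi$ lies in $\{d(\rho,K^\delta)\le R(h/\th)^{1/2}\}$, I would then rescale the transversal variables via $\ty=(h/\th)^{1/2}\bar y$, which by \eqref{eq:stre} turns the transversal Weyl quantization with parameter $h$ into one with parameter $\th$ and places the rescaled cutoff $\tchi(\bar y,\bar\eta) = \chi((h/\th)^{1/2}\bar y,(h/\th)^{1/2}\bar\eta)$ on a fixed $R$-ball. In the rescaled quantization the quadratic part $Q_{(x,\xi)}^w$ retains its form with semiclassical parameter $\th$, while the cubic remainder picks up the small prefactor $(h/\th)^{1/2}$. At the principal level the dynamics decouples, so the propagator is approximated by $e^{-itp_\parallel^w/h}\otimes N(t;x,\xi)$, where $N(t;x,\xi)$ is the metaplectic operator of the form \eqref{eq:simple} generated by $Q_{(x,\xi)}$ with semiclassical parameter $\th$. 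By Proposition~\ref{p:meta} this family belongs to the operator-valued symbol class of \S\ref{fioo}, and Lemmas~\ref{l:fact} and \ref{l:normf} let me express $\chi^w e^{-itP/h}\chi^w$, up to controlled errors, as an operator-valued FIO with principal symbol $e^{-itp_\parallel^w/h}\otimes \tchi^w N(t;x,\xi)\tchi^w$.

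The quantitative core is the metaplectic bound: an explicit computation using the kernel formula \eqref{eq:simple} gives, uniformly for $(x,\xi)\in K^\delta$ and $t$ in the prescribed range,
\begin{equation*}
\|\tchi^w(\bar y,\th D_{\bar y})\, N(t;x,\xi)\,\tchi^w(\bar y,\th D_{\bar y})\|_{L^2\to L^2} \;\leq\; C\,\th^{-d_t/2}\,\det\bigl(d\varphi_t\rest_{E^+_{(x,\xi)}}\bigr)^{-1/2}\,,
\end{equation*}
with the factor $\th^{-d_t/2}$ arising from the amplitude normalization in \eqref{eq:simple} and the decay coming from the unstable Jacobian. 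Taking the infimum over $K^\delta$ and invoking the definition \eqref{eq:t11} of $\lambda_0$ bounds the right-hand side by $C\,\th^{-d_t/2}e^{-t(\lambda_0-\eps)/2}$, and combining with the unitary longitudinal factor yields \eqref{eq:tricky2}. I expect the main obstacle to be the error control over the long time interval $t\leq M\log(1/\th)$: Egorov conjugation amplifies errors exponentially in $t$ and the cubic remainder $R$ contributes $\Oo((h/\th)^{1/2})$ at each step, so keeping the accumulated discrepancy between the true propagator and the metaplectic model dominated by $\th^{-d_t/2}e^{-t\lambda_0/2}$ forces $h<h_0(\th)$ with $h_0$ chosen sufficiently small --- a delicate Ehrenfest-time bookkeeping analogous to \cite{NSZ2} and \cite{SZ10}, here adapted to the operator-valued calculus of \S\ref{fioo}.
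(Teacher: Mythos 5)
Your high-level strategy coincides with the paper's: rescale the transversal variables by $(h/\th)^{1/2}$, factor out a metaplectic operator quantizing the transversal linearization of the flow, bound its cutoff by $C\,\th^{-d_t/2}\,|\det(d\varphi_t\rest_{E^+_\rho})|^{-1/2}$ using the block-diagonal form \eqref{e:block-diagonal} together with the kernel estimate of Lemma~\ref{l:sympl}, and invoke the definition \eqref{eq:t11} of $\lambda_0$. However, you hide two substantial steps under the rug. In place of a global normal form $T^{-1}PT = p_\parallel^w + Q^w_{(x,\xi)} + R$ (which is delicate both because $E^\pm_\rho$ are only continuous in $\rho$, so the stable/unstable blocks are not smooth, and because a single compactly microlocalized FIO cannot globally trivialize a neighbourhood of $K^\delta$), the paper works chart by chart with a Darboux atlas, iterating $n$ short-time propagators $T_{j'j}$ and reading the quadratic transversal phase directly off the smooth generating function of each local FIO (Lemmas~\ref{l:Hmatrix} and \ref{l:nest}); the smooth $\rho_\parallel$-dependence of the operator-valued symbol is inherited from that phase function, not from any smoothness of $E^\pm$.

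The more serious gap is your assertion that the longitudinal factor is ``unitary''. After the factorization $\cT\,[\bT^n]_{j_nj_0}\,(\chi^0)^w\,\cT^*\approx \Op(M^n_{j_nj_0})\,T^{n\parallel}_{j_nj_0}\,(\tchi^0)^\tw$, the longitudinal FIO $T^{n\parallel}_{j_nj_0}$ is a sum over $\cO(J^n)$ chart sequences, and the only a priori bound is $\|T^{n\parallel}_{j_nj_0}\|\lesssim\th^{-CM}$; if this were not improved, the path sum over the Ehrenfest time $n\sim M\log(1/\th)$ would swamp the transversal gain $\th^{-d_t/2}e^{-nt_0(\lambda_0-\eps)/2}$. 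Lemma~\ref{l:normTparallel} proves $\|T^{n\parallel}_{j_nj_0}\|\leq 1+\cO(\th)$, and the proof relies on two ideas that your outline does not touch: (i) the quantum partition-of-unity identity $\bT^\flat = C^*(U\otimes\Id)C$ gives $\|\bT^n\|\leq 1+\cO(h)$ for the unrescaled matrix of local propagators (Lemma~\ref{l:normT}); and (ii) all physical sequences $\bj$ between $j_0$ and $j_n$ containing the trajectory of a given point produce the same metaplectic symbol up to a sign, so a single $\Op(M^n_{j_nj_0})$ can be factored out of the path sum (Lemma~\ref{l:factor2}), after which one divides by this unitary factor to extract the subunitarity of $T^{n\parallel}$. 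Without these two ingredients the Ehrenfest-time bookkeeping you correctly identify as the main obstacle does not close.
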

Since $e^{ - i t P / h }$ is unitary, the above bound is nontrivial only for 
\[ 0  \leq  \frac{d_{\perp} }{\lambda_0}\log \frac 1\th \leq t \leq M \log \frac 1 \th\,.\]

%%%%%%%%%%%%%%%%%%%%%%
\subsection{Darboux coordinate charts}
\label{dcc}

We start by setting up an adapted atlas of Darboux coordinate charts near $K^{\delta}$,
that is take a finite open cover 
\[
K^{\delta}\subset\bigcup_{j\in J}U_{j},
\]
and symplectomorphisms $\kappa_{j}:U_{j} \to V_{j}=\neigh(0,\IR^{2d})$.
The standard symplectic coordinates on $V_{j}$ then appear
as a local symplectic coordinate frame on $U_{j}$. We may choose the
coordinates such that they split into 
\[
X=(x,y),\ \ \ \Xi=(\xi,\eta),\quad y, \eta \in \RR^{d_{\perp}} ,\quad  x, \xi \in
\RR^{d-d_{\perp}} \,,
\]
such that the symplectic submanifold $K^{\delta}\cap U_{j}$ is
identified with $\cK\cap V_j\subset \IR^{2d}$, where 
$$\cK\defeq\left\{
  y=\eta=0\right\}\subset \IR^{2d} . $$
That is, $(x,\xi)$ is a local coordinate frame
on $K^\delta$, while $(y,\eta)$ describes the transversal
directions.

We also assume that for each $\rho\in K^{\delta}\cap U_{j}$, identified
with some $(x,0,\xi,0)\in V_{j}$, the subspace
$\left\{ \left(x,y,\xi,0 \right),\: y\in\IR^{d_{\perp}}\right\} $
is $\eps$-close to the transversal unstable space $d\kappa_{j}(E_{\rho}^{+})$,
while the subspace $\left\{ \left(x_0,0,\xi_0,\eta \right),\,\eta\in\IR^{d_{\perp}}\right\} $
is $\eps$-close to the transversal stable space
$d\kappa_{j}(E_{\rho}^{-})$. 

%\medskip

We want to describe the flow in the vicinity of $K^{\delta}$, using
these local coordinates. We choose 
a (large) 
time $t_0>0$, and express the time-$t_0$ flow
$\varphi_{t_0}:U_{j_{0}}\to U_{j_{1}}$ in the local coordinate frames,
through the maps
\begin{equation}
\label{eq:DeAr}
 \kappa_{j_{1}j_{0}}\defeq\kappa_{j_{1}}\circ\varphi_{t_0}\circ\kappa_{j_{0}}^{-1}:
 D_{j_1 j_0}\to A_{j_1 j_0}\,,\end{equation}
where $D_{j_1 j_0 }\subset V_{j_0}$ is the {\em departure set}, while
$A_{j_1 j_0}\subset V_{j_1 }$ is the {\em arrival set}. This is
defined when
$\varphi_{t_0}(U_{j_0})\cap U_{j_1}\neq\emptyset$ and such a pair 
$ {j_1j_0 } $ for which this holds will be called {\em physical}.

Below we will also
consider the maps $\kappa^n_{j_{1}j_{0}}$ representing the time-$nt_0$
flow in the charts $V_{j_0}\to V_{j_1}$ -- see \S \ref{s:back-iterated}.

\subsection{Splitting $e^{-it_0P/h}$ into pieces}\label{s:pieces}

We want to use the fact that the propagator $e^{-it_0P/h}$ is a
Fourier integral operator on $M$ associated with $\varphi_{t_0}$. To make
this remark precise, we will use a smooth partition of unity
$\big(\pi_j\in \CIc(U_j,[0,1])\big)$ such that each cut-off $\pi_{j}$ is equal to unity near
some $\tU_{j}\Subset U_j$, and the quantized cut-offs $\Pi_i\defeq
\Op(\pi_i)$ satisfy the following quantum partition of unity:
\be\label{e:partition}
\Pi\defeq \sum_{j=1}^J\Pi_j\,\Pi_j^*\equiv \Id\ \quad\text{microlocally
  in a neighbourhood of }K^\delta\,.
\ee
We will then split  $e^{-it_0P/h}$
into the local propagators 
\begin{equation}
\label{eq:Tfl} T^\flat_{j_1j_0}\defeq\Pi_{j_1}^*
e^{-it_0 P/h} \Pi_{j_0}, 
\end{equation}
 which can be represented by
operators on $L^2(\IR^d)$ as follows.
We define Fourier integral operators $\cU_{j}:L^{2}(X) \to L^{2}(\IR^{d})$ quantizing
the coordinate changes $\kappa_{j}$, and microlocally unitary in
some subset of $V_j\times U_j$ containing $\kappa_j(\supp
\pi_j)\times\supp\pi_j$, so that 
\be\label{e:iden}
\forall j,\quad \Pi_j\Pi_{j}^* = \Pi_j \cU_j^*\cU_j\Pi_j^*+\cO(h^\infty)\,,
\ee
The local propagators $T^\flat_{j_1 j_0}$ are then represented by
\be\label{e:T_jj}
T_{j_{1}j_{0}}\defeq\cU_{j_{1}}\Pi_{j_1}^*\,e^{-it_0 P/h}\,\Pi_{j_0}\cU_{j_{0}}^{*}\,.
\ee
Notice that for an unphysical pair $j_1j_0$,
$T_{j_1j_0}=\cO(h^\infty)_{L_2\to L_2}$. 
For a physical pair $j_1j_0$, 
$T_{j_1j_0}$ is a Fourier integral operator associated with the local symplectomorphism $\kappa_{j_1j_0}$.

From the unitarity of $e^{-it_0 P/h}$ we draw the following property of
the operators $T_{j'j}$.
\begin{lem}\label{l:normT}
The operator-valued matrix $\bT\defeq (T_{ij})_{i,j=1,\ldots, J}$, acting on the space
$L^2(\IR^d)^J$ with the Hilbert norm $\| \bu\|^2 = \sum_{j=1}^J \|u_j\|_{L^2}^2$,
satisfies 
$$
\|\bT\|_{L^2(\IR^d)^J\to L^2(\IR^d)^J} = 1+\cO(h)\,.
$$ 
\end{lem}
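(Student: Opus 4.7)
The natural approach is to factor the matrix operator $\bT$ through a single "almost-isometric" embedding into $L^2(\IR^d)^J$. Concretely, define
\[
\iota : L^2(X) \longrightarrow L^2(\IR^d)^J\,, \qquad (\iota v)_i \defeq \cU_i \Pi_i^* v\,,
\]
and check by direct computation that its adjoint is $\iota^* \bu = \sum_j \Pi_j \cU_j^* u_j$. Then, for $U \defeq e^{-it_0 P/h}$, one verifies entry by entry that $\bT = \iota\, U\, \iota^*$. Since $U$ is unitary on $L^2(X)$, we obtain at once
\[
\|\bT\|_{L^2(\IR^d)^J \to L^2(\IR^d)^J} \leq \|\iota\|\,\|\iota^*\| = \|\iota\|^2\,,
\]
and the whole problem reduces to showing $\|\iota\|^2 \leq 1+\cO(h)$.

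For this, I would compute
\[
\|\iota v\|^2 = \sum_{i} \langle \cU_i \Pi_i^* v, \cU_i \Pi_i^* v\rangle = \sum_i \langle v, \Pi_i \cU_i^*\cU_i \Pi_i^* v\rangle
= \langle v, \Pi v\rangle + \cO(h^\infty)\|v\|^2\,,
\]
using the microlocal near-unitarity \eqref{e:iden} of each $\cU_j$ on $\WFh(\Pi_j)$. The principal symbol of the self-adjoint operator $\Pi = \sum_j \Pi_j\Pi_j^*$ is $\sum_j \pi_j^2$, which we may arrange to satisfy $0 \leq \sum_j \pi_j^2 \leq 1$ globally (equal to $1$ in a neighborhood of $K^\delta$, to be consistent with the microlocal identity \eqref{e:partition}). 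The sharp Gårding inequality applied to $\Id - \Pi$ (whose principal symbol $1 - \sum_j \pi_j^2 \geq 0$) gives $\Pi \leq \Id + \cO(h)$ as self-adjoint operators, so $\|\Pi\|_{L^2 \to L^2} \leq 1 + \cO(h)$. Feeding this back yields $\|\iota v\|^2 \leq (1+\cO(h))\|v\|^2$, hence $\|\iota\|^2 \leq 1+\cO(h)$ and the claim follows.

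The only real care required is in choosing the partition of unity $(\pi_j)$ as a square-partition, i.e. arranging $\sum_j \pi_j^2 \leq 1$ (not merely $\sum_j \pi_j \leq 1$), so that the principal symbol of $\Pi$ is pointwise bounded by $1$. Once this normalization is in place, the argument is essentially algebraic: the unitarity of $e^{-it_0 P/h}$ transfers directly to the matrix $\bT$ through the factorization $\bT = \iota U \iota^*$, with the only loss coming from the subprincipal error in $\Pi$. The $\cO(h^\infty)$ contributions from the microlocal identity \eqref{e:iden} are absorbed into the $\cO(h)$ bound without difficulty.
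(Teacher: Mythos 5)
Your factorization $\bT = \iota U\iota^*$, together with $\iota^*\iota = \Pi + \cO(h^\infty)$ (from \eqref{e:iden}), is essentially the same mechanism as the paper's proof, which after transporting to $L^2(X)^J$ writes $\bT^\flat = C^*UC$ with $CC^* = \Pi$ and $C$ the row of cut-offs $(\Pi_i)$. The one caveat is that the submultiplicative bound $\|\bT\|\leq\|\iota\|\,\|\iota^*\| = \|\Pi\| + \cO(h^\infty)$ gives only the upper estimate, while the lemma asserts equality; the paper extracts the equality via $\|\bT^\flat\|^2 = \|\bT^\flat(\bT^\flat)^*\| = \|C^*(U\Pi U^*)C\| = \|\sqrt{U\Pi U^*}\,\Pi\,\sqrt{U\Pi U^*}\|$ (using $\|AA^*\|=\|A^*A\|$ for the last step), and the final operator has symbol equal to $1+\cO(h)$ near the flow-invariant $K^\delta$ and $\leq 1$ elsewhere, so the semiclassical norm estimate yields the exact $1+\cO(h)$. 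Your remark that one needs $\sum_j\pi_j^2\leq 1$ globally is correct and worth making explicit; the paper relies on the same normalization implicitly when it asserts $\|\Pi\|=1+\cO(h)$.
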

\medskip

\noindent {\em Proof.} 
From \eqref{e:iden}, the action of $T_{j_1j_0}$ on
$L^2(\IR^d)$ is (up to an error $\cO (h^\infty) _{L^2\to L^2}$)
unitarily equivalent with the
action of $T^\flat_{j_1j_0}$ on $L^2(X)$. Hence, the action of $\bT$ on $L^2(\IR^d)^J$
is equivalent to the action of $\bT^\flat$ on
$L^2(X)^J$, where $ \bT^\flat $ is the matrix of operators \eqref{eq:Tfl}.

To obtain the norm estimate we follow \cite[Lemma 6.5]{AnNo07}, 
put  $\cH\defeq L^2(X)$, $U=e^{-it_0P/h}$, and define
the row vector of cut-off operators
$C=(\Pi_i)_{i=1,\ldots,J}$. The operator valued
matrix $\bT^\flat$ can be written as $\bT^\flat = C^* (U\otimes \Id_J)C$. Its operator
norm on $\cL(\cH^J)
$ satisfies
\begin{align*}
\| \bT^\flat \|_{\cL(\cH^J)}^2 &= \| (\bT^\flat)^* \bT^\flat
\|_{\cL(\cH^J)} 
= \| C^* (U\otimes \Id_J) CC^* (U^*\otimes \Id_J) C \|_{\cL(\cH^J)}
\\ &
 = \| C^* (U\Pi U^*\otimes \Id_J) C \|_{\cL(\cH^J)}\,.
\end{align*}
Egorov's theorem (see \eqref{eq:Egor}) and \cite[Theorem 13.13]{e-z}
imply that $\Pi^1\defeq U\Pi U^*$ is
a positive semidefinite operator of norm $1+\cO(h)$, with symbol equal to $1+\cO(h)$ near $K^\delta$, and 
its square
root $\sqrt{\Pi^1}$, as well as the product $\sqrt{\Pi^1}\,\Pi\,\sqrt{\Pi^1}$, have the same
properties. Hence, 
\begin{align*}
\| \bT^\flat \|_{\cL(\cH^J)}^2 & = 
\|  \left((\sqrt{\Pi^1}\otimes \Id_J)\,C \right)^* 
(\sqrt{\Pi^1}\otimes \Id_J)\,C  \|_{\cL(\cH^J)} \\ 
& = \|  (\sqrt{\Pi^1}\otimes \Id_J)\,C 
\left( (\sqrt{\Pi^1}\otimes \Id_J)\,C \right)^*  \|_{\cL(\cH)} \\
&=  \|\sqrt{\Pi^1}\,\Pi\,\sqrt{\Pi^1}\|_{\cL(\cH)} 
=  1 + \cO ( h ) \,. \xqedhere{2in}
\end{align*}

%%%%%%%%%%%%%%%%%%%%%%%%%
\subsection{Iterated propagator}
In this section we explain how to use the $T_{j'j}$ to study
our cut-off propagator \eqref{e:cutoff-propag}.

First of all, Egorov's theorem \eqref{eq:Egor12} applied to
$T=\cU_{j}\,\Pi_{j}^{*}$, $B_2=\chi^w$ allows us to write
\be\label{e:chi-chi_j}
\cU_{j}\,\Pi_{j}^*\,\chi^w = \chi_{j}^w \,\cU_{j}\,\Pi_{j}^* +
\cO( h^{\frac12} \tilde h^{\frac12 } )_{L^2\to L^2} \,,
\ \ \ j=1,\ldots,J,
\ee
where the symbol $\chi_{j} = \chi\circ\kappa_{j}^{-1}  \in
\tS_{\frac12}(T^*\IR^d)$.

We start from a arbitrary normalized state $u\in L^2(X)$, and 
represent the part of $u$ microlocalized near $K^\delta$ through the (column) vector of states 
$$
\bu \defeq (u_j)_{j=1,\ldots,J},\quad u_j\defeq
\cU_j \Pi_j^*\, u ,
\qquad 
\|\bu\|^2 \defeq \sum_j \|u_j\|^2 = \la u,\Pi u\ra + \cO(h^\infty)\| u
\|_{L^2} ^2 \,.
$$
The equations (\ref{e:T_jj}) and (\ref{e:chi-chi_j}) show that 
\[ \begin{split}
\Pi 
e^{-it_0 P/h}\chi^w \,u & = \sum_{ j } \Pi_j \Pi_j^* e^{-it_0
  P/h}\chi^w \,u = \sum_{j_1, j_0} 
 \Pi_{j_1} \Pi_{j_1} ^* e^{-it_0   P/h} \Pi_{j_0 }\Pi_{j_0}^*
\chi^w  \,u \\
& = \sum_{j_1, j_0}  
\Pi_j \mathcal U_{j_1}^* \mathcal U_{j_1} \Pi_{j_1} ^* e^{-it_0  P/h} 
\Pi_{j_0}  \mathcal U_{j_0}^* \mathcal U_{j_0}\Pi_{j_0}^*
\chi^w \, u  + {\mathcal O} ( h^\infty )_{ L^2 \to L^2 }  
\\
& = \sum_{j_1,j_0} \Pi_{j_1}\,\cU_{j_1}^*\,
T_{j_1j_0} \chi_{j_0}^w u_{j_0}+ \cO( h^{\frac12} \tilde h^{\frac12 } )_{L^2\to L^2} \,.
\end{split}
\]
Similarly, for $n\geq 2$ the propagator $e^{-int_0 P/h}$ can be represented by
iteratively applying the operator valued matrix $\bT$ to the vector
$\bu$. By inserting the identities (\ref{e:partition}),(\ref{e:iden}) $n$
times in the expression $\Pi e^{-int_0 P/h} \chi^w u$, we get the following
\begin{lem}\label{l:iterate}
For any $n\in\IN$ (independent of $h$), we have
\be \label{e:iterate} 
\begin{split}
\Pi e^{-int_0 P/h} \, \chi^w u 
&=\sum_{j_{n},\ldots,j_0} 
\Pi_{j_n}\cU_{j_n}^*\,
T_{j_nj_n-1}\cdots T_{j_1j_0} \,\chi_{j_0}^w\,u_{j_0}+ \cO_{n}
( h^{\frac12} \tilde h^{\frac12 } )_{L^2\to L^2} \\
&=\sum_{j_n,j_0} \Pi_{j_n}\cU_{j_n}^* 
\,[(\bT)^n]_{j_nj_0}\,\chi_{j_0}^w\,u_{j_0}  +
\cO_{n}
 ( h^{\frac12} \tilde h^{\frac12 } )_{L^2\to L^2} \,, 
\end{split} \ee
where the matrix of operators, $ {\bT} $, was defined in Lemma \ref{l:normT}.
\end{lem}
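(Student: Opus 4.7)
The plan is to iterate the one-step computation displayed just before the lemma statement, which already records the case $n=1$. The argument can equivalently be presented as a straightforward induction on $n$, with that very computation serving as the base case. For the inductive step, one writes $e^{-int_0P/h} = e^{-it_0P/h}\,e^{-i(n-1)t_0P/h}$ and applies the inductive hypothesis after inserting one more copy of the quantum partition of unity $\Pi=\sum_j\Pi_j\Pi_j^*$ between the two factors.

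More directly, I would insert $\Pi$ between every consecutive pair of propagators in the product $e^{-int_0P/h} = \bigl(e^{-it_0P/h}\bigr)^n$, as well as once on the far left. Each such insertion costs an error of size $\cO(h^\infty)$, since by \eqref{e:partition} the operator $\Pi$ differs from the identity by $\cO(h^\infty)$ microlocally near $K^\delta$, and the intermediate states stay in a bounded neighbourhood of $K^\delta$ for fixed $n$ (the initial cutoff $\chi^w$ localizes within $\cO((h/\th)^{1/2})$ of $K^\delta$, and a bounded number of applications of $\varphi_{t_0}$ distorts this neighbourhood only by a uniformly bounded factor). At each interior index I then use \eqref{e:iden} to rewrite
$$
\Pi_{j_i}\Pi_{j_i}^* \;=\; \Pi_{j_i}\cU_{j_i}^*\cU_{j_i}\Pi_{j_i}^* + \cO(h^\infty),
$$
and regroup the factors: each block $\cU_{j_i}\Pi_{j_i}^*\, e^{-it_0P/h}\,\Pi_{j_{i-1}}\cU_{j_{i-1}}^*$ is, by definition \eqref{e:T_jj}, exactly the local propagator $T_{j_i j_{i-1}}$.

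The last step is to convert the rightmost factor $\cU_{j_0}\Pi_{j_0}^*\,\chi^w u$ into $\chi_{j_0}^w u_{j_0}$ by moving the cutoff $\chi^w$ across the Fourier integral operator $\cU_{j_0}\Pi_{j_0}^*$ via the Egorov-type commutation \eqref{e:chi-chi_j}. This single move produces an error of size $\cO(h^{1/2}\th^{1/2})\|u\|_{L^2}$, which absorbs all the earlier $\cO_n(h^\infty)$ contributions and matches the error announced in \eqref{e:iterate}. Summing over $j_0,\dots,j_n$ gives the first displayed identity of the lemma; the second one is just the definition of the $(j_n,j_0)$-entry of the matrix power $\bT^n$. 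The only technical point requiring attention is keeping the intermediate states localized where $\Pi$ agrees with the identity modulo $\cO(h^\infty)$, but this is essentially automatic for fixed $n$ and poses no real obstacle.
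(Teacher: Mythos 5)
Your proof follows exactly the paper's argument, which (in the text just above the lemma) is reduced to the remark that one inserts the identities \eqref{e:partition} and \eqref{e:iden} $n$ times and regroups the factors into $T_{j_i j_{i-1}}$, with the single $\cO(h^{1/2}\tilde h^{1/2})$ error coming from the Egorov-type commutation \eqref{e:chi-chi_j} applied to $\chi^w$. The proposal is correct and takes the same approach.
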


%%%%%%%%%%%%%%%%%%%%%%%%%%%%%%%%%%%%%%%%%%%%
\subsubsection{Inserting nested cut-offs}\label{s:inserting-chi}

In this section we modify the Fourier integral operators $T_{j'j}$, taking into
account that in the above expression their products are multiplied by narrow cut-offs $\chi_j^{w}$.

By construction of $\chi_{j}$, there exists $R_0>0$ (independent of
$h,\ \th$) such that for any index $j$ the
cut-off $\chi_{j}\in\tS_{\frac12}$ is supported inside the microscopic cylinder
\be\label{e:cylinder}
B_{R_0(h/\th)^{1/2}}\defeq \{(x,y,\xi,\eta)\,:\,|y|,|\eta|\leq R_0(h/\th)^{1/2}\}\subset T^*\IR^d\,.
\ee 
Fix some $R_1\geq 2R_0$, and choose a function $\tchi^0\in C^\infty_0(\IR^{2d_{\perp}},[0,1])$ equal to unity
in the ball $\{|\ty|,|\teta|\leq R_1\}$, and supported
in $\{|\ty|,|\teta|\leq 2R_1\}$. 
Normal hyperbolicity implies that there exists $\Lambda>2$ such that the cylinders
$B_{\bullet}$ (see \eqref{e:cylinder}) satisfy
\be\label{e:nesting}
\kappa_{j'j}(B_{2R (h/\th)^{1/2} })\Subset B_{R \Lambda (h/\th)^{1/2} }\,, 
\ee
for all $ 0 < R < 1  $ and  any physical pair $j'j$. 

We then define the families of nested\footnote{ Below we use the notation $\chi^0\succ \chi$
for nested cut-offs, meaning that $ \chi^0 \equiv 1 $ near $\supp(\chi)$.} cut-offs
$\{ \chi^n\} _{n\in \IN}$, $\{ \tchi^n\} _{n\in \IN}$ as follows:
\begin{align}\label{e:cutoffs}
\forall n\in\IN,\quad 
\tchi^n(y,\eta) &\defeq \tchi^0(y \Lambda^{-n},\eta \Lambda^{-n}),\\
\chi^n(x,y,\xi,\eta)&\defeq
\tchi^n\big(y(\th/h)^{1/2},\eta (\th/h)^{1/2}\big)  
\in \tS_{\frac12}(T^*\IR^d)\,.
\,,\label{e:rescaled-cutoffs}
\end{align}
We stress that the $\tS_{\frac12}(T^*\IR^d)$ seminorms of $\chi^n$ hold uniformly in $ n $:  the smoothness  of $ \chi^n $ actually improves when $n$ grows.
From the assumption
$R_1>R_0$ we draw the nesting $\chi^0\succ \chi_j$ for any
$j=1,\ldots,J$.
Furthermore, the property \eqref{e:nesting} implies that 
\be\label{e:nesting0}
\text{for any physical pair }j'j,\qquad
\chi^{n+1}\succ\chi^n\circ\kappa_{j'j}\,.
\ee 
From these nesting properties and from Egorov's property
\eqref{eq:Egor12} we easily obtain
the following 
\begin{lem}\label{l:cutoffs}
For any $j=1,\ldots,J$ we have
\be\label{e:chi0chi}
(\chi^0)^w\chi_{j}^w=\chi_{j}^w + \cO
(\th^\infty)_{L^2 \to L^2}  \,,
\quad
\chi_{j}^w\,(\chi^0)^w = \chi_{j}^w + \cO(\th^\infty)_{L^2 \to L^2}\,.
\ee
In addition, we have the estimate 
\be\label{e:insert-cutoff}
T_{j' j}(\chi^n)^w = (\chi^{n+1})^w\,T_{j' j}(\chi^n)^w + \cO
(\th^{\infty})_{L^2 \to L^2}\,,
\ee
uniformly for all $j,j'=1,\ldots,J$ and for all $n$ independent of $h$.
\end{lem}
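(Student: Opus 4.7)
The three claims share a common mechanism: in the $\widetilde S_{\frac12}$ calculus, composing two symbols $a,b$ with $a\equiv 1$ on a neighbourhood of $\supp b$ produces an operator equal to $b^w$ modulo $\Oo(\tilde h^\infty)$, because every term of the Moyal asymptotic expansion for $a\#b$ is a sum of products $\partial^\alpha a\cdot \partial^\beta b$ with $|\alpha|\ge 1$, and these vanish identically on $\supp b$ where $a\equiv 1$. After the rescaling \eqref{eq:stre} the $\widetilde S_{\frac12}$ composition becomes a standard $\tilde h$-Weyl calculus on rescaled coordinates, so the vanishing of every asymptotic term yields an $\Oo(\tilde h^\infty)$ remainder in $L^2$ operator norm.

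For the first identity, I first pass to the rescaled symbols $\widetilde\chi^0$ and $\widetilde\chi_j(\bullet)\defeq\chi_j((h/\tilde h)^{1/2}\bullet)$, which belong to $S(\IR^{2d})$ uniformly. The assumption $R_1\ge 2R_0$ implies $\widetilde\chi^0\equiv 1$ on $\supp\widetilde\chi_j$, so the Moyal expansion of $\widetilde\chi^0\#_{\tilde h}\widetilde\chi_j$ collapses to $\widetilde\chi_j$ to all orders, giving $(\chi^0)^w\chi_j^w=\chi_j^w+\Oo(\tilde h^\infty)_{L^2\to L^2}$. The second identity is obtained in the same way, using the $L^2$-adjoint (or equivalently the Moyal expansion of $\chi_j\#\chi^0$, whose terms also involve at least one derivative of $\chi^0$).

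For \eqref{e:insert-cutoff}, the plan is to combine Egorov's theorem for $h$-FIOs with the nesting \eqref{e:nesting0}. Applying the iterated form of \eqref{eq:Egor12} yields an asymptotic expansion
\[
T_{j'j}(\chi^n)^w \;\sim\; \sum_{k\ge 0} (h\tilde h)^{k/2}\, C_k^w\, T_{j'j},
\]
where each $C_k\in\widetilde S_{\frac12}^{\rm comp}$ has symbol supported in an arbitrarily small neighbourhood of $\kappa_{j'j}(\supp\chi^n)\subset B_{R_1\Lambda^{n+1}(h/\tilde h)^{1/2}}$, the leading term being $\sigma(C_0)=\chi^n\circ\kappa_{j'j}^{-1}$. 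The nesting \eqref{e:nesting0} says $\chi^{n+1}\equiv 1$ on this region, so the argument of the previous paragraph applies to each factor $(1-(\chi^{n+1})^w)C_k^w$ and gives an $\Oo(\tilde h^\infty)$ contribution term by term. A standard Borel resummation of the asymptotic series, together with the $L^2$-boundedness $T_{j'j}=\Oo(1)$, produces
\[
(1-(\chi^{n+1})^w)\,T_{j'j}(\chi^n)^w \;=\; \Oo(\tilde h^\infty)_{L^2\to L^2},
\]
which is the desired identity.

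The main technical point to verify is the uniformity in $n$. This is where the choice of $\Lambda$ from \eqref{e:nesting} enters: the rescaled symbols $\widetilde\chi^n(y,\eta)=\widetilde\chi^0(\Lambda^{-n}y,\Lambda^{-n}\eta)$ actually lie in a \emph{better} class than $\widetilde\chi^0$ (their derivatives carry factors $\Lambda^{-n|\alpha|}$), and the nesting gap $R_1\Lambda^{n+1}$ versus $2R_1\Lambda^n$ grows in the favourable direction. Hence both the Moyal tail estimate in paragraph 2 and the Egorov expansion in paragraph 3 produce constants that are uniform (in fact decreasing) in $n$, so no delicate bookkeeping is required.
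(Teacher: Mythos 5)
Your argument is correct and coincides with what the paper intends: the paper gives no written proof of Lemma~\ref{l:cutoffs}, only the remark that it follows ``from these nesting properties and from \eqref{eq:Egor12},'' and your proposal fills in exactly that reasoning — pass to the $\th$-rescaled calculus where the Moyal expansion of $\tchi^0\#_{\th}\tchi_j$ collapses because $\tchi^0\equiv 1$ near $\supp\tchi_j$, and for \eqref{e:insert-cutoff} combine the modified Egorov relation \eqref{eq:Egor12} with the nesting $\chi^{n+1}\succ\chi^n\circ\kappa_{j'j}^{-1}$ coming from \eqref{e:nesting}. Your observation on uniformity in $n$ (the rescaled $\tchi^n$ enjoy \emph{improving} derivative bounds and a growing nesting margin, so no delicate bookkeeping is needed) is a welcome detail that the paper leaves implicit.
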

We will actually only use
$n$ smaller than $ M \log 1/\th$ for some $ M  > 0$ independent of
$\th,\,h$, so  our cut-offs $\chi^n$ will all be localized in
microscopic neighbourhoods 
of $\cK$ 
when $h\to 0$. Furthermore, for
such a logarithmic time the number of terms in the sum in the 
middle expression 
in \eqref{e:iterate} is
bounded above by $J^{n+1} \leq \th^{-N}$ for some $N>0$. As a result,
taking into account the above cut-off insertions, this sum can be rewritten as
\begin{multline} \label{e:sum2} 
\Pi e^{-int_0 P/h} \chi^w u \\ 
= \sum_{j_{n},\ldots,j_0} \Pi_{j_n}\cU_{j_n}^*\,
T_{j_nj_{n-1}}(\chi^{n-1})^w\cdots
T_{j_2j_1}(\chi^1)^wT_{j_1j_0}(\chi^0)^w \chi_{j_0} u_{j_0} +
\cO (\th^\infty) _{L^2 \to L^2 }
\,.
\end{multline}
In the next section we will carefully analyze the kernels of the
operators $T_{j'j}\,(\chi^{k})^w$.

%%%%%%%%%%%%%%%%%%%%%%%%%
\subsection{Structure of the local phase function}\label{s:phase}

To analyze the Fourier integral operators we will examine the structure of the generating
function for the symplectomorphism
$\kappa_{j_1j_0}$.

We start by studying the transverse
linearization $d_\perp \kappa(\rho)$ of the map
$\kappa=\kappa_{j_1j_0}$, for a point $\rho\in\cK\cap D_{j_1j_0}$. In
our symplectic coordinate frames, this transverse map is represented
by the symplectic matrix $S_{j_1j_0}(\rho)=S(\rho) \in {\rm{Sp}} (2d_{\perp}, \RR )$
given by 
\be\label{e:S(rho)}
S(\rho) \defeq
\frac{\partial(y^1,\eta^1)}{\partial(y^0 ,\eta^0)} 
(\rho)\,,\qquad \rho\in\cK\,.
 \ee
The linear symplectomorphism
$S(\rho)$ admits a quadratic generating function $Q_\rho(y^1,y^0,\theta')$,
where $\theta'\in\IR^{d_{\perp}}$ is an auxiliary variable: the graph of the map
${}^T\! (y^0,\eta^0)\mapsto{}^T\! (y^1,\eta^1)= S(\rho){}^T\!(y^0,\eta^0)$
can be obtained by identifying the critical set 
$$C_{Q_\rho}=\{(y^1,y^0,\theta')\,:\,
\partial_{\theta'} Q(y^1,y^0,\theta') = 0\}\subset \IR^{3d_{\perp}}. $$
This
critical set is in bijection with the graph of $S(\rho)$ through the
rules 
$$\eta^1=\partial_{y^1}Q_\rho (y^1,y^0,\theta'),\quad \eta^0 =
-\partial_{y^0}Q_\rho (y^1,y^0,\theta'),\quad (y^1,y^0,\theta')\in
C_{Q_\rho}\,.
$$
More structure comes from 
taking the normal hyperbolicity into account. Recall that our
coordinates are 
chosen so that that $E^+$
and $E^-$ are $\eps$-close to $ \{ \eta =  0 \} $ and $ \{ y = 0 \}$,
respectively. (Here we identified $ E^\pm $ with their images under $
d\kappa_j$ -- see \S \ref{dcc}.)
This implies the existence of a continous familty of 
symplectic transformation 
$$ \mathcal K \cap D_{j_1 j_0 } \ni \rho \longmapsto R(\rho)\in {\rm Sp}(2d_{\perp}, \RR ), 
$$ 
such that 
\begin{equation}
\label{eq:Rrh}   R ( \rho ) ( \{ \eta = 0 \} ) = E^+_\rho, \ \ \ 
R ( \rho ) ( \{ y = 0 \} ) = E^-_\rho , \ \ \
 R ( \rho ) = I + {\mathcal O} ( \epsilon ) . \end{equation}
Since $d_\perp\kappa(\rho)\equiv S(\rho)$ maps
$E^{\pm}_{\rho}$ to $E^{\pm}_{\kappa(\rho)}$, the matrix 
\be\label{e:tildeS}
\tS(\rho)\defeq
R(\kappa(\rho))^{-1}\,S(\rho)\,R(\rho),\qquad \rho\in \cK\cap D_{j_1j_0}\,,
\ee 
is block-diagonal:
\begin{equation}
\label{eq:Srh}
\tS(\rho) = \begin{pmatrix}\Lambda(\rho)&0\\0&
  {}^T\!\Lambda(\rho)^{-1} \end{pmatrix}\,.
\end{equation}
The normal hyperbolicity \eqref{eq:NH} implies that,
provided $t_0$ has been chosen large enough\footnote{Recall that
  $\kappa$ represents $\varphi_{t_0}$.}, the matrix $\Lambda(\rho)$
is expanding, uniformly with respect to  $\rho$:
\be\label{e:Lambda-expand}
\exists \ \nu>0,\quad\forall\rho\in \cK,\qquad \|\Lambda^{-1}(\rho)\|\leq e^{-\nu}<1\,.
\ee 
More precisely, for any small $\vareps>0$, if $t_0$ is chosen large enough the coefficient $\nu$ can be taken of the
form $\nu=t_0(\lambda_{\min}-\eps_0)$, where $\lambda_{\min}>0$ is the
smallest positive transverse Lyapunov exponent of $\varphi_t$ near $K^\delta$.

Combining \eqref{eq:Rrh},  \eqref{e:tildeS} and \eqref{eq:Srh} gives
\be
\label{eq:pkap}
S(\rho) 
 =    \begin{pmatrix}
  \Lambda(\rho)+\cO(\eps\Lambda)& \cO(\eps\Lambda)\\
\cO(\eps\Lambda) & \cO(\eps^2\Lambda + {}^T\!\Lambda(\rho)^{-1})\end{pmatrix}\,,\quad
\quad \rho\in \cK\,.
\ee

This explicit form, more precisely  the fact that the upper left block
is invertible, allows to use a special type of quadratic generating function:
\begin{lem}\label{l:Hmatrix}
If $t_0$ is chosen large enough, for each $\rho$ the generating
function $Q_\rho(y^1,y^0,\theta')$ can be chosen in the following form:
\be\label{e:Q_rho}
Q_\rho(y^1,y^0,\theta') = \tQ_\rho(y^1,\theta')-\la y^0,\theta'\ra
\,. 
\ee
For any point $(y^1,y^0,\theta')$ on the critical set $C_{Q_\rho}$, the auxiliary variable $\theta'$ is
identified with $\eta^0$ of the corresponding phase space point.  
\end{lem}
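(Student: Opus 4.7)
\textbf{Proof plan for Lemma \ref{l:Hmatrix}.} The plan is to exploit the special block structure \eqref{eq:pkap} of $S(\rho)$ to use an auxiliary variable $\theta'$ playing the role of the initial transverse momentum $\eta^0$, which corresponds to a generating function of ``type II''.

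The first step is to verify invertibility of the upper-left block of $S(\rho)$. By \eqref{e:Lambda-expand}, we can choose $t_0$ large enough so that $\|\Lambda(\rho)^{-1}\| \leq e^{-\nu} < 1$ uniformly in $\rho \in \cK$; then, provided $\eps$ is taken small enough relative to $\nu$, the expression \eqref{eq:pkap} shows that the block
\[
A(\rho) \defeq \Lambda(\rho) + \cO(\eps\,\Lambda(\rho)) \in {\rm GL}(d_t,\IR)
\]
is invertible, with $\|A(\rho)^{-1}\| \leq C e^{-\nu}$. This means that on the graph of $S(\rho)$, the projection $(y^0,\eta^0)\mapsto (y^1,\eta^0)$ is a (linear) diffeomorphism, so $(y^1,\eta^0)$ form a set of coordinates on the graph.

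The second step is to construct the type-II generating function. Since $S(\rho)$ is linear symplectic and $(y^1,\eta^0)$ parametrize its graph, there exists a unique real quadratic form $\tQ_\rho(y^1,\eta^0)$ (with coefficients smooth in $\rho$) such that the graph of $S(\rho)$ coincides with
\[
\bigl\{(y^0,\eta^0;y^1,\eta^1) \,:\, y^0 = \partial_{\eta^0}\tQ_\rho(y^1,\eta^0),\ \eta^1 = \partial_{y^1}\tQ_\rho(y^1,\eta^0)\bigr\}.
\]
This is the standard construction: one solves $y^0 = A(\rho)^{-1}(y^1 - B(\rho)\eta^0)$ from the top row of the linear system defining $S(\rho)$, reads off $\eta^1$ from the bottom row, and integrates (using the symplectic condition to guarantee compatibility, i.e.\ symmetry of the relevant cross-partials).

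The third step is a formal verification that $Q_\rho(y^1,y^0,\theta') \defeq \tQ_\rho(y^1,\theta') - \langle y^0,\theta'\rangle$ is a generating function of the required form. The critical set condition $\partial_{\theta'}Q_\rho = 0$ reads $\partial_{\theta'}\tQ_\rho(y^1,\theta') = y^0$; on this critical set we compute
\[
\eta^1 = \partial_{y^1}Q_\rho = \partial_{y^1}\tQ_\rho(y^1,\theta'), \qquad
-\eta^0 = \partial_{y^0}Q_\rho = -\theta',
\]
so that $\theta' = \eta^0$ and the critical set is in bijection with the graph of $S(\rho)$ via the rules advertised in \S\ref{s:phase}. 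The non-degeneracy condition on $\partial^2_{\theta'}(\cdot)Q$ required for a bona fide generating function reduces to the invertibility of $\partial_{y^0,\theta'}^{2}Q_\rho = -I$, which is automatic.

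The only substantive ingredient is therefore the invertibility of the block $A(\rho)$, which is the point where the size of $t_0$ (and the smallness of $\eps$) is used; the rest is the standard passage between type-I and type-II generating functions for linear symplectomorphisms and is a routine consequence of the symplectic condition. The smooth dependence on $\rho$ is inherited from the smooth dependence of $S(\rho)$ on $\rho \in \cK\cap D_{j_1j_0}$.
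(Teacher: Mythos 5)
Your proof is correct and matches the approach the paper sketches (the paper states the lemma without a formal proof, relying on the remark immediately before it that the invertibility of the upper-left block of $S(\rho)$ ``allows to use a special type of quadratic generating function''). You correctly identify that the invertibility of $A(\rho)=\Lambda(\rho)+\cO(\eps\Lambda)$ — guaranteed for $t_0$ large by \eqref{e:Lambda-expand} and $\eps$ small — means the graph of $S(\rho)$ projects bijectively onto the $(y^1,\eta^0)$-plane, so a type-II generating function $\tQ_\rho(y^1,\eta^0)$ exists, and then verify by direct differentiation that $Q_\rho=\tQ_\rho(y^1,\theta')-\langle y^0,\theta'\rangle$ is a nondegenerate phase (the Hessian block $\partial^2_{y^0\theta'}Q_\rho=-I$ is automatically invertible) whose critical set identifies $\theta'$ with $\eta^0$. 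This is exactly consistent with the relations \eqref{e:tQ_rho}--\eqref{e:H-Lambda} the paper records after the lemma, where $H_{21}=A(\rho)^{-1}$.
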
 

The specific form of the generating function 
corresponds to the 
geometric fact that the graph of $ S ( \rho ) $ 
admits $(y^1,\eta^0)$ as coordinates (that is,
the graph of $ S ( \rho ) $ projects bijectively onto the $(y^1,\eta^0)$-plane).

The function $\tQ_\rho(y^1,\eta^0)$ can be written in terms of a
symmetric matrix $H(\rho)$:
\be\label{e:tQ_rho}
\tQ_\rho(y^1,\eta^0) = \frac12 \la
(y^1,\eta^0),{}^T H(\rho)(y^1,\eta^0)\ra, \ \ 
H(\rho)=\begin{pmatrix}H_{11}&H_{12}\\H_{21}&H_{22}\end{pmatrix}, \ \ 
H_{12}\ \text{invertible}\,.
\ee
The matrix $S(\rho)$ is related to $H(\rho)$  in the following way:
\be\label{e:kappa-H}
 S(\rho) = \begin{pmatrix} H_{21}^{-1}
   & - H_{21}^{-1}  H_{22} \\ H_{11} H_{21}^{-1} &  H_{12} - H_{11} H_{21}^{-1} H_{22} \end{pmatrix}\,.
\ee
Comparing with \eqref{eq:pkap} we see that
\be\label{e:H-Lambda}
H_{12}^T = H_{21} = \Lambda(\rho)^{-1} +
\cO(\eps\Lambda(\rho)^{-1}),\quad H_{11}=\cO(\eps),\quad H_{22}=\cO(\eps)\,,
\ee
uniformly with respect to $\rho$.
The quadratic phase function $Q_\rho$ will be relevant when
we consider the metaplectic operator $M(\rho)$ quantizing
$S(\rho)$ in \S\ref{s:transverse-phase} (see also \eqref{eq:simple}).

\medskip

From the study of the linearized flow in the transverse direction,
we now consider the dynamics of 
\begin{equation}
\label{eq:tka}  \tkappa=\tkappa_{j_1j_0} : D_{j_1 j_0} \cap {\mathcal K}
\longrightarrow A_{j_1 j_0} \cap {\mathcal K}  . \end{equation}
along the
trapped set -- see Fig.~\ref{f:da} in \S \ref{s:back-iterated}. When no confusion is likely to
arise we use the notation $ D_{\bullet} $ and $ A_{\bullet } $ for the 
corresponding subsets of $ \mathcal K$.
There we have {\em no} assumptions on the flow,  except
for it being symplectic. 

Possibly after refining the covers $U_j$, each map $\tkappa$ can be generated by a
nondegenerate phase function $\psi=\psi_{j_1j_0}(x^{1},x^{0},\theta)$ defined in a
neighbourhood of the origin in $\IR^{d-d_{\perp}}\times \IR^{d-d_{\perp}}\times
\IR^k $, where $ 0 \leq k \leq n $ -- see \S \ref{s:fio}. 

Since the $U_j$ have been chosen small, the map $C_\psi\to
\Gamma_\tkappa$ can be assumed to be injective. Notice that the values
of $\psi$  away from $C_\psi$ are irrelevant. 

We now want to extend $\psi$ into a generating function of the map
$\kappa$, at least in a small neighbourhood of
$\cK$. The intuitive idea is to ``glue together'' the generating function $\psi$
for $\tkappa$, with the quadratic generating functions $Q_\rho$ for the
transverse dynamics $d_\perp\kappa(\rho)$.

Let us consider the following Ansatz for a generating function $\Psi$ for $\kappa$:
\begin{equation}
\label{eq:defdP}
\Psi(x^1,x^0,\theta;y^1,y^0,\theta')=\psi(x^1,x^0,\theta) +
\delta\Psi(x^1,x^0,\theta;y^1,y^0,\theta'),
\end{equation}
with an additional auxiliary variable $\theta'\in \IR^{d_{\perp}}$. To simplify
notation we split the variables into 
 longitudinal and transversal ones:
\begin{equation}
\label{eq:parper}  \rho_\parallel=(x^1,x^0,\theta) , \ \ \
\rho_\perp=(y^1,y^0,\theta'). \end{equation}
\begin{lem}
Near any point $ \rho \in \mathcal K $, $ \kappa $ is generated 
by $ \Psi $ of the form \eqref{eq:defdP} with 
the transversal correction, $\delta\Psi(\rho_\parallel,\rho_\perp)$,
satisfying
$$
\delta\Psi(\rho_\parallel,\rho_\perp) =
Q_{\rho_\parallel}(\rho_\perp) + \cO((y^1,\theta')^3)\,,
$$
where $Q_{\rho_\parallel}(\bullet)$ is a quadratic form of the same
type as (\ref{e:Q_rho},\ref{e:tQ_rho}), which depends smoothly on
$\rho_\parallel$. If $\rho_\parallel\in C_\psi$ corresponds to the point $(\rho^1;\rho^0)\in \Gamma_{\tkappa}$,
then $Q_{\rho_\parallel}=Q_{\rho^0}$.
\end{lem}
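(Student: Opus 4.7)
The strategy is to exploit the block-diagonal structure of $d\kappa(\rho)$ at points $\rho\in\cK$ (forced by the splitting $(T_\rho\cK)^\perp = E^+_\rho\oplus E^-_\rho$) to match $\delta\Psi$, modulo cubic corrections in $(y^1,\theta')$, with the quadratic form $Q_{\rho_\parallel}$ produced by Lemma~\ref{l:Hmatrix}. First I would refine the Ansatz \eqref{eq:defdP} and seek
\[
\Psi(\rho_\parallel;y^1,y^0,\theta')
=\psi(\rho_\parallel)+\tilde\Psi(\rho_\parallel;y^1,\theta')-\la y^0,\theta'\ra\,,
\]
so that the graph equation $-\partial_{y^0}\Psi=\eta^0$ forces $\theta'=\eta^0$ on the critical set --- exactly the identification spelled out in Lemma~\ref{l:Hmatrix}. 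Since generating functions are defined only up to equivalence, this specific form may be arranged by a suitable choice of the extra auxiliary variable. The upshot is that $\delta\Psi$ is \emph{exactly} linear in $y^0$, which already accounts for the error in the conclusion being cubic in $(y^1,\theta')$ only, rather than in all of $\rho_\perp$.

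Next I would normalize $\tilde\Psi$ so that both $\tilde\Psi(\rho_\parallel;0,0)$ and $\nabla_{(y^1,\theta')}\tilde\Psi(\rho_\parallel;0,0)$ vanish identically in $\rho_\parallel$. On $C_\psi$ these normalizations are \emph{forced} by the invariance of $\cK$ under $\kappa$: at $\rho_\parallel\in C_\psi$ corresponding to $(\rho^1;\rho^0)\in\Gamma_{\tkappa}\subset\cK\times\cK$, the tuple $(\rho_\parallel,\rho_\perp=0)$ must be a critical point of $\Psi$ whose graph-equation values recover $\rho^1$ and $\rho^0$, forcing $\eta^1=\eta^0=y^0=0$. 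Off $C_\psi$ the values of $\Psi$ are irrelevant to $\Gamma_\kappa$, so a smooth extension meeting the normalizations is available. A Taylor expansion in $(y^1,\theta')$ then produces
\[
\tilde\Psi(\rho_\parallel;y^1,\theta')
=\tfrac12\la (y^1,\theta'),H(\rho_\parallel)(y^1,\theta')\ra+\cO((y^1,\theta')^3)
\]
for some smooth symmetric matrix-valued function $H(\rho_\parallel)$.

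To identify $H$ on $C_\psi$, I would linearize the remaining graph equations $\partial_{y^1}\Psi=\eta^1$ and $\partial_{\theta'}\Psi=y^0$ at a base point $(\rho_\parallel^0,0)$: the resulting linear symplectic map $(y^0,\eta^0)\mapsto(y^1,\eta^1)$ is precisely the transverse linearization $S(\rho^0)$ of \eqref{e:S(rho)}, whose mixed generating function is the $\tQ_{\rho^0}$ of Lemma~\ref{l:Hmatrix}. Hence $H(\rho_\parallel)$ coincides on $C_\psi$ with the matrix from \eqref{e:tQ_rho}, and setting
$Q_{\rho_\parallel}(\rho_\perp)\defeq\tfrac12\la(y^1,\theta'),H(\rho_\parallel)(y^1,\theta')\ra-\la y^0,\theta'\ra$
delivers the claimed form, with smooth $\rho_\parallel$-dependence inherited from the construction. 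The delicate point I expect to require most care is verifying that the refined Ansatz is still a \emph{nondegenerate} phase function for $\kappa$ on a full neighborhood of $\cK\cap D_{j_1j_0}$; this reduces to invertibility of the $H_{21}$ block, which by \eqref{e:H-Lambda} is uniformly controlled on $\cK$ once $t_0$ is large enough for $\Lambda(\rho)$ to be uniformly expanding, and then persists on a neighborhood by continuity.
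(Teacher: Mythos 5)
Your proof is correct and follows essentially the same plan as the paper: you make the Ansatz $\delta\Psi(\rho_\parallel,\rho_\perp)=\tilde\Psi(\rho_\parallel;y^1,\theta')-\la y^0,\theta'\ra$ (the paper's \eqref{e:deltaPsi}), so that $\theta'$ is identified with $\eta^0$; you observe that invariance of $\cK$ under $\kappa$ kills the linear part of $\tilde\Psi$ in $(y^1,\theta')$ on $C_\psi$ and therefore the error is $\cO((y^1,\theta')^3)$ rather than $\cO(\rho_\perp^3)$; and you identify the Hessian $H(\rho_\parallel)$ on $C_\psi$ with the one generating $S(\rho^0)$ from \eqref{e:Q_rho}--\eqref{e:tQ_rho}. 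Two minor remarks. First, where you justify the Ansatz form by appealing to the general ``freedom in the choice of auxiliary variables,'' the paper gives a sharper geometric reason: near $\cK$ the graph $\Gamma_\kappa$ projects bijectively onto $(\rho^0\in\cK;y^1,\eta^0)$, and this projection property is exactly what yields a generating function of the mixed form \eqref{e:deltaPsi}. Your version is not wrong, but the projection argument is the cleaner mechanism, and it is also what makes the later claim ``the $\rho_\parallel$ satisfying $(\rho_\parallel,\rho_\perp)\in C_\Psi$ is $\cO((y^1,\eta^0)^2)$-close to $C_\psi$'' natural. Second, your explicit check of nondegeneracy of the refined phase (via invertibility of $H_{21}$ from \eqref{e:H-Lambda} and persistence by continuity) is a welcome point that the paper leaves implicit in this lemma; it was essentially already established in Lemma~\ref{l:Hmatrix}, but it does no harm to record it here.
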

In other words, the quadratic forms $Q_{\rho_\parallel}$ extend the forms
$Q_{\rho}$ to a neighbourhood of $C_\psi$.
\begin{proof}
Since $\cK$ is preserved by
$\kappa$ and carries the map $\tkappa$, we may assume that for any
$\rho_\parallel$, the function $\delta\Psi(\rho_\parallel,\bullet)$ has no linear part in the
variables $\rho_\perp$. 
At each point $\rho_\parallel\in C_\psi$ (identified with some  $\rho^0\in \cK$), the
quadratic part $Q_{\rho_\parallel}(\rho_\perp)$ generates the linear transverse
deviation from $\tkappa$ near the point $\rho^0$, namely $d_\perp
\kappa(\rho^0)$. This means that $Q_{\rho_\parallel}=Q_{\rho^0}$, which
has the form  (\ref{e:Q_rho}). This form corresponds to the 
geometric fact that the graph of
$d_\perp\kappa(\rho^0)$ admits $(y^1,\eta^0)$ as coordinates.

This projection property locally extends to the graph of
$\kappa$: in some neighbourhood of $\cK$, the points of
$\Gamma_\kappa$ can be represented by the coordinates
$(\rho^0=(x^0,\xi^0)\in\cK; y^1,\eta^0)$, where
$y^1,\eta^0\in\neigh(0)$. This property shows that $\delta\Psi$ can be
written in the form
\be\label{e:deltaPsi}
\delta\Psi(\rho_\parallel,\rho_\perp) =
\delta\tPsi(\rho_\parallel,y^1,\theta')  - \la y^0,\theta'\ra \,.
\ee
As explained above, the quadratic part $\tQ_{\rho_\parallel}(\bullet)$ of
$\delta\tPsi(\rho_\parallel;\bullet)$ must be equal, for
$\rho_\parallel\in C_\psi$, to the corresponding $\tQ_{\rho^0}$ generating
$S(\rho^0)$. 
The equations for $C_\Psi$ show that, if we fix small values
$(y^1,\theta'=\eta^0)$, then value $\rho_\parallel$
such that $(\rho_\parallel,y^1,y^0,\eta^0)\in C_\Psi$ is  
$\cO((y^1,\eta^0)^2)$-close to $C_\psi$.
\end{proof}

%%%%%%%%%%%%%%%%%%%%%%%%%
\subsection{Structure of the propagators $T_{j'j}$}

From the above informations about the phase function
$\Psi=\Psi_{j'j}$, we can write the integral kernel of  
$T=T_{j'j}$ defined in \eqref{e:T_jj} and quantizing the map
$\kappa_{j'j}$, as an oscillatory integral.
 The general theory of Fourier integral operators (see \S\ref{s:fio}) tells us that its kernel
takes the form
\be\label{eq:T(X,X)}
T(x^1,y^1;x^0,y^0)=\int_{\IR^{L+d_{\perp}}} 
\frac{d\theta\,d\theta'}{ (2\pi   h)^{(k+d_{\perp}+d)/2} }   
\, a(\rho_\parallel,\rho_\perp)\,
e^{\frac{i}{h}\Psi(\rho_\parallel,\rho_\perp)}  +
\cO (h^\infty)_{L^2 \to L^2}\,,
\ee
where we use the notation \eqref{eq:parper}.
Let us group the variables $(x,y)=X$, $(\xi,\eta)=\Xi$,
$(\theta,\theta')=\Theta$. 
We may assume
that the symbol $a(X^1,X^0,\Theta)$ is supported in a small
neighbourhood of the critical set $C_\Psi$. In particular, for small
values of the transversal variables $\rho_\perp$, $a(\bullet,\rho_\perp)$ is supported near $C_\psi$.
From \eqref{e:T_jj}, this Fourier integral operator is microlocally subunitary in
$V_{j'}\times V_{j}$.

%%%%%%%%%%%%%%%%%%%%%%%%%%%%%%%%%%%%%%%%%
\subsubsection{Using the cut-off near $\cK$}
We now take into account the cut-offs $(\chi^k)^w$, and study the truncated
propagator $T\,(\chi^{k})^w$  appearing in
\eqref{e:sum2}.
\begin{lem}
\label{l:nest}
For any $k\geq 0$ we have 
\be\label{e:chiTchi}
T\,(\chi^{k})^w = T^{\chi^k}+ \cO(\th^{\infty})_{L^2 \to L^2} \,,
\ee
where the Schwartz kernel of the operator $T^{\chi^k}$ is given by 
\be\label{e:T^sharp-chi}
T^{\chi^k}(x^1,y^1;x^0,y^0) \defeq \int\frac{d\theta\,d\eta^0}{(2\pi
  h)^{(k+d_{\perp}+d)/2} }\; a(\rho_\parallel,\rho_\perp)\,\chi^{\sharp(k+1)}(y^1)\,\chi^{k}(y^0,\eta^0)\,
e^{\frac{i}{h}\Psi(\rho_\parallel,\rho_\perp)}\,,
\ee
where  $ \chi^{\sharp   k} \defeq \chi^{k}|_{ \eta = 0 }  $, with $ \chi^k $ given in
\eqref{e:cutoffs}.  
\end{lem}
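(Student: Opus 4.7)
\smallsection{Proof plan}

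The plan is to exploit the specific structure of the phase $\Psi$ worked out in \S\ref{s:phase}, combined with the fact that $\chi^k$ depends only on $(y,\eta)$. Since the Weyl kernel of $(\chi^k)^w$ factorizes as $\delta(x^Z-x^0)\,(\chi^k)^w_y(y^Z,y^0)$, the composition $T(\chi^k)^w$ reduces to a transverse calculation. Substituting the oscillatory representation \eqref{eq:T(X,X)} of $T$ together with the Weyl formula \eqref{eq:weyl} for $(\chi^k)^w_y$ expresses $[T(\chi^k)^w](X^1,X^0)$ as an oscillatory integral over the variables $(\theta,\theta',y^Z,\eta)$ with combined phase
\[
\Phi = \psi(x^1,x^0,\theta) + \delta\tPsi(x^1,x^0,\theta,y^1,\theta') + \la y^Z-y^0,\eta\ra - \la y^Z,\theta'\ra.
\]
By \eqref{eq:defdP}--\eqref{e:deltaPsi} the phase $\Psi$ is linear in the intermediate transverse variable $y^Z$, so the dependence of $\Phi$ on $(y^Z,\eta)$ reduces to the quadratic form $\la y^Z-y^0,\eta-\theta'\ra$, with unique nondegenerate critical point $y^Z=y^0$, $\eta=\theta'$ of signature zero.

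Applying stationary phase in $(y^Z,\eta)$ to the full amplitude $a(X^1,x^0,y^Z,\theta,\theta')\,\chi^k\!\big((y^Z+y^0)/2,\eta\big)$ produces an asymptotic series whose leading term replaces $(y^Z,\eta)$ by $(y^0,\theta')$ in the amplitude. After renaming $\theta'$ as $\eta^0$, this leading term is exactly the kernel \eqref{e:T^sharp-chi} but without the factor $\chi^{\sharp(k+1)}(y^1)$. The subleading corrections involve powers of $h\,\partial_{y^Z}\cdot\partial_\eta$ acting on $a\cdot\chi^k$; since $\chi^k\in\widetilde S_{\frac12}$ satisfies $\partial^\alpha\chi^k=\Oo\big((h/\th)^{-|\alpha|/2}\big)$, each application produces a factor of order $(h\th)^{1/2}$. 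Truncating the expansion at order $N$ yields a remainder of size $\Oo((h\th)^{N})_{L^2\to L^2}$; since $\th$ is fixed independently of $h$, this is $\Oo(h^\infty)_{L^2\to L^2}$, and the subleading terms themselves can be absorbed into an updated amplitude of the same oscillatory-integral form.

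It remains to insert the cutoff $\chi^{\sharp(k+1)}(y^1)$. By the nesting property \eqref{e:nesting0}, on the critical set $C_\Psi$ one has $(y^1,\eta^1)=\kappa_{j'j}(y^0,\eta^0)$, so for $(y^0,\eta^0)\in\supp\chi^k$ the image lies in $\supp\chi^{k+1}$ and in particular $y^1\in\supp\chi^{\sharp(k+1)}$. Hence $\chi^{\sharp(k+1)}(y^1)\equiv 1$ on the portion of $C_\Psi$ where the amplitude is nonvanishing, and inserting this factor modifies the integrand only on a set disjoint from $C_\Psi$. Repeated integration by parts in $(\theta,\eta^0)$ combined with the $L^2$-boundedness from Lemma~\ref{l:normf} then produces the required $\Oo(h^\infty)_{L^2\to L^2}$ error. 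The principal technical obstacle is carrying the stationary-phase argument to an $\Oo(h^\infty)$ remainder rather than the $\Oo((h\th)^{1/2})$ bound directly provided by Lemma~\ref{l:fio}; the point is that each correction term in the $\widetilde\Psi_{\frac12}$-calculus pays a full factor of $h$ against only a $(h/\th)^{-1/2}$ loss from differentiating $\chi^k$, making the series a geometric sum in $(h\th)^{1/2}$, which is $\Oo(h^\infty)$ for $\th$ fixed.
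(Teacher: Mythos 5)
Your proposal follows essentially the same route as the paper, but unfolds the two black-box ingredients the paper cites.  The paper's argument is: first, the nesting property, via the analogue of \eqref{e:insert-cutoff}, gives
\[ (\chi^{\sharp(k+1)})^w\,T\,(\chi^{k})^w = T\,(\chi^{k})^w + \cO(\th^\infty)_{L^2\to L^2}\,; \]
second, Lemma~\ref{l:fio} applied on both sides converts the two pseudodifferential cutoffs into the multiplications $\chi^{\sharp(k+1)}(y^1)$ and $\chi^k(y^0,-d_{y^0}\Psi)=\chi^k(y^0,\theta')$ inside the oscillatory integral, using the identification $\theta'=\eta^0$ on $C_\Psi$ from Lemma~\ref{l:Hmatrix}.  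Your stationary-phase computation in $(y^Z,\eta)$ is precisely the proof of the second part of Lemma~\ref{l:fio}, specialized to a symbol depending only on transverse variables; your nonstationary-phase argument for inserting $\chi^{\sharp(k+1)}(y^1)$ plays the role of the nesting estimate.  What your version adds is the explicit observation that $\delta\Psi$ is linear in the intermediate transverse variable $y^Z$, so the stationary-phase critical point is nondegenerate and unique; what it costs is a longer argument.  (Minor slip: the critical value of your quadratic form $\la y^Z-y^0,\eta-\theta'\ra$ is $0$, whereas the actual $(y^Z,\eta)$-dependent part $\la y^Z,\eta-\theta'\ra-\la y^0,\eta\ra$ evaluates to $-\la y^0,\theta'\ra$ at the critical point --- the term needed to reconstitute $\Psi$ in the exponent.)

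One genuine imprecision: in the final paragraph you claim that \emph{both} the stationary-phase corrections and the cutoff insertion produce $\cO(h^\infty)$ by the same geometric-series mechanism.  These are not the same.  For the stationary-phase corrections each term trades $h$ for a single $\tS_{\frac12}$-loss $(h/\th)^{-1/2}$ (the amplitude $a$ of $T$ has $\th$-independent derivative bounds), giving a net gain $(h\th)^{1/2}$ per order --- so $\cO(h^\infty)$ for $\th$ fixed, as you say.  But in the nonstationary-phase step for $\chi^{\sharp(k+1)}(y^1)$, the distance from $C_\Psi$ on the support of $(1-\chi^{\sharp(k+1)}(y^1))\chi^k$ is only bounded below by $\sim(h/\th)^{1/2}$, the same scale as the cutoff derivative loss; each integration by parts gains only $h\cdot(h/\th)^{-1/2}\cdot(h/\th)^{-1/2}=\th$, hence $\cO(\th^\infty)$, not $\cO(h^\infty)$.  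This matches the paper's own \eqref{e:insert-cutoff} and \eqref{e:chiTchi0}, which are stated with $\cO(\th^\infty)$ remainders; the $\cO(h^\infty)$ in \eqref{e:chiTchi} is an overstatement already present in the source, and downstream the remainders are in fact carried as $\cO(\th^\infty)$ (see \eqref{e:Tchi-lin}).  So your error accounting should be corrected, but the gap is shared with, rather than created by, your proof.
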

\begin{proof} 
As in
\eqref{e:insert-cutoff}, the nesting property
$\chi^{\sharp(k+1)}\succ\chi^k\circ\kappa_{j'j}$ and
the uniformity (in $ k $) of the symbol estimates on 
$ \chi^k $ imply that
\be\label{e:chiTchi0}
(\chi^{\sharp(k+1)})^w\, T\,(\chi^{k})^w = T\,(\chi^{k})^w
+ \cO(\th^\infty)\,,
\ee
uniformly for all $k\geq 0 $. (We recall that uniformity in $ k $ is
due to \eqref{e:cutoffs} and \eqref{e:rescaled-cutoffs} and 
the uniform error estimate comes from \eqref{eq:Egor12}.)
The Fourier integral operator calculus in the class $\tS_{\frac12}$ presented in
Lemma~\ref{l:fio} has the following consequence:
\[ (\chi^{\sharp(k+1)})^w\,T\,(\chi^{k})^w = T^{\chi^k}+
\cO(h^{\frac12}\th^{\frac12}), \]
which combined with \eqref{e:chiTchi0} gives \eqref{e:chiTchi}.
\end{proof}

%%%%%%%%%%%%%%%%%%%%%%%%%%%%%%%%%%%%%%%%%
\subsubsection{Rescaling the transversal coordinates}
Since we work at distances $\sim(h/\th)^{\frac12}$ from the trapped
set, it will be convenient to use the rescaled transversal variables
\be\label{e:rescaling}
\ty=(\th/h)^{\frac12}y,\qquad
\teta = (\th/h)^{\frac12} \eta.
\ee
Our cut-offs $\chi^{k}$, $\tchi^k$ defined in
(\ref{e:cutoffs},\ref{e:rescaled-cutoffs}) are then related by 
$\tchi^{\bullet}(\ty,\teta)=\chi^{\bullet}(y,\eta)$.
This change of variables induces the following unitary rescaling
$\cT 
\;  : \; 
L^2(dx  \,dy)\to L^2(dx\,d\ty)$:
\be\label{e:rescaling2}
\cT
u (x,\ty) \defeq 
(h/\th)^{{d_{\perp}}/2}\,u(x,(h/\th)^{\frac12}\ty) = (h/\th)^{
{d_{\perp}}/2}\,u(x,y)\,.
\ee
We recall (see for intance \cite[(4.7.16)]{e-z}) that
\[  \cT a^w (x,  y , hD_x,  h D_y ) \cT^* = \tilde a^w ( x, \tilde y ,
h D_x \tilde h
D_{\tilde y} ),  \ \ \ \tilde a ( x, \tilde y , \xi, \tilde \eta ) \defeq a (x,
y , \xi , \eta ) . \]

Through this rescaling, the operator
$T^{ \chi^k}$ is transformed into
$$\tT^{ \chi^k}\defeq\cT
T^{\chi^k}\cT^* 
\;  : \; L^2(dxd\ty) \longrightarrow  L^2(dxd\ty) , $$ 
with Schwartz kernel 
\be
 \begin{split} \label{e:Tchi-kernel}
 \tT^{ \chi^k}(x^0,\ty^0,x^1,\ty^1) = & \,
 \int_{\RR^k} \int_{\RR^{d_{\perp}} } \frac{d\theta}{(2\pi h)^{\frac{k+d-d_{\perp}}2}} \frac{d\teta^0}{(2\pi \th)^{d_{\perp}}}\
a(\rho_\parallel,(h/\th)^{\frac12}\trho_\perp)\, \\
& \ \ \times 
\tchi^{\sharp (k+1)}(\ty^1)\, \tchi^{k}(\ty^0,\teta^0)\,
e^{\frac{i}{h}\psi(\rho_\parallel) + \delta\Psi(\rho_\parallel;(h/\th)^{\frac12}\trho_\perp)}
\end{split} \ee

%%%%%%%%%%%%%%%%%%%%%%%%%%%%%%%%%%%%%%%%%
\subsubsection{Transversal linearization}

The factor $\tchi^{\sharp(k+1)}(\ty^1)\tchi^k(\ty^0,\teta^0)$ 
appearing in the integrand \eqref{e:Tchi-kernel}
allows us to simplify the above kernel. Indeed, it implies that
the variables $\trho_\perp=(\ty^1,\ty^0,\teta^0)$ are integrated over
a set of diameter $\sim R_1\Lambda^{k}$. One can then Taylor expand the amplitude and phase function
$\delta\Psi$ in \eqref{e:Tchi-kernel}:
\[ \begin{split}
& a(\rho_\parallel,(h/\th)^{\frac12}\trho_\perp)\,
e^{\frac{i}{h}\delta\Psi(\rho_\parallel;(h/\th)^{\frac12}\trho_\perp)}
\tchi^{\sharp(k+1)}(\ty^1) \tchi^{ k}(\ty^0,\teta^0)  
=\\
& \ \ \ \ \ \ \  \big(a(\rho_\parallel,0)+\cO_{\th,k}(h^{\frac12})_{S ( T^* \RR^d ) } \big)\,
 e^{\frac{i}{\th}Q_{\rho_\parallel}(\trho_\perp)}\,\tchi^{\sharp(k+1)}(y^1)
 \tchi^{k}(\ty^0,\teta^0)\,.
\end{split} \]
Since we will restrict ourselves to
values $k\leq M\log 1/ \th$, uniformly bounded with respect to $h$, we may omit
to indicate the $k$-dependence in the remainder.
As a result, up to a small error we may keep only the quadratic part of $\delta\Psi$, namely
consider the operator with the Schwartz kernel
\[
\int_{\RR^k} \int_{\RR^{d_{\perp}} }  \frac{d\theta}{(2\pi h)^{\frac{k+d-d_{\perp}}
      2} } 
\frac{d\teta^0}{(2\pi \th)^{d_{\perp}}}\
a(\rho_\parallel,0)\,\tchi^{\sharp(k+1)}(\ty^1)\tchi^{ k}(\ty^0,\teta^0)\,e^{\frac{i}{h}\psi(\rho_\parallel)}\,
e^{\frac{i}{\th}Q_{\rho_\parallel}(\trho_\perp)}\,.
\]
Combining the above pointwise estimates with the fact that
$a\in S(T^*\IR^{3d})$, and
with \eqref{e:chiTchi0} and Lemma \ref{l:nest}, gives
\be\label{e:Tchi-lin} 
\begin{gathered}
\tT^{ \chi^k}=\tT ( \tilde \chi^k ) ^w ( \tilde y , \tilde h
D_{\tilde y} ) +\cO (\th^{\infty})_{L^2\to L^2}\,, \\
\tT ( x^1, y^1; x^0 , y^0 ) = 
\int\frac{d\theta}{(2\pi h)^{(k+d-d_{\perp})/2}} \frac{d\teta^0}{(2\pi \th)^{d_{\perp}}}\
a(\rho_\parallel, 0)\,
e^{\frac{i}{h}\psi(\rho_\parallel)} e^{ \frac i \th Q_{\rho_\parallel}
  ( \tilde \rho^\perp ) } 
\end{gathered}
\ee
uniformly for $k\leq M\log 1/\th|$. 

%%%%%%%%%%%%%%%%%%%%%%%%%%%%%%%%%%%%%%%%%%%%
\subsubsection{Factoring out the transversal contribution}\label{s:transverse-phase}

For each $\rho_\parallel\in\supp a(\bullet,0)$, the quadratic phase $Q_{\rho_\parallel}(\bullet)$ generates 
a symplectic transformation
$S(\rho_\parallel)$ (which, in the case $\rho_\parallel\in C_\psi$ corresponds
coincides with the transformation $S(\rho^0)$ of \eqref{e:S(rho)}). 
As already shown in \eqref{eq:simple}, this phase allows to represent the metaplectic operator
$M(\rho_\parallel):L^2(d\ty)\to L^2(d\ty)$ which $\th$-quantizes this symplectomorphism:
\be\label{e:metapl}
M(\rho_\parallel)(\ty^1,\ty^0)\defeq 
(2\pi   \th)^{-d_{\perp}}\int_{\RR^{d_{\perp}}}  \det(H_{12}(\rho_\parallel))^{1/2}\,e^{\frac{i}{\th}Q_{\rho_\parallel}(\trho_\perp)}
{d\teta^0}\,,
\ee
where $H_{12}(\rho_\parallel)$ is the block matrix appearing in
$Q_{\rho_\parallel}$, similarly as in (\ref{e:Q_rho},\ref{e:tQ_rho}). 

\begin{rem}\label{rem:sign}
In the expression \eqref{e:metapl} we implicitly chose a sign for the square root of
$\det(H_{12}(\rho_\parallel))$. Indeed, the metaplectic representation of the symplectic group
is 1-to-2, a given symplectic matrix $S$ being quantized into two
possible operators $\pm M$. The relations \eqref{e:H-Lambda} and the uniform
expansion property \eqref{e:Lambda-expand} show that
$\det(H_{12}(\rho_\parallel))$ does not vanish on the support of the amplitude
$a(\bullet,0)$ (which is a small neighbourhood of $C_\psi\times \{\ty^0=\ty^1=\teta^0=0\}$), so we
may fix the sign in each connected component of this support. This
remark will be relevant in \S\ref{s:back-iterated}.
\end{rem}

Defining the symbol 
$$
\ta(\rho_\parallel)\defeq \frac{a(\rho_\parallel,0)}{\det(H_{12}(\rho_\parallel))^{\frac12}}\,,
$$ 
we interpret the operator $\tT $ in 
\eqref{e:Tchi-lin} as a Fourier integral operator with an operator
valued symbol, $ M ( \rho_\parallel ) $, where $ M $ is given by 
\eqref{e:metapl}. That fits exactly in the framework presented
in Proposition \ref{p:meta}: 
\[   \widetilde T  ( u \otimes v ) ( x^1, \tilde y^1 )  = 
(2\pi h)^{-(k+d-d_{\perp})/2} \int_{\RR^k } \int_{\RR^{n} } 
\ta(\rho_\parallel)\, [ M
(\rho_\parallel)  v ] ( \tilde y^1 )
\,e^{\frac{i}{h}\psi(\rho_\parallel)}  u ( x^0 ) dx^0 d\theta . 
\]
We now apply Lemma \ref{l:fact} to see that 
\be\label{e:factor1}
\tT = \Op ( M ) T^{\parallel} +\cO_{\th}(h)_{ \mathcal D^{m + \ell}
  \to \mathcal D^{ \ell} } \,,
\ee
where $ m = m_{d - d_\perp } $ is defined in \eqref{eq:Egor12} and
where the Schwartz kernel of $ T^\parallel$ is given by 
\begin{equation}
\label{eq:Tpar}
T^{\parallel}(x^0,x^1) = (2\pi h)^{-k }\int_{\RR^k } 
\ta(\rho_\parallel)\,e^{\frac{i}{h}\psi(\rho_\parallel)} {d\theta} \,.
\end{equation}
The operator valued symbol $M(\rho^1)$ is the metaplectic operator $\th$-quantizing $S(\rho^0)$, where $\rho^1=\tkappa(\rho^0)$ and 
$S(\rho^0)$ is given in \eqref{e:S(rho)}.
We summarize these findings in the following 
\begin{prop}
Suppose that the Schwartz kernel of $ T $ is given by
\eqref{eq:T(X,X)}, $ \chi^k $,  $\widetilde \chi^k $ are given in
\eqref{e:cutoffs}, and $ \cT $ is the unitary rescaling defined in
\eqref{e:rescaling2}.  

Then for $ k \leq K ( \th )  $, where $ K ( \th ) $ may depend on 
$ \th $ but not on $ h $, 
\be\label{e:factorization}
\cT \left( T (\chi^k)^w \right) \cT^* = \Op ( M ) T^\parallel
(\tchi^k ) ^\tw  
+ \cO (\th^{\infty})_{ L^2  \to L^2 } + {\mathcal \cO}_{\th } ( h )_{ L^2  \to L^2 } \,,
\ee
where $ T^\parallel $ is given by \eqref{eq:Tpar} and 
$ M ( x^1 , \xi^1 ) $ given by \eqref{e:metapl} with 
$ \rho_\parallel \in  C_\psi $ determined by 
$ ( x^1,  \xi^1 ) = ( x^1, \partial_{x^1} \psi ( \rho_\parallel ) ) $. Here and below we use the abbreviation $(\tchi^k )^\tw \defeq (\tchi^k)^w(\ty,\th D_{\ty})$.
\end{prop}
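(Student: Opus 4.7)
The plan is to combine the three reductions already carried out in Sections~\ref{s:inserting-chi}--\ref{s:transverse-phase}, each contributing one clean line to the factorization \eqref{e:factorization}.

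First I would apply Lemma~\ref{l:nest} to replace $T(\chi^k)^w$ by the operator $T^{\chi^k}$ with explicit kernel \eqref{e:T^sharp-chi}, losing only an $\cO(h^\infty)_{L^2\to L^2}$ term. Conjugating by the unitary rescaling $\cT$ of \eqref{e:rescaling2} turns this into $\tT^{\chi^k}=\cT T^{\chi^k}\cT^*$, whose kernel \eqref{e:Tchi-kernel} uses the rescaled transversal coordinates $(\ty,\teta)$, with effective Planck constant $\th$ in the quadratic transversal phase and an overall dependence on the small parameter $(h/\th)^{1/2}$.

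Next I would Taylor expand the amplitude $a(\rho_\parallel,(h/\th)^{1/2}\trho_\perp)$ and the transversal phase $\delta\Psi(\rho_\parallel;(h/\th)^{1/2}\trho_\perp)$ in powers of $(h/\th)^{1/2}$. Because the inserted cut-offs $\tchi^{\sharp(k+1)}(\ty^1)\tchi^k(\ty^0,\teta^0)$ confine the transversal variables to a set of diameter $\lesssim R_1\Lambda^k$, and because for $k\le K(\th)$ with $K(\th)$ independent of $h$ this diameter is uniformly bounded in $h$, the cubic remainder in $\delta\Psi$ produces only an $\cO_\th(h^{1/2})$ correction to the phase; a standard stationary-phase estimate then yields an $L^2$-error of size $\cO_\th(h)$. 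This is exactly the content of \eqref{e:Tchi-lin}, replacing $\tT^{\chi^k}$ by $\tT(\tchi^k)^\tw$ whose kernel involves only the quadratic phase $Q_{\rho_\parallel}$ and the restricted amplitude $a(\rho_\parallel,0)$. I would then identify the $\teta^0$-integral in $\tT$ as producing the metaplectic operator $M(\rho_\parallel)$ of \eqref{e:metapl}, after extracting the factor $\det(H_{12}(\rho_\parallel))^{1/2}$, which by \eqref{e:H-Lambda} and \eqref{e:Lambda-expand} does not vanish on the support; the sign is chosen consistently as in Remark~\ref{rem:sign}. This presents $\tT$ as a Fourier integral operator with operator-valued symbol $M$ in the framework of \S\ref{fioo}, and Lemma~\ref{l:fact} then gives the factorization
$$
\tT=\Op(M)\,T^\parallel+\cO_\th(h),
$$
where $T^\parallel$ is the scalar FIO \eqref{eq:Tpar} quantizing $\tkappa$, and $M(x^1,\xi^1)$ is pinned down on $C_\psi$ by $(x^1,\xi^1)=(x^1,\partial_{x^1}\psi(\rho_\parallel))$. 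Concatenating these three reductions and keeping track of the cut-off $(\tchi^k)^\tw$ through the composition yields \eqref{e:factorization}.

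The delicate point, which I expect to be the main obstacle, is making the Taylor estimate on $\delta\Psi$ genuinely uniform in $k$: the error in the phase scales like $(h/\th)^{1/2}\Lambda^{3k}$, and one has to check this stays tame over the whole range $k\le K(\th)$. This is exactly why $K(\th)$ is allowed to depend on $\th$ but not on $h$, and why the operator-valued symbol bounds of Proposition~\ref{p:meta}—rather than pointwise estimates on scalar symbols—are needed to pass from the phase bound to a norm bound on $\tT^{\chi^k}-\tT(\tchi^k)^\tw$.
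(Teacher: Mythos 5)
Your proposal follows essentially the same route as the paper: apply Lemma~\ref{l:nest} to insert the nested cut-offs, rescale by $\cT$ to get the kernel \eqref{e:Tchi-kernel}, Taylor-expand the amplitude and the transversal phase $\delta\Psi$ around $\rho_\perp=0$ to arrive at \eqref{e:Tchi-lin}, and then invoke Lemma~\ref{l:fact} to factor out $\Op(M)$, giving \eqref{e:factor1}. This is exactly the paper's chain of reductions.

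One small imprecision: you identify the delicate step as the $k$-uniformity of the Taylor estimate on $\delta\Psi$. That uniformity does matter (it is why the paper writes $\cO_{\th,k}$ and then restricts $k\le M\log(1/\th)$), but the step the paper's proof actually has to handle carefully is the remainder coming from Lemma~\ref{l:fact}, which is only $\cO_\th(h)$ as a map $L^2(dx)\otimes\mathcal D^{m_n+\ell}\to L^2(dx)\otimes\mathcal D^\ell$, not $L^2\to L^2$. The cut-off $(\tchi^k)^\tw$ placed on the right is what makes this an $L^2\to L^2$ error, via \eqref{eq:meta2}, which gives $(\tchi^k)^\tw=\cO(\Lambda^{2km_n}):L^2(d\ty)\to\mathcal D^{m_n}$. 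The bound $\Lambda^{2km_n}$ is the reason $k\le K(\th)$ with $K$ independent of $h$ is required (so that the prefactor is absorbed into $\cO_\th(h)$). Your phrase ``keeping track of the cut-off $(\tchi^k)^\tw$ through the composition'' gestures at this, and you do mention Proposition~\ref{p:meta}, but you attribute its role to converting a phase bound into a norm bound rather than to absorbing the Sobolev loss from the factorization lemma, which is the real use here.
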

\begin{proof}
Lemma~\ref{l:nest}, \eqref{e:Tchi-lin}, and \eqref{e:factor1} 
give \eqref{e:factorization} with the remainder 
\[  \mathcal O ( \th^\infty )_{ L^2 ( dx d \ty ) \to L^2 ( d
  x d \ty ) } + \mathcal O_\th ( h )_{
L^2 ( dx ) \otimes  \mathcal D^{m }  \to L^2 ( dx d \tilde y  )  } ( \tchi ^k )^\tw, \] 
where $ m = m_{d - d_\perp} $ is given in \eqref{eq:Egor12}. 
The definition of $ \tchi^k $ in \eqref{e:cutoffs} and
\eqref{eq:meta2} show that 
\[  ( \tchi^k )^\tw = \mathcal O ( \Lambda^{ 2 m k } ) :
L^2 ( d \ty )  \longrightarrow {\mathcal D}^{ m } \,, \]
and that gives the remainder in \eqref{e:factorization}.
\end{proof}

%%%%%%%%%%%%%%%%%%%%%%%%%%%%%%%%%%%%%%%
\subsection{Back to the iterated propagator}\label{s:back-iterated}

We can now come back to \eqref{e:iterate} and \eqref{e:sum2}, re-establishing the
subscripts $j_{k+1}j_k$ on the releavant objects. We rescale
all the operators by conjugating them through
$\cT
$. Fixing the limit indices
$j_0,j_n$, we want to study the sum of operators obtained by
conjugation of terms in \eqref{e:sum2} by $ \cT $:
\be \label{e:product1} \begin{split}
\cT [ \bT^{n} ]_{j_n j_0 } (\chi^0 )^w \cT^* & =  \cT
\left( \sum_{\bj}
 \prod_{k=n-1}^{0} T_{j_{k+1}j_k}\,(\chi^k)^w \right) \cT^* + {\mathcal O} (
\th^\infty )_{  L^2 \to   L^2  }
\\ & = 
\sum_{\bj} 
\tT_{j_nj_{n-1}} (\tchi^{n-1})^{\tw} \cdots
(\tchi^1)^{\tw}\tT_{j_1j_0} (\tchi^0)^{\tw} + {\mathcal O} (
\th^\infty )_{  L^2 \to   L^2  }
\end{split} \ee
where the sum runs over all possible sequences $\bj=j_{n-1}\ldots j_1$.
A sequence (which could be thought of geometrically as a path) $j_n\bj j_0$ will be relevant only
if it is {\em physical}, meaning that there exists points $\rho\in
K^\delta$ such that $\varphi_{kt_0}(\rho)\in U_{j_k}$ for all times
$k=0,\ldots,n$ (we say that the path $j_n\bj j_0$ contains the trajectory of
$\rho$). Any unphysical sequence leads to a term of order
$\cO(h^\infty)$.
On the other hand, for a given point $\rho\in
K^\delta$ there are usually many sequences $\bj$ containing its
trajectory, since the neighbourhoods $(U_j)$'s overlap, and so do the
cut-offs $(\pi_j)$.

For physical sequences $ j_n \bj j_0 $ we define the departure
set $ D_{j_n \bj j_0 } $ as the set of points $ \kappa_{j_0} (\rho) $, 
$ \rho \in U_{j_0} = \kappa_{j_0}^{-1} ( V_{j_0} ) $ such that
$ \varphi_{\ell t_0} ( \rho ) \in U_{j_\ell } $ for $ 0 \leq \ell \leq n $. We then put 
\begin{equation}
\label{eq:Dupn} 
D^n_{j_nj_0} = \bigcup_{\bj} D_{j_n \bj j_0} = \kappa_{j_0}
\left( \{\rho\in U_{j_0 }\cap K^\delta,\ \varphi_{nt_0}(\rho)
\in U_{j_n}\}\right)\,.
\end{equation}                  

We now simplify the expression 
\eqref{e:product1}, in the following way.
\begin{lem}\label{l:factor2}
In the notation of \eqref{e:iterate}  and\eqref{e:product1}, and for $
n \leq M \log 1/ \th $, 
\be\label{e:factor2}
\cT [ \bT^{n} ]_{j_n j_0 } (\chi^0 )^w \cT^*  
= \Op(M^{n}_{j_nj_0})  T^{n\parallel}_{j_n j_0} (\tchi^0)^\tw
\,+\cO(\th^\infty)_{L^2 \to L^2} \,.
\ee 
Here $T^{n\parallel}_{j_n j_0}$ is a Fourier integral operator on $L^2(dx)$ quantizing the
map $\tkappa^n:V_{j_0}\to V_{j_n}$,  defined on the 
departure set $D^n_{j_0j_n}$.
For each $\rho\in A^n_{j_n j_0} = \tilde \kappa^n (D^n_{j_nj_0}) $ (the arrival set) the operator
valued symbol $M^{n}_{j_nj_0}(\rho)$ is a metaplectic operator
quantizing the symplectic map 
$$S^n_{j_nj_0}( (\tilde \kappa^n )^{-1} (\rho)) =  d_\perp \kappa^n((\tilde \kappa^n )^{-1} (\rho)).$$
\end{lem}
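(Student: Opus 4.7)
The plan is to prove \eqref{e:factor2} by induction on $n$, the base case $n=1$ being the single-step factorization \eqref{e:factorization} of the preceding proposition. For the inductive step, I group the matrix product in \eqref{e:product1} by separating the last hop: writing $\bj=j_{n-1}\bj'$ with $\bj'=j_{n-2}\ldots j_1$,
\[
\cT[\bT^n]_{j_nj_0}(\chi^0)^w\cT^* = \sum_{j_{n-1}} \tT_{j_nj_{n-1}}(\tchi^{n-1})^\tw \cdot \cT[\bT^{n-1}]_{j_{n-1}j_0}(\chi^0)^w\cT^* + \cO(\th^\infty)_{L^2\to L^2}.
\]
The inductive hypothesis rewrites the right factor as $\Op(M^{n-1}_{j_{n-1}j_0})\,T^{(n-1)\parallel}_{j_{n-1}j_0}(\tchi^0)^\tw$, and \eqref{e:factorization} rewrites the left one as $\Op(M_{j_nj_{n-1}})\,T^\parallel_{j_nj_{n-1}}(\tchi^{n-1})^\tw$.

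Three reorganizations then transform the resulting expression into the claimed form. First, the transverse cut-off $(\tchi^{n-1})^\tw$ commutes exactly with the purely longitudinal FIO $T^{(n-1)\parallel}_{j_{n-1}j_0}$, and the nesting property of Lemma~\ref{l:cutoffs} gives $(\tchi^{n-1})^\tw(\tchi^0)^\tw=(\tchi^0)^\tw+\cO(\th^\infty)$. Second, I push the pseudodifferential operator with operator-valued symbol $\Op(M^{n-1}_{j_{n-1}j_0})$ across the longitudinal FIO $T^\parallel_{j_nj_{n-1}}$ by an Egorov-type identity,
\[
T^\parallel_{j_nj_{n-1}}\,\Op(M^{n-1}_{j_{n-1}j_0}) = \Op\!\big(M^{n-1}_{j_{n-1}j_0}\circ\tkappa_{j_nj_{n-1}}^{-1}\big)\,T^\parallel_{j_nj_{n-1}} + hR,
\]
with $R$ controlled on the scales $\mathcal D^\ell$ through Lemma~\ref{l:prodop} and the symbol bounds \eqref{eq:meta3}. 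Third, the pointwise operator product $M_{j_nj_{n-1}}(\rho)\cdot M^{n-1}_{j_{n-1}j_0}(\tkappa_{j_nj_{n-1}}^{-1}(\rho))$ is, by the group law of the metaplectic representation (with signs chosen consistently on each component of the support, cf.\ Remark~\ref{rem:sign}), a metaplectic operator quantizing the composed symplectic transformation $S_{j_nj_{n-1}}(\rho)\cdot S^{n-1}_{j_{n-1}j_0}(\tkappa_{j_nj_{n-1}}^{-1}(\rho)) = d_\perp\kappa^n\big((\tkappa^n)^{-1}(\rho)\big)$, which is precisely the symbol $M^n_{j_nj_0}(\rho)$. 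Composing the two $\Op(\cdot)$ factors via Lemma~\ref{l:prodop} yields $\Op(M^n_{j_nj_0})$ modulo an $\cO(h)$ remainder, while the longitudinal composition $T^\parallel_{j_nj_{n-1}}\cdot T^{(n-1)\parallel}_{j_{n-1}j_0}$ is, by standard FIO calculus, an FIO quantizing $\tkappa^n_{j_nj_0}=\tkappa_{j_nj_{n-1}}\circ\tkappa^{n-1}_{j_{n-1}j_0}$; the sum over the intermediate index $j_{n-1}$ collapses by the partition-of-unity property of the $\pi_j$'s built into the definition \eqref{e:T_jj}, producing $T^{n\parallel}_{j_nj_0}$.

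The main obstacle is uniform-in-$n$ control of the accumulated errors. Each inductive step contributes remainders of size $\cO(\th^\infty)+\cO_\th(h)$, potentially with $\mathcal D^\ell$-losses growing like $\Lambda^{C m_n k}$ through \eqref{eq:meta3}, reflecting the expansion of the transverse cut-off supports. Since we restrict to $n\leq M\log(1/\th)$, these losses are dominated by a fixed power of $\th$, and the $\cO_\th(h)$ errors collapse into $\cO(\th^\infty)_{L^2\to L^2}$ provided $h<h_0(\th)$ is chosen small enough for each fixed $\th$. This is precisely the two-parameter regime of Proposition~\ref{p:crucial-bound}, and the careful bookkeeping of $\mathcal D^\ell$-losses at each of the $n$ steps is the delicate point of the argument.
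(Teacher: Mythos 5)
Your proposal reorganizes the paper's argument by induction on $n$, whereas the paper expands the whole product at once (inserting the one-step factorizations \eqref{e:factorization} in \eqref{e:sum2'}, then \emph{removing} all intermediate cut-offs to get \eqref{eq:Twc}, and composing with Lemmas~\ref{l:prodop}, \ref{l:fact} and Proposition~\ref{p:meta}). The underlying manipulations are the same, but the inductive route as you have written it has two genuine gaps.

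The first concerns the intermediate cut-off. After applying the inductive hypothesis and \eqref{e:factorization} you arrive at
\[
\sum_{j_{n-1}} \Op(M_{j_nj_{n-1}})\,T^\parallel_{j_nj_{n-1}}\,(\tchi^{n-1})^\tw\,\Op\big(M^{n-1}_{j_{n-1}j_0}\big)\,T^{(n-1)\parallel}_{j_{n-1}j_0}\,(\tchi^0)^\tw + \text{error}.
\]
The cut-off $(\tchi^{n-1})^\tw$ sits directly to the \emph{left} of $\Op(M^{n-1}_{j_{n-1}j_0})$, not of $T^{(n-1)\parallel}_{j_{n-1}j_0}$, so the claim that it ``commutes exactly with the purely longitudinal FIO'' is not applicable until you first move it past the operator-valued pseudodifferential factor. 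Since $M^{n-1}_{j_{n-1}j_0}(\rho)$ is the metaplectic quantization of the strongly expanding map $S^{n-1}$, commuting $(\tchi^{n-1})^\tw$ through it transforms the cut-off by $S^{n-1}$, and one must actually check that the transformed cut-off still nests above $\tchi^0$ uniformly in $n\leq M\log(1/\th)$. This is not difficult given \eqref{e:nesting}, but it is a step, not a commutation. The paper avoids it entirely by discarding all intermediate cut-offs first (the step from \eqref{e:sum2'} to \eqref{eq:Twc}) and only reinserting $(\tchi^0)^\tw$ at the end.

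The second gap is the sign ambiguity. You assert that the pointwise product $M_{j_nj_{n-1}}(\rho)\cdot M^{n-1}_{j_{n-1}j_0}(\tkappa_{j_nj_{n-1}}^{-1}(\rho))$ ``is precisely the symbol $M^n_{j_nj_0}(\rho)$''. It is only $\pm M^n_{j_nj_0}(\rho)$, with a sign that can depend on the intermediate index $j_{n-1}$ and on the connected component of the arrival set; the metaplectic representation is two-valued, and the composition of two chosen branches need not land on the chosen branch for the product. As a consequence, summing over $j_{n-1}$ does not simply produce a single $\Op(M^n_{j_nj_0})$. The paper handles this explicitly: having fixed a smooth sign convention for $M^n_{j_nj_0}$, it records the discrepancies as locally constant sign symbols $\vareps_{j_n\bj j_0}$ (see \eqref{e:M-M^n}) and absorbs them into the longitudinal factor via the definition $T^{n\parallel}_{j_nj_0}\defeq\sum_{\bj}T^\parallel_{j_n\bj j_0}(\vareps_{j_n\bj j_0})^w$. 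Your inductive definition of $T^{n\parallel}_{j_nj_0}$ must similarly carry a sign $\vareps_{j_{n-1}}$ for each intermediate index; ``signs chosen consistently'' and ``the sum collapses by partition of unity'' do not by themselves address this, because the partition-of-unity property is a statement about the $\Pi_j$ and does not control the metaplectic sign.

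Both issues are repairable, but fixing them essentially reproduces the paper's bookkeeping, and the direct expansion avoids the accumulation of the intermediate cut-off/metaplectic interactions more cleanly than induction does. Your paragraph on error control (restricting $n\leq M\log(1/\th)$ and taking $h<h_0(\th)$) is in line with the paper.
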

\begin{proof}
If we insert the approximate factorizations \eqref{e:factorization} in
a term $\bj$ of the sum in the left hand side of \eqref{e:product1},
this term becomes
\be\label{e:sum2'}
\Op(M_{j_n j_{n-1}} )  T^{\parallel}_{j_{n}j_{n-1}} (\tchi^{n-1} )^\tw \,
\cdots \Op(M_{j_{1}j_{0}} ) 
T^{\parallel}_{j_{1}j_{0}} (
\tchi^{0} )^\tw + \cO ( \th^\infty )_{L^2 \to L^2} .
\ee
We now observe that just as we inserted the cut-offs $ \chi^k $  to obtain
\eqref{e:sum2} from \eqref{e:iterate} we can remove them 
so that each term becomes
\be
\label{eq:Twc}  \Op(M_{j_n j_{n-1}} )  T^{\parallel}_{j_{n}j_{n-1}}
\cdots \Op(M_{j_{1}j_{0}} ) 
T^{\parallel}_{j_{1}j_{0}}  ( \tchi^{0} )^\tw + \cO ( \th^\infty )_{L^2 \to L^2} .
\ee
We can now apply Lemmas \ref{l:prodop},\ref{l:fact} and Proposition \ref{p:meta} to
see that 
\be
\label{eq:TDD}  \begin{split} &  \Op(M_{j_n j_{n-1}} )  T^{\parallel}_{j_{n}j_{n-1}}
\cdots \Op(M_{j_{1}j_{0}} ) 
T^{\parallel}_{j_{1}j_{0}} = \\
& \ \ \ \ \ \ \ \ \  
\Op(M_{j_nj_{n-1}\cdots j_0} ) T^{\parallel}_{j_{n} \bj j_0} + 
 {\mathcal O}( \tilde h^{ -2 m_{d - d_{\perp}} n } h  )_{ L^2 ( dx ) \otimes
   {\mathcal D}^{ 2 n m_{ d-d_{\perp} } }  \to L^2 } , \end{split} \ee
where we use the shorthands
\begin{align*}
T^{\parallel}_{j_{n}j_{n-1}\cdots j_0}&\defeq T^{\parallel}_{j_{n}j_{n-1}}\,
T^{\parallel}_{j_{n-1}j_{n-2}}\cdots 
T^{\parallel}_{j_{1}j_{0}},\\ 
M_{j_nj_{n-1} \cdots j_0}   &\defeq 
( M_{j_nj_{n-1}} ) (M_{j_{n-1}j_{n-2}} \circ \tkappa_{j_{n-1} j_n } )   \cdots
( M_{j_{2}j_{1}}\circ \tkappa_{j_2 \cdots j_n} )\, 
(M_{j_{1}j_{0}} \circ \tkappa_{j_1 \cdots j_n} ) ,\\
\tkappa_{j_kj_{k-1}\cdots
  j_0}&\defeq\tkappa_{j_kj_{k-1}}\circ\tkappa_{j_{k-1}j_{k-2}}\cdots\circ\tkappa_{j_1j_0}.
\end{align*}
These expressions only make sense for physical sequences $j_n\bj
j_0$. The map $\tkappa_{j_n \bj j_0}$ is defined on
the departure set $D_{j_n \bj j_0}$.

\begin{figure}[ht]
\includegraphics[width=6.4in]{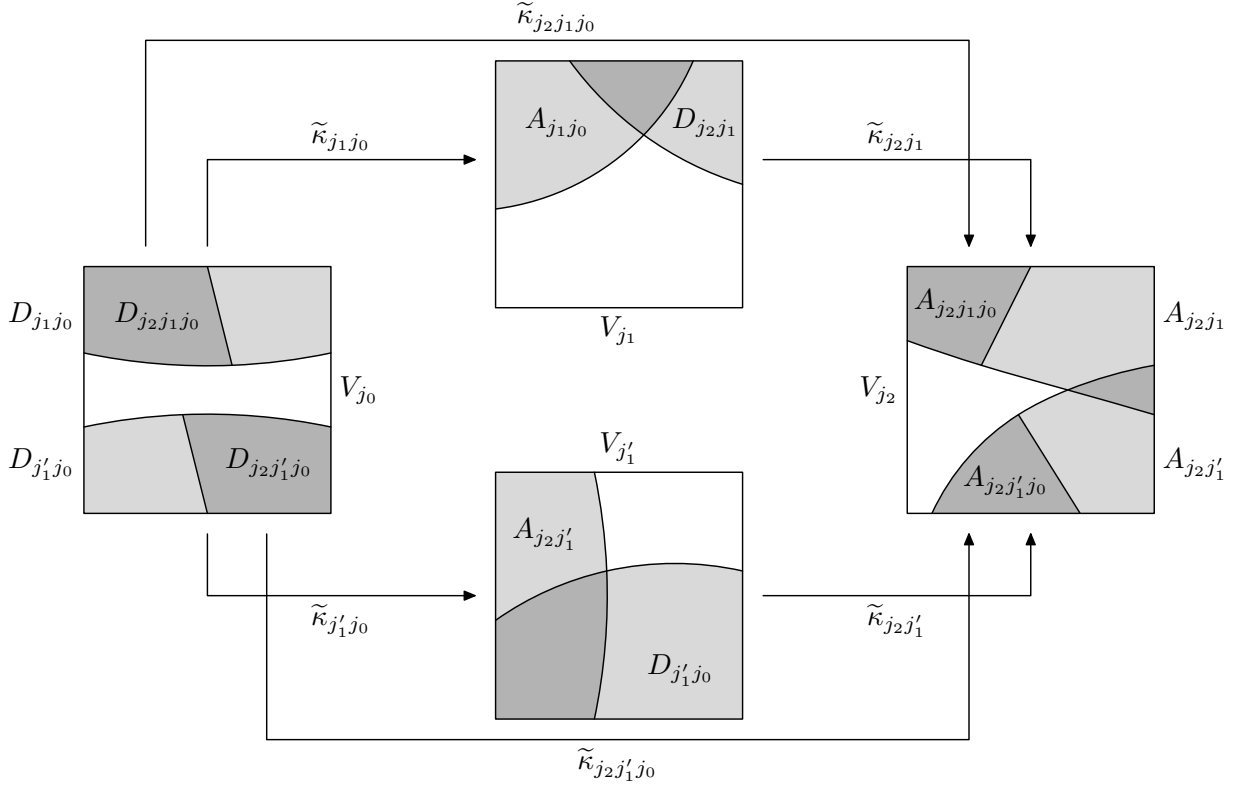}
\caption{Schematic representation of the departure 
and arrival sets for $ \bf j $ of lengths $ 1$ and $2$. We show two 
{\em physical} sequences $ j_2 j_1 j_0 $ and $ j_2 j_1' j_0 $ and
the corresponding maps \eqref{eq:tka}. As remarked there
we use the same notation for the departure and arrival sets
on $ \mathcal K$.}
\label{f:da}
\end{figure}

The metaplectic operator $M_{j_n \bj j_0}(\rho)$ quantizes the symplectomorphism
$S_{j_n \bj j_0}(\rho^0)$, with $\rho=\tkappa^n(\rho^0)\in A_{j_n\bj j_0}$. 
This symplectomorphism represents, in the
charts $V_{j_0} \to V_{j_n}$, the transverse linearization of the flow
$\varphi_{nt_0}$ at the point $\kappa_{j_0}^{-1}(\rho^0)$. As a
consequence, the symplectic matrix $S_{j_n \bj j_0}(\rho^0)$ is identical for all
sequences $j_n \bj j_0$ containing the trajectory of
$\rho^0$, and we call this matrix $S^n_{j_nj_0}(\rho^0)$. Hence, two metaplectic operators
$M_{j_n \bj j_0}(\rho)$, $M_{j_n \bj' j_0}(\rho)$ corresponding to two different allowed sequences
can at most differ by a global sign.

For all $\rho$ in the arrival set
$$ 
A^n_{j_nj_0} = \bigcup_{\bj} A_{j_n \bj j_0} = \kappa_{j_n}(\{\rho\in U_{j_n}\cap K^\delta,\ \varphi_{-nt_0}(\rho)\in
U_{j_0}\})\,, 
$$ 
we choose the sign of the metaplectic
operator $M^n_{j_nj_0}(\rho)$ quantizing $S^n_{j_nj_0}(\rho^0)$, such that $M^n_{j_nj_0}(\rho)$
depends smoothly on $\rho$ on each connected component of
$A^n_{j_nj_0}$ (there is no obstruction to this fact, due to the property mentioned in the
Remark~\ref{rem:sign}: the symplectomorphisms $S^n_{j_nj_0}(\rho)$
also have the form \eqref{eq:pkap}).
Hence, for each physical sequence $j_n\bj j_0$ we have 
\be\label{e:M-M^n}
M_{j_n \bj j_0}(\rho)=\vareps_{j_n \bj j_0}(\rho)
M^n_{j_n j_0}(\rho),\quad \rho\in D_{j_n \bj j_0}\,,
\ee
for some sign $\vareps_{j_n \bj j_0}(\rho)\in\{\pm\}$ constant on each
connected component of $A_{j_n \bj j_0}$. As before, the functions $\rho\mapsto
\vareps_{j_n \bj j_0}(\rho)$, $\rho\mapsto M_{j_n \bj j_0}(\rho)$ can be smoothly extended outside $A_{j_n
  \bj j_0}$, into compactly supported symbols. 
Lemma  \ref{l:prodop} and the identity \eqref{e:M-M^n} give
\be
\label{e:SNN}
\Op(M_{j_nj_{n-1}\cdots j_0} ) T^{\parallel}_{j_{n} \bj j_0} =
\Op(M^{n}_{j_n j_0}) \, (\vareps_{j_n \bj j_0})^w \,
T^{\parallel}_{j_{n} \bj j_0} +\cO_{\th}(h)_{ L^2 ( dx ) \otimes \mathcal
  D^{m_{d_{\perp}} } \to L^2 } \,.
\ee

When $ (\tchi^0)^\tw $ is inserted in \eqref{eq:TDD} and
\eqref{e:SNN} we apply
\eqref{eq:meta2} to see that 
\[   {\mathcal O}( \tilde h^{ - 2nm_{d - d_{\perp}} } h )_{ L^2 ( dx ) \otimes
   {\mathcal D}^{ 2n m_{ d-d_{\perp} } }  \to L^2 } ( \chi^0 )^\tw =
\mathcal O (  \tilde h^{ - 2nm_{d - d_{\perp}} } h  )_{ L^2 \to L^2 } =
\mathcal O_\th ( h )_{ L^2 \to L^2}, 
\]
and hence that error term can be absorbed into $ \mathcal O ( \th
^\infty ) $.

Returning to \eqref{e:sum2'} we see that 
the sum in the right hand side of \eqref{e:product1} 
can be
factorized in the following way:
\be\label{e:factorization2}
 \begin{split} & \sum_{\bj} \tT_{j_{n}j_{n-1}}(\tchi^{n-1})^{\tw}\cdots
\tT_{j_{1}j_{0}} (\tchi^0)^\tw 
 = \\
& \ \ \ \ \  \ \ \ \  \Op(M^{n }_{j_nj_0}) 
\big(\sum_{\bj}
T^{\parallel}_{j_{n}
{\bj}j_0}\,(\vareps_{j_n {\bj}j_0})^w\big)
 ( \tchi^0 )^\tw \
+ \cO(\th^\infty)_{L^2 \to L^2} \,,
\end{split}
\ee 
with a uniform remainder for $n\leq M\log 1/\th$.
Let us put $T^{n\parallel}_{j_n j_0}\defeq\sum_{\bj}
T^{\parallel}_{j_{n}
{\bj}j_0}(\vareps_{j_n \bj j_0})^w$, so that
the above identity reads exactly like in the statement of the Lemma.
The operator $T^{n\parallel}_{j_n j_0}$ is sum of Fourier integral
operators $T^{\parallel}_{j_{n}
{\bj}j_0}$
defined with different phase functions $\psi_{j_{n}
{\bj}j_0}$, yet
these phases generate (on different parts of phase space) the same map 
$\tkappa^n:D^n_{j_nj_0}\to A^n_{j_nj_0}$. Hence, $T^{n\parallel}_{j_n
  j_0}$ is a Fourier integral operator quantizing $\tkappa^n$.
This completes the proof of \eqref{e:factor2}. 
\end{proof}

The next lemma shows that the Fourier integral operator $T^{n\parallel}_{j_n j_0}$ is
essentially subunitary.
\begin{lem}\label{l:normTparallel}
Let $M>0$. For any small $\th>0$, there exists $h_0=h_0(\th)$ such
that, for any sequence $\bj$ of length $n\leq
M\log1/\th$ and any $h\leq h_0(\th)$, 
the operator $T^{n\parallel}_{j_{n}
j_0}$ satisfies the following norm estimate:
\begin{equation}
\label{eq:unif}
\|T^{n\parallel}_{j_{n}j_0}\|_{L^2(dx)\to L^2(dx)}\leq
1+\cO(\th )\,. 
\end{equation}
\end{lem}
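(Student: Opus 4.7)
The strategy is to combine the factorization \eqref{e:factor2} of Lemma~\ref{l:factor2} with the norm estimate on the matrix $\bT$ from Lemma~\ref{l:normT}, and then to extract the norm of $T^{n\parallel}_{j_nj_0}$ by testing against a fixed transversal state. First, iterating Lemma~\ref{l:normT} yields $\|\bT^n\| \leq (1+C_0h)^n$, so requiring $h \leq h_0(\th)$ with $h_0(\th) \leq \th/(2C_0 M\log(1/\th))$ gives $\|[\bT^n]_{j_nj_0}\|_{L^2\to L^2} \leq 1 + \cO(\th)$ uniformly for $n \leq M\log(1/\th)$. Combined with the unitarity of $\cT$ and the bound $\|(\chi^0)^w\|_{L^2\to L^2} = 1 + \cO(h)$, the identity \eqref{e:factor2} gives
\begin{equation*}
\bigl\|\Op(M^n_{j_nj_0})\,T^{n\parallel}_{j_nj_0}\,(\tchi^0)^\tw\bigr\|_{L^2\to L^2} \leq 1 + \cO(\th).
\end{equation*}

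Next, I would fix a normalized Schwartz function $v\in\mathscr{S}(\IR^{d_t})$, independent of $h,\th$, whose $\th$-Wigner transform is concentrated inside $\{|\ty|,|\teta|\leq R_1\}$, so that $(\tchi^0)^\tw v = v + \cO(\th^\infty)_{L^2}$. For an arbitrary $u\in L^2(dx)$ with $\|u\|=1$, applying the composite operator to $u\otimes v$ and using that $T^{n\parallel}_{j_nj_0}$ acts only on $x$ (hence commutes with $(\tchi^0)^\tw$) gives
\begin{equation*}
\bigl\|\Op(M^n_{j_nj_0})\bigl(T^{n\parallel}_{j_nj_0}u\otimes v\bigr)\bigr\|_{L^2(dx\,d\ty)} \leq 1 + \cO(\th).
\end{equation*}
The conclusion \eqref{eq:unif} will then follow once we show that $\Op(M^n_{j_nj_0})$ is approximately isometric on such tensor products with $v$ fixed. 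Because for each $\rho_\parallel$ the symbol $M^n_{j_nj_0}(\rho_\parallel)$ is unitary on $L^2(d\ty)$, the pointwise composition satisfies $(M^n_{j_nj_0})^*\,M^n_{j_nj_0} \equiv \Id$, and the operator-valued product formula \eqref{eq:propop} of Lemma~\ref{l:prodop} yields $\Op(M^n_{j_nj_0})^*\,\Op(M^n_{j_nj_0}) = \Id + hR_n$. Pairing against $T^{n\parallel}_{j_nj_0}u\otimes v$ then gives $\|\Op(M^n_{j_nj_0})(T^{n\parallel}_{j_nj_0}u\otimes v)\|^2 = \|T^{n\parallel}_{j_nj_0}u\|^2\bigl(1+\cO(h\,C_n(\th))\bigr)$ for some constant $C_n(\th)$ controlled by a finite number of Sobolev-type norms of $v$ and by the symbol seminorms of $M^n_{j_nj_0}$.

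The hard part will be estimating $C_n(\th)$. Proposition~\ref{p:meta} controls a single metaplectic quantization via the bound $\partial^\alpha_{\rho_\parallel}M = \cO(\th^{-|\alpha|})$, but the quadratic phase $Q^n_{\rho_\parallel}$ generating $M^n_{j_nj_0}$ has coefficients inherited from the iterated transverse symplectomorphism $S^n_{j_nj_0}$, whose derivatives in $\rho_\parallel$ may grow exponentially with $n$ because of the uniform transverse hyperbolicity \eqref{e:Lambda-expand}. For $n \leq M\log(1/\th)$ this growth is at worst some fixed negative power of $\th$, so that $C_n(\th) = \cO(\th^{-N})$ for some $N$; shrinking $h_0(\th)$ further to ensure $h_0(\th)\,C_n(\th) = \cO(\th)$ uniformly in the allowed range of $n$ then absorbs this loss. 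Alternatively, one may insert an extra cut-off $(\tchi^n)^\tw$ on the right of $T^{n\parallel}_{j_nj_0}$ using the nesting relations underlying Lemma~\ref{l:cutoffs}, and exploit \eqref{eq:meta2}--\eqref{eq:meta3} to trade $\th^{-1}$-losses for bounded powers of the cut-off scale $\Lambda^n$, which again stays polynomial in $1/\th$ for the relevant range of $n$.
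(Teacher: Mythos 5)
Your proposal follows the same central mechanism as the paper's proof — exploit the pointwise unitarity of the metaplectic symbol $M^n_{j_nj_0}(\rho)$ together with the operator-valued product formula of Lemma~\ref{l:prodop} to strip off the $\Op(M^n)$ factor — but realizes it in a dual way. The paper first records the a priori bound $\|T^{n\parallel}_{j_nj_0}\|\leq\th^{-CM}$, reduces to bounding $T^{n\parallel}_{j_nj_0}(\tchi^0)^\tw$, then multiplies on the left by $\Op((M^n_{j_nj_0})^{-1}\psi)(\tchi^n)^\tw$ (inserting both a longitudinal cut-off $\psi$ localizing to the arrival set and a transversal cut-off $\tchi^n$) and applies the product formula to this whole block. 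You instead pair against tensor states $u\otimes v$ with $v$ a fixed transversal test state, invoke $\Op(M^n)^*\Op(M^n)=\Id+hR_n$, and read off $\|T^{n\parallel}u\|$ from the factorization of the pairing. The trade-off is that you avoid the explicit $\psi$-cut-off (the longitudinal localization is already built into $T^{n\parallel}u$), while the paper avoids the quadratic-inequality manipulation you must do to solve for $\|T^{n\parallel}u\|$.

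There are two places your sketch glosses over details that the paper's argument makes crucial. First, you never establish the a priori polynomial bound $\|T^{n\parallel}_{j_nj_0}\|\leq\th^{-CM}$; without it, the $\cO(\th^\infty)$ errors created by replacing $(\tchi^0)^\tw v$ with $v$, and by the factorization \eqref{e:factor2}, are multiplied by the unknown quantity $\|T^{n\parallel}u\|$, and your inequality for $\|T^{n\parallel}u\|^2$ becomes a quadratic relation in the unknown. It can still be solved, but this step has to be spelled out, and the most painless way is precisely the paper's crude a priori bound, which costs one line. Second, the identity $(M^n_{j_nj_0})^*M^n_{j_nj_0}\equiv\Id$ holds only on (a neighbourhood of) the arrival set $A^n_{j_nj_0}$: away from it, $M^n_{j_nj_0}(\rho)$ is merely a smooth compactly supported extension and is not unitary. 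Your argument therefore implicitly relies on the microlocalization of $T^{n\parallel}u$ inside the arrival set — exactly the role the paper's cut-off $\psi$ makes explicit — and you should say so, since otherwise the leading term $\Op((M^n)^*M^n)$ is not $\Id$. With those two points made precise, and your acknowledged control of the $n$- and $\th$-dependent constants in $R_n$ (by shrinking $h_0(\th)$, or by inserting $(\tchi^n)^\tw$ and using \eqref{eq:meta2}--\eqref{eq:meta3} as you suggest), the proposal is a correct alternative route.
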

\begin{proof}
We first note that we can bound the left hand side of \eqref{eq:unif}
by  $ \tilde h^{- C M }$, for some $ C $ -- that follows from a
trivial estimate of the terms $ T^\parallel_{ j_n \bj j_0 } $ in \eqref{e:factorization2}.

To prove \eqref{eq:unif} it is clearly enough to prove
the bound $ \|T^{n\parallel}_{j_{n}j_0} (\tchi^0)^\tw \|_{L^2(dx d
\tilde y )\to L^2(dx d \tilde y )}\leq 1+\cO(\th^\infty )$.
From Lemma \ref{l:normT} we know that $ \| {\mathbf T}^n \|_{ (L^2)^J \to (L^2)^J }
\leq  1 + {\mathcal O} ( h ) $, which implies that 
$ \| [ \mathbf T^n]_{j_0 j_n } \|_{ L^2 \to L^2 } \leq 1 + {\mathcal O} (
h ) $. 
Lemma \ref{l:factor2} then shows that
\begin{equation}
\label{eq:boundT} \| \Op(M^{n}_{j_nj_0})  T^{n\parallel}_{j_n j_0} (\tchi^0)^\tw\|_{
  L^2 \to L^2 } \leq 1 + \mathcal O ( \th^\infty ) . \end{equation}

The family of unitary metaplectic operators $ \rho \mapsto M_{j_n j_0}^n (\rho
) ^{-1} $ is well defined for $ \rho $ in the neighbourhood of the arrival
set $ A_{j_n j_0}^n $, and $
T^{n \parallel}_{j_n  j_0 } $  is microlocalized in any small
neighbourhood of $ A_{j_n j_0 }^n \times D^n_{j_n j_0 } \subset V_{j_n }
\times V_{j_0 } $.  Lemma \ref{l:prodop} and \eqref{eq:meta2} then show that
\[  \begin{split}  T^{n\parallel}_{j_n j_0} (\tchi^0)^\tw & = 
\Op ( (M^{n}_{j_n j_0 }) ^{-1})  \Op ( M^n _{ j_n j_0 } ) T^{n \parallel}_{    j_n j_0}  (\tchi^0) ^\tw  + \mathcal O_\th ( h \|
  T^{n \parallel}_{ j_n j_0 } \| )_{ L^2 ( dx ) \otimes \mathcal    D^{2m_{d-d_{\perp} } } \to L^2 } (\tchi^0)^w \\ 
& = \Op ( (M^{n}_{j_n j_0 }) ^{-1} ) \Op ( M^n _{ j_n j_0 }) T^{n \parallel}_{    j_n j_0} 
 (\tchi^0) ^\tw  + \mathcal O_\th ( h )_{ L^2 \to L^2 }
\,  . \end{split} \]
where we used the above a priori bound on $ \| T^{n \parallel}_{
  j_n j_0 } \|$.

Just as before we can insert the cut-off $ \tchi^n $ (see
\eqref{e:cutoffs})  with a $ \cO (
\th^\infty ) $ loss. We also introduce a cut-off $ \psi = \psi ( x,
\xi ) $ to a small neighourhood of $ A_{ j_n j_0 } $. (It was not
necessary before as $ T^{n \parallel} _{j_n j_0 } $ provided the
needed localization.) This and and \eqref{eq:boundT} give
the bound  
\[  \begin{split} 
 \|  T^{n\parallel}_{j_n j_0} (\tchi^0)^\tw \|  &   \leq \| \Op ( 
 (M^{n}_{j_n j_0 }) ^{-1} \psi ) (\tchi^n)^\tw  \| \|  \Op ( M^n _{ j_n
      j_0 } ) T^{n \parallel}_{    j_n j_0}  (\tchi^0) ^\tw \|  + \mathcal O (\th^\infty)  \\
& \leq  \| \Op ( (M^{n}_{j_n j_0 }) ^{-1} \psi ) (\tchi^n)^\tw \|
 ( 1 + \mathcal O (
    \th^\infty ) ) + \mathcal O ( \th^\infty ) . 
\end{split} \]
Since by Lemma \ref{l:prodop} and \eqref{eq:meta2}
\[ [ \Op ( (M^{n}_{j_n j_0 }) ^{-1} \psi )  (\tchi^n)^\tw ]^* 
\Op ( (M^{n}_{j_n j_0 }) ^{-1} \psi ) (\tchi^n)^\tw  =[ \psi^w ]^*
\psi^w  [(\tchi^n)^\tw ]^*
(\tchi^n)^\tw + {\mathcal O}_\th ( h )_{ L^2 \to L^2 } , \]
we have 
\[ \| \Op ( (M^{n}_{j_n j_0 }) ^{-1} \psi ) (\tchi^n)^\tw \| \leq \|
\psi^w \| \|
(\tchi^n)^\tw \| + \mathcal O_\th ( h ) \leq 1 + \mathcal O ( \th
), \]
and the bound \eqref{eq:unif} follows.
\end{proof}

%%%%%%%%%%%%%%%%%%%%%%%%%%%%%%%%%%%%%%%
\subsection{Inserting the final cut-off}
We now return to the operator $ \chi^w e^{ - i t n_0 P/h} \chi^w $.
From Lemma~\ref{l:iterate} we easily obtain
\be
\begin{split}
 \chi^w\, e^{-int_0 P/h} \, \chi^w u
&=\sum_{j_n,j_0} \Pi_{j_n}\cU_{j_n}^* \chi^w_{j_n}
\,[(\bT)^n]_{j_nj_0}\,\chi_{j_0}^w\,u_{j_0}  +
\cO 
 ( h^{\frac12} \tilde h^{\frac12 } )\\
&=\sum_{j_n,j_0} \Pi_{j_n}\cU_{j_n}^* \chi^w_{j_n}\,\,(\chi^0)^w
\,[(\bT)^n]_{j_nj_0}\,(\chi^0)^w\,\chi_{j_0}^w\,u_{j_0}  +
\cO ( \th^{\infty } )\,,
\end{split} 
\ee
where in the first line we used \eqref{e:chi-chi_j}, while in the second line we used \eqref{e:chi0chi}.
Hence our last step will consist in estimating the norm of the operator 
$(\chi^0)^{w} \,[\bT^n]_{j_n j_0}\, (\chi^0)^{w}$ (or its conjugate through $\cT$).
To this aim we will use Lemma \ref{l:fact}, Proposition \ref{p:meta}
and the factorization \eqref{e:factor2} to obtain
\be\label{e:factor21}
\begin{split} 
(\tchi^0)^{\tw} \cT \,[ \bT^n]_{j_n j_0} \cT^* \,(\tchi^0)^{\tw}  & =
(\tchi^0)^{\tw} \, \Op(M^n_{j_nj_0})
\,  T^{n\parallel}_{j_n   j_0}  (\tchi^0)^{\tw}
+  \cO(\th^\infty)_{L^2 \to L^2} \\
 & =  T^{n\parallel}_{j_n   j_0}  (\tchi^0)^{\tw} \, \Op(N^n_{j_nj_0})
\,(\tchi^0)^{\tw} +  \cO(\th^\infty)_{L^2 \to L^2} \,. 
\end{split} 
\ee
Here 
the operator valued symbol
$ N^{n}_{j_nj_0}(\rho)=M^n_{j_nj_0}((\tkappa^n)^{-1}(\rho))$, $\rho\in D^n_{j_nj_0}$, is a metaplectic operator quantizing the
symplectic map $S^n_{j_nj_0}(\rho) =  d_\perp \kappa^n(\rho)$. (Having
it  on the right now makes the notation slightly less cumbersome.)

In Lemma~\ref{l:normTparallel} we control the norm of
$T^{n\parallel}_{j_n j_0}$. There remains to control the norm of the factor
$(\tchi^0)^{\tw}\,\Op(N^n_{j_nj_0})\,(\tchi^0)^{\tw}$. For that it is 
enough to control the operator-valued symbol 
${\rm Op}^w_{\th,\ty}(\tchi^0)\,N^n_{j_nj_0}(\rho) \,{\rm Op}^w_{\th,\ty}(\tchi^0)$.

%%%%%%%%%%%%%%%%%%%%%%%%%%%%%
\subsubsection{Controlling the symbol}
In \eqref{eq:Rrh} we defined, for each point $\rho\in \cK\cap
D_{j_1j_0}$, 
a symplectic transformation $R(\rho)\in {\rm Sp}(2d_{\perp}, \RR )$ which maps the $y$-space to
$E^+_\rho$ and the $\teta$-space to $E^-_\rho$. This transformation
is $\eps$-close to the identity and in particular it is uniformly
bounded with respect to $\rho$. 

By iteration of this property, for any $\rho_0 \in D^n_{j_nj_0}$, 
the map $$\tS^n_{j_nj_0}(\rho_0)\defeq
R(\rho_n)^{-1}S^n_{j_nj_0}(\rho_0)R(\rho_0)$$ is
block-diagonal in the basis $(y,\eta)$:
\be\label{e:block-diagonal}
\tS^n_{j_nj_0}(\rho_0) = \begin{pmatrix}\Lambda^n(\rho_0)&0\\0&
  {}^T \!\Lambda^n(\rho_0)^{-1} \end{pmatrix}\,,
\ee
where $\Lambda^n(\rho_0)$ is expanding.
We may quantize $R(\rho)$ into metaplectic operators $A (\rho)$, and define
$$
\tN^n_{j_nj_0}(\rho_0) \defeq A(\rho_n)^{-1}\,N^n_{j_nj_0}(\rho_0)\,A (\rho_0)
$$
which quantizes $\tS^n_{j_nj_0}(\rho_0)$.

We can then rewrite
\begin{equation}
\label{eq:referee}
(\tchi^0)^\tw \,N^n_{j_nj_0}(\rho)\, (\tchi^0)^\tw = 
(\tchi^0)^\tw \,A(\rho_n)\,\tN^n_{j_nj_0}(\rho_0) \,A(\rho_0)^{-1}\,
(\tchi^0)^\tw \,.
\end{equation}
We are interested in the $L^2\to L^2$ norm of this operator. Since
metaplectic operators are unitary, and using the covariance of the Weyl
quantization with respect to metaplectic operators, this norm is equal to that of
$$
(\tchi^{0}_{\rho_n})^\tw \, (\tchi^{0}_{\rho_0} \circ
\tS^n_{j_nj_0}(\rho_0)^{-1})^\tw\,,\qquad
\tchi^{0}_{\rho_n} \defeq \tchi^0\circ R(\rho_n),\quad
\tchi^{0}_{\rho_0} \defeq \tchi^0\circ R(\rho_0).
$$
The block diagonal form of $\tS^n_{j_nj_0}(\rho_0)$ shows that
\be\label{e:product2}
[\tchi^{0}_{\rho_0}\circ (\tS^n_{j_nj_0}(\rho_0))^{-1}](\ty,\teta) =
\tchi^{0}_{\rho_0}(\Lambda^n(\rho_0)^{-1}\ty,{}^T\!\Lambda^n(\rho_0)\teta  )\,.
\ee

We may now invoke the following simple
\begin{lem}
\label{l:sympl}
Suppose that $ A $ is a $m\times m$ real invertible matrix and that $ \chi_1 ,
\chi_2 \in {\mathscr S } ( \RR^{2m} ) $. Then 
\begin{equation}
\label{eq:matrix}    \| \chi_1^w ( x, \tilde h D_x ) \chi_2^w ( A x,
{}^T\!A^{-1}  \tilde h D_x) \|_{ L^2 ( \RR^m ) \to L^2 ( \RR^m ) } \leq C  |\det A|^{\frac12}
\th^{- \frac m 2 } , \end{equation}
where $ C $ depends on certain seminorms of $ \chi_1 $ and $ \chi_2 $, but not on $ A $.
\end{lem}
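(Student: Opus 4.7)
The plan is to use that $(x,\xi)\mapsto (Ax,{}^T\!A^{-1}\xi)$ is a linear symplectomorphism of $\RR^{2m}$ (the prefactors $A$ and ${}^T\!A^{-1}$ cancel in $\sum d\xi_j\wedge dx_j$), quantized by the unitary dilation $(Uu)(x)=|\det A|^{\frac12}u(Ax)$ on $L^2(\RR^m)$. Indeed $U\,x\,U^{-1}=Ax$ as multiplication, and $U\,\th D_x\,U^{-1}={}^T\!A^{-1}\th D_x$, so by the metaplectic covariance of Weyl quantization,
\[
\chi_2^w(Ax,{}^T\!A^{-1}\th D_x)=U\,\chi_2^w(x,\th D_x)\,U^{-1}.
\]
Since $U^{-1}$ is unitary, the norm to bound equals that of $T\defeq \chi_1^w\,(\chi_2^A)^w = (\chi_1\sharp\chi_2^A)^w$, where $\chi_2^A(x,\xi)\defeq\chi_2(Ax,{}^T\!A^{-1}\xi)$.

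I would then estimate $\|T\|_{L^2\to L^2}\leq\|T\|_{\rm HS}$ using the Weyl--Plancherel identity $\|a^w\|_{\rm HS}^2=(2\pi\th)^{-m}\|a\|_{L^2(T^*\RR^m)}^2$. At leading order in $\th$, $\chi_1\sharp\chi_2^A\approx\chi_1\cdot\chi_2^A$, and the key geometric inequality is
\[
\|\chi_1\cdot\chi_2^A\|_{L^2}^2 \leq C\,|\det A|\,,
\]
with $C$ depending only on Schwartz seminorms of $\chi_1,\chi_2$. To see this, after the symplectic (volume-preserving) substitution $(y,\eta)=(Ax,{}^T\!A^{-1}\xi)$, the integrand is essentially supported on $AB_R\times{}^T\!A^{-1}B_R$ intersected with the effective support $B_{R'}\times B_{R'}$ of $\chi_2$, where $B_R$ is the effective support ball of $\chi_1$. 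Denoting by $\sigma_1,\dots,\sigma_m$ the singular values of $A$, this intersection has volume bounded by
\[
C\prod_j\min(\sigma_j,1)\min(1/\sigma_j,1)=C\prod_j\min(\sigma_j,1/\sigma_j)\leq C\prod_j\sigma_j=C\,|\det A|,
\]
where the last inequality uses $\min(\sigma,1/\sigma)\leq\sigma$ for each $\sigma>0$. Combined with Plancherel, this yields the announced bound.

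The main subtlety is justifying the leading-order reduction $\chi_1\sharp\chi_2^A\to\chi_1\chi_2^A$, since the usual semiclassical symbolic expansion has error terms involving derivatives that scale with $\|A\|$ and $\|A^{-1}\|$, which may be large. I would bypass this by working directly with the Schwartz kernel of $T$: the kernel of $\chi_1^w\,U\,\chi_2^w$ is $|\det A|^{\frac12}\int K_1(x,z)K_2(Az,y)\,dz$, where $K_j$ is the Weyl kernel of $\chi_j^w$ and the prefactor $|\det A|^{\frac12}$ comes from the kernel of the dilation $U$. Its Hilbert--Schmidt norm is then controlled by Cauchy--Schwarz together with the same geometric overlap estimate applied directly to the kernels, avoiding any symbolic asymptotics. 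No dynamical input is needed; the lemma is a purely linear-symplectic statement whose constant depends only on finitely many Schwartz seminorms of $\chi_1,\chi_2$ and is uniform in $A$.
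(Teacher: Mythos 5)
The paper proves this by a Schur test on the composed Schwartz kernel: it writes $K(x,y)=\frac{|\det A|}{(2\pi\th)^{2m}}\int\hat\chi_1\!\left(\tfrac{x+z}{2},\tfrac{x-z}{\th}\right)\hat\chi_2\!\left(\tfrac{Az+Ay}{2},\tfrac{Az-Ay}{\th}\right)dz$ and applies two \emph{different} changes of variables to the two Schur integrals, obtaining $\sup_y\int|K(x,y)|\,dx\leq C|\det A|\th^{-m}$ and $\sup_x\int|K(x,y)|\,dy\leq C$; the operator norm is at most the square root of the product, which is where the $|\det A|^{1/2}\th^{-m/2}$ comes from. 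Your route is genuinely different: you conjugate by the unitary dilation $U$ and control $\|\chi_1^wU\chi_2^w\|$ by its Hilbert--Schmidt norm, correctly flagging that a Moyal-expansion argument would not be uniform in $A$ (the derivatives of $\chi_2(Ax,{}^T\!A^{-1}\xi)$ scale with $\|A\|$, $\|A^{-1}\|$), and proposing to work directly with the kernel formula $|\det A|^{1/2}\int K_1(x,z)K_2(Az,y)\,dz$.

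This route does go through, but the last step needs more precision than ``Cauchy--Schwarz together with the geometric overlap estimate.'' If you apply Cauchy--Schwarz naively, bounding $\left|\int K_1(x,z)K_2(Az,y)\,dz\right|^2$ by $\|K_1(x,\cdot)\|_{L^2}^2\|K_2(A\cdot,y)\|_{L^2}^2$, you get only $\|T\|_{\rm HS}\leq\|\chi_1^w\|_{\rm HS}\|\chi_2^w\|_{\rm HS}=O(\th^{-m})$, which misses both a power of $\th^{m/2}$ and the entire $|\det A|^{1/2}$ gain. What does work is either (a) Cauchy--Schwarz with weight $|K_1(x,z)|$, then Fubini, reducing to $\int\langle z\rangle^{-N}\langle Az\rangle^{-N}\,dz\leq C\prod_j\min(1,1/\sigma_j)$ via the singular values $\sigma_j$ of $A$, and observing $|\det A|\prod_j\min(1,1/\sigma_j)=\prod_j\min(\sigma_j,1)\leq|\det A|$; or (b) expanding $\|T\|_{\rm HS}^2$ fully into $|\det A|\iint P(z,z')\,Q(Az,Az')\,dz\,dz'$ where $P,Q$ are the kernels of $\chi_1^{w*}\chi_1^w$ and $\chi_2^w\chi_2^{w*}$ (which are $A$-independent, with uniformly Schwartz symbols $\bar\chi_1\sharp\chi_1$, $\chi_2\sharp\bar\chi_2$), and then making the change of variables $(u,v)=(\tfrac{z+z'}{2},\tfrac{z-z'}{\th})$ to land on the same $\int\langle u\rangle^{-N}\langle Au\rangle^{-N}\,du$ estimate. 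Your singular-value computation $\prod_j\min(\sigma_j,1/\sigma_j)\leq|\det A|$ is correct and is indeed the geometric heart of the matter; what is missing is the specification of the weight in Cauchy--Schwarz and the rearrangement that makes the overlap estimate actually bite. The paper's Schur test sidesteps these choices: the $|\det A|$ and $\th^{-m}$ each emerge from a single linear change of variables, and the imbalance between the two Schur integrals produces the square root automatically.
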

We remark that the upper bound becomes nontrivial only if
$|\det A|\ll \th^{m/2}$. When that holds one cannot apply the
$\th$-symbol calculus any longer because the second
factor is not the quantization of a symbol in the class $ S ( \RR^{ 2 m} ) $,
uniformly in $ \th $ and $ A $.
When applicable, the symbol calculus would give the norm
equal to $ \max_{x,\xi}
|\chi_1(x,\xi)\,\chi_2(Ax,{}^T A^{-1}\xi)|+\cO(\th)$ -- 
see \cite[Theorem 13.13]{e-z}.

\begin{proof}
If we put $\hat\chi_j ( x ,Z ) \defeq \int_{\RR^m } \chi_j ( x , \xi ) e^{ i
  \langle Z , \xi \rangle} d \xi $, 
then the kernel of the operator in the lemma is given by 
\[ \begin{split}  K ( x, y ) & = 
\frac{ 1 }{ ( 2 \pi \tilde h )^{2m} } \int_{\RR^{3m} } 
 \chi_1 \left( \textstyle{\frac { x + z } 2} , \xi  \right) 
 \chi_2 \left ( \textstyle{\frac{ A z + A y } 2 }, {}^T\! A^{-1}  \eta
\right) e^{i \langle x - z , \xi \rangle/\tilde h  +  i  \langle
  z - y , \eta \rangle/ \tilde h  }  d \xi \, d \eta \, d z \\
& = 
\frac{ | \det A | } { ( 2 \pi \tilde h)^{2m} } \int_{\RR^m} 
\hat \chi_1 \left( \textstyle{\frac { x + z } 2} , \textstyle{\frac{
      x - z} {\tilde h}  }\right) 
\hat \chi_2 \left ( \textstyle{\frac{ A z + A y } 2} , \textstyle{
    \frac{ A z - A y } {\tilde h} }
\right) dz  .
\end{split}
\]
We will estimate the norm using Schur's Lemma and hence we need to
show that 
\begin{equation}
\label{eq:Schur}  \left(  \max_{x \in \RR^m } \int | K ( x, y ) | dy \right) 
\left( \max_{ y \in \RR^m } \int | K ( x, y ) | dx \right) \leq C^2 |
\det A |  \, \tilde h^{-m} . \end{equation}
Making a change of variables $ Z = ( x - z ) / \tilde h $ and $ X = ( x + z )
/ \tilde h $ we obtain
\[ \int | K ( x, y ) | d x \leq 
C_1 (\max_{ \RR^{2m} } | \hat\chi_2 |)\, | \det  A |  \th^{-m}\, \iint
  |\hat\chi_1 ( X , Z )| d Z d X \leq C { | \det  A |} {
\,     \tilde h^{- m }} .\]
To estimate the integral in $ y $ let 
\[   F ( Z ) =
\max_{\RR^m}  | \hat \chi_1 ( \bullet, Z ) | , \ \ 
G ( Y ) = \max_{ \RR^m } | \hat \chi_2 ( \bullet, Y ) |, 
\]
noting that our assumptions give $ F ( Z ) = {\mathcal O} ( \langle Z
\rangle^{-\infty } ) $, $ G ( Y ) = {\mathcal O} ( \langle Y
\rangle^{-\infty } ) $.
Changing variables to $ Z = ( x - z) / \tilde
h$ and $ Y = ( A z - A y ) / \tilde h $ we obtain,
\[  \int | K ( x , y ) | dy \leq C_3 \iint
F ( Z ) G ( Y ) d Z d Y \leq C . \]
This proves the upper bound \eqref{eq:matrix}.
\end{proof}

Applying Lemma \ref{l:sympl} to the product on the
right hand side of \eqref{eq:referee} we get the bound
$$
\|(\tchi^0)^\tw \,N ^n_{j_nj_0}(\rho_0)\, (\tchi^0)^\tw 
\|_{L^2(d\ty)\to L^2(d\ty)}\leq C(\tchi^{0}_{\rho_0},\tchi^{0}_{\rho_n})
|\det\Lambda^n(\rho_0)|^{- 1/2}\,\th^{-{d_{\perp}} /2}\,.
$$
Since the transformations $R(\rho)$ are uniformly bounded, the
prefactor $C(\tchi^{0}_{\rho_0},\tchi^{0}_{\rho_n})$ is uniformly bounded
with respect to  $\rho_0$. 
On the other hand, the determinant of $\Lambda^n(\rho_0)^{-1}$ can be
bounded as follows.
\begin{lem} 
Take $\eps_0>0$ arbitrary small. Then there exists
  $C_{\eps_0}>0$ such that, 
$$
\forall n\geq 1,\ \forall \rho_0\in D^n_{j_nj_0},\quad 
|\det \Lambda^n(\rho_0)^{-1}|\leq C_{\eps_0} e^{-(\lambda_0-\eps_0)nt_0}\,,
$$
where $ \lambda_0 $ was defined by \eqref{eq:t11}, and $t_0>0$ is chosen large enough, as explained in the comment following \eqref{e:Lambda-expand}.
\end{lem}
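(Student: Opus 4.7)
The plan is to reduce the statement to the definition of $\lambda_0$ in \eqref{eq:t11}. First, I would relate the determinant of the chart-dependent matrix $\Lambda^n(\rho_0)$ to the intrinsic quantity $\det(d\varphi_{nt_0}\!\restriction_{E^+_{\rho}})$ appearing in \eqref{eq:t11}. Recall that $\tS^n_{j_nj_0}(\rho_0)=R(\rho_n)^{-1}S^n_{j_nj_0}(\rho_0)R(\rho_0)$ is block-diagonal as in \eqref{e:block-diagonal}, and by construction $R(\rho)$ maps the $y$-plane onto $E^+_\rho$ (identified via $d\kappa_j$), while $S^n_{j_nj_0}(\rho_0)$ is the chart representation of $d_\perp\varphi_{nt_0}$ at the point corresponding to $\rho_0$. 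Consequently, $\Lambda^n(\rho_0)$ is the matrix of $d\varphi_{nt_0}\!\restriction_{E^+_{\rho}}:E^+_{\rho}\to E^+_{\varphi_{nt_0}(\rho)}$ expressed in certain bases of $E^+_\rho$ and $E^+_{\varphi_{nt_0}(\rho)}$ coming from $R$ (where $\rho=\kappa_{j_0}^{-1}(\rho_0)\in K^\delta$).

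Since the transformations $R(\rho)$ depend continuously on $\rho\in K^\delta$ and are uniformly bounded with uniformly bounded inverses on the compact set $K^\delta$, the ratio between $|\det\Lambda^n(\rho_0)|$ and $|\det(d\varphi_{nt_0}\!\restriction_{E^+_\rho})|$ (computed with respect to any fixed smooth choice of volume forms on the bundle $E^+$ over $K^\delta$) is bounded above and below by positive constants, uniformly in $\rho$ and $n$. Hence there is a constant $C_0>0$ with
\[
|\det\Lambda^n(\rho_0)^{-1}|\leq C_0\,\big|\det\bigl(d\varphi_{nt_0}\!\restriction_{E^+_{\rho}}\bigr)\big|^{-1}\qquad\text{for all }n\geq 1,\ \rho_0\in D^n_{j_nj_0}.
\]

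Now I would apply the very definition of $\lambda_0$ in \eqref{eq:t11}: for any $\vareps>0$ there exists $T_\vareps>0$ such that
\[
\frac{1}{t}\inf_{\rho\in K^\delta}\log\det\bigl(d\varphi_t\!\restriction_{E^+_\rho}\bigr)\geq \lambda_0-\vareps,\qquad t\geq T_\vareps,
\]
so that $\det(d\varphi_t\!\restriction_{E^+_\rho})\geq e^{(\lambda_0-\vareps)t}$ uniformly in $\rho$ for $t\geq T_\vareps$. Combined with the previous display, this yields the desired bound for all $n$ with $nt_0\geq T_\vareps$. For the finitely many remaining values $nt_0<T_\vareps$, the quantity $|\det(d\varphi_{nt_0}\!\restriction_{E^+_\rho})^{-1}|$ is bounded by a constant, by compactness of $K^\delta$ and continuity of the flow, so enlarging $C_\vareps$ if necessary absorbs these finitely many cases.

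The only mildly delicate point is the reduction in the first paragraph, i.e.\ showing that working in local Darboux charts and with the auxiliary symplectic conjugation by $R(\rho)$ does not distort the determinant by more than a uniformly bounded factor; this however is immediate once one remembers that $E^+$ is a continuous subbundle over the compact manifold $K^\delta$ and that $R(\rho)$ was chosen to be an $\cO(\eps)$-perturbation of the identity mapping a fixed linear subspace onto $E^+_\rho$. Everything else is a direct application of the definition \eqref{eq:t11}.
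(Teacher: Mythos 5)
Your proof is correct and takes the same approach the paper indicates in its one-sentence proof (``This follows from writing the definition of $\lambda_0$ using the local coordinate frames''). You have filled in the two implicit steps: that the conjugations $R(\rho)$ together with the Darboux charts induce a uniformly bounded comparison between $\det\Lambda^n(\rho_0)$ and the intrinsic determinant appearing in \eqref{eq:t11}, and that the $\liminf$ defining $\lambda_0$ yields a uniform exponential lower bound for all $t$ large enough, with compactness of $K^\delta$ handling the finitely many small values of $n$.
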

\begin{proof}
This follows from writing the definition of $\lambda_0$ using the local coordinate frames.
\end{proof}
We have thus obtained the following upper bound:
\be\label{e:bound2}
\|(\tchi^0)^\tw \,N^n_{j_nj_0}(\rho_0)\, (\tchi^0)^\tw 
\|_{L^2(d\ty)\to L^2(d\ty)} \leq C_\eps\ \th^{-d_{\perp}/2}\,e^{-(\lambda_0-\eps_0)nt_0}\,,
\ee
valid for any $n\geq 1$ and any $\rho_0\in D^n_{j_nj_0}$. In
particular, the time $n$ may arbitrarily depend on $\th$. 

When  $n\leq 
M \log 1/\th$, for $ M >0$ arbitrary large but independent of $\th$ or 
$h$, we combine this bound with \eqref{e:sum2}, Lemma
\ref{l:normTparallel} and Lemma \ref{l:factor2} to obtain 
the estimate \eqref{eq:tricky2}, which was the goal of this section.
%%%%%%%%%%%%%%%%%%%%%%%%%%%%%%%%%%%%
%%%%%%%%%%%%%%%%%%%%%%%%%%%%%%%%%%%%
\section{Microlocal weights and estimates away from the trapped set}
\label{micr}

In this section we will justify the estimates described as Step 2 of
the proof in \S \ref{out}. That will involve a quantization of 
the escape function $ G $ given in Proposition \ref{p:esc2}
with $ \epsilon = (h/\tilde h )^{\frac12} $. That means that we will
use the calculus described in \S \ref{12c}. 

\subsection{Exponential weights}
Suppose that $ g \in \CIc ( T^* X ; \RR ) $ satisfies
the following estimates:
\begin{equation}
\label{eq:assg} 
\frac{ \exp g ( \rho ) } { \exp g ( \rho' ) } \leq C \big( 1 + (
     \tilde h  / h )^{\frac12 }  d (\rho , \rho') \big)^N ,
\ \ \ \ 
\partial_\rho^\alpha g
= {\mathcal O} \big( ( h / \tilde h )^{-|\alpha|/2 } \big) ,  \ |
\alpha | >0  \,,
\end{equation}
for some $ N $ and $ C $, and for some distance function $d(\rho,\rho')$ on $ T^* X\times T^* X $
(since $ g $ is compactly supported, the estimate is independent of
the choice of $ d $ -- we can $ d $ to be the distance function 
given by a Riemannian metric). We note that $ G $ defined in Proposition \ref{p:esc2} with $ \epsilon
= ( h / \tilde h)^{\frac12} $ satisfies these assumptions.

We first recall a variant of the Bony-Chemin theorem
\cite[Th\'eor\`eme~6.4]{b-c},\cite[Theorem~8.6]{e-z}
in the form presented in \cite[Proposition 3.5, (3.21), (3.22)]{NSZ2} (as usual $g^w=\Op(g)$):
\begin{prop}
\label{p:bc}
Suppose that $ g\in \CIc ( T^* X )  $ satisfies \eqref{eq:assg}. Then
\begin{equation}
\label{eq:exOp}   \exp ( g^w) = b^w , \end{equation}
where the symbol $b(x,\xi)$ satisfies the bounds
\begin{equation}
\label{eq:weest}    | \partial^\alpha  b ( \rho ) |  \leq C_{ \alpha } \, e^{g ( \rho)}
\big( h / \tilde h \big) ^{ - |\alpha|/ 2 } 
\,, \end{equation}
in any local coordinates near the support of $ g $.

If $ \supp g \Subset U $, for an open $ U \Subset T^*X $, 
then 
\begin{equation}
\label{eq:locwe}  \partial_x^\alpha \partial_\xi^\beta  (  b (x, \xi
) - 1 )  =   {\mathcal O} ( h^\infty \langle \xi \rangle^{-\infty })  , \ \ ( x, \xi ) \in \complement U .\end{equation}

Also, if $ A \in \Psi^{\rm{comp} }  ( X ) $,  $ B \in \widetilde
\Psi_{\frac12}^{\comp} ( X ) $ and $ C \in \Psi_{\frac12}^{\comp} ( X)  $ then 
\begin{equation}
\label{eq:ABC}    \begin{split} 
& e^{  g^w  } A  e^{ - g^w } = A + i (  h \tilde h
)^{\frac12} A_1 , \ \  A_1 \in \widetilde \Psi_{\frac12}^{\comp} ( X ) ,  \ \ 
\WF_h ( A_1 ) \subset \WF_h ( A )  ,\\
& e^{ g^w} B  e^{ - g^w } = B + i \tilde h
B_1 , \ \  B_1 \in \widetilde \Psi_{\frac12}^{\comp} ( X ) , \ \ \WFh
( B_1 ) \subset \WFh ( B ) , \\
& e^{ g^w } C  e^{ - g^w } = C + i \tilde
h^{\frac12} 
C_1 , \ \  C_1 \in \Psi^{\comp}_{\frac12} ( X )  , \ \ \WFh ( C_1 )
\subset \WFh ( C ) .
\end{split}
\end{equation}
\end{prop}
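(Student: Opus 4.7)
\smallsection{Proof proposal}
The plan is to reduce everything to the standard Bony--Chemin theorem via the rescaling \eqref{eq:stre}. Set $\eps=h/\tilde h$ and introduce, in local coordinates, the rescaled weight $\tilde g(\rho)=g(\eps^{1/2}\rho)$. The two hypotheses on $g$ translate into $\tilde g$ being a uniformly bounded symbol (with $\partial^\alpha \tilde g=\cO(1)$) whose exponential satisfies $e^{\tilde g(\rho)}/e^{\tilde g(\rho')}\leq C(1+d(\rho,\rho'))^N$. Under the unitary conjugation of \eqref{eq:stre} the operator $g^w(x,hD)$ is carried to $\tilde g^w(\tilde x,\tilde h D)$, so once we establish $\exp(\tilde g^w(\tilde x,\tilde h D))=\tilde b^w(\tilde x,\tilde h D)$ with $|\partial^\alpha \tilde b|\leq C_\alpha e^{\tilde g}$, the stated bounds \eqref{eq:weest} follow after rescaling back. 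On $\RR^{2n}$ this is exactly the Bony--Chemin statement (as recalled in \cite[Thm.~8.6]{e-z}); one argues by the standard evolution trick, differentiating $B(t)\defeq\exp(t\tilde g^w)$ in $t$, showing $B(t)=b(t)^w$ with $b(t)$ solving the transport equation
\[
\partial_t b(t) = \tilde g\,\#\, b(t), \qquad b(0)=1,
\]
and propagating symbol bounds by Gronwall applied to the Moyal-product expansion. On the manifold $X$ one passes from $\RR^n$ via partition of unity, the weight $e^g$ being temperate for the Weyl calculus.

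For \eqref{eq:locwe}, since $g$ is compactly supported in $U$, outside $U$ the series $b^w = \sum_{k\geq 0} (g^w)^k/k!$ contributes only through iterated Moyal products $g\#\cdots\#g$ evaluated off $\supp g$. Each Moyal product at a point $\rho\notin U$ involves integration over phase space against a non-stationary oscillatory kernel, yielding $\cO(h^\infty\langle\xi\rangle^{-\infty})$ decay by non-stationary phase; summing over $k$ (using the uniform control on $g$ from part one) gives the claimed bound.

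For the conjugation identities \eqref{eq:ABC}, I would set $A(t)\defeq e^{tg^w}Ae^{-tg^w}$ and observe $\dot A(t)=[g^w,A(t)]$ with $A(0)=A$; the aim is to show $A(t)$ remains in the relevant symbol class uniformly in $t\in[0,1]$, with a quantitative gain. The commutator in the two-parameter calculus has the asymptotic form
\[
[g^w,a^w] = \tfrac{h}{i}\,\{g,a\}^w + h^2 R^w,
\]
and derivatives of $g$ cost a factor $(h/\tilde h)^{-1/2}=\tilde h^{1/2}h^{-1/2}$ each. When $A\in\Psi^\comp$ has an $h$-independent symbol $a$, one derivative on $g$ paired with one on $a$ in the Poisson bracket gives $h\cdot(h/\tilde h)^{-1/2}=(h\tilde h)^{1/2}$, yielding the first line of \eqref{eq:ABC}; when $B\in\widetilde\Psi^\comp_{1/2}$ the symbol $b$ itself loses $(h/\tilde h)^{-1/2}$ per derivative, so the Poisson bracket is $(h/\tilde h)^{-1}$, and multiplying by $h$ gives the sharp factor $\tilde h$; when $C\in\Psi^\comp_{1/2}$ the symbol $c$ loses $h^{-1/2}$ per derivative, giving $h\cdot(h/\tilde h)^{-1/2}\cdot h^{-1/2}=\tilde h^{1/2}$, the third line. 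Iterating and absorbing the higher Moyal terms $h^{2k+1}\{g,a\}_{2k+1}$ (all subleading by the same counting) produces the desired remainders $A_1\in\widetilde\Psi_{1/2}$, $B_1\in\widetilde\Psi_{1/2}^\comp$, $C_1\in\Psi_{1/2}^\comp$, with wavefront sets contained in those of $A,B,C$ because the commutator acts via the Poisson bracket which respects essential supports.

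The main obstacle is verifying stability of the symbol classes under the Gronwall iteration: one must check that at each step the exotic derivative losses compound in a controlled way and that the tempered weight $e^g$ (with its polynomial slow-variation bound) genuinely satisfies the Bony--Chemin admissibility for the metric $g_{h,\tilde h} = |dx|^2+|d\xi|^2$ rescaled by $(h/\tilde h)$. The rescaling to $\tilde h$-calculus makes this verification routine, but one still has to check carefully that the wavefront-set conclusions for $A_1, B_1, C_1$ survive the rescaling, which requires tracking the support of the Moyal remainder through the iteration. Once this bookkeeping is done, the three identities in \eqref{eq:ABC} follow by integrating the ODE for $A(t)$ from $0$ to $1$.
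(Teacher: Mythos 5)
The paper does not prove Proposition~\ref{p:bc} from scratch: it is quoted as a variant of the Bony--Chemin theorem \cite[Th\'eor\`eme~6.4]{b-c}, \cite[Theorem~8.6]{e-z}, in the precise two-parameter form worked out in \cite[Proposition~3.5, (3.21), (3.22)]{NSZ2}. Your plan — rescale by \eqref{eq:stre} so that the pair $(h,\tilde h)$ collapses into a genuine $\tilde h$-calculus, invoke Bony--Chemin there, and then obtain \eqref{eq:ABC} by differentiating $A(t)=e^{tg^w}Ae^{-tg^w}$ and counting derivative losses in the Moyal expansion — is exactly the route taken in those references, and your counting for the three cases ($(h\tilde h)^{1/2}$, $\tilde h$, $\tilde h^{1/2}$) is correct and sharp.

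Two points deserve more care. First, after rescaling $\tilde g(\rho)=g(\eps^{1/2}\rho)$ you claim $\tilde g$ is ``uniformly bounded''; it is not — $\tilde g=\cO(\log(\tilde h/h))$, only its derivatives are $\cO(1)$. Consequently you cannot apply the bounded-symbol version of Bony--Chemin directly; you need the version for logarithmically growing weights whose exponential $e^{\tilde g}$ is merely an order function (polynomially slowly varying), which is precisely what \eqref{eq:assg} encodes and what \cite[Proposition~3.5]{NSZ2} is tailored to. This is the version you should cite, not the cleaner but insufficient ``$g\in S(1)$'' form. Second, for \eqref{eq:locwe} you switch to the operator series $\sum(g^w)^k/k!$; this convergence is delicate precisely because $g^w$ is of size $\log(1/h)$ and the resulting symbolic series for $b$ is not an asymptotic expansion. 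It is cleaner, and consistent with the evolution approach you already set up, to argue from the transport equation $\partial_t b(t)=\tilde g\,\#\,b(t)$, $b(0)=1$: the source $\tilde g\,\#\,b(t)$ is $\cO(h^\infty\langle\xi\rangle^{-\infty})$ off any neighbourhood of $\supp g$ by pseudolocality, and integrating in $t$ preserves this. The rest — the wavefront-set containments $\WF_h(A_1)\subset\WF_h(A)$, etc.\ — follows as you say, since each term of the Moyal bracket is supported in $\supp g\cap\supp a$, leaving only $\cO(h^\infty)$ tails. You correctly identify the remaining technical burden (stability of the symbol class through the Gronwall/Bony--Chemin iteration) as the main thing still owed; that bookkeeping is what \cite{b-c} and \cite{NSZ2} supply.
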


The assumptions in \eqref{eq:assg} show that $ \exp g $ is an
order function for the $ \widetilde S_{\frac12} $ calculus -- see
\cite[\S 3.3, (3.17),(3.18)]{NSZ2}. Hence we can apply composition formulae.
In particular if $ g_j $, $ j=1,2$ satisfy
\eqref{eq:assg} then 
\begin{equation}
\label{eq:g1g2}
   \exp (g_1^w) \exp (g_2^w) = c^w , \ \ \ 
| \partial^\alpha c ( \rho ) | \leq  C_\alpha \exp ( g_1 + g_2 ) 
\big( h / { \tilde h} \big) ^{ - |\alpha|/ 2 }  . 
\end{equation}
Because of the compact supports of $ g_j $'s and because of
\eqref{eq:weest} derivatives can be taken in any local coordinates.

The  consequence of \eqref{eq:g1g2}  useful to us here is given in the following Lemma.
\begin{lem}
\label{l:L2}
Suppose that $   A \in \widetilde \Psi^{\rm{comp}}_{\frac12} (X )   $ and 
that 
\[  \widetilde \sigma  ( A ) = a + {\mathcal O}\big( ( h \tilde h)^{\frac12}  \big) _{ \widetilde
  S_{\frac12} }, \ \ \ 
 a \in \CIc ( T^* X )   \cap   \widetilde S_{\frac12} ( T^*X ) . \]
If $ U_{h , \tilde h } \defeq \{ \rho \in T^*X
  : d ( \rho, \supp a ) < ( h/\tilde h)^{\frac12} \} $, 
then 
\begin{equation}
\label{eq:normc}
\| A\,  e^{g_1^w} e^{ g_2^w} \|_{ L^2 \to L^2}  
= \sup_{ T^* X} (| a| e^{ g_1 + g_2 }
) + {\mathcal O} ( \tilde h 
\sup_{ U_{ h , \tilde h }   }  e^ { g_1 + g_2})   + {\mathcal O} ( 
h^{\frac12} \log ( 1/h) ) .
\end{equation}
\end{lem}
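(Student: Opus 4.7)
\medskip
\textbf{Proof plan.} The strategy is to realize $A\,e^{g_1^w}e^{g_2^w}$ as a single operator in $\widetilde\Psi^\comp_{\frac12}$, identify its leading symbol as $a\cdot e^{g_1+g_2}$, and then invoke a sharp $L^2$-norm estimate in the $\widetilde S_{\frac12}$ calculus.

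\emph{Step 1: combine the weights.} First I would apply \eqref{eq:g1g2} to write $e^{g_1^w}e^{g_2^w}=c^w$ with $|\partial^\alpha c|\leq C_\alpha\,e^{g_1+g_2}(h/\th)^{-|\alpha|/2}$. In particular $c$ is a symbol in the $\widetilde S_{\frac12}$ scale with order function $e^{g_1+g_2}$, and by refining the Bony--Chemin expansion one has $c=e^{g_1+g_2}+\mathcal O(h\th\,e^{g_1+g_2})$ in that class, so in particular $\widetilde\sigma(c^w)=e^{g_1+g_2}$ modulo lower-order terms.

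\emph{Step 2: localize $c^w$ to a neighbourhood of $\supp a$.} Since $a$ is compactly supported, I would introduce a cutoff $\chi\in\widetilde S_{\frac12}^\comp$ with $\chi\equiv 1$ on the $\tfrac12(h/\th)^{\frac12}$-neighbourhood of $\supp a$ and $\supp\chi\subset U_{h,\th}$. Using the quasi-locality of the $\widetilde S_{\frac12}$-Weyl product, $A\,c^w=A\,(\chi c)^w+\mathcal O(\th^\infty)_{L^2\to L^2}$, so only the restriction of $e^{g_1+g_2}$ to $U_{h,\th}$ will appear in all subprincipal errors.

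\emph{Step 3: expand the composition.} Writing $\widetilde\sigma(A)=a+(h\th)^{\frac12}r$ with $r\in\widetilde S_{\frac12}$, the composition formula of \cite[Lemma 3.6]{SZ10}/\cite[Lemma 5.1]{Da-Dy} gives
\[
A\,(\chi c)^w = \big(a\cdot\chi c\big)^w + \tfrac{h}{2i}\{a,\chi c\}^w + \th\,R_1^w + (h\th)^{\frac12}\,R_2^w,
\]
with $R_1,R_2\in\widetilde S_{\frac12}$ supported in $U_{h,\th}$ and bounded by $e^{g_1+g_2}$. Since $\chi\equiv 1$ on $\supp a$, the leading symbol is $a\cdot c$; the Poisson-bracket term contributes $\mathcal O(\th\cdot e^{g_1+g_2})$ on $U_{h,\th}$, because $\partial a,\partial c=\mathcal O((h/\th)^{-\frac12})$ times order functions yields $h\cdot(h/\th)^{-1}=\th$.

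\emph{Step 4: $L^2$-norm estimate.} Applying the sharp Calderón--Vaillancourt-type bound for $\widetilde\Psi_{\frac12}^\comp$ used in \cite[Proposition 3.5]{NSZ2},
\[
\|b^w\|_{L^2\to L^2}\leq \sup|b|+\mathcal O(h^{\frac12}\log(1/h)),\qquad b\in\widetilde S_{\frac12}^\comp,
\]
to the leading symbol $a\cdot c$ gives the main term $\sup_{T^*X}(|a|e^{g_1+g_2})$ (using $c=e^{g_1+g_2}(1+o(1))$ pointwise) with a loss $\mathcal O(h^{\frac12}\log(1/h))$. The subprincipal and remainder contributions from Step~3 are all supported in $U_{h,\th}$ and yield the error $\mathcal O(\th\,\sup_{U_{h,\th}}e^{g_1+g_2})$ (absorbing the $(h\th)^{\frac12}$ error from $r$, which is smaller).

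\emph{Expected main obstacle.} The delicate point is tracking that all subprincipal contributions in the Weyl product $a\#c$ (and its iterates) remain localized in $U_{h,\th}$, so that the error in \eqref{eq:normc} involves $\sup_{U_{h,\th}}e^{g_1+g_2}$ rather than the global supremum. This rests on the quasi-locality of the $\widetilde S_{\frac12}$ product on scales $(h/\th)^{\frac12}$, which must be combined carefully with the $h$-independent growth bound on $e^{g_1+g_2}$ in \eqref{eq:assg} so that order-function losses do not feed back into higher-order terms of the expansion.
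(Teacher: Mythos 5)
The overall plan — conjugate through the rescaling to reduce to the ordinary $\th$-calculus, identify the leading symbol as $a\cdot e^{g_1+g_2}$, and then apply the sharp $L^2$-bound for the $\widetilde S_{\frac12}$ calculus — is the same as the paper's. But there are two places where the details do not hold up.

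First, the paper does not introduce a sharp cutoff $\chi$. It instead introduces, after rescaling, the polynomially decaying order function $n_N(\tilde\rho)=(1+d(\tilde\rho,\tilde U))^{-N}$ with $\tilde a\in S(n_N)$ for every $N$, and works throughout with the combined order function $n_N\,\tilde m_1\tilde m_2$. The crucial calculation is then the elementary bound $\sup n_N\,\tilde m_1\tilde m_2\leq C\,\sup_{U_{h,\th}}e^{g_1+g_2}$ once $N>N_1+N_2$, which is exactly what turns the global supremum of the weight into a supremum over $U_{h,\th}$. Your cutoff version can be made to work, but the assertion ``$A\,c^w=A\,(\chi c)^w+\Oo(\th^\infty)_{L^2\to L^2}$'' is not quite right as stated: the tail of the Weyl product is $\Oo(\th^\infty)$ only relative to the order function $e^{g_1+g_2}$, which here can be of size $h^{-M}$, so the error is $\Oo(\th^\infty\sup e^{g_1+g_2})$ rather than $\Oo(\th^\infty)$ on the nose, and you would still have to carry out the comparison of $\sup_{T^*X}e^{g_1+g_2}$ with $\sup_{U_{h,\th}}e^{g_1+g_2}$ using the order-function estimate from \eqref{eq:assg}. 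This is precisely the role of $n_N$ in the paper's proof, and introducing it from the start is cleaner.

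Second, and more seriously, the attribution of the $\mathcal O(h^{\frac12}\log(1/h))$ error is incorrect. There is no sharp G\aa rding/Calder\'on--Vaillancourt-type bound of the form $\|b^w\|\leq\sup|b|+\Oo(h^{\frac12}\log(1/h))$ for symbols in $\widetilde S_{\frac12}^{\comp}$: after the rescaling \eqref{eq:stre} the relevant result is \cite[Theorem 13.13]{e-z} with $\th$ in place of $h$, giving $\|\Opt(\tilde b)\|=\sup|\tilde b|+\Oo(\th)\sup m$ for a symbol in $S(m)$; applied to $\tilde b/M$ (after normalizing by the order-function supremum $M\leq C\sup_{U_{h,\th}}e^{g_1+g_2}$, a step you omit) it produces the error $\Oo(\th)\sup_{U_{h,\th}}e^{g_1+g_2}$, with no $\log(1/h)$ and no $h^{\frac12}$. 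The term $\Oo(h^{\frac12}\log(1/h))$ in \eqref{eq:normc} actually comes from the piece you absorb into the smaller error in Step~3, namely the discrepancy $\widetilde\sigma(A)-a=\Oo((h\th)^{\frac12})_{\widetilde S_{\frac12}}$: that remainder, after composing with $e^{g_1^w}e^{g_2^w}$, yields a symbol in $S(\tilde m_1\tilde m_2)$, and the paper bounds its contribution by $(h/\th)^{\frac12}\sup\tilde m_1\tilde m_2$. In other words, the $h^{\frac12}\log(1/h)$ loss is tied to the subprincipal part of $A$ and the size of the order function, not to the $L^2$-calculus. Once you misplace this contribution, the remaining error in your argument has no source, and the term cannot be derived.

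Finally, a minor point: the claim ``$c=e^{g_1+g_2}+\Oo(h\th\,e^{g_1+g_2})$'' in Step~1 is too optimistic. The Moyal expansion of $e^{g_1^w}e^{g_2^w}$ has a second term $\tfrac{h}{2i}\{e^{g_1},e^{g_2}\}=\tfrac{h}{2i}e^{g_1+g_2}\{g_1,g_2\}$, and since $\partial g_j=\Oo((h/\th)^{-\frac12})$, that term is of size $\th\,e^{g_1+g_2}$, not $h\th\,e^{g_1+g_2}$. This is harmless for the lemma because the error it produces is absorbed into $\Oo(\th)\sup_{U_{h,\th}}e^{g_1+g_2}$, but it again underlines that the natural loss in this calculus is $\th$, not a power of $h$.
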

\begin{proof}
We first consider this statement in $ \RR^n $. 
We apply the standard rescaling \eqref{eq:stre} noting
that \eqref{eq:assg} imply that $ \tilde m_j = \exp \tilde g_j  $
are order functions.  If $ d $ is the Euclidean distance and if we put
\[  n_N  (\tilde \rho ) \defeq ( 1 + d ( \tilde \rho ,\widetilde U )
)^{-N} , \ \ \ 
\widetilde U  \defeq ( \tilde h / h
)^{\frac12} U_{ h ,\tilde h } , 
\]
then $ n_N $ is an order function for any $ N $, and $ \tilde a \in S (  n_N )
$ for all $ N$.
We have 
\[  A = \Op(a + ( h \tilde h )^{\frac12} a_1) , 
 \ \ \text{ for some 
$  a_1
\in \widetilde S_{ \frac12} $,} \] 
and hence, after rescaling,
\begin{gather*}   \tilde A \,  e^{ \Opt ( \tilde g_1 ) } e^{ \Opt ( \tilde g_2 ) } = 
\Opt ( \tilde b )  + ( h \tilde h)^{\frac12} \Opt (\tilde b_1 ), \\   
\tilde b \in S ( n_N \tilde m_1 \tilde m_2 ), \ \ \  \tilde b  - \tilde a e^{\tilde g_1 + \tilde
  g_2 } \in \tilde h S (  n_N \tilde m_1 \tilde m_2 ) , \ 
\ \ \tilde b_1 \in S ( \tilde m_1 \tilde m_2 ) . 
\end{gather*}
Put 
\[ \begin{split}   M = M ( h , \tilde h ) & \defeq \sup_{\IR^{2n}} n_N \tilde m_1
  \tilde m_2 \leq \sup_{\tilde \rho} \Big( \big( 1 + d ( \tilde \rho, \widetilde
  U )\big) ^{-N} e^{ \tilde g_1 ( \tilde \rho)  + \tilde g_2 ( \tilde \rho)
  } \Big) \\
& \leq \Big( \sup_{\widetilde U} e^{ \tilde g_1  + \tilde g_2  }
\Big) \left( 1 +
  \sup_{\tilde \rho}  ( 1 + C_1 C_2 d (  \tilde \rho ,
  \widetilde U )) ^{- N + N_1 + N_2 } \right) \\
&  \leq C \sup_{ U_{ h , \tilde h }   }  e^ { g_1 + g_2}  , 
\end{split} \]
where we took $ N \geq N_1 + N_2 $, with $N_j $, $ C_j $ appearing
in \eqref{eq:assg} for $ g_j $. 

We now apply  \cite[Theorem 13.13]{e-z} (with $ h $ replaced by $
\tilde h$) to $ \tilde b/ M \in S $. That gives
\[  \| \Op ( \tilde b ) \| = \sup |a| e^{ g_1 +  g_2 } 
+ {\mathcal O} ( \tilde h ) \sup_{ U_{ h , \tilde h }   }  e^ { g_1 +
  g_2}   .\]
Since $\tilde m_1 \tilde m_2 = {\mathcal O} ( \log ( 1/h ) ) $, 
applying the same argument to $ \tilde b_1 / \log ( 1/ h ) $
gives \eqref{eq:normc}. 

The calculus is invariant modulo $ {\mathcal O} ( ( h \tilde h
)^{\frac12} ) $ terms (see \eqref{eq:wisi} and \cite[\S
5.1]{Da-Dy},\cite[\S 3.2]{WuZ}), so
these local estimates on $\IR^n$ imply similar estimates on manifolds.
\end{proof}

The next result is a version of \eqref{eq:Egor} for exponentiated
weights $ g $.  It is a special case of \cite[Proposition 3.14]{NSZ2}
which follows from globalization of the local result 
 \cite[Proposition 3.11]{NSZ2}. 
We  state it using concepts recalled in \S\ref{s:fio}.
\begin{prop}
\label{p:EgorE}
Suppose that $ T \in I^{\rm{comp} } ( X \times X , \Gamma_{\kappa}' )
$ where $ \kappa : U_1 \to U_2 $, $ U_j \subset T^*X $, is a
symplectomorphism, 
that $g \in \CIc ( T^*X ) $ satisfies \eqref{eq:assg}, and that $ A
\in \widetilde \Psi^\comp_{\frac12} $.
Then 
\be
\label{eq:EgorE} 
 \begin{gathered}
 e^{ g^w } A  T =  T e^{ ( \kappa^* g)^w} B 
+  h^{\frac12}
  \tilde h^{\frac12} T_1 e^{  ( \kappa^* g)^w } C , \\
 T_1 \in I^{\rm{comp}}_h ( X\times X, \Gamma_\kappa' 
), \ \ B , C \in \widetilde \Psi_{\frac12} ( X ) ,  \ \ \sigma ( B )=
\kappa^* \sigma ( A ) . 
\end{gathered} \ee
\end{prop}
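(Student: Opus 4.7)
\smallsection{Proof plan for Proposition \ref{p:EgorE}}
The strategy follows the local argument of \cite[Proposition 3.11]{NSZ2} and combines two conjugation results: the Egorov theorem \eqref{eq:Egor12} for the class $\widetilde\Psi^{\comp}_{\frac12}$, which moves $A$ past the FIO $T$, together with a direct conjugation of the exponential weight $e^{g^w}$ through $T$ using the symplectic nature of $\kappa$. Both steps together will produce the stated decomposition, with the quantitative remainders controlled by the Bony--Chemin calculus (Proposition \ref{p:bc}) and the composition estimates \eqref{eq:g1g2}.

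First, by \eqref{eq:Egor12} applied to $A\in\widetilde\Psi^{\comp}_{\frac12}(X)$ and $T\in I^{\comp}(X\times X,\Gamma_\kappa')$, there exist operators $B_0,D_0\in\widetilde\Psi^{\comp}_{\frac12}(X)$ and $T_0'\in I^{\comp}(X\times X,\Gamma_\kappa')$ such that
\[
A\,T \;=\; T\,B_0 \;+\; h^{\frac12}\tilde h^{\frac12}\,T_0'\,D_0, \qquad \sigma(B_0)=\kappa^*\sigma(A).
\]
Multiplying on the left by $e^{g^w}$ gives
\[
e^{g^w}AT \;=\; e^{g^w}\,T\,B_0 \;+\; h^{\frac12}\tilde h^{\frac12}\,e^{g^w}T_0'\,D_0.
\]

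The main step is to show the conjugation identity
\[
e^{g^w}\,T \;=\; \widetilde T\,e^{(\kappa^* g)^w}, \qquad \widetilde T = T + h^{\frac12}\tilde h^{\frac12}R_0, \quad R_0\in I^{\comp}(X\times X,\Gamma_\kappa').
\]
To prove it, one works locally with the oscillatory integral representation \eqref{eq:osc-int} for $T$, and writes $e^{g^w}=b_1^w$, $e^{(\kappa^* g)^w}=b_2^w$ using Proposition \ref{p:bc}. By Lemma \ref{l:fio}, $e^{g^w}T$ and $T e^{(\kappa^* g)^w}$ are both given by the same oscillatory integral whose amplitude gets multiplied, respectively, by $b_1(x,d_x\psi)$ and by $b_2(y,-d_y\psi)$, modulo an error of size $h^{\frac12}\tilde h^{\frac12}$ (in the FIO class). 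On the critical set $C_\psi$ one has $(x,d_x\psi)=\kappa(y,-d_y\psi)$, so that $g(x,d_x\psi)=(\kappa^*g)(y,-d_y\psi)$ and hence $b_1(x,d_x\psi)=b_2(y,-d_y\psi)$ on $C_\psi$, modulo the same errors. Consequently the difference $e^{g^w}T - T e^{(\kappa^* g)^w}$ is itself an FIO in $I^{\comp}(X\times X,\Gamma_\kappa')$ with an extra factor $h^{\frac12}\tilde h^{\frac12}$, which is the desired $h^{\frac12}\tilde h^{\frac12}R_0 e^{(\kappa^*g)^w}$.

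Combining the two steps yields
\[
e^{g^w}AT \;=\; T\,e^{(\kappa^*g)^w}B_0 \;+\; h^{\frac12}\tilde h^{\frac12}\bigl( R_0\,e^{(\kappa^*g)^w}B_0 + e^{g^w}T_0'D_0 \bigr).
\]
For the first term one sets $B=B_0$, noting that $\sigma(B)=\kappa^*\sigma(A)$. For the error term, a further application of the Egorov identity \eqref{eq:Egor12} to $T_0'$ lets one write $e^{g^w}T_0'D_0 = T_1\,e^{(\kappa^*g)^w}C$ with $T_1\in I^{\comp}(X\times X,\Gamma_\kappa')$ and $C\in\widetilde\Psi_{\frac12}(X)$, and the $R_0$ contribution is absorbed by enlarging $T_1$ and adjusting $C$ (or equivalently by a small modification of $B$). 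The main obstacle is the conjugation identity above: the symbol $e^g$ only enjoys the exotic bounds \eqref{eq:weest}, so the FIO-calculus computation of $e^{g^w}Te^{-(\kappa^*g)^w}$ must be carried out carefully in the two-parameter $\widetilde S_{\frac12}$ class, tracking $h^{\frac12}\tilde h^{\frac12}$ remainders at every composition; this is exactly the technical content of \cite[Proposition 3.11]{NSZ2}, which applies here with the time-$t_0$ flow there replaced by the general symplectomorphism $\kappa$.
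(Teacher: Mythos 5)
The paper itself does not give a self-contained proof of Proposition~\ref{p:EgorE}: it simply declares the statement a special case of \cite[Proposition~3.14]{NSZ2}, obtained by globalizing \cite[Proposition~3.11]{NSZ2}. Your proposal reconstructs the argument from the ingredients available in the paper, and the overall architecture is correct and close in spirit to the cited reference: first move $A$ past $T$ using the $\widetilde\Psi_{\frac12}$-Egorov theorem \eqref{eq:Egor12}, then conjugate the exponential weight through $T$ using the characterization $(x,d_x\psi)=\kappa(y,-d_y\psi)$ on $C_\psi$, and finally bundle the errors into a single $h^{\frac12}\tilde h^{\frac12}T_1 e^{(\kappa^*g)^w}C$ term.

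The one point where your write-up is not quite rigorous as stated is the central conjugation identity $e^{g^w}T = Te^{(\kappa^*g)^w} + h^{\frac12}\tilde h^{\frac12}R_0\,e^{(\kappa^*g)^w}$. You invoke Lemma~\ref{l:fio} with $B=e^{g^w}$, but that lemma is stated only for $B\in\widetilde\Psi_{\frac12}(\RR^n)$, that is, for symbols uniformly bounded in $\widetilde S_{\frac12}$, whereas by Proposition~\ref{p:bc} the symbol of $e^{g^w}$ only satisfies the weighted bounds \eqref{eq:weest}, with the order function $e^{g}$ which can be polynomially large in $1/h$. So the conclusion of Lemma~\ref{l:fio} (errors $\mathcal O(h^{\frac12}\tilde h^{\frac12})$ in raw $L^2$) does not apply directly; the remainder has to be measured against the order function, which is precisely why the error in \eqref{eq:EgorE} appears with the weight $e^{(\kappa^*g)^w}$ attached. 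The standard way around this, which is how \cite[Proposition~3.11]{NSZ2} actually proceeds, is not to apply the FIO calculus to $e^{g^w}$ in one shot, but to consider the family $s\mapsto e^{sg^w}T e^{-s(\kappa^*g)^w}$, differentiate in $s$, use the (unweighted) Egorov identity \eqref{eq:Egor12} and the fact that $H_p$-type contributions cancel on $C_\psi$, and solve the resulting first-order evolution equation by iteration. You acknowledge this difficulty in your final sentence and defer it to \cite{NSZ2}, which is acceptable for a proof plan, but you should not present the Lemma~\ref{l:fio} computation as if it applies verbatim. The final bookkeeping step, where you combine $R_0 e^{(\kappa^*g)^w}B_0$ and $e^{g^w}T_0'D_0$ into the stated single-term error, also requires one more application of the Bony--Chemin commutation \eqref{eq:ABC} to pull $B_0$ through $e^{(\kappa^*g)^w}$ and a second use of the conjugation identity on $T_0'$; you gesture at this, and it is routine, but worth spelling out.
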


%%%%%%%%%%%%%%%%%%%%%%%%%%%%%%%%%%%%
\subsection{Estimates away from the trapped set}
\label{eat}

We now provide precise versions of the estimates \eqref{eq:wei}
and \eqref{eq:wei0} described in the Step 2 of the proof in 
\S \ref{out}. 

For the escape function $ G $ constructed in Proposition \ref{p:esc2} we define the operator
\begin{equation}
\label{eq:defGw}  G^w \defeq \Op ( G ) \in  \log ( \tilde h/ h ) \widetilde
\Psi_{\frac12}^{\rm{comp} } ( X ) , \ \ \widetilde \sigma (G) = G + 
{\mathcal O} \big( ( h \tilde h)^{\frac12 -} \big) _{ \widetilde S_{\frac12} } . 
\end{equation}
Since $ G $ satisfies \eqref{eq:assg}, Proposition \ref{p:bc}
describes the exponentiated operator $ e^{G^w}=e^{\Op(G)} $.
We refer to Remark \ref{rem:constants} for the requirements
on the constants in the definition of $ G $. Intuitively, 
 $ G $ is bounded (independently of $ h $ and $ \th$)
in a $(h/\th)^{\frac12} $-neighbourhood of $ \mathcal K$, 
and satisfies the growth condition $ G( \varphi_{t_0} ( \rho ) )
- G ( \rho ) \geq 2 \Gamma $ outside of a {\em smaller} 
$ (h/\th)^{\frac12} $-neighbourhood of $ \mathcal K$.

The first lemma shows that the weights are bounded near the
trapped set:
\begin{lem}
\label{l:eat1}
Suppose that $ \chi \in \CIc ( T^* X ) \cap \widetilde S_{\frac12} (
T^*X ) $ has the property
\begin{equation}
\label{eq:eat0}  \supp \chi \subset  \{ \rho \in T^*X : d ( \rho, K^{2 \delta} ) < C_0 (
h/ \tilde h)^{\frac12}  \}, 
\end{equation}
for some constant $ C_0 $ satisfying $ 0 < (C_0+1)^2 < c_2 L $, 
in the notation of \eqref{eq:es4}.

Then for some constants $ h_0, \tilde h_0 , C_1 > 0 $ 
we have for $ 0 < h < h_0 $, $ 0 < \tilde h < \tilde h_0 $, 
\begin{equation}
\label{eq:eat1}    \| \chi^w e^{ G^w} \|  \leq C_1 , \ \ \| e^{ G^w}
\chi^w \| \leq C_1 .
\end{equation}
\end{lem}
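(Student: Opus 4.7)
My plan is to apply Lemma \ref{l:L2} with $A = \chi^w$, $g_1 = G$, and $g_2 \equiv 0$, which reduces the problem to controlling $\sup_{T^*X}(|\chi|e^G)$ and $\sup_{U_{h,\tilde h}} e^G$ by a constant. With the choice $\epsilon = (h/\tilde h)^{1/2}$ fixed throughout, the goal becomes showing that $e^G = O(1)$ uniformly on the microscopic neighbourhood $\supp \chi \cup U_{h,\tilde h}$ of $K^{2\delta}$.

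The first step is to check that $G$ itself is uniformly bounded on $K^{2\delta}$. From the definition \eqref{eq:defG}, $G = C_3\Gamma\, \hG + C_4 \log(1/\epsilon)\,G_1$ on a neighbourhood of $K^{2\delta}$ (where the cut-off in \eqref{eq:defG} equals $1$). Since $\hph_\pm \sim \epsilon$ on $K^\delta$ by \eqref{eq:es1'} (because $d(\rho,C^\pm) = 0$ there), this gives $\hG = \log(M\epsilon + \hph_-) - \log(M\epsilon + \hph_+) = O(1)$ on $K^{2\delta}$; moreover $G_1 \equiv 0$ in a neighbourhood of $K^{3\delta} \supset K^{2\delta}$ by construction in Lemma \ref{l:cef3}. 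Hence $|G| = O(1)$ on $K^{2\delta}$.

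Second, I would propagate this pointwise bound from $K^{2\delta}$ to $\supp\chi$ and to $U_{h,\tilde h}$ using the order-function estimate \eqref{eq:orderf}. With $\sqrt \epsilon = (h/\tilde h)^{1/4}$, that estimate reads
\[
\frac{e^{G(\rho)}}{e^{G(\rho')}} \leq C_9 \Bigl(1 + \frac{d(\rho,\rho')}{(h/\tilde h)^{1/4}}\Bigr)^{N_1}.
\]
For any $\rho \in \supp\chi$, taking $\rho' \in K^{2\delta}$ at minimal distance gives $d(\rho,\rho') \leq C_0 (h/\tilde h)^{1/2}$, so the right-hand side is bounded by $C_9(1 + C_0 (h/\tilde h)^{1/4})^{N_1} = O(1)$ as $h/\tilde h \to 0$ (recall that $\tilde h$ is a small fixed parameter and $h \to 0$). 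Combined with the first step this yields $e^{G(\rho)} = O(1)$. An identical argument with $C_0$ replaced by $C_0 + 1$ gives the same bound on $U_{h,\tilde h} = \{\rho : d(\rho,\supp\chi) < (h/\tilde h)^{1/2}\}$.

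Substituting these uniform bounds into Lemma \ref{l:L2} produces
\[
\|\chi^w e^{G^w}\|_{L^2 \to L^2} = \sup(|\chi| e^G) + O(\tilde h)\cdot O(1) + O(h^{1/2}\log(1/h)) = O(1),
\]
which is the first inequality. The second inequality follows by taking adjoints: since $G$ is real, $e^{G^w}$ is self-adjoint, while $(\chi^w)^* = \bar\chi^w$ has a symbol with the same support as $\chi$, so the identical argument applies to $\|\bar\chi^w e^{G^w}\|$. Beyond the elementary verification that $G$ is bounded on $K^{2\delta}$, the proof is a routine combination of Lemma \ref{l:L2} with the order-function property \eqref{eq:orderf}, and no serious obstacle is expected.
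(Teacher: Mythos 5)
Your proof is correct and rests on the same key ingredient as the paper's: Lemma~\ref{l:L2} combined with the observation that $|G|$ is $O(1)$ on an $O\big((h/\tilde h)^{1/2}\big)$-neighbourhood of $K^{2\delta}$. The paper simply asserts the pointwise bound ``$|G(\rho)|\leq C_2$ for $d(\rho,K^{2\delta})<(C_0+1)(h/\tilde h)^{1/2}$'' and plugs it into Lemma~\ref{l:L2}; you recover the same bound more explicitly in two steps, first using the structure of $\hG=\log(M\epsilon+\gamma_-)-\log(M\epsilon+\gamma_+)$ together with $\gamma_\pm\sim d(\cdot,C^\pm)^2+\epsilon$ to see $G=O(1)$ on $K^{2\delta}$ itself (where $G_1\equiv 0$), and then propagating to the microscopic neighbourhood via the order-function property \eqref{eq:orderf}. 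This detour through \eqref{eq:orderf} is not strictly necessary — the same direct pointwise estimate via \eqref{eq:es1'} and \eqref{eq:Cpm3} applies at every point of the neighbourhood, not just on $K^{2\delta}$ — but it is harmless and correct; in particular your computation is insensitive to the ambiguity in the paper between $\epsilon=(h/\tilde h)^{1/2}$ (as stated at the top of \S\ref{micr}) and $\epsilon=h/\tilde h$ (as indicated before Lemma~\ref{l:regu} and as required for $G$ to satisfy \eqref{eq:assg}). The adjoint argument for the second inequality is exactly right.
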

\begin{proof}
Since   $ \widetilde \sigma ( \chi^w  ) = \chi +  {\mathcal O} (  { h^{\frac12} \tilde h^{\frac12} } ) _{
  \widetilde S_{\frac12}} $,
and 
$  | G ( \rho ) | \leq C_3 \Gamma C_2 $ for $ d ( \rho, K^{2 \delta }
  ) < ( C_0 + 1 )   ({ h}/{\tilde h})^{\frac12} $ 
(see \eqref{eq:es4} and \eqref{eq:defG}), 
the estimates in \eqref{eq:eat1} follow directly from 
Lemma \ref{l:L2}.  
\end{proof}

The main result of this section provides bounds for the
conjugated propagator. It relies heavily on the material about the
propagator for the complex absorbing potential (CAP) modified
Hamiltonian,  $ \exp ( - i t ( P - i W ) / h ) $, presented in the Appendix.
\begin{prop}
\label{p:eat}
Suppose that $ G^w $ is given by \eqref{eq:defGw} and
that $ A\in\Psi^{\rm{comp}}(X) $ satisfies
\begin{equation}
\label{eq:suppsio}  \WFh ( A ) \subset  p^{ -1} ( ( - \delta, \delta )) \cap w^{-1} ( [ 0 , \epsilon_1 ) ) , \end{equation}
for some $ \epsilon_1 > 0 $.

Then for some constants $ h_0, \tilde h_0 , C_1 > 0 $ 
we have for $ 0 < h < h_0 $, $ 0 < \tilde h < \tilde h_0 $,  
\begin{equation}
\label{eq:wei00} 
\|   e^{ - G^w } e^{ - i t_0 ( P - i W ) / h } e^{ G^w } A \| \leq e^{
2  C_1 } . \end{equation}
If $ \chi $ satisfies \eqref{eq:eat0} and in addition 
\begin{equation}
\label{eq:eat2} 
\chi ( \rho ) \equiv 1 \ \ \text{ for $ d ( \rho , K^{ 2 \delta } ) <
  \textstyle{\frac12} C_0 ( h/ \tilde h)^{\frac12} $, $ |p ( \rho )|
  \leq \delta $,}
\end{equation}
where $ C_0 $ is a large constant dependending on $ t_0 $,
then, if $ \|A \| \leq 1 $, 
\begin{equation}
\label{eq:wei1}  \| ( 1 - \chi^w  ) e^{ - G^w } e^{ - i t_0 ( P - i W
  ) / h } e^{ G^w }  A  \| <  e^{-
 \Gamma }  , 
\end{equation}
where $ \Gamma $ is the constant appearing in the definition
\eqref{eq:defG} of 
$ G $.
\end{prop}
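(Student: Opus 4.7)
The plan is to follow \cite[Proposition~3.11]{NSZ2}: conjugate the weight $e^{-G^w}$ through the CAP propagator $T \defeq e^{-it_0(P - iW)/h}$ using Proposition~\ref{p:EgorE}. Taking $g = -G$, $A = I$, and $\kappa = \varphi_{t_0}$ yields
\[
e^{-G^w}\, T \;=\; T\, e^{-(G\circ\varphi_{t_0})^w}\, B \;+\; h^{1/2}\tilde h^{1/2}\, T_1\, e^{-(G\circ\varphi_{t_0})^w}\, C,
\]
with $\sigma(B) = 1$ and $B, C \in \tilde\Psi_{1/2}$. Substituting into $e^{-it_0\tilde P_G/h} A = e^{-G^w} T e^{G^w} A$ and exploiting the $L^2$-contraction $\|T\| \leq 1$ (which is due to $W \geq 0$), both bounds reduce (modulo small remainders) to estimating the norm of
\[
(\text{prefactor}) \cdot e^{-(G\circ\varphi_{t_0})^w}\, e^{G^w}\, A,
\]
where the prefactor is either $I$ (for \eqref{eq:wei00}) or, after commuting $1-\chi^w$ through $T$ via the standard Egorov~\eqref{eq:Egor}, the operator $(1 - \chi\circ\varphi_{t_0})^w$ (for \eqref{eq:wei1}).

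By the product formula~\eqref{eq:g1g2}, $e^{-(G\circ\varphi_{t_0})^w} e^{G^w} = c^w$ with $|c(\rho)| \leq C\, e^{G(\rho) - G(\varphi_{t_0}(\rho))}$, and derivatives obeying the $\tilde S_{1/2}$ bounds. Lemma~\ref{l:L2} (in adjoint form) then bounds $\|c^w A\|$, to leading order, by the supremum of $|a(\rho)|\, e^{G(\rho) - G(\varphi_{t_0}(\rho))}$ over $\WF_h(A)$, plus errors of order $\tilde h\log(1/h) + h^{1/2}\log(1/h)$. For \eqref{eq:wei00}, the first and third bullets of \eqref{eq:es3} give $G - G\circ\varphi_{t_0} \leq C_7$ on $\WF_h(A) \subset p^{-1}((-\delta,\delta)) \cap w^{-1}([0,\epsilon_0))$, yielding the bound $e^{2C_1}$ with $2C_1 \geq C_7 + 1$. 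For \eqref{eq:wei1}, the analogous supremum now runs over $\supp(1 - \chi\circ\varphi_{t_0}) \cap \WF_h(A)$: on this set $\varphi_{t_0}(\rho)$ lies outside $\{\chi \equiv 1\}$, and by invariance of $K^{2\delta}$ under $\varphi_{t_0}$ combined with the bilipschitz character of the flow, $\rho$ itself is separated from $K^{2\delta}$ by a distance prescribed by $C_0$. Choosing $C_0$ large enough in terms of $t_0$ places $\rho$ in the regime where the middle bullet of \eqref{eq:es3} applies, so $G - G\circ\varphi_{t_0} \leq -2\Gamma$ and the supremum is at most $e^{-2\Gamma}$. Shrinking $\tilde h$ (and then $h$) absorbs the remainders and delivers the strict inequality $< e^{-\Gamma}$.

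The principal technical obstacle is controlling the combined action of the exotic weight $G \in \log(1/\epsilon)\,\tilde S_{1/2}^\comp$ and the CAP propagator in the two-parameter calculus of \S\ref{12c}. Individually $e^{G^w}$ has norm $\Oo(h^{-M})$, so na\"ive multiplication of remainders would destroy the estimates; Proposition~\ref{p:EgorE} and the composition identity~\eqref{eq:g1g2} are designed precisely to avoid this, keeping the remainders of size $\Oo(\tilde h)$ or $\Oo((h\tilde h)^{1/2})$ by working directly with the bounded symbol $c = \sigma(e^{-(G\circ\varphi_{t_0})^w} e^{G^w})$. The Egorov-type conjugation for $T$ including the CAP term must be invoked from the Appendix of the paper; once those structural results are in hand, the estimates become essentially pointwise consequences of the escape function inequalities~\eqref{eq:es3}.
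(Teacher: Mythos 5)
Your overall strategy matches the paper's --- push the weights past the propagator with an Egorov-type identity and reduce to a sup of $e^{G - G\circ\varphi_{t_0}}$ via Lemma~\ref{l:L2} and the escape-function inequalities \eqref{eq:es3} --- but there is a genuine gap in how you treat the CAP propagator. You invoke Proposition~\ref{p:EgorE} directly with $T = e^{-it_0(P - iW)/h}$ as if it were an element of $I^{\rm comp}_h(X\times X, \Gamma_{\varphi_{t_0}}')$. It is not: the CAP propagator is a global contraction semigroup, not compactly microlocalized, and its ``canonical relation'' is complex because of the $-iw$ damping. Proposition~\ref{p:EgorE} only applies to Fourier integral operators quantizing a real symplectomorphism. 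The step that makes the argument go through --- and which the paper executes via Proposition~\ref{l:mod1} in the Appendix --- is to first split $e^{-it_0(P-iW)/h}\widetilde A = e^{-it_0 P/h}\, V_{\widetilde A}(t_0) + \Oo(h^\infty)$, with $V_{\widetilde A}(t_0)\in\Psi_\delta^{\rm comp}$, $\WF_h(V_{\widetilde A}(t_0))\subset w^{-1}(0)$ and $\|V_{\widetilde A}(t_0)\|\le 1 + \Oo(\tilde h^{1/2})$, and only then apply the Egorov-type conjugation to the \emph{unitary} propagator $e^{-it_0 P/h}$. You gesture at the Appendix at the end, but the concrete calculation you write down skips this decomposition, and the damping factor $V_{\widetilde A}(t_0)$ --- which must be carried through and bounded to get the sharp constants in \eqref{eq:wei00} and especially in \eqref{eq:wei1} --- never appears.

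A secondary imprecision: for \eqref{eq:wei1} you only invoke the third bullet of \eqref{eq:es3}, which holds on $W_1\cap p^{-1}([-\delta,\delta])$. The support of $1 - \varphi_{t_0}^*\chi$ intersected with $\WF_h(A)$ also contains points outside $W_1$, where you need the fourth bullet ($G\circ\varphi_{t_0} - G \ge C_8\log(1/\epsilon)$); the paper handles this by splitting the estimate into \eqref{eq:gat1} and \eqref{eq:gat2} using an intermediate cut-off $\psi$. This is not a fundamental obstacle since the escape-function improvement is even stronger outside $W_1$, but your argument as written does not cover that region.
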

\begin{proof}
Let $ A_{-G} \defeq e^{ G^w } A e^{ - G^w } $. Then \eqref{eq:ABC} in Proposition
\ref{p:bc} shows that 
\begin{equation}
\label{eq:Ati}  A_{-G } = A + {\mathcal 
O}_{ L^2 \to L^2 } ( h^{\frac12}  ) = {\mathcal O}( 1 )  _{L^2 \to L^2 } 
\ \text{ and } \
A_{ - G}  = \widetilde A A_{-G}  + {\mathcal O}( h^\infty ) , 
\end{equation}
where $ \widetilde A $ satisfies \eqref{eq:suppsi}.
To prove \eqref{eq:wei00} we use the notation of Proposition 
\ref{l:mod1}, 
and rewrite the operator on the right hand side as 
\begin{equation}
\label{eq:split1}  \begin{split} 
  e^{ - G^w } e^{ - i t_0 ( P - i W ) / h } e^{ G^w } A & = 
 e^{-G^w} e^{ - i t_0 P/h} e^{ G^w }   e^{ - G^w}
V
_{\widetilde A} 
 ( t_0 )   A_{-G} e^{G^w  } + {\mathcal O}
( h^{\frac12} ) _{L^2 \to L^2 }  \\
& = 
 e^{-G^w} e^{ - i t_0 P/h} e^{ G^w }  C (t_0) + {\mathcal O} (
 h^{\frac12} ) _{L^2 \to L^2}\, ,
\end{split} \end{equation}
where using \eqref{eq:ABC} and 
Proposition \ref{l:mod1}, 
\[  C ( t_0 ) \in \Psi_{\frac12} ^\comp( X) , \
\ \WFh ( C(t_0)  ) 
\subset \WFh ( A ) \cap w^{-1} (0) . \]

Since 
\[    e^{ \pm G^w }  = B e^{ \pm  G^w} + ( I - B ) + {\mathcal O} ( h^\infty ) _{L^2 \to L^2}
,  \ \text{ for some $ B \in
  \Psi^\comp ( X ) $,}  \]
Proposition \ref{p:EgorE} (applied with $ A \equiv I $) and
\eqref{eq:Aexp}  show that for
some $ B_0
\in \widetilde \Psi_{\frac12} ( X ) $, 
\[  e^{ - G^w } e^{ - i t_0 P /h} e^{ G^w } =
e^{ - i t_0 P/h } e^{ -( \varphi_{t_0}^* G )^w } \, e^{ G^w  } \left( I +
h^\frac12 \tilde h^{\frac12} B_0 \right) + {\mathcal O}
( h^\infty ) _{L^2 \to L^2 }. \]
From this and \eqref{eq:split1} we see that to prove \eqref{eq:wei00} it is enough to show that
\begin{gather}
\label{eq:comple}
\begin{gathered}     e^{ -( \varphi_{t_0}^* G )^w } \, e^{ G^w  } B_1  = 
{\mathcal O} ( 1 ) _{L^2 \to L^2 } , \ \ B_1 \in \Psi^\comp ( X) , \\ \WFh ( B_1
) \subset  p^{ -1} ( ( - \delta, \delta )) \cap w^{-1} ( [ 0 , \epsilon_1 ) ) .
\end{gathered} 
\end{gather} 
Lemma \ref{l:L2} applied with $ g_1 = - \varphi_{t_0}^* G $ and $ g_2
= G$, and the property $ G  - \varphi_{t_0}^* G \leq C_7 $ in
\eqref{eq:es3}  which holds in a neighbourhood of $ \WFh ( B_1 ) $, 
give \eqref{eq:comple} and hence \eqref{eq:wei00}.

To obtain \eqref{eq:wei1} we proceed similarly but applying the property
$ \varphi_{t_0}^* G - G \geq 2 \Gamma $ 
which is valid outside a $ (h/\tilde h)^{\frac12} $
neighbourhood of $ K^\delta $ -- see \eqref{eq:es3}.  In more detail,
Proposition \ref{p:EgorE} applied with $ A = 1 - \chi^w $
gives\footnote{Strictly speaking
$ 1- \chi^w \notin \widetilde \Psi_{\frac12}^{\comp } $ but the 
operator $ A \in \Psi^{\comp} $ provides the needed localization:
we can write $ A = A_0 A + \mathcal O ( h^\infty )_{ L^2 \to L^2 } $
where $ \WF_h ( \Id - A_0 ) \cap \WFh ( A ) = \emptyset $ and apply 
Proposition \ref{p:bc} to $ A_0 $.}
\[  \begin{split} 
& ( 1- \chi^w )   e^{ - G^w } e^{ - i t_0 ( P - i W ) / h } e^{ G^w } A  = \\
& \ \ \ ( 1  - \chi^w )  e^{-G^w} e^{ - i t_0 P/h} e^{ G^w }   e^{ - G^w}
V _{\widetilde A} 
 ( t_0 )   A_{-G} e^{G^w  }+ {\mathcal O}( h^{\frac12}
) _{L^2 \to L^2
}  = \\
& 
\ \ \ e^{ - i t_0 P/h}  e^{ -( \varphi_{t_0}^* G )^w}\, e^{  G^w   }
  ( 1 - ( \varphi_{t_0}^* \chi)^w ) 
e^{ - G^w} V _{\widetilde A}   ( t_0 )   A_{-G} e^{G^w  }
+ {\mathcal O} ( h^{\frac12} ) _{L^2 \to L^2 } \, ,
\end{split} \]
where we used the boundedness established in \eqref{eq:wei00} to
control the lower order terms. Defining $ \chi_1 \defeq \varphi_{t_0}^*
\chi $, we have, by the invariance of $ K^{\delta }$ under the flow,
\[   \chi_1 \equiv 1  \ \text{ for $ \ d ( \rho, K^\delta ) \leq C_1 (
  h/\tilde h )^{\frac12}$,  $ \ |p( \rho )| \leq \delta $. } \]
Let $ \psi \in \CIc ( T^*X ) $ be equal to $ 1 $ in the set $ W_1 $ of
Proposition ~\ref{p:esc2}, and
$ \supp \psi \subset ( w^{-1} ( 0 ) )^\circ $. 

Since \eqref{eq:ABC} and Proposition \ref{l:mod1} give
\[ \begin{split}  \| e^{ - G^w} V _{\widetilde A}   ( t_0 )   A_{-G}
  e^{G^w  }\| & \leq 
\| e^{ - G^w} V _{\widetilde A}   ( t_0 )   e^{G^w  }\| \| A\| 
\leq 
\| A \| \left( \| \tilde A \| + \mathcal O_{L^2 \to L^2 } ( \tilde
  h^{\frac12} ) \right)
\\
& 
\leq  1 + {\mathcal O} ( \th^{\frac12} ) , \end{split} \]
it is enough to show that
\begin{gather}
\label{eq:gat1}   \|    e^{ -( \varphi_{t_0}^* G )^w} e^{ G^w } ( 1 - \chi_1^w ) \psi^w
\|\leq e^{ -3\Gamma/2 }, \\  
\label{eq:gat2} \|    e^{ -( \varphi_{t_0}^* G )^w } e^{ G^w } ( 1 -
\chi_1^w )  ( 1 - \psi^w ) 
B \| \leq C h^\frac12 \log(1/h) 
\end{gather}
for $ B \in \Psi^\comp ( X ) $ with $ \WFh ( B ) \subset
w^{-1} ( [ 0
, \epsilon_1/ 2]) \cap p^{-1} ( [ - \delta, \delta ] ) $, is as in 
Proposition ~\ref{p:esc2}.
Both inequalities follow from Lemma \ref{l:L2} and properties of 
$ G $ in \eqref{eq:es3}. For \eqref{eq:gat1} 
we apply \eqref{eq:normc}. For  \eqref{eq:gat2} 
we note that
\[   \varphi_{t_0}^* G - G \geq C_8 \log ( \tilde h / h ) , 
 \ \text{ on }  \  \supp ( 1 - \psi ) \cap \WFh (B)  , \]
and \eqref{eq:normc} gives the estimate with the error dominating
the leading term. 
\end{proof}

%%%%%%%%%%%%%%%%%%%%%%%%%%%%%%%%%%%%
%%%%%%%%%%%%%%%%%%%%%%%%%%%%%%%%%%%%

\section{Proof of Theorem \ref{t:1}}
\label{pr}
We first prove \eqref{eq:main} which we rewrite as follows
\begin{equation}
\label{eq:main1}
\| U_G^n A \|_{L^2 ( X ) \to L^2 ( X ) } \leq C e^{ - n
  t_0 ( \lambda_0 - \epsilon_0) / 2 } , 
\ \  M_{\epsilon_0} \log \frac 1 {\tilde h } \leq n  \leq M \log \frac 1
{\tilde h} 
\end{equation}
where
\[ 
  U_G \defeq \exp ( - i t_0 \tP_G / h ) A  = e^{ - G^w } e^{ - i t_0 ( P - i W )
  / h } e^{ G^w } , \]
with $ t_0 $ chosen in previous sections, and 
\begin{equation}
\label{eq:WFA}
A \in \Psi^\comp ( X ) , \ \ \WFh ( A )
\subset p^{-1} ( ( - \delta, \delta ) )  .
\end{equation}

To apply the estimates of the last two sections we first
observe that Proposition \ref{l:mod2} implies that 
for any $ r $ there exist $ B_j \in \Psi^\comp $, $ j = 1, \cdots r 
$, each satisfying 
\eqref{eq:WFA}, such that 
\begin{equation}
\label{eq:onemore}
U_G ^r A = \prod_{ j=1}^{r } U_G B_j  + {\mathcal O} (
 h^\infty ) _{L^2 \to L^2} , \ \ B_r = A ,
\end{equation}
where the constants in the norm estimate $ {\mathcal O} (h^\infty ) $
depend on $ r $.
This means that, for $ r $ independent of $ h$ but depending on $
\tilde h $, 
 $ U_G^r A $ can be replaced by the product of
operators $ U_G B_j $, to which estimates of the previous 
section are applicable.

We now want to decompose $ U_G^n $ in such a way that the estimates
obtained in \S\S \ref{ats},\ref{micr} can be used.
For that we define 
\begin{equation}  U_{G}=U_{G,+}+U_{G,-},\qquad U_{G,+}\defeq U_{G}\chi^{w},\qquad
U_{G,-}\defeq U_{G}(1-\chi)^{w} .  \label{eq:Upm}
\end{equation}
We note that Proposition \ref{p:bc}  shows that
\begin{equation}
\label{eq:noWW} \begin{split} 
  \chi^w e^{- G^w }  e^{ - i t ( P - i W  ) /h } e^{ G^w }& =  
e^{-G^w }  e^{G^w}   \chi^w e^{- G^w }  e^{ - i t ( P - i W  ) /h }
e^{ G^w } \\
& = e^{-G^w} \chi^w e^{ - i t ( P - iW ) /h } e^{G^w} + {\mathcal O} (
\tilde h^{\frac12} h^{\frac12} )_{L^2  \to L^2}  \\
& = e^{-G^w} \chi^w e^{ - i t ( P - iW ) /h} e^{ i t P/h } e^{ - it P/h
  } e^{G^w} + {\mathcal O} (
\tilde h^{\frac12} h^{\frac12} )_{L^2  \to L^2}  \\
& = e^{-G^w} \chi_t^w e^{ - it P/h
  } e^{G^w} + {\mathcal O} (
\tilde h^{\frac12} h^{\frac12} )_{L^2  \to L^2}  , 
\end{split} \end{equation}
where  $\chi_t^w \defeq \chi^w e^{ - i t ( P - iW ) /h} e^{ i t P/h } $.
We now use  Proposition \ref{l:mod1} applied with 
with $ P $ replaced by $ - P $,
$ A \in \Psi^\comp $ satisfying $ \WFh ( I - A) \cap \WFh ( \chi^w ) =
\emptyset $. In the notation of \eqref{eq:VAt},  $ \chi_t^w = \chi^w V_A ( t )^* $, $ V_A ( t )^*
\in \Psi_\gamma^\comp ( X ) $. From \eqref{eq:VAt} 
\[  \sigma ( V_A ( t ) ) = \exp \left( - \frac 1 h \int_0^t \varphi_{-s}^*
W \right) \sigma ( A ) ,  \]
with a full expansion of the symbol in any coordinate chart given in
Lemma \ref{l:iw}. 
For $ \rho \in \supp \chi $, $ d ( \rho , K^\delta ) = {\mathcal O} (
h^{\frac12} )$, and as $ K^\delta $ is invariant under the flow
$ d ( \varphi_{-s}  ( \rho ) , K^\delta ) = {\mathcal O}_s (
h^{\frac12} ) $.  But that means that on the support $ \chi $, 
$ \varphi_{-s}^* W \equiv 0 $ for $ s \leq t $, where $ t $ is
independent of $ h $, as long as $ h $ is small enough.  This means
that $ \WFh ( I - V_A ( t ) ^* ) \cap \WFh ( \chi^w ) = \emptyset $ 
and hence, for all $ t $, 
$ \chi_t  = \chi + \Oo_t ( h^{\frac12} )_{S_{\frac12} } $.

Returning to \eqref{eq:noWW} this means that 
for $ t \leq C \log ( 1/\tilde h ) $ (in fact for any time bounded
independently of
$ h $), we have
\begin{equation}
\label{eq:noW}  \begin{split} 
&    \chi^w e^{- G^w }  e^{ - i t ( P - i W  ) /h } e^{ G^w } = 
\chi^w e^{- G^w }  e^{ - i t P  /h } e^{ G^w }  + \Oo_t (
 h ^{\frac12} )_{L^2 \to L^2 } , \\
& e^{- G^w }  e^{ - i t ( P - i W  ) /h } e^{ G^w }   \chi^w= 
e^{- G^w }  e^{ - i t P  /h } e^{ G^w }\chi^w  + \Oo _t(
 h ^{\frac12} )_{L^2 \to L^2 } . 
\end{split} \end{equation}

Using the notation \eqref{eq:Upm}
 \begin{equation}
\begin{split} 
U_{G}^{n} & =\sum_{\vareps_{i}=\pm}U_{G,\eps_{n}}\cdots
U_{G,\eps_{2}}U_{G,\eps_{1}}\\
& =\sum_{\bep\in\Sigma(n)}U_{\bep}, \qquad
U_{\bep} \defeq U_{G,\eps_{n}}\cdots
U_{G,\eps_{2}}U_{G,\eps_{1}} , 
\end{split}
\label{eq:decompo1}\end{equation}
where we used the symbolic words
$\bep=\eps_{1}\cdots\eps_{t}\in\Sigma(n)=(\pm)^{n}$. 
Now, for each word $\bep\neq--\cdots--$, call $n_{L}(\bep)$ (
$n_{R}(\bep)$, respectively), the number of consecutive $(-)$ starting from the
left (the right, respectively): 
\[ \bep=\underbrace{-\cdots-}_{n_{L} ( \bep )
}+**\cdots**+\underbrace{-\cdots-}_{n_{R} ( \bep ) }.\]
Given integers $n_{L},n_{R}$, call $\Sigma(n,n_{L},n_{R})$ the set
of words $\bep\in\Sigma(n)$ such that $n_{L}(\bep)=n_{L}$ and $n_{R}(\bep)=n_{L}$.
The decomposition (\ref{eq:decompo1}) can be split into
\[
U_{G}^{n}=U_{G,-}^{n}+\sum_{n_{L},  n_{R} }
\; \sum_{ {\bep\in
 \Sigma(n,n_{L},n_{R})}}U_{\bep}. \]
where the sum runs over $n_{L},n_{R}\geq0$ such that 
$ { n_{L}+n_{R}\leq n-1 } $. 

We make the following observations:
\[ \begin{split}
&  \Sigma( n, n_L, n_R ) = \left\{  (-)^{n_{L}}\!+\! (-)^{n_{R}}  \right\}  , \
\text{ if $ n_L + n_R = n- 1 $, }  \\
&  \Sigma( n , n_L, n_R ) = \left\{ 
(-)^{n_{L}}\!+\!\bep' \!+\! (-)^{n_{R}} : \bep'  \in\Sigma(n-n_{L}-n_{R}-2) \right\}, \ \text{ if $ n_L + n_R < n-1 $.}
\end{split} \]
Hence, the above sum can be recast into
\begin{equation}
\label{eq:recast}
\begin{split} 
U_{G}^{n} & =
U_{G,-}^{n}+\sum_{n_{L}=0}^{n-1}\left(U_{G,-}\right)^{n_{L}}U_{G,+}\left(U_{G,-}\right)^{n-n_{L}-1}
\\ 
& \ \ \ \ \
+\sum_{n_{L},n_{R}
}\left(U_{G,-}\right)^{n_{L}}U_{G,+}\left(U_{G}\right)^{n-n_{R}-n_{L}-2}U_{G,+}\left(U_{G,-}\right)^{n_{R}}
\end{split} 
\end{equation}
where the last sum runs over $n_{L},n_{R}\geq0$ such that $
n_{L}+n_{R}\leq n-2$.

The following lemma provides the estimate for terms in the last sum
on the 
right hand side of \eqref{eq:recast}:
\begin{lem}
\label{l:oneless}
For $\tilde{h}>h>0$ small enough, the following bound holds for $
r_0 \leq r \leq C_{0} \log(1/\tilde{h})$, $ r \in \NN $, 
\begin{equation} 
\left\Vert U_{G,+}\, U_{G}^{r}\, U_{G,+}\right\Vert_{L^2 \to
 L^2} 
\leq
C\,\th^{-d_{\perp}/2}\, 
\exp \left(     - \textstyle{\frac12}  t_0 r ({\lambda_0 -\eps}) \right)
, \label{eq:bound3-bis}\end{equation}
where the constant $C$ is uniform with respect to $h$, $\tilde{h}$ and $r$. \end{lem}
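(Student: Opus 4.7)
The key identity will be
$$U_{G,+} U_G^r U_{G,+} = U_G \chi^w U_G^{r+1} \chi^w,$$
and the strategy is to reduce the sandwich $\chi^w U_G^{r+1}\chi^w$ to an expression of the form $\chi_L^w e^{-i(r+1)t_0 P/h}\chi_R^w$ covered by Proposition~\ref{p:crucial-bound}, then absorb the outer $U_G$ via a broader microlocal cut-off together with Proposition~\ref{p:eat}.

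First I would extend \eqref{eq:noW} from time $t_0$ to time $(r+1)t_0\lesssim\log(1/\tilde h)$. For a trajectory starting in $\supp\chi\subset\{d(\cdot,K^\delta)\leq R(h/\tilde h)^{1/2}\}$, hyperbolicity of the flow near $K^\delta$ gives $d(\varphi_{-s}(\rho),K^\delta)\leq R(h/\tilde h)^{1/2}e^{Cs}$; for $h\leq h_0(\tilde h)$ chosen small enough (requiring $h$ polynomially small in $\tilde h$), such trajectories remain in $\{w<\eps_0\}$ over $s\in [0,(r+1)t_0]$, so the argument leading to \eqref{eq:noW} upgrades to
$$\chi^w U_G^{r+1}\chi^w = \chi^w e^{-G^w} e^{-i(r+1)t_0 P/h} e^{G^w}\chi^w+\cO(h^\sigma)_{L^2\to L^2}.$$
Next I would absorb the weights into the cut-offs. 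Proposition~\ref{p:bc} and the $\widetilde\Psi_{\frac12}$-composition give $\chi^w e^{-G^w} = \chi_L^w + \cO((h\tilde h)^{1/2})_{L^2\to L^2}$ and $e^{G^w}\chi^w=\chi_R^w+\cO((h\tilde h)^{1/2})_{L^2\to L^2}$, with $\chi_L,\chi_R\in \widetilde S_{\frac12}^\comp$ supported in the same microscopic neighbourhood $\{d(\cdot,K^{2\delta})\leq C_0(h/\tilde h)^{1/2}\}$ as $\chi$, and satisfying $|\chi_L|,|\chi_R|\leq C$; the latter uses that $G$ is uniformly bounded on $\supp\chi$, as recalled in the proof of Lemma~\ref{l:eat1} and traceable to the construction in Proposition~\ref{p:esc2}. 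Proposition~\ref{p:crucial-bound} then applies to this pair of cut-offs and yields
$$\|\chi^w U_G^{r+1}\chi^w\|\leq C\,\tilde h^{-d_t/2}\exp\left(-\textstyle\frac12(r+1)t_0(\lambda_0-\eps)\right)+\cO(h^{\sigma'}).$$

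Finally, to absorb the leftmost $U_G$, I would pick $\psi\in\CIc(T^*X)$ with $\psi\equiv 1$ on a neighbourhood of $K^{2\delta}$ containing $\supp\chi$ and $\WFh(\psi^w)\subset p^{-1}((-\delta,\delta))\cap w^{-1}([0,\eps_0))$. The composition $\psi^w\chi^w=\chi^w+\cO(h^\infty)_{L^2\to L^2}$ yields
$$U_G\chi^w U_G^{r+1}\chi^w = (U_G\psi^w)\,\chi^w U_G^{r+1}\chi^w+\cO(h^\infty)_{L^2\to L^2}\,\|\chi^w U_G^{r+1}\chi^w\|,$$
and \eqref{eq:wei00} applied with $A=\psi^w$ bounds $\|U_G\psi^w\|\leq e^{2C_1}$. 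Combining the three estimates, and choosing $h_0(\tilde h)$ small enough that the error terms are dominated by the main term $\tilde h^{-d_t/2}e^{-(r+1)t_0(\lambda_0-\eps)/2}$, will give \eqref{eq:bound3-bis}. The main obstacle is the first step: the removal of $W$ over the long time $(r+1)t_0\sim\log(1/\tilde h)$ forces $h$ to be very small compared with $\tilde h$, and requires a careful use of hyperbolicity of the flow near $\supp\chi$ to ensure the trajectory deviation from $K^\delta$ stays below the gap separating $K^\delta$ from the damping region $\{w\geq\eps_0\}$.
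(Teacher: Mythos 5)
Your proof is correct and follows essentially the same route as the paper's: the identity $U_{G,+}U_G^r U_{G,+}=U_G\,\chi^w\,U_G^{r+1}\chi^w$, removal of $W$ via \eqref{eq:noW} over the long time $(r+1)t_0$ (requiring $h$ polynomially small in $\tilde h$, as you correctly flag and as is consistent with the standing assumption $h<h_0(\th)$), and a final appeal to Proposition~\ref{p:crucial-bound}. The paper handles the weight factors in a slightly different but equivalent form—it keeps $\chi^w e^{-G^w}$ and $e^{G^w}\chi^w$ as $\cO(1)$ factors via Lemma~\ref{l:eat1} and inserts a wider cut-off $\chi_1^w\succ\chi$ via Lemma~\ref{l:L2}, so Proposition~\ref{p:crucial-bound} is applied to $\chi_1^w e^{-i(r+1)t_0 P/h}\chi_1^w$ rather than to your symbolically composed $\chi_L^w,\chi_R^w$—and your explicit insertion of $\psi^w$ to invoke \eqref{eq:wei00} on the outer $U_G$ is a slightly more careful rendering of the paper's terse $\|U_G\|=\cO(1)$.
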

\begin{proof}
Lemma \ref{l:eat1} shows that $ e^{G^{w}}\chi^{w}, \ \chi^{w}e^{-G^{w}} = {\mathcal O} ( 1 ) _{L^2 \to
  L^2} $.
Also, Lemma \ref{l:L2} shows that for $ \chi_1 \in \tS_{\frac12}  $ with the same properties as $
\chi $ but equal to $ 1 $ on the support of $ \chi $,  
we have 
\[
\chi^{w}e^{-G^{w}}=\chi^{w}e^{-G^{w}}\chi_{1}^{w}+\cO (\tilde{h}^{\infty})_{L^2 \to L^2},
\ \ 
e^{G^{w}}\chi^{w} = \chi_{1}^{w}e^{G^{w}}\chi^{w} +\cO (\tilde{h}^{\infty})_{L^2 \to L^2} . \]
Using \eqref{eq:noW} 
the operator on the left hand side of \eqref{eq:bound3-bis} can be
rewritten as 
\[ \begin{split}
U_{G}\, \chi^{w}\left(U_{G}\right)^{r+1}\chi^{w} & =
U_G \,\chi^{w}\,e^{-G^{w}} \,e^{ - i (r+1) t_0 P /   h }\,
e^{G^{w}} \, \chi^{w} + {\mathcal O}( h^{\frac12} ) _{L^2 \to L^2} \\
 & = U_{G} \big( \chi^w  e^{-G^{w}}\chi_1^{w}+\cO(\tilde{h}^\infty 
)\big) e^{ - i (r+1) t_0 P /   h }\big(\chi_1^{w}e^{G^{w}} \chi^w
+\cO(\tilde{h}^\infty )\big)  \\ &  \ \ \ \ \ \ \ \ \ \ \ \ \ \ \ \ \ \
+ 
 {\mathcal O}( h^{\frac12}
) _{L^2 \to L^2} \end{split} 
\]
Hence,
\[
\begin{split}
\left\Vert U_{G,+}U_{G} ^{r}U_{G,+}\right\Vert_{L^2 \to
 L^2}  & \leq \left\Vert U_{G}\right\Vert \left\Vert \chi^{w}e^{-G^{w}}\right\Vert
\left\Vert \chi_{1}^{w}e^{ - i (r+1) t_0 P /   h }
  \chi_{1}^{w}\right\Vert \left\Vert e^{G^{w}}\chi^{w}\right\Vert
+\cO(\tilde{h}^{\infty}) \\ & \leq C\,\left\Vert \chi_{1}^{w}e^{ - i (r+1)
    t_0 P /   h }  \chi_{1}^{w}\right\Vert , 
\end{split} \]
where we used the fact that the operators $U_{G}$, $e^{G^{w}}\chi^{w}$
and $\chi^{w}e^{-G^{w}}$ are uniformly bounded on $L^{2}$. We can
now apply Proposition~\ref{p:crucial-bound}, 
replacing $t$ by
$(r+1)t_0 $ and $\chi$ by $\chi_{1}$. 
\end{proof}

Let us now take $n=C_{0}\log1/\tilde{h}$, with $C_{0}\gg1$. We recall
that $ \Gamma $ in \eqref{eq:wei} was assumed to satisfy $ \Gamma > t_0
{\lambda_0}/{2}$. We will use the bounds (\ref{eq:bound3-bis}),
and Proposition \ref{p:eat}: $\left\Vert U_{G,-} A \right\Vert < 
e^{-\Gamma }$.

Returning to the estimate for $ U_G^n A $ we 
first observe that \eqref{eq:onemore} and the estimates
\eqref{eq:wei1} in Proposition \ref{p:eat} give
\begin{equation}
\label{eq:wei11}  \| U_{ G,-}^m A \| \leq e^{-m \Gamma} + {\mathcal O}_r
( h^{\frac12} ) . 
\end{equation}

In \eqref{eq:recast}, for each $\ell=1,\ldots,n-2$, we group together terms with  $n_{L}+n_{R}=\ell$, and apply Lemma \ref{l:oneless} and \eqref{eq:wei11}:
\begin{align*}
\|  U_{G}^{n} A \|  & \lesssim e^{-n\Gamma}+n\,
e^{-(n-1)\Gamma} + \th^{-d_{\perp}/2} \sum_{\ell=1}^{n-2}( \ell + 1 )
\, e^{-\ell \Gamma} \, e^{ -t_0 (n-\ell )\frac{\lambda_0-\eps}{2}}  + \Oo (h^{\frac12} )\\
& \lesssim n\, e^{-n\Gamma} + \th^{-d_{\perp}/2} \,e^{-t_0 n\frac{\lambda_0-\eps}{2}}
 \sum_{\ell=1}^{n-2} ( \ell+ 1) \,
 e^{-\ell \big(\Gamma-t_0\frac{\lambda_0-\eps}{2}\big)} + \Oo (h^{\frac12} ) \\
 & \lesssim \th^{-d_{\perp}/2}\,e^{-t_0 n\frac{\lambda_0-\eps}{2}}.
 \end{align*}
 By taking $C_{0} =M_{\eps}  \gg 1/{\eps} $ we may absorb the prefactor $\th^{-d_{\perp}/2}$ and 
obtain, for $\tilde{h}>0$ small enough, 
\begin{equation} 
\label{eq:UGA1} 
\| U_{G}^{n} A \|  \leq C\,
\exp\big(-nt_0 \frac{\lambda_0-2\eps}{2}\big),\quad n\approx
M_{\eps}\log1/\tilde{h}\,.
\end{equation}

We can now complete the proof of \eqref{eq:t1} following the outline
in \S \ref{out}.  We first note that 
\eqref{eq:UGA1} gives \eqref{eq:main}, so that (see 
\eqref{eq:intt}) for 
\begin{equation}
\label{eq:zzz}   z \in [ - \delta/2  , \delta/2 ] - i h [ 0 , (
\lambda_0 - 3\epsilon_0 ) / 2 ]
\end{equation}  and $ A \in \Psi^\comp ( X ) $ satisfying
\eqref{eq:WFA}, 
\begin{gather*} ( \tP_G - z ) Q_A ( z ) =  A - R( z ) , \ \  R(z ) =  {\mathcal O} ( \tilde h ) _{L^2 \to L^2} ,
\\ Q_A ( z ) \defeq \frac i h \int_0^{T( \tilde h ) } e^{ - i t (    \tP_G - z )    / h } A dt = 
{\mathcal O}  \left( \frac{ T( \tilde h )} h \right) _{L^2 \to L^2 },
  \ \ T( \tilde h ) = M_{\eps_0} \log 1 /{\tilde h } . 
\end{gather*}
We now apply this estimate with  $ A \in \Psi^\comp ( X ) $ such that $ \sigma ( A ) \equiv 1 $ in $ p^{ -1} (
-3\delta/ 4, 3 \delta/4 ) \cap  w^{-1} ( [0, \epsilon_1 ) ) $. Then 
$ \tP_G - z \in \widetilde \Psi_{\frac 12} ^m ( X) $ is elliptic
outside of $ \WFh ( A ) $. Hence, using the $ \widetilde
\Psi_{\frac12} $ calculus of \S \ref{12c},  there exists $ \widetilde Q_A ( z
)\in  \widetilde \Psi_{\frac12} ^{-m} ( X ) $ such that 
\[ ( \tP_G - z ) \widetilde Q_{A} ( z ) = I - A + \widetilde R ( z ) , \
\ 
\widetilde R ( z ) = {\mathcal O} ( \tilde h ) _{L^2 \to
  L^2} . \]
The Fredholm operator 
$ \tP_G - z $ has index $ 0$ since $ \tP_G + i $ is invertible for small $
\tilde h$. 
It follows that for $ \tilde h $ small enough and $ z $ satisfying \eqref{eq:zzz}
\[  ( \tP_G - z )^{-1} =  ( Q_A ( z ) + \widetilde Q_A ( x ) ) ( I + R (
z) + \widetilde R_A ( z ) )^{-1} = {\mathcal O} ( 1/h)  \]

Since $ e^{ \pm G^w ( x, h D )} = {\mathcal O} ( h^{-M/2+1}) _{L^2 \to L^2}
$ for some $ M$, it follows that
\[ \begin{split} &  ( P - i W - z )^{-1} = {\mathcal O}  ( h^{-M} ) _{L^2 \to L^2} \, \ \ 
   z \in [ - \delta/2  , \delta/2 ] - i h[ 0 , (\lambda_0 - 3\epsilon_0 ) / 2 ]  , , \\ 
& ( P - i W - z ) ^{-1} = {\mathcal O} ( 1/ \Im z ) _{L^2 \to L^2}  , \ \ 
\Im z > 0 , \end{split} \]
where the second is immediate from non-negativity of $ W $ as an operator.

We now use a semiclassical maximum principle \cite[Lemma
4.7]{Bu},\cite[Lemma 2]{TZ} to obtain the bound for $ ( P - i W - z ) ^{-1} $
in \eqref{eq:t1} (after adjusting $ \delta $ and $ \epsilon_0 $). 

\medskip
\noindent
\begin{rem}  Strictly speaking we proved \eqref{eq:t1} for $ z \in [
- \delta/2 , \delta/2 ] - i h [ 0 , \lambda_0 /2 - \epsilon_1 ] $, for
any  $\epsilon_1 $, provided that $ h $ is small enough. 
\end{rem}

%%%%%%%%%%%%%%%%%%%%%%%%%%%%%%
%%%%%%%%%%%%%%%%%%%%%%%%%%%%%%

\section{The CAP reduction of scattering problems: Proof of
  Theorem \ref{t:2}} 
\label{recap}

In this section we will prove a generalization of Theorem \ref{t:2}
which applies to a variety of scattering problems.  Our approach of
reduction to estimates for the Hamiltonian  complex absorbing
potential (CAP) 
is based on the work Datchev--Vasy \cite{DatVas10_1}
(see also \cite[\S 4.1]{DyaGu}) but as 
the argument is simple and elegant we reproduce it in our slightly 
modified setting.

Let $ ( Y, g )  $ be a complete Riemannian manifold and
let 
\begin{equation}
\label{eq:P0}  P_g = -h^2 \Delta_g + V , \ \ \  V \in \CI ( Y ; \RR)
. 
\end{equation}

We make general assumption on $ ( Y, g ) $ which will allow
asymptotically Euclidean and asymptotically hyperbolic infinities. 

We assume that $ Y $ is the interior of a compact manifold $\overline Y$ with
a $ \CI $ boundary,   $ \partial Y  \neq \emptyset $.  We choose a defining function
of $ \partial Y $:
\begin{equation}
\label{eq:rho}   
\rho \in \CI (\overline Y;[0,\infty)), \ \ \  \{\rho
  = 0\} = \partial Y , \ \ \  d\rho |_{\partial  Y} \ne 0 .
\end{equation}

Let $p_g = | \xi |_g^2 + V ( x ) $ be the principal symbol of
$P_g$ and let 
\[  ( x ( t ) , \xi ( t ) ) = \exp t H_{p_g } ( x ( 0) ,
\xi ( 0 ) ) , \]
be the Hamiltonian flow (geodesic flow lifted to $ T^*Y $ when $ V
\equiv 0 $.  The first assumption on $ ( Y , g ) $ we make is a
non-trapping (convexity) assumption near infinity formulated using $
\rho $ with properties \eqref{eq:rho}:
\begin{equation}
\label{eq:conve}
 \rho ( x ( t ) )  \in  ( 0 , \epsilon_1) ,  \ \  \frac {d } { dt } \rho ( x
 (  t) )= 0  \ \Longrightarrow \ 
 \frac {d^2 } { dt^2 } \rho ( x  (  t) ) < 0 . \end{equation}

The trapped set at energy $ E \in [ - \delta, \delta ]  $ is defined as 
\[
( x, \xi )  \in K_E  \Longleftrightarrow p_g(x, \xi) = E \textrm{ and
} \exp(\RR H_{p_g}) ( x, \xi )  \textrm{ is compact in } T^*Y.
\]
We assume that the trapped set at energies $ | E| \leq \delta $, 
(see \eqref{eq:trapped1}), 
\begin{equation}
\label{eq:Kdell}  \text{ $ K^\delta$  is {\em normally}  {\em hyperbolic} in
  the sense of \eqref{eq:NH} .} 
\end{equation}

We now make analytic assumptions on $ P $. For that we first
assume that $ P_g $ can be modified inside a compact part of $Y$, to
obtain an operator 
\[   P_\infty = - h^2 \Delta_g + \widetilde V , \ \ \ \widetilde
V \in \CI ( Y ) ,   \ \ \ \widetilde V\rest_{ \rho < \epsilon_1 } = V\rest_{ \rho < \epsilon_1 }  , 
\]
with the 
following properties: for some $ s_0 > 0 $ and $ C_0 > 0 $, 
\begin{equation} 
\label{eq:prope1}
\|    \rho^{s_0} ( P_\infty - E - i 0 )^{-1} \rho^{s_0} \|_{
L^2   ( Y ) \to L^2  ( Y ) }  \leq \frac  {C_0}  h   , \  \ \ |E | \leq \delta , 
\end{equation}
and 
\begin{equation}
\label{eq:prope2}
\begin{split} & u = (P_\infty - E - i 0)^{-1} f , \ \ f \in \CIc ( Y )  \
  \ \Longrightarrow  \\
& \ \ \ \ \ \ \ \  \WFh ( u ) \setminus \WFh ( f ) \subset \exp ( [ 0 , \infty )H_{\Re p_\infty} ) \left(
  \WFh ( f ) \cap {p_\infty^{-1} ( E )} \right) , 
\end{split}
\end{equation}
where $ p_\infty \defeq | \xi|_g^2 + \widetilde V $.

We note that these assumptions do not require that the resolvent of
$P_\infty $ has a meromorhic continuation from $ \Im z > 0 $ to
the lower half-plane. A stronger conclusion will be
possible when we make that assumption: more precisely, for $ \chi \in \CIc ( Y ) $, we assume that
the resolvent $ ( P_\infty - z )^{-1} $ continues from $ \Im z >
0 $ {\em analytically} to  $ [ - \delta , \delta] - i h [  0 , C_0 ]
$, for
some $ C_0>0 $, and that for some $ N$, the following resolvent estimate holds:
\begin{equation}
\label{eq:prope3}
 \chi ( P_\infty  - z )^{-1} \chi = \Oo_{ L^2 \to L^2 } ( h^{-N
}) , \ \ z \in [ - \delta, \delta ] - i h [ 0, C_0 ] \,. 
\end{equation}
When $ P_\infty $ is chosen to be selfadjoint, interpolation
\cite[Lemma 4.7]{Bu},\cite[Lemma 2]{TZ} shows that \eqref{eq:prope3}
improves to 
\begin{equation}
\label{eq:prope4}
  \chi ( P_\infty  - z )^{-1} \chi = \Oo_{ L^2 \to L^2 } (
  h^{-1 + c_1 \Im z/ h }  \log ( 1/h) ) , \ \ z \in [ - \delta,
\delta ] - i h [ 0, C_0 ] . 
\end{equation}

We can now state a more general version of Theorem \ref{t:2}:
\begin{thm}
\label{t:2'}
Suppose that the Riemannian manifold $ ( Y , g ) $ and the potential $
V$ satisfy the assumptions \eqref{eq:conve}, \eqref{eq:Kdell},
\eqref{eq:prope1} and
\eqref{eq:prope2}.  In particular, the trapped set for the 
operator $ P = -h^2 \Delta_g + V $ is normally hyperbolic.

Then, for some constant $ C_1 $ (and $s_0 $ in \eqref{eq:prope1}), we have
 \begin{equation} 
\label{eq:t2'}
\|    \rho^{s_0} (  P_g - E - i 0 )^{-1} \rho^{s_0} \|_{
L^2   ( Y ) \to L^2  ( Y ) }  \leq C_1 \frac {\log (1/h) }  h   , \  \ \ |E | \leq \delta .
\end{equation} 

If in addition \eqref{eq:prope3} holds then, for any $ \epsilon_0 > 0 $,
$ \chi ( P_g - z )^{-1}
\chi $ can be continued analytically to $ [ - \delta/2 , \delta/2] - i h [
0 , \min (C_0, \lambda_0/2 - \epsilon_0 ) ] $, with $ \lambda_0 $ given
by \eqref{eq:t11}, 
and 
\begin{equation}
\label{eq:t2''}
 \chi ( P_g  - z )^{-1} \chi = \Oo_{ L^2 \to L^2 } ( h^{-N
}) , \ \ z \in [ - \delta/2, \delta/2 ] - i h [ 0,  \min ( C_0 ,
\lambda_0/2 - \epsilon_0 )  ] ,
\end{equation}
with the improved estimate \eqref{eq:t1} if \eqref{eq:prope4} holds.
\end{thm}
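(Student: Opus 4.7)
The plan is to construct a CAP-modified operator $\widetilde P = P - iW$ on a compact manifold $X$ to which Theorem~\ref{t:1} applies, and then to combine its resolvent estimate with the hypotheses on $P_\infty$ via the resolvent-gluing method of Datchev--Vasy \cite{DatVas10_1}. Fix $0<\eps_1<\eps_0$ small enough that the projection of $K^\delta$ to $Y$ lies in $\{\rho>2\eps_1\}$, and form a compact manifold $X$ by doubling $\{\rho > \eps_1/4\} \subset Y$ across $\{\rho=\eps_1/4\}$ (or by any other convenient compactification into which $\{\rho>\eps_1/2\}$ embeds isometrically). Extend the metric and potential smoothly to $X$, and choose $W \in \CI(X;[0,\infty))$ supported in $\{\rho<\eps_1\}$, strictly positive on $\{\rho<\eps_1/2\}$ in both copies, and satisfying \eqref{eq:aux}. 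The convexity assumption \eqref{eq:conve} guarantees that any Hamilton trajectory of $p_g$ in the energy window $[-\delta,\delta]$ which exits $\{\rho>\eps_1\}$ must reach $\{w>0\}$ in both forward and backward time, so that the trapped set of $\widetilde p$ on $X$ coincides with $K^\delta$; normal hyperbolicity is inherited from \eqref{eq:Kdell}. Theorem~\ref{t:1} therefore yields
\[
\|(\widetilde P - z)^{-1}\|_{L^2(X)\to L^2(X)} \leq C_1\, h^{-1+c_0\Im z/h}\log(1/h),\qquad z\in[-\delta+\eps,\delta-\eps]-ih[0,\lambda_0/2-\eps].
\]

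For the gluing, pick cutoffs $\chi_I,\chi_O\in\CI(Y;[0,1])$ with $\chi_I+\chi_O=1$, $\chi_I$ supported in $\{\rho>\eps_1/4\}$ and equal to $1$ on $\{\rho>\eps_1/2\}$, together with slightly thickened $\chi_I^{(0)},\chi_O^{(0)}$ satisfying $\chi_\bullet^{(0)}\chi_\bullet=\chi_\bullet$. Using that $P_g=\widetilde P$ on $\supp\chi_I^{(0)}$ (viewed in $X$) and $P_g=P_\infty$ on $\supp\chi_O^{(0)}$, the parametrix
\[
Q(z)\defeq \chi_I^{(0)}(\widetilde P-z)^{-1}\chi_I + \chi_O^{(0)}(P_\infty-z)^{-1}\chi_O
\]
satisfies $(P_g-z)Q(z)=I+E(z)$ with
\[
E(z) = [P_g,\chi_I^{(0)}](\widetilde P-z)^{-1}\chi_I + [P_g,\chi_O^{(0)}](P_\infty-z)^{-1}\chi_O.
\]
The commutators $[P_g,\chi_\bullet^{(0)}]$ are $\cO(h)$ pseudodifferential operators microlocally supported in the transition shell $T=\{\eps_1/4<\rho<\eps_1/2\}$, which is disjoint from both $K^\delta$ and $\supp w$. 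The crucial microlocal observation---propagation of singularities \eqref{eq:prope2} for $P_\infty$, and the analogous statement for $\widetilde P$ in the damping-free, trap-free region $T$---is that any input microlocalized above $T$ is pushed out of $T$ by either resolvent in bounded time, so that $\|\rho^{s_0}E(z)\rho^{-s_0}\|$ can be improved from the naive $\cO(\log(1/h))$ to $\cO(h^\infty)$. Neumann inversion of $I+E(z)$ together with the factorization $\rho^{s_0}(P_g-z)^{-1}\rho^{s_0}=\rho^{s_0}Q(z)\rho^{s_0}(I+\widetilde E(z))^{-1}$ then yields \eqref{eq:t2'}.

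Under the stronger assumption \eqref{eq:prope3}, the same construction extends meromorphically to $z\in[-\delta,\delta]-ih[0,C_0]$ with polynomial bounds; combined with meromorphy of $(\widetilde P-z)^{-1}$ on $X$, this gives the meromorphic continuation of $\chi(P_g-z)^{-1}\chi$ and the polynomial bound \eqref{eq:t2''}. Under \eqref{eq:prope4} the improved bound \eqref{eq:t1} follows by applying the semiclassical maximum principle \cite[Lemma~2]{TZ} to $h(P_g-z)^{-1}$ on the strip $\Im z\in[-h\min(C_0,\lambda_0/2-\eps),0]$, using the logarithmic bound on the real axis and the polynomial bound along the lower boundary. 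The principal obstacle throughout is the sharp estimate on $E(z)$: a bare operator-norm bound yields only $\cO(\log(1/h))$, which is insufficient for Neumann inversion, and one must invoke the non-trapping propagation in the transition region to upgrade this to $\cO(h^\infty)$---this upgrade is precisely the content of the Datchev--Vasy gluing lemma.
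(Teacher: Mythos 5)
Your setup (compactifying, introducing a CAP, applying Theorem~\ref{t:1}, and then gluing against $P_\infty$ \`a la Datchev--Vasy) is the right framework and matches the paper's approach, but the key step is wrong: the one-pass error $E(z) = E_1 + E_2$ is \emph{not} $\cO(h^\infty)$, and no amount of propagation of singularities makes it so. Take $E_1 = [P_g,\chi_I^{(0)}](\widetilde P - z)^{-1}\chi_I$: the input $\chi_I u$ lives over $\{\rho > \eps_1/4\}$ (which includes the trapped set), the resolvent output has nontrivial, outgoing wavefront over the shell where $d\chi_I^{(0)}\neq 0$, and the commutator simply multiplies this by an $\cO(h)$ symbol. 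The best you get is $\cO(h)\cdot\|(\widetilde P -z)^{-1}\| = \cO(\log(1/h))$, which is useless for Neumann inversion. The content of the Datchev--Vasy argument is not that $E(z)$ is small; it is that (i) with suitably staggered cutoffs the two error pieces are \emph{nilpotent}, $E_1^2 = E_2^2 = 0$, and (ii) the \emph{composed} cross-term $E_2E_1$ is $\cO(h^\infty)$, because the output of $R_X$ over the inner transition shell is outgoing (heading toward $\partial Y$), and by convexity \eqref{eq:conve} the outgoing resolvent $R_\infty$ keeps pushing it outward so it never reaches the second commutator's shell. One then uses the exact algebraic identity $(P_g - z)\,Q(z)\,(\Id - E_1 - E_2 + E_1 E_2) = \Id - E_2E_1(\Id - E_2)$ (exactly as the paper does with $A_X,A_\infty$) to reduce the Neumann series to the $\cO(h^\infty)$ cross-term. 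Your argument skips this entirely and asserts the wrong thing about the single-pass error.

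There is also an inconsistency in the cutoff/CAP placement that would invalidate the parametrix identity itself. You take $W$ strictly positive on $\{\rho < \eps_1/2\}$, yet require $\chi_I^{(0)}\chi_I = \chi_I$ with $\chi_I \equiv 1$ on $\{\rho > \eps_1/2\}$; hence $\chi_I^{(0)}$ is forced to equal $1$ on a set overlapping $\{W > 0\}$. But then $\chi_I^{(0)}(P_g - z)(\widetilde P - z)^{-1}\chi_I \neq \chi_I^{(0)}\chi_I$ because $P_g \neq \widetilde P$ on $\supp\chi_I^{(0)}$; an extra uncontrolled term $i\chi_I^{(0)} W(\widetilde P - z)^{-1}\chi_I$ appears. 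The same phenomenon means the transition region you call $T$ is \emph{inside} $\supp w$, not disjoint from it as you claim. The paper handles this by working on nested annuli $\{\rho\in(1,2)\}\subset\{\rho>1\}$ \emph{away} from $\supp W\subset\{\rho<1\}$, and similarly for the outer commutators versus $W_\infty\subset\{\rho>4\}$, so that both commutators live where $P_g$ agrees exactly with the appropriate model operator.
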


Before the proof we present two classes of manifolds which 
satisfy our assumptions. 
  We say $(Y,g)$ is \textit{asymptotically Euclidean}  if
\[
g = \rho^{-4}d\rho^2+ \rho^{-2}g_0 (\rho ) , \qquad \textrm{near } \partial  Y,
\]
where $g_0 ( \rho ) $ is a family of metrics on $\partial  Y$
depending smoothly on $\rho $ up to $ \rho =0$. 
We say $(Y,g)$ is  \textit{evenly asymptotically hyperbolic}  if
\[
g = \rho^{-2}d\rho^2 + \rho^{-2}g_0( \rho ), \qquad \textrm{near
} \partial   Y, \]
where $ \rho $ is as before but this time $g_0 ( \rho)  $ is a family
of metrics on $\partial  Y$ depending smoothly on 
$\rho^2$ (hence {\em even}) up to $\rho
=0$. 

In both cases the non-trapping assumption near infinity
\eqref{eq:conve} is valid: see \cite[Proof of Lemma 4.1]{DatVas10_1} for the
asymptotically hyperbolic case; the asymptotically Euclidean case
follows from the same proof, with the fourth displayed equation of the
proof replaced by \cite[(4.3)]{vz}.

For asymptotically Euclidean manifolds \eqref{eq:prope1} and
\eqref{eq:prope2} follow from the results of \cite{vz}. The
modification 
of $ V $ can be done in any way which produces a non-trapping 
classical flow: for instance we can choose $ \widetilde V = V + V_{\rm{int}}
$ where $ V_{\rm{int}} $ is a smooth, large non-negative potential (a barrier)
supported in $ \{ \rho > \epsilon_1 \} $.

To obtain
\eqref{eq:prope3} more care is needed but, under additional
assumptions
one can use an adaptation of the method of complex scaling
of Aguilar-Combes, Balslev-Combes and Simon -- see \cite{WZ} 
for the case of manifolds and for references.
The simplest example for which this is valid was considered in 
Theorem \ref{t:0}. 
For even asymptotically hyperbolic manifolds the properties
\eqref{eq:prope1}, \eqref{eq:prope2}, and \eqref{eq:prope3} all 
follow from the recent work of Vasy \cite{Va}.  

As long we are not interested in analytic 
continuation properties, the weaker assumptions  \eqref{eq:prope1} and \eqref{eq:prope2} may 
hold in the generality considered by Cardoso-Vodev \cite{CaVo}.

\begin{proof}[Proof of Theorem \ref{t:2'}]
To show how Theorem \ref{t:2'} follows from Theorem \ref{t:1}  we use
the parametrix construction of \cite[\S 3]{DatVas10_1}. For that we
first have to relate the situation in this section to the set-up in
Theorem \ref{t:1}.  It will be convenient to rescale $ \rho $ so that
in \eqref{eq:conve} we can take $ \epsilon_1 = 4 $. 

Let $X$ be any compact manifold without boundary such that
$\overline Y \subset X$ 
is a smooth embedding: for example, we may take $X$ to be the double
of $\overline Y$.  We then extend $ \rho $ to  $ \rho \in L^\infty ( X
) $ to be identically $0$ on $X \setminus Y$.  
Let $P \in \Psi^2 ( X ) $ be any selfadjoint semiclassical differential
operator satisfying
\[ P|_{\rho > 1}  = P_g|_{\rho > 1 } ,  \ \  P = -h^2 \Delta_{g_X} +
V_X , 
\]
where $ g_X $ is a Riemannian metric on $ X $ and $ V_X \in \CI ( X ;
\RR) $.

We then take $W \in \CI (X;[0,\infty))$ such that 
\[   W ( x ) = \left\{ \begin{array}{cc}   0 & \text{ for $ \rho ( x )
      > 1 $;} \\ 
1  & \text{ for $ \rho ( x ) < \frac12 $.} \end{array} \right.
\]
Let $ \widetilde V \in \CI ( Y) $ be a potential for which
\eqref{eq:prope1} and \eqref{eq:prope2} hold. We notice that one possibility to obtain the required properties for $P_\infty$
is to take a complex potential $ \widetilde V = V - i W_\infty $ where, $W_\infty \in \CI
(Y;[0,\infty))$ 
$$
W_\infty ( x ) =  \begin{cases}   0 & \text{ for $ \rho ( x )  < 4  $,} \\  
1 & \text{ for $ \rho ( x ) > 5   $,} \end{cases},\quad \text{see Fig.~\ref{f:1}}.
$$
Using the convexity property \eqref{eq:conve} it is easy to check that this operator satisfies \eqref{eq:prope1} and \eqref{eq:prope2}.
Then for $\Im z>0$, $ | \Re z | \leq \delta $,  define the following holomorphic families of operators
\[
R_X(z) = (P-iW-z)^{-1}, \qquad R_\infty(z) = (P_\infty - z)^{-1}.
\]
Due to the compactness of $X$, the family of operators $R_X(z) \colon L^2(X) \to L^2(X)$ is
meromorphic for $ z  \in \CC $. The resolvent $R_X(z)$ is estimated in Theorem~\ref{t:1}.
For the moment we only assume that $R_\infty(z) \colon L^2(Y) \to L^2(Y)$ is
holomorphic for $\Im z>0$ and satisfies \eqref{eq:prope1}, \eqref{eq:prope2}. 

Now take a cutoff function $\chi_X \in \CI ( \RR,[0,1])$ with
\[  \supp \chi_X \subset ( 2 , \infty ) , \ \ \ \supp (1-\chi_X)
\subset (-\infty , 3 ). \]  We put and 
$ \chi_\infty = 1 - \chi_X $.

\begin{figure}[ht]
\includegraphics[width=6in]{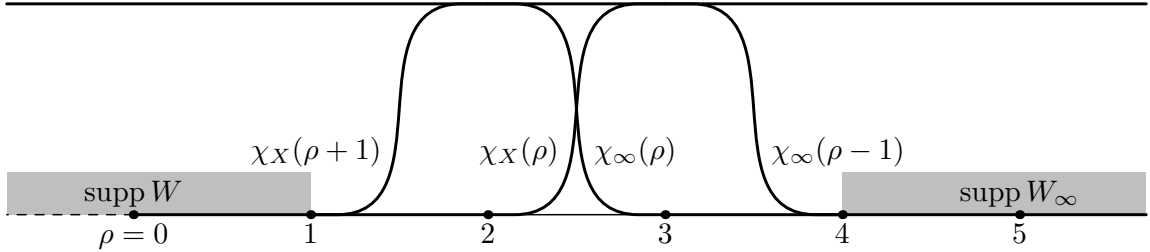}
\caption{Schematic representation of the cut-offs used in the proof of
  Theorem \ref{t:2'} as functions of $ \rho ( x ) $. The spatial
  infinity 
is represented by $ \rho ( x ) = 0$ and $ X \setminus Y $ corresponds
to $ \rho ( x ) \leq 0 $.}
\label{f:1}
\end{figure}

Our first Ansatz for the inverse of $(P_g-z)$ is the operator
\[
F(z) = \chi_X(\rho(\bullet)+1) R_X ( z ) \chi_X(\rho(\bullet)) +
\chi_\infty(\rho(\bullet)-1) R_\infty ( z) \chi_\infty(\rho(\bullet)).
\] 
Note that  $F (z) \colon L^2(Y) \to L^2(Y)$ for $\Im z >0$ since all
the cut-off functions are supported away from $ X \setminus Y $. 
Also, the support properties of $W $, $ W_\infty $ and $ \chi_X $ show
that 
\[ \begin{split} &  ( P_g - z ) \chi_X ( \rho( \bullet ) +  1 ) = \chi_X ( \rho ( \bullet ) + 1 )
( P - i  W - z ) +  [\chi_X (\rho(\bullet) + 1),h^2\Delta_g ] , \\
& ( P_g - z ) \chi_\infty  ( \rho( \bullet ) -  1 ) = \chi_\infty ( \rho ( \bullet ) - 1 )
( P_\infty  - z ) +  [\chi_\infty(\rho(\bullet)-1),h^2\Delta_g ] .
\end{split} \]
Hence
\[
(P_g - z) F( z)  = \Id + A_X( z )  + A_\infty (z ) , 
\]
where
\begin{gather*} 
A_X ( z ) = [\chi_X(\rho(\bullet)+1),h^2\Delta_g ] R_X ( z ) \chi_X(\rho(\bullet)),
\\ A_\infty ( z )  = [\chi_\infty(\rho(\bullet)-1),h^2\Delta_g ] R_\infty
( z ) \chi_\infty(\rho(\bullet)). 
\end{gather*} 
Note that $A_X ( z ) ^2 = A_\infty ( z ) ^2 = 0$, due to the support properties
\begin{equation}
\begin{split}\label{eq:support} 
\supp d \left( \chi_X(\rho(\bullet)+1) \right) \cap \supp \chi_X(\rho(\bullet)) &= \emptyset , \\
\supp d \left( \chi_\infty(\rho(\bullet)-1) \right) \cap \supp \chi_\infty(\rho(\bullet)) &=\emptyset . 
\end{split}
\end{equation}
 Moreover, thanks to assumptions \eqref{eq:conve} and
 \eqref{eq:prope2} (see \cite[Lemma 3.1]{DatVas10_1}), 
\begin{equation}
\label{eq:unifz}
\|A_\infty( z )  A_X ( z ) \|_{L^2(Y) \to L^2(Y)} = \Oo(h^\infty),
\text{ uniformly for $ \Im z > 0 $, $ | \Re z | \leq \delta$. }
\end{equation}
Consequently
\begin{gather}
(P_g - z) F ( z ) \big((\Id - A_X ( z ) - A_\infty ( z )  + A_X ( z )
A_\infty ( z )\big)  = \Id - E ( z )\,, \\
\text{where}\qquad
E ( z ) = A_\infty( z )  A_X( z )  - A_\infty ( z ) A_X ( z ) A_\infty
( z )\,.
\end{gather}
Using \eqref{eq:unifz} we see that $E(z)= \cO(h^\infty)_{L^2(Y)\to L^2(Y)}$, uniformly for $\Im z>0$, $  | \Re z | \leq \delta $.
This allows to write an explicit expression for $(P_g - z)^{-1}$:
\[
(P_g - z)^{-1} =  F ( z ) (\Id - A_X ( z ) - A_\infty ( z )  +
A_X ( z ) A_\infty ( z ) ) \sum_{n=0}^\infty E( z ) ^n\,.
\]
We now want to estimate $\| \rho^{s_0} ( P_g - z )^{-1} \rho^{s_0} \|_{L^2(Y)\to L^2(Y)} $. For this we expand the above identity using the 
expression of $F(z)$ (some terms vanish due to the support properties \eqref{eq:support}). 
Denoting $a_X=\| R_X ( z ) \| $, $a_\infty = \| \rho^{s_0} R_\infty ( z ) \rho^{s_0} \|$, we get the bound
\be\label{e:bound-a}
\| \rho^{s_0} ( P_g - z )^{-1} \rho^{s_0} \| \leq C\big( a_\infty + a_X + h a_\infty a_X + h^2 a_\infty^2 a_X \big) + \cO(h^\infty)\,.
\ee
Finally, we use the bounds  \eqref{eq:prope1} for $a_\infty$, the bound \eqref{eq:t1} for $a_X$ (with $\Im z\geq 0$), 
and obtain the desired estimate \eqref{eq:t2'}.

When the assumption \eqref{eq:prope3} holds,  the construction shows that for $ \chi \in
\CIc ( Y ) $ equal to $ 1 $ on a sufficiently large set,
\[
\chi (P_g - z)^{-1} \chi =  \chi F ( z ) \chi  \big(\Id - A_X ( z ) - A_\infty ( z )  +
A_XA_\infty ( z ) \big) \chi \sum_{n=0}^\infty ( E( z ) \chi) ^n, 
\]
continues analytically to the same region as {\em both} $ R_X ( z ) $ and 
$ \chi R_\infty ( z ) \chi $. The same expansion as above allows to bound from above $\|\chi (P_g - z)^{-1}\chi \|$
by the same expression as in \eqref{e:bound-a}, now using $a_X=\| \chi R_X ( z ) \chi \| $, $a_\infty = \| \chi R_\infty ( z ) \chi \|$. By using \eqref{eq:t1} for $a_X$, resp. \eqref{eq:prope3} for $a_\infty$  (with now $\Im z$ taking negative values), we obtain \eqref{eq:t2''}.
\end{proof}

For completeness we conclude this section with the proof of
Theorem \ref{t:0}. The conclusion is valid under more general
assumptions
of Theorem \ref{t:2'}.
\begin{proof}[Proof of Theorem \ref{t:0}]
In the notation of Theorem \ref{t:2'}, \eqref{eq:t0}  is equivalent to the estimate 
\begin{equation}
\label{eq:t0'} \|  \chi \psi ( P_g  ) e^{ - i t P_g / h } \chi \|_{ L^2 ( Y) \to L^2 (
Y )} \leq C \frac{\log 1/h}{h^{1+c_0\gamma}}  \, e^{ - \gamma t} + {\mathcal O} ( h^\infty ) , 
\qquad \gamma=\frac12(\lambda_0-\eps)\,,
\end{equation}
valid (with different constants) for any $ \chi \in \CIc ( Y ) $.
Let $ \tilde \psi \in \CI_{\rm{c}} ( \CC )  $ be an almost analytic
extension of $ \psi $, that is a function with the property that $
\tilde \psi \rest_{ \RR} = \psi $ and $ \bar \partial_z \tilde \psi (
z ) = {\mathcal O} ( | \Im z |^\infty ) $ (see for instance
\cite[Theorem 3.6]{e-z}). We can construct $ \tilde \psi $ so that 
$ \supp \tilde \psi \subset [- \delta/2, \delta/2 ] - i [ -
\delta, \delta ] ) $.
We start with Stone's formula
\[ \chi \psi ( P_g )   e^{ - i t P_g / h } \chi = 
\frac 1 { 2 \pi i  }
 \int_\RR \psi ( \lambda ) e^{ - i \lambda t } \chi  \left(
( P_g - \lambda - i0 )^{-1}  - 
( P_g - \lambda + i0 )^{-1}  \right) \chi \,d  \lambda . \]
We now write $ R_- ( z ) = ( P_g - z )^{-1} $, for the resolvent in 
$ \Im z  < 0 $ (that is for the analytic continuation of
$ ( P_g - ( z - i0 ) )^{-1} $ from
$ \Im z < 0 $) and $ R_+ ( z ) $ for the meromorphic continuation 
of the resolvent from $ \Im z > 0 $ to the lower half-plane.
We then apply Green's formula to obtain, for $ 0 \leq  \gamma < \lambda_0 / 2 $, 
\begin{equation}
\label{eq:Green} \begin{split}
\chi  \psi ( P_g)  e^{ - i t P_g  / h }  \chi
& = \frac{1}{2\pi i } \int_{ \Im z =  - \gamma h  }  e^{ - i t  z /h } \chi 
( R_+ ( z ) - R_ - ( z ) ) \chi \tilde
 \psi ( z )  d z 
\\ 
& \ \ \ \ \ \  + 
\frac{1}\pi \iint_{  - \gamma  h  \leq \Im  z \leq 0 }
 e^{ - i t  z /h  } \chi ( R_+ ( z ) - R_ - ( z ) )  \chi \bar
\partial_z \tilde \psi ( z )  dm ( z)  \,,
\end{split}
\end{equation}
where $ dm ( z ) $ is the  Lebesgue measure on $ \CC $.  
From \eqref{eq:t1} (see Theorem \ref{t:2'}) we get
\[ \|  \chi R_+ ( z ) \chi \|_{ L^2 \to L^2}  \leq  C h^{-(1  + 
c_0 \gamma) } \log ( 1/h)  , \qquad \| \chi R_- ( z ) \chi \|_{L^2
\to L^2 } \leq C / |\Im z | , \]
for $   - \gamma h \leq \Im z \leq 0 $. Inserting these bounds
in \eqref{eq:Green} gives \eqref{eq:t0'} and that 
proves (a generalized version of) Theorem \ref{t:0}.
\end{proof}

%%%%%%%%%%%%%%%%%%%%%%%%%%%%%%%%%
%%%%%%%%%%%%%%%%%%%%%%%%%%%%%%%%%
\section{Decay of correlations for contact Anosov flows: Proof of
  Theorem \ref{t:3}}
\label{decay}
Most of this section is devoted to the proof of Theorem \ref{t:3}. This proof will be obtained by adapting the proof of Theorem~\ref{t:1}, after reviewing the geometric point of view of Tsujii \cite{Ts} and 
Faure--Sj\"ostrand \cite{f-sj} (see also \cite{dadyz}). 
At the end of the section we deduce Corollary~\ref{t:4} on the decay of correlations.

\subsection{Geometric structure}
\label{gest}
Let $X $ be a smooth compact manifold of dimension $d=2k-1$, $ k \geq
2 $. We assume that $ X $ is equipped with a contact 1-form
$\alpha$, that is,  a form such that $ (d \alpha)^{\wedge  (k - 1) } \wedge
\alpha $ is non-degenerate. The Reeb vector field, $ \Xi$,  is 
defined as the unique vector field satisfying
\[   \Xi_x \in \ker d \alpha_x , \quad \alpha_x ( \Xi_x ) =1,\quad x \in X\,.
 \]
We assume that
\begin{equation}
\label{eq:Anos}
\gamma_t \defeq \exp t \Xi \ \text{ defines an Anosov flow on $ X $.} 
\end{equation}
That means that at each point $x\in X$, the tangent space has a
$ \gamma_t$-invariant decomposition
into neutral (one dimensional), 
stable and unstable subspaces (each $(k-1)$-dimensional): 
\begin{equation}
\label{eq:Ande}  T_x X = E_0 (x) \oplus E_s( x ) \oplus E_u ( x ) ,  \ \  E_0  ( x
) = \RR 
\Xi_x . \end{equation}
We note that 
$ E_u ( x ) \oplus E_s ( x ) $ span the kernel of $\alpha_x$. 

The dual decomposition is
obtained by taking $ E_0^* ( x ) $ to be the annihilator of
$ E_s ( x) \oplus E_u ( x ) $, $ E_u^* ( x) $ the annihillator
of $ E_u ( x ) \oplus E_0 ( x ) $, and similarly for $ E_s^*( x) $.
That makes $E_s^*( x ) $ dual to $E_u ( x ) $, $ E^*_u (x) $ dual to 
$ E_s ( x ) $, and $ E_0^* ( x ) $ dual to $ E_0 ( x ) $. The fiber
of the cotangent bundle then decomposes as 
\begin{equation}\label{e:cotdec}
T_x^*X = E_0^*(x)\oplus E_s^*(x)\oplus E_u^*(x) .
\end{equation}
The distributions $E_s^* (x) $ and $E_u^* ( x ) $ have only H\"older regularity, but
$E_0^* ( x ) $ and $E_s^*(x)\oplus E_u^* ( x ) $ are smooth, and 
$E_0^* ( x ) =\mathbb R \alpha_x \subset T_x^* X $.

The approach of \cite{f-sj} highlights the analogy between this
dynamical setting and the scattering theory for the Schr\"odinger equation.
The role of the Schr\"odinger operator is played by the (rescaled) generator of the flow
$ \gamma_t = \exp t \Xi $:
\begin{equation}
\label{eq:gaP}  \gamma_t^* u = e^{  i t P / h } u ,  \ \ \ u \in \CI ( X ) , \ \ \ P = - i h \Xi \,.
\end{equation}
The principal symbol of $P$ simply reads $p(x,\xi)= \xi ( \Xi_x )$.

The flow $\gamma_t$ can be lifted to a Hamiltonian flow $\varphi_t$ on $T^*X$:
$$
\varphi_t: (x,\xi) \longmapsto ( \gamma_t(x) , ^t\! d\gamma_t(x)^{-1} \xi ) ,
$$
which is generated by $p(x,\xi)$: 
$\varphi_t = \exp t H_p$. 

For each energy $E\in\IR$, the energy shell $p^{-1}(E)$ is a union of affine subspaces:
$$
p^{-1}(E) = \bigcup_{ x \in X } \{(x,\xi) : \alpha_x(\xi)=E\} = \bigcup_{ x \in X } (E\alpha_x + E^*_u(x)+E^*_s(x))\,.
$$
We note that each of these energy shells has infinite volume; as opposed to the scattering theory setting, infinity occurs here in the momentum direction (the fibers), while the spatial direction is compact.

The Anosov assumption implies that for $ t > 0 $, 
\begin{equation}
\label{e:hyperbolicity}
\begin{gathered}
| \varphi_t ( x , \xi )|\leq Ce^{-\lambda t}| \xi |,\quad \xi \in E_s^*
( x ) ,\ \ \
| \varphi_{-t}   (x , \xi )|\leq Ce^{-\lambda t}|\xi |,\quad \xi \in E_u^*( x
) ,
\end{gathered}
\end{equation}
where $ | \bullet | = | \bullet |_y  $ denotes a norm on $ T_y^* X $,
and we consider $ \varphi_t ( x, \xi ) \in T_{ \pi ( \varphi_t ( x,
  \xi))}^* X $. 
Since the action of $\varphi_t$ inside each fiber $T^*_xX$ is linear, we see that
the only trapped points in $T^*X$ must be on the line $E^*_0(x)$. More precisely, 
the trapped set at energy $ E\in\IR $ is given by 
\[      K_E = \bigcup_{ x \in X } \left( E_0^* ( x ) \cap p^{-1} ( E )
\right) = \bigcup_{ x \in X } E \alpha_x\, , 
 \]
that is $ K_E $ is the image of the section $ E \alpha $ in $T^*X$.

Stacking together energies $E \in (1-\delta,1+\delta ) $, $ 0 < \delta
< 1 $, we obtain the trapped set
$$
K^\delta = K = \bigcup_{| E - 1| <  \delta } K_E = \{ E\alpha_x,\,
x\in X,\, | E -1 | <  \delta  \}\subset T^*X\,.
$$
This trapped set is normally hyperbolic in the
sense of \eqref{eq:NH}. 

Indeed, we first note that $ K^\delta $ is  a symplectic
submanifold of $T^*X$ 
of dimension $ d+1 = 2k$.  Indeed, using $ ( x , E )$, $
x \in X $, as coordinates on $ K^\delta $, $ ( x , E ) \mapsto E
\alpha_x $, we have 
$$
\omega\rest_{K^\delta} = d ( E \alpha ) =  dE \wedge \alpha + E \, d
\alpha\, . 
$$
This form is nondegenerate for $ E \neq 0 $ since  
$\alpha$ is a contact form.

The tangent space to $ K^\delta $ is given by the image of
the differential of 
\[  X \times \RR \ni ( x , E ) \mapsto E \alpha_x   \defeq ( x , \xi=E
\beta ( x ) ) \,, \]
where we see $\beta(x)$ as the vector in $\IR^{d}$ representing $\alpha_x$.
Hence,
\begin{equation}
\label{eq:tanK}    \begin{split} 
T_{ E \alpha_x   } K^\delta  & =  E (d \alpha)_x  (T_x X,\bullet)
+ \RR \alpha_x
= \{ ( v , E\,d \beta ( x ) v + s \beta ( x
) ) : v \in T_x X , s \in \RR \} \\
& \subset T_x X \oplus T_x^*X  \equiv
T_{E\alpha_x } ( T^* X ) . 
\end{split} 
\end{equation}
Here $d\beta(x)$ can be interpreted as the Jacobian matrix ${\partial \beta}/{\partial x}$ on $\IR^{d}$.

For each $x\in X$, the symplectic orthogonal to $T_{ E \alpha_x   } K^\delta$, denoted $(T_{ E \alpha_x   } K^\delta)^\perp$, can be obtained by lifting the vectors in $\ker \alpha_x$ as follows:
$$
v\in \ker\alpha_x \mapsto L_E^\perp(v)\defeq (v,E\, ^t\!(d\beta(x))v)\in T_x X\oplus T^*_x X \equiv T_{E\alpha_x } ( T^* X )\,,
$$
where $^t\!(d\beta(x))$ denotes the transpose of $d\beta(x)$. This subspace $(T_{ E \alpha_x   } K^\delta)^\perp$ is symplectic and transverse to $K^\delta$:
$$
T_{ \rho  } K^\delta \oplus (T_{ \rho  } K^\delta)^\perp = T_\rho(T^*X),\quad \forall \rho=E\alpha_x \in K^\delta.
$$ 
Since $\ker\alpha_x= E_u(x)\oplus E_s(x)$, we can naturally split the orthogonal subspace into 
$$
(T_{ \rho  } K^\delta)^\perp = E^+_\rho\oplus E^-_\rho,\quad E^+_\rho = L_E^\perp(E_u(x)),\quad E^-_\rho = L_E^\perp(E_s(x)),\quad \rho=E\alpha_x\in K^\delta\,.
$$
The distributions $E^\pm_{E\alpha_x}$ are transverse to each other and flow-invariant. $E^+_{E\alpha_x}$ is a particular subspace of the global unstable subspace $E_u(x)\oplus E_u^*(x)\subset T_{E\alpha_x}(T^*X)$, and similarly for $E^-_{E\alpha_x}$. Hence, in the present setting, the subspaces $E^\pm_{\rho}$ exactly correspond to the subspaces described in Lemma~\ref{l:basicd}.

\subsection{Microlocally weighted spaces and the definition of
  resonances}
\label{micwes}

Following \cite{dadyz} we now review the construction
\cite{f-sj} of Hilbert spaces on which $ P - z $ (with $ P
$ given in \eqref{eq:gaP}) is a Fredholm operator for $ \Im z > -
\beta h $, for some arbitrary $ \beta >0$.

The key to the definition of these Hilbert spaces is a construction of
a weight function which we quote from \cite[Lemma 1.2]{f-sj} and
\cite[Lemma 3.1]{dadyz}. We use the notation $ E_\bullet^* =
\bigcup_{x \in X } E_\bullet^* ( x ) \subset T^* X $.
%
%%%%%%%%%%%%%%%%%%%%%%%%%%%%%% BEGIN PROP %%%%%%%%%%%%%%%%%%%%%%%%%%%%%%
%
\begin{lem}
  \label{l:escape-f-sj}
Let $U_0,U_0'$  be conic neighbourhoods of $E_0^*$, with $U_0\Subset U'_0$
and $U'_0\cap (E_u^*\cup E_s^*)=\emptyset$.
There exist real-valued functions $m\in S^0(T^*X),f_0\in S^1(T^*X)$
such that
\begin{enumerate}
\item $m$ is
positively homogeneous of degree 0 for $|\xi|\geq 1/2$,
equal to $-1,0,1$ near the intersection of $\{|\xi|\geq 1/2\}$ with
$E_u^*,E_0^*,E_s^*$, respectively, and
\begin{equation}
  \label{e:hpm}
H_p m<0\text{ near }(U'_0\setminus U_0)\cap \{|\xi|>1/2\},\quad
H_p m\leq 0\text{ on }\{|\xi|>1/2\};
\end{equation}
\item $\langle\xi\rangle^{-1}f_0\geq c>0$ for some constant $c$;
\item the function $\cG\defeq m\log f_0$
satisfies
\begin{equation}
  \label{e:hpG}
H_p\cG\leq - 2 \text{ on }\{|\xi|\geq 1/2\}\setminus U_0,\quad
H_p\cG\leq 0\text{ on }\{|\xi|\geq 1/2\}.
\end{equation}
\end{enumerate}
\end{lem}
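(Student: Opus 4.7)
The plan is to exploit the projectivization of the dual flow $\varphi_t=\exp tH_p$ to the cosphere bundle $S^*X=(T^*X\setminus 0)/\IR^+$: by the decomposition \eqref{e:cotdec} and the hyperbolicity \eqref{e:hyperbolicity}, the induced flow on $S^*X$ has the image of $E_u^*$ as a compact attractor, the image of $E_s^*$ as a compact repeller, and the image of $E_0^*=\IR\alpha$ as a normally hyperbolic saddle-type invariant submanifold. The function $m$ will be a Lyapunov function for this three-set decomposition, and $f_0$ essentially $\langle\xi\rangle$.

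First I would construct $m$. Pick pairwise disjoint open conic neighborhoods $V_u\supset E_u^*$, $V_0\supset E_0^*$, $V_s\supset E_s^*$ in $T^*X\setminus 0$, with $V_0\Subset U_0$ and $(V_u\cup V_s)\cap\overline{U_0'}=\emptyset$. Choose a positively $0$-homogeneous (for $|\xi|\geq 1/2$) smooth $m_1\in S^0$ with values in $[-1,+1]$, equal to $-1,0,+1$ on those three neighborhoods. Then set
\begin{equation*}
m(x,\xi)\defeq\frac{1}{T}\int_0^T m_1\circ\varphi_t(x,\xi)\,dt,
\end{equation*}
for a sufficiently large $T>0$. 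Since $p$ is $1$-homogeneous in $\xi$, $\varphi_t$ preserves $0$-homogeneity, so $m\in S^0$ stays $0$-homogeneous; invariance of the three sets preserves $m=-1,0,+1$ on (possibly smaller) conic neighborhoods of them. Differentiation along the flow gives the telescoping identity
\begin{equation*}
H_pm=\frac{1}{T}\bigl(m_1\circ\varphi_T-m_1\bigr).
\end{equation*}
For $T$ larger than the uniform escape time from the complement of $V_u$ into $V_u$ for orbits staying away from small neighborhoods $V_s^-, V_0^-$ of the repeller/saddle, one has $m_1\circ\varphi_T=-1\leq m_1$ on this complement, so $H_pm\leq 0$ there; on $V_s^-\cup V_0^-$ orbits stay close to the respective invariant sets, giving $H_pm\approx 0$. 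With $m_1$ interpolated monotonically along the flow in the transit region so that $m_1\circ\varphi_T\leq m_1$ holds globally, we obtain $H_pm\leq 0$ on all of $\{|\xi|\geq 1/2\}$. Strict inequality $H_pm\leq -c_0/T<0$ on $(U_0'\setminus U_0)\cap\{|\xi|\geq 1/2\}$ follows since this set is disjoint from $V_u\cup V_s$, so $m_1\geq 0$ there, while $m_1\circ\varphi_T=-1$.

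Next I would take $f_0(x,\xi)=\langle\xi\rangle$ (or any $S^1$ function with $\langle\xi\rangle^{-1}f_0\geq c$); then $\log f_0=\log\langle\xi\rangle$, and
\begin{equation*}
H_pG=(H_pm)\log f_0+m\,H_p\log\langle\xi\rangle.
\end{equation*}
The second term is bounded in $S^0$ and strictly negative near $E_u^*\cup E_s^*$: by \eqref{e:hyperbolicity}, $H_p\log\langle\xi\rangle$ approaches the positive dual expansion rate on $E_u^*$ and its negative on $E_s^*$ as $|\xi|\to\infty$, which combined with $m=\mp 1$ yields $m\,H_p\log\langle\xi\rangle\leq-\lambda/2$ on suitably small conic neighborhoods and $|\xi|$ large. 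The first term is $\leq 0$ everywhere and $\leq-(c_0/T)\log\langle\xi\rangle$ on the transition $(U_0'\setminus U_0)\cap\{|\xi|\geq 1/2\}$, dominating bounded corrections for large $|\xi|$. Combining and rescaling by suitable constants to meet the numerical value $-2$ gives $H_pG\leq -2$ on $\{|\xi|\geq 1/2\}\setminus U_0$ and $H_pG\leq 0$ throughout $\{|\xi|\geq 1/2\}$.

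The hard part will be enforcing the global monotonicity $H_pm\leq 0$ on $\{|\xi|\geq 1/2\}$ simultaneously with strict negativity on the transition region. The averaging construction reduces this to the pointwise inequality $m_1\circ\varphi_T\leq m_1$, which requires a carefully chosen interpolation of $m_1$ between the boundary values $\pm 1, 0$ on $V_u, V_s, V_0$, respecting the gradient-like nature of the projectivized flow between repeller and attractor, with the saddle in between. This is a topological consequence of the attractor/repeller/saddle decomposition on $S^*X$, and the uniform escape time $T$ exists by compactness and the Anosov property.
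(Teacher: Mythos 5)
The paper quotes this lemma from Faure--Sj\"ostrand \cite{f-sj} (Lemma 1.2) and Datchev--Dyatlov--Zworski \cite{dadyz} (Lemma 3.1) without reproducing the proof, so there is no internal argument to compare against; but your strategy --- a $0$-homogeneous seed $m_1$ with values $-1,0,+1$ near $E_u^*,E_0^*,E_s^*$, time-averaged along $\varphi_t$, then weighted by $\log f_0$ --- is of the same kind as the constructions in those sources.

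That said, the sketch has a genuine gap at exactly the point you flag as ``the hard part.'' The one-sided average $m=\frac1T\int_0^T m_1\circ\varphi_t\,dt$ telescopes to $H_pm=\frac1T(m_1\circ\varphi_T-m_1)$, so everything hinges on the global inequality $m_1\circ\varphi_T\le m_1$. This is a discrete-time Lyapunov condition on the seed itself, and neither ``$m_1$ interpolated monotonically along the flow'' nor ``topological consequence'' is an argument --- constructing such an $m_1$ is the substance of the lemma, and once you have it the averaging is nearly superfluous. The standard remedy is the two-sided average $m=\frac1{2T}\int_{-T}^T m_1\circ\varphi_t\,dt$, which telescopes to $H_pm=\frac1{2T}(m_1\circ\varphi_T-m_1\circ\varphi_{-T})$; the needed inequality $m_1\circ\varphi_T\le m_1\circ\varphi_{-T}$ then follows for $T$ large directly from the attractor/saddle/repeller structure you describe ($\varphi_T$ pushes toward $E_u^*$ where $m_1=-1$, $\varphi_{-T}$ pushes toward $E_s^*$ where $m_1=+1$, and along the stable/unstable manifolds of $E_0^*$ one side is $0$), with no monotonicity demanded of $m_1$ in the transit zone. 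Separately, your verification of $H_pG\le -2$ only works ``for large $|\xi|$'': near $E_u^*$, $E_s^*$ (where $H_pm=0$) the bound becomes $mH_p\log f_0\le -2$, and with $f_0=\langle\xi\rangle$ this gives $\mp$ the dual expansion/contraction rate, not necessarily $2$; near $|\xi|\approx 1/2$ the transit term $(H_pm)\log\langle\xi\rangle$ is nearly zero because $\log\langle\xi\rangle$ is. The cited sources obtain $H_pG\le -c$ for a constant $c>0$ depending on the Anosov data and absorb it into the free parameter $t$ of $H_{tG}$; the literal $-2$ is cosmetic, but uniformity down to $|\xi|=1/2$ requires e.g.\ replacing $f_0$ by $C\langle\xi\rangle$ with $C$ large so that $\log f_0\ge\log C$ dominates wherever $H_pm\le -\epsilon_0$.
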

%%%%%%%%%%%%%%%%%%%%%%%%%%%%%%%%%%%%%%%%%%%%%%%%%%%%%%%%%%%%%%%%%%%%%%%%%%%%%%%%

The function $\cG$ also satisfies derivative bounds
\begin{equation}
\label{eq:strG}
\cG  = {\mathcal O} ( \log \langle \xi \rangle ) , \ \ 
 \partial^\alpha_x\partial^\beta_\xi H^k_p \cG
= {\mathcal O} \left( \langle\xi\rangle^{-|\beta| + \epsilon} \right) 
, \quad| \alpha | + | \beta| + k \geq 1 \,, 
\end{equation}
for any $ \epsilon > 0 $. 

As in \cite[\S 3]{dadyz} we define
\begin{equation} 
\label{eq:defHG}
H_{t\cG} ( X ) \defeq e^{- t \cG^w}L^2 ( X , dx ) , 
\end{equation}
where $ t> 0 $ is a positive parameter.

The domain of $ P $ acting on $ H_{t\cG} $ is defined as
\begin{equation}
\label{eq:defDG}
{\mathcal D}_{t\cG} \defeq \{ u \in {\mathcal D}' ( X )  \; : \; u, P u \in H_{t\cG } \}.
\end{equation}
The action of $ P $ on $ H_{t \cG}$ is equivalent to the action of the
operator $P_{t\cG} $ on $ L^2 $:
\begin{equation}
\label{eq:PtG}
\begin{split}  P_{t\cG} & \defeq e^{ t \cG^w } P e^{ - t \cG^w } 
= \exp (t  \ad_{ \cG^w } ) P
\\ & = \sum_{k=0}^N \frac{ t^k } { k!} \ad_{\cG^w }^k P + R_N ( x, h D)
, \ \ \ 
R_N \in h^{N+1} S^{ - N  + \epsilon } , \ \ \forall \, \epsilon > 0 .
\end{split}
\end{equation}
The validity of \eqref{eq:PtG} follows from the fact 
that the operators $ e^{ \pm t \cG^w } $ are pseudodifferential
operators \cite[Theorem 8.6]{e-z}, hence the pseudodifferential
calculus applies directly \cite[Theorem 9.5, Theorem 14.1]{e-z}.
This expansion and the arguments in \cite[\S 3]{f-sj} 
give
\begin{prop}
\label{p:1}
For $ P_{t\cG} $ defined by \eqref{eq:PtG}, we have

\noindent
i) the operator $  P_{t\cG} - z : \mathcal D(P_{t\cG})\to L^2 $ is Fredholm
of index zero for $ \Im z > - th $. Here $\mathcal D(P_{t\cG})$
is the domain of $P_{t\cG}$.

\noindent
ii)
for $C>0$ large enough, $(P_{t\cG}-z)$ is invertible on $\{\Im z>Ch\}$.
\end{prop}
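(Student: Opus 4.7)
The plan is to prove (i) through microlocal Fredholm estimates using the escape function, and (ii) through a direct energy argument. From the expansion \eqref{eq:PtG} and the Weyl commutator formula $[G^w,P] = ih(H_pG)^w + \Oo(h^3)$, the full symbol of $P_{tG}$ is
\[
\sigma(P_{tG}) = p + ith\,H_pG + r_1,\qquad r_1 \in h^2 S^{-1+\eps},
\]
so $\Im\sigma(P_{tG}-z) = thH_pG - \Im z$ to leading order. By Lemma~\ref{l:escape-f-sj} this quantity is $\leq -2th-\Im z$ on $\{|\xi|\geq 1/2\}\setminus U_0$, while on $U_0\cap\{|\xi|\geq 1\}$ the real part $p(x,\xi)=\xi(\Xi_x)$ satisfies $|p|\geq c|\xi|$, since $U_0$ is a thin conic neighbourhood of $E_0^*=\mathbb R\alpha$ and $\alpha(\Xi)=1$.

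For (i) I would introduce a homogeneous microlocal partition $I = \chi_0^w + \chi_1^w + K$, with $\chi_0$ supported in $U_0\cap\{|\xi|\geq 1\}$, $\chi_1$ supported away from $U_0$ for large $|\xi|$, and $K$ compactly supported in $\xi$ (hence compact on $L^2$). On $\chi_0$ the operator $P_{tG}-z \in \Psi^1$ is classically elliptic, giving a parametrix estimate $\|\chi_0^w u\|_{H^1_h}\leq C\|(P_{tG}-z)u\|_{L^2} + C\|K_0 u\|_{L^2}$ with $K_0$ smoothing. On $\chi_1$, fix any $z$ with $\Im z > -th$ and choose $\delta$ with $0<\delta < 2th + \Im z$; the sharp Garding inequality applied to the nonnegative symbol $-thH_pG + \Im z - \delta$ on $\supp\chi_1$ gives
\[
-\Im\la (P_{tG}-z)\chi_1^w u,\chi_1^w u\ra \;\geq\; \delta\|\chi_1^w u\|^2 - Ch\|u\|_{L^2}^2 - \Oo(h^\infty)\|u\|_{H^{-N}_h}^2,
\]
hence $\|\chi_1^w u\|_{L^2}\leq C\|(P_{tG}-z)u\|_{L^2} + C\|K_1 u\|_{L^2}$ with $K_1$ compact. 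Combining the two pieces yields an a priori bound $\|u\|_{L^2}\leq C\|(P_{tG}-z)u\|_{L^2} + C\|K' u\|_{L^2}$ for some compact $K'$, giving finite-dimensional kernel and closed range. The adjoint $(P_{tG}-z)^* = P_{-tG}-\bar z$ has leading imaginary symbol $-thH_pG + \Im z \geq 2th + \Im z > 0$ on $\supp\chi_1$; the parallel argument yields finite-dimensional cokernel, hence the Fredholm property. Index zero follows by homotopy invariance of the Fredholm index: the strip $\{\Im z > -th\}$ is connected and, by (ii), contains invertible operators.

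For (ii), the inequality $H_pG\leq 0$ on $\{|\xi|\geq 1/2\}$ combined with sharp Garding gives $\Im\la P_{tG}u,u\ra \leq C_0 h\|u\|^2$ for some constant $C_0$ independent of $u$. Thus for $\Im z \geq Ch$ with $C>C_0$,
\[
|\Im\la (P_{tG}-z)u,u\ra| \;\geq\; (\Im z - C_0 h)\|u\|^2 \;\geq\; (C-C_0)h\|u\|^2,
\]
so $\|(P_{tG}-z)u\|\geq (C-C_0)h\|u\|$, and the same bound applied to the adjoint (whose leading imaginary symbol $-thH_pG\geq 0$) yields bijectivity. The main delicate point throughout is that $G = m\log f_0$ lives in the slightly exotic symbol class dictated by \eqref{eq:strG}, with only logarithmic growth in $\xi$; one has to verify that the remainders in the conjugation expansion \eqref{eq:PtG} and in each application of sharp Garding are genuinely of lower order on the domain $\mathcal D_{tG}$. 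This is precisely the content of the symbol-class analysis in \cite[\S 3]{f-sj}, whose output we are invoking.
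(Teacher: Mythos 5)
Your proof is correct and follows the same route as the paper, which at this point merely writes the expansion \eqref{eq:PtG} and refers to Faure--Sj\"ostrand \cite[\S 3]{f-sj} for the Fredholm analysis; you have reconstructed that argument in detail. The key ingredients match exactly: classical ellipticity of $p-z$ on the thin cone $U_0$ around $E_0^*$ (where $|p|\gtrsim|\xi|$), positivity of $-\Im\sigma(P_{tG}-z)=-thH_pG+\Im z \geq 2th+\Im z>th>0$ on $\{|\xi|\geq 1/2\}\setminus U_0$ via the escape function, absorption of the bounded-$|\xi|$ region into a compact error, the adjoint estimate for the cokernel, and homotopy-connectedness to $\Im z>Ch$ for the index. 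One small point worth tightening: when you call $P_{tG}-z\in\Psi^1$ classically elliptic on $\chi_0$, the parametrix works only above a $z$-dependent frequency threshold since $p-\Re z$ vanishes at $|\xi|\approx|\Re z|$ on $E_0^*$; this is harmless because the excised bounded-frequency piece is smoothing and hence compact, but it deserves a word since the supports are currently written as $\{|\xi|\geq 1\}$ independently of $z$.
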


In~\cite{f-sj} the above construction was performed, replacing the $h$-quantization by the $h=1$ quantization. It lead
to the construction of  $H_{t\cG,1}(X)=e^{- t \cG^w(x,D)}L^2 ( X )$ equal, as vector space, to the above $h$-dependent space $ H_{t\cG} ( X )$.
The norms of these two spaces are equivalent with one another, but in an $h$-dependent way:
\begin{equation}
\label{eq:normtG}
h^N \| u \|_{ H_{t\cG,1} ( X ) } /C_0  \leq \| u \|_{ H_{t\cG}(X) } \leq C_0 h^{-N}\,\| u \|_{ H_{t\cG,1} ( X ) }\,,
\end{equation}
 --- see \cite[\S 3]{dadyz}.
As a consequence, if we call $P_1=-i\Xi$, Theorem~\ref{t:3} translates into the fact that $P_{t\cG,1}\defeq e^{ t \cG^w(x,D) } P_1 e^{ - t \cG^w(x,D) }$ is Fredholm in the strip $\{\Im\lambda>-t\}$, admits finitely many eigenvalues in the strip $\{\Im\lambda\geq -\lambda_0/2+\eps_0\}$, and satisfies the resolvent estimate 
\be\label{eq:t3-bis}
\| (P_{t\cG,1}-\lambda)^{-1}\|_{L^2\to L^2} =  \cO(\lambda^{N_0}),\quad \Im\lambda\geq -\lambda_0/2+\eps_0,\ \ |\Re\lambda|\geq C\,.
\ee

%%%%%%%%%%%%%%%%%%%%%%%%
\subsection{Reduction to Theorem \ref{t:1}}

In order to prove Theorem~\ref{t:3}, we proceed as in the proof of Theorem \ref{t:2'} in \S\ref{recap}, by constructing
two operators which microlocally agree with $ P_{t \cG} $ (up to negligible 
error terms) on different subsets of $ T^*X $. 

Let $ W \in \Psi^1 ( X ) $ be as in Example 2 of \S \ref{aatr}. The 
trapped set defined in \S \ref{aatr} agrees with the trapped
set in \S \ref{gest} and, as shown in \eqref{e:hyperbolicity},  it satisfies, the assumptions
of normal hyperbolicity. Hence Theorem \ref{t:1} applies to 
$ \widetilde P = P - i W $. If $ A \in \Psi^\comp ( X ) $ satisfies
\begin{equation}
\label{WFA1}   \WFh ( A ) \Subset (\cG^{ -1} ( 0 ))^\circ ,  \ \ \ 
\WFh ( A ) \cap \{ ( x, \xi ) : {| \xi|_{g_x} \geq M }\}  = \emptyset,
\end{equation}
(where $ M$ is the one appearing in the definition of $ W $ -- see
\eqref{eq:fCAP})
then 
\begin{equation}   \label{eq:AP} A \widetilde P = A P_{t\cG} + {\mathcal O} ( h^\infty ) _{
    {\mathcal D'} \to \CI } .
\end{equation}

We now introduce an operator which has better global properties 
and agrees with $ P_{t\cG} $ near infinity. For that we proceed as in the proof of Theorem~\ref{t:2'}, and 
take $ W_\infty \in
\Psi^\comp ( X ) $ 
such that 
\begin{equation}
\label{WFW1}   \WFh ( W_\infty ) \Subset (\cG^{ -1} ( 0 ))^\circ ,  \ \ \ 
\ \ \WFh ( I - W_\infty  ) \subset \complement  K^\delta ,  \ \ \ 
\WFh ( W ) \cap \WFh ( W_\infty  ) = \emptyset.
\end{equation}
We then put
\[  P_\infty = P_{t\cG}  - i W_\infty .\]
Then for any $ B $ with $ \WFh ( I - B ) \subset
\complement \WFh ( W_\infty ) $, 
\begin{equation}
\label{eq:BP}    ( I - B ) P_\infty = ( I - B ) P_{t\cG} + {\mathcal O}( h^\infty ) _{{\mathcal
    D}' \to \CI } . 
\end{equation}

Properties of the operator $ P_\infty $ are listed in the following 
\begin{lem}
\label{l:lemA}
Fix $\beta>0$ and let $ t $ be large enough so that 
$P_{t\cG}- z $ is a Fredholm opeartor for $\Im z>-\beta h$.  
Then, there exists $ N_0 $ and $ h_0 $ such that, for $ 0 <h < h_0 $, 
$$ 
\| ( P_\infty - z )^{-1} \|_{L^2 \to L^2 } \leq  h^{-N_0}   \,,\quad z \in [ 1 - \delta/2 , 1 + \delta/2 ] - i h[ 0 , \beta ] \,.
$$
In addition the analogue of \eqref{eq:prope2} holds for $ P_\infty $: 
in the same range of $ z $, 
\begin{equation}
\label{eq:prope21}
\begin{split} & u = (P_\infty  - z)^{-1} f , \ \ f \in \CI ( X )  \
  \ \Longrightarrow  \\
& \ \ \ \ \ \ \ \  \WFh ( u ) \setminus \WFh ( f ) \subset \exp ( [ 0 , \infty ) H_p ) \left(
  \WFh ( f ) \cap p^{-1} ( \Re z  ) \right)  .
\end{split}
\end{equation}
\end{lem}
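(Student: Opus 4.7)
The strategy is to show that $P_\infty = P_{tG} - iW_\infty$ behaves as a CAP-modified operator whose classical trapped set is empty: by the choice of $W_\infty$ in \eqref{WFW1}, the principal symbol $\sigma(W_\infty)$ equals $1$ microlocally on $K^\delta$, so every trajectory of $H_p$ on the target energy shell either enters the absorbing region $\{\sigma(W_\infty)>0\}$ in finite time, or escapes to $|\xi|\to\infty$, where the weight function $G$ of Lemma~\ref{l:escape-f-sj} provides additional damping via $-H_p G\geq 0$ on $\{|\xi|\geq 1/2\}$. This is precisely the non-trapping setting in which polynomial resolvent bounds are expected, without the logarithmic losses of Theorem~\ref{t:1}.

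The proof proceeds in three steps. First, Fredholm property: by Proposition~\ref{p:1}(i), $P_{tG}-z$ is Fredholm of index $0$ from $\mathcal D(P_{tG})$ to $L^2(X)$ throughout the strip $\Im z>-\beta h$ (choosing $t\geq\beta$), and since $W_\infty\in\Psi^\comp(X)$ is compact on $L^2(X)$, so is $P_\infty-z$. Second, invertibility at large $\Im z$: for $\Im z\geq Ch$ with $C$ large, Proposition~\ref{p:1}(ii) combined with $\|W_\infty\|_{L^2\to L^2}=\cO(1)$ gives invertibility of $P_\infty-z$, and analytic Fredholm theory extends $(P_\infty-z)^{-1}$ meromorphically to the strip. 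Third, and at the heart of the argument, the absence of poles in the target rectangle and the polynomial bound come from combining an energy estimate with propagation of singularities. From \eqref{eq:PtG} one has $\Im P_{tG}=t(hH_pG)^w+\cO(h^2)$, so for $f=(P_\infty-z)u$ and $\Im z\in[-\beta h,0]$,
\[ \langle W_\infty u,u\rangle + t\langle (-hH_pG)^w u,u\rangle = |\Im z|\,\|u\|_{L^2}^2 - \Im\langle f,u\rangle_{L^2} + \cO(h^2)\|u\|_{L^2}^2. \]
Both terms on the left are nonnegative modulo sharp G\aa rding corrections of order $\cO(h)$, so this identity yields $\langle W_\infty u,u\rangle\leq \cO(h)\|u\|_{L^2}^2+\|f\|_{L^2}\|u\|_{L^2}$, and hence, since $\sigma(W_\infty)\equiv 1$ near $K^\delta$, $\|\chi_K^w u\|_{L^2}^2\leq \cO(h)\|u\|_{L^2}^2+C\|f\|_{L^2}\|u\|_{L^2}$ for any microlocal cutoff $\chi_K$ to a neighbourhood of $K^\delta$. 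Combining this localized bound with classical ellipticity of $P_{tG}-z$ off the energy shell and H\"ormander's propagation of singularities along $H_p$ (forward to the absorbing region $\WFh(W_\infty)$, backward to the large-$|\xi|$ regime where the weight $G$ forces rapid decay of $\WFh(u)$), one closes the estimate and obtains the polynomial bound $\|u\|_{L^2}\leq h^{-N_0}\|f\|_{L^2}$.

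The wavefront assertion \eqref{eq:prope21} is a standard consequence of H\"ormander's propagation-of-singularities theorem for CAP-modified operators of real principal type: since $\Im\sigma(P_\infty-z)=-\sigma(W_\infty)-\Im z\leq \cO(h)$ in the target strip, singularities of $u$ propagate forward along $H_p$-bicharacteristics from $\WFh(f)\cap p^{-1}(\Re z)$, with absorption inside $\WFh(W_\infty)$. The \textbf{main obstacle} is the careful matching, at the microlocal level, of the damping contributions from $W_\infty$ and from $G$ in the transition region inside the conic neighbourhood $U_0$ of $E_0^*$: there, $H_p G$ may vanish while $\sigma(W_\infty)$ is not yet equal to $1$, so $W_\infty$ must be chosen with microsupport reaching from $K^\delta$ out to the boundary of $U_0$, ensuring that at every point of $p^{-1}([1-\delta/2,1+\delta/2])$ at least one of $\sigma(W_\infty)$ or $-H_pG$ is bounded below by a positive constant.
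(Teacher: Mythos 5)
Your overall structure has the right flavour, and parts of it are correct, but the central step for the resolvent bound has a genuine gap and also diverges from what the paper actually does.

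What works: the Fredholm property and the meromorphic continuation are fine; the wavefront assertion \eqref{eq:prope21} is handled exactly as in the paper (real--principal--type propagation with $\Im\sigma(P_\infty - z)\leq \mathcal{O}(h)$, cf.\ Lemma~\ref{l:propa}); and the energy identity you write down for $\Im\langle (P_\infty - z)u,u\rangle$ is correct.

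The gap is in how you close the $L^2$ estimate away from $K^\delta$. Your claim that ``the weight $G$ forces rapid decay of $\WFh(u)$'' at large $|\xi|$, and that one can propagate backward into that regime to finish, is not justified and is in fact misleading: the weight's contribution to $\Im P_{tG}$ is $th(H_pG)^w$, carrying an extra factor of $h$, so it provides only an $\mathcal{O}(h)$--weak damping at the symbol level. This is enough to make $P_{tG}-z$ Fredholm in the strip and to preclude blow-up at $|\xi|\to\infty$, but it does not by itself yield a quantitative $L^2\to L^2$ bound comparable to the absorbing potential $W_\infty$; nor does H\"ormander propagation of singularities, which is a wavefront statement, convert directly into a resolvent estimate. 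The paper avoids both of these pitfalls by not relying on $G$ at all for the quantitative bound: it invokes the \emph{separate} nontrapping escape function $G_1$ of Lemma~\ref{l:cef3} (a compactly supported function with $H_pG_1>0$ on the energy shell in the intermediate region between $K^{3\delta}$ and the damping region), together with the standard positive-commutator nontrapping estimates of \cite[\S 4]{SZ10} and \cite[Lemma 5.1]{dadyz}. That is the object you are missing.

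Your ``main obstacle'' paragraph actually contains a second, viable route: if $W_\infty$ is enlarged so that $\sigma(W_\infty)\geq c>0$ on $U_0\cap p^{-1}([1-\delta/2,1+\delta/2])\cap\{|\xi|\geq 1/2\}$ (compatible with the constraints in \eqref{WFW1} once $U_0$, $M$ are chosen suitably), then the energy estimate together with ellipticity off the shell and near the zero section, and a microlocal partition of unity with sharp G\aa{}rding, already gives $\|u\|\leq C(th)^{-1}\|f\|$ directly --- and the propagation argument becomes superfluous. As written, your proposal mixes this closed route with the incorrect backward-propagation claim, rather than carrying either one to completion.
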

\begin{proof}
The first part follows from the now standard non-trapping
estimates (see \cite[\S 4]{SZ10}). In the setting of 
Anosov flows the details are presented in the proof
of \cite[Lemma 5.1]{dadyz} (only the escape function 
constructed in Lemma \ref{l:cef3} above is needed). 

The propagation 
result is a real principal type propagation result 
\cite[Theorem 12.5]{e-z}  which 
holds when the imaginary part of the symbol is 
non-positive -- see Lemma \ref{l:propa} below for
a dynamical version.
\end{proof}

\begin{figure}[ht]
\includegraphics[width=6in]{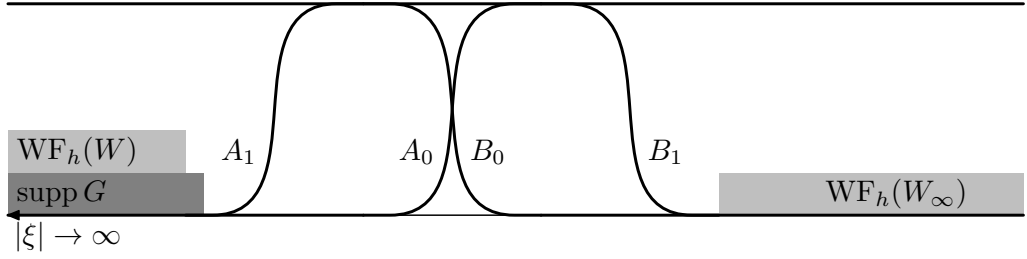}
\caption{Schematic representation of pseudodifferential 
cut-offs used in the proof of
  Theorem \ref{t:3}. The horizontal axis corresponds to $|\xi| $, the
  cotangent variable. Infinity in $ |\xi| $ plays the role of $ \rho
  = 0 $ in Fig. \ref{f:1}. The asymmetry is intentional, to 
stress that there is no need for an auxiliary manifold, as opposed to the proof of
Theorem \ref{t:2'} illustrated in Fig. \ref{f:1}.}
\label{f:2}
\end{figure}

\begin{proof}[Proof of Theorem \ref{t:3}]
The proof is a repetition of the proof of Theorem \ref{t:2}
with $ R_X $ replaced by $ ( \widetilde P -  z )^{-1} $  and $ 
R_\infty $ by $ ( P_\infty -  z)^{-1} $.  The spatial cut-off
functions are replaced by pseudifferential operators:
$  \chi_X ( \rho ( x ) )  $ is replaced by $   A_0 \in \Psi^\comp ( X ) $, satisfying 
\[  \WFh ( A_0 )  \cap \{ ( x, \xi) : { | \xi|_{g_x} \geq M } \} = \emptyset ,  \ \  \WFh ( I - A_0 ) \cap
\WFh ( W_\infty) = \emptyset , \]
where $ M$ is given in the definition of $ W $, see \eqref{eq:fCAP}.
The function 
$ \chi_X ( \rho ( x ) + 1  ) $ is replaced by $  A_1 \in \Psi^\comp
( X ) $,  
where
\[ \WFh ( I -  A_1 ) \cap \WFh ( A_0 ) = \emptyset , \ \ 
\WFh ( A_1  ) \cap \{ ( x, \xi) : {| \xi|_{g_x} \geq M } \}= \emptyset
,  \]
$  \chi_\infty ( \rho ( x ) ) $ is replaced by $  B_0 \defeq I - A_0 \in
\Psi^0 ( X) $, and finally
$ \chi_\infty ( \rho ( x ) -1 )$ by 
$ B_1 \in \Psi^0 ( X)$ , where 
\[  \WFh ( W_\infty  ) \cap \WFh ( B_1 ) =
\emptyset ,  \ \  \WFh ( I - B_1 ) \cap \WFh ( B_0 ) = \emptyset   .  \]
We also require that
\[  \WFh ( A_1) , \WFh ( I - B_1 )  \subset ( G^{-1} ( 0 ) )^\circ . \]

The parametrix is now obtained by putting 
\[  F ( z ) = A_1 ( P - i W - z)^{-1} A_0 + B_1 ( P_{t\cG} - i W_\infty -
z)^{-1} B_0 . \]
Using \eqref{eq:AP}, \eqref{eq:BP} and Lemma \ref{l:lemA} 
we obtain the theorem by proceeding as in the proof of 
Theorem \ref{t:2} in \S \ref{recap}.
\end{proof}

\begin{proof}[Proof of Corollary~\ref{t:4}] 
We will use the nonsemiclassical operator 
$ P_1= -i\Xi $. It is selfadjoint on $L^2(X)$ -- see \cite[Appendix A]{f-sj} -- 
hence, by Stone's formula, we get for any $f,g\in \CI(X)$
\[ \begin{split}  \int_X \gamma_{-t}^* f \, g \, dx & = \la e^{-itP_1} f,\bar{g}\ra \\
&= \frac 1 { 2 \pi i }  \int_\RR e^{ - i \lambda t } 
 \left( \la ( P_1 - \lambda - i0 )^{-1} f , \bar g \ra - 
\la ( P_1 - \lambda + i0 )^{-1} f , \bar g \ra  \right)  d \lambda \\
& = \frac 1 { 2 \pi i  }\sum_{\pm} \mp  \int_\RR e^{ - i \lambda t}  
( \lambda + i ) ^{-k} 
\la  ( P_1 - \lambda \pm i0 )^{-1} ( P_1 +i )^k f , \bar g \ra  d \lambda\, .
\end{split} \]
Here the brackets $\la\bullet,\bullet\ra$ represent $L^2(X)$ scalar products.

For $ t > 0 $ we can deform the contour in the integral corresponding
to $ + i 0 $ ($\lambda $ approaching the real axis from below), where $\|(P_1-\lambda)^{-1}\|\leq |\Im \lambda|^{-1}$,  so that for $k>1$ the integral is bounded as 
\[ - \frac 1 { 2 \pi i  }
 \int_{\RR - i A } e^{ - i \lambda t }  
( \lambda + i ) ^{-k} 
\la ( P_1 - \lambda  )^{-1} ( P_1 +i )^k f , \bar g \ra d \lambda  = {\mathcal O} ( e^{ - t A } \| f \|_{ H^k } \| g \|_{
  L^2 } ) \,. \] 
Thus, 
\[ \int_X \gamma_t^* f\, g\, dx  = \frac 1 { 2 \pi i  }
\int_\RR e^{ - i \lambda t }  
 ( \lambda + i)^{-k} \la ( P_1 - \lambda - i0 )^{-1}  ( P_1 + i )^k f , \bar g \ra 
+ {\mathcal O}_{f,g}  ( e^{- t A }  )  , \]
for any $ A $, with the bounds depending on seminorms of
$ f$ and $ g $ in $ \CI $.   
We now use the nonsemiclassical weights $\cG^w(x,D)$ constructed in \S\ref{micwes} to 
conjugate $P_1$, and write
\[  \begin{split} & \int_X \gamma_t^* f \, g \, dx  = \\
& \ \ \ \ \ \ \frac 1 { 2 \pi i  }
\int_\RR e^{ - i \lambda t }  
  (\lambda + i)^{-k} \la ( P_{t\cG,1}  - \lambda - i0 )^{-1}  ( P_{t\cG,1} + i )^k e^{t\cG^w(x,D)}
f ,e^{-t\cG^w(x,D)}  \bar  g \ra
+ {\mathcal O}_{f,g}  ( e^{- t A }  )  . \end{split} 
\]
The nonsemiclassical analogue \eqref{eq:t3-bis} of Theorem~\ref{t:3} shows that, by taking $ k > N_0+1 $, we may deform the contour of
integration down to $\Im \lambda = -\lambda_0/2+\eps$, collecting finitely many poles $\mu_j$, to finally obtain the expansion \eqref{eq:t4}.
\end{proof}

\vspace{0.5cm}
\begin{center}
\noindent
{\sc  Appendix: Evolution for the CAP-modified Hamiltonian.}
\end{center}
\vspace{0.4cm}
\renewcommand{\theequation}{A.\arabic{equation}}
\refstepcounter{section}
\renewcommand{\thesection}{A}
\setcounter{equation}{0}

In the appendix we show some properties of the CAP-modified
Hamiltonian, that is the Hamiltonian modified by adding a complex
absorbing potential. At first we work under the general assumptions 
\eqref{eq:CAPe}.

The semigroup $ \exp ( - i t ( P - i W )/h ) : L^2 ( X ) \to L^2 ( X )
$ is defined using the Hille-Yosida theorem: for $ h $ small
$ P - i W - i $ is invertible as its symbol is elliptic in the
semiclassical sense (see \eqref{eq:CAP} and \cite[Theorem 4.29]{e-z}).
Ellipticity assumption for large values of $ \xi $ also shows 
that $ P - i W $ is a Fredholm operator, and the comment
about invertibility shows that it has index $ 0 $. The estimate
\[ \| ( P - i W - z ) u \|\| u \| \geq - \Im  \langle ( P - i W - z
) u, u \rangle \geq  \Im z \| u \|^2 , \ \ \ u \in H^m_h ( X ) , \]
then shows invertibility for $ \Im z >  0 $, with the bound
\[ \| ( P - i W - z )^{-1} \|_{L^2 \to L^2 } \leq \frac 1 { \Im z } , 
\ \ \Im z > 0 . \]
Since the domain of $ P - i W $ is given by $ H^m ( X )$ which 
is dense in $ L^2 $, the hypotheses of the Hille-Yosida theorem are
satisfied, and 
\begin{gather}
\label{eq:HL} 
\begin{gathered}
\| e^{ - i t ( P - i W ) / h }\|_{ L^2 \to L^2 } \leq 1
, \ \ \ t \geq 0 , \\ 
e^{ - i t ( P - i W ) / h } e^{ - i s ( P - i W ) / h } = e^{ - i (t +
  s ) ( P - i W ) / h }, \ \ t, s \geq 0 . 
\end{gathered}
\end{gather}
Alternatively we can show the existence of the semigroup 
$ \exp ( - it ( P- i W ) /h ) $ using energy estimates,
just as is done in the proof of \cite[Theorem 10.3]{e-z}. We get that for any $ T>0 $, 
\begin{equation}
\label{eq:Sob}  e^{- i t ( P - i W ) / h } \in C\big( [ 0 , T ] ; {\mathcal L} (  H_h^s
( X ) , H_h^s ( X ) ) \big) \cap C^1 \big( [ 0 , T ]; { \mathcal L} ( H_h^{s} , 
H_h ^{s-m} ) \big) .
\end{equation}
Our final estimates will all be given for 
$ L^2 $ only and that is sufficient for our purposes.

The first result we state concerns propagation of semiclassical 
wave front sets. We recall the notation $\varphi_t=\exp(tH_p)$ for the Hamiltonian flow generated by $p(x,\xi)$.

\begin{lem}
\label{l:propa}
Suppose that $ A \in \Psi^\comp ( X ) $. Then for any $ T $
independent of $ h $ there exists a smooth family of operators
\begin{equation}
\label{eq:AQ}   [ 0, T ] \ni t \longmapsto Q ( t ) \in \Psi^\comp ( X ) , \ \
\WFh ( I - Q ( t ) ) \cap \varphi_t ( \WF_h ( A ) ) = \emptyset
, \end{equation}
such that 
\begin{equation} 
\label{eq:AQA}   ( I - Q ( t ) )\, e^{ - i t ( P - i W ) / h } A = {\mathcal O}( h^\infty ) _{L^2
  \to L^2 }  .
\end{equation}
In addition if $ \WFh ( A ) \subset w^{ -1} ( [ \epsilon_1, \infty ) ) $, $\epsilon_1 > 0 $, then for any fixed $t>0$,
\begin{equation}
\label{eq:AQW}
 e^{ - it ( P - i W )/h } A =  {\mathcal
  O}( h^\infty )_{ L^2\to L^2}  , \ \ \ \ A \,e^{ - it ( P - i W )/h }  = {\mathcal
  O}( h^\infty )_{ L^2\to L^2} . \end{equation}
\end{lem}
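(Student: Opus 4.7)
My plan is to prove \eqref{eq:AQA} by combining a microlocal transport construction of $Q(t)$ with an energy estimate that exploits the sign condition $W\geq 0$, then derive \eqref{eq:AQW} from \eqref{eq:AQA} together with a sharp Gårding argument. Fix $T>0$ independent of $h$. Choose $q_0\in\CIc(T^*X;[0,1])$ with $q_0\equiv 1$ on an open neighborhood of $\WFh(A)$, and set $q(t,\rho)=q_0(\varphi_{-t}(\rho))$, so that $\dot q+H_p q=0$, $\supp q(t,\cdot)$ is compact, and $q(t,\cdot)\equiv 1$ on an open neighborhood $V_t$ of $\varphi_t(\WFh(A))$. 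Quantize: $Q(t)=q(t)^w\in \Psi^\comp(X)$; this symbol will eventually be refined by an asymptotic series (see below), but the resulting $Q(t)$ still satisfies \eqref{eq:AQ}.

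Set $B(t)\defeq(I-Q(t))U(t)A$. A short calculation gives
\[
\dot B(t)=R(t)\,U(t)A-\tfrac{i}{h}(P-iW)B(t),\qquad R(t)\defeq -\dot Q(t)+\tfrac{i}{h}[Q(t),P-iW]\in\Psi^\comp.
\]
Since $P=P^*$ and $W\geq 0$, the energy identity yields
$\tfrac{d}{dt}\|B\|^2=-\tfrac{2}{h}\la WB,B\ra+2\Re\la RUA,B\ra\leq 2\|RUA\|\,\|B\|$,
so $\|B(t)\|\leq\|B(0)\|+\int_0^t\|R(s)U(s)A\|\,ds$. At $t=0$, $\|B(0)\|=\|(I-Q(0))A\|=\cO(h^\infty)$ by elliptic regularity, because $\sigma(Q(0))\equiv 1$ near $\WFh(A)$. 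The principal symbol of $R(t)$ is
$\sigma(R)=-\dot q-H_p q-i\{q,w\}=-iH_q w$,
which vanishes on the set $V_t$ where $q\equiv 1$ forces $H_q=0$. This is the microlocal cancellation that makes the argument work.

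The main obstacle is that $R(s)$ is only microlocally $\cO(h^\infty)$ near $\varphi_s(\WFh(A))$, not globally, so $\|R(s)U(s)A\|=\cO(h^\infty)$ requires further work. I will refine $Q(t)$ into an asymptotic series $Q(t)\sim\sum_{j\geq 0}h^jQ_j(t)$ by a standard WKB-type iteration: each $Q_j(t)$ is supported in $V_t$ and chosen so that the $j$-th homogeneous piece of $\sigma(R(t))$ vanishes on $V_t$. Borel summation produces a single symbol $q(t)\in\CIc$, and the refined $R(t)$ admits a splitting $R(t)=R_1(t)+R_2(t)$ with $R_1(t)=\cO(h^\infty)_{\Psi^{-\infty}}$ and $\WFh(R_2(t))\subset T^*X\setminus V_t$. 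The first contributes $\cO(h^\infty)$ directly. For the second, choose an auxiliary $\tilde Q(s)\in\Psi^\comp$ with $\sigma(\tilde Q(s))\equiv 1$ near $\varphi_s(\WFh(A))$ and $\WFh(\tilde Q(s))\subset V_s$, so that $R_2(s)\tilde Q(s)=\cO(h^\infty)$ by microlocal disjointness, and write $R_2(s)U(s)A=R_2(s)(I-\tilde Q(s))U(s)A+\cO(h^\infty)$; a bootstrap induction on the order $N$ in $h$, applied uniformly to all such $\tilde Q$, closes the estimate. This bootstrap is the genuinely hard step.

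For \eqref{eq:AQW}: with $\WFh(A)\subset w^{-1}([\eps,\infty))$, the first part gives $\WFh(U(s)A)\subset \varphi_s(\WFh(A))$, and for $s$ in a sufficiently short interval this is contained in $w^{-1}([\eps/2,\infty))$. Applying sharp Gårding to $W$ on a symbol supported where $w\geq\eps/2$ yields $\la WU(s)Af,U(s)Af\ra\geq(\eps/3)\|U(s)Af\|^2-\cO(h^\infty)\|f\|^2$, whence
\[
\tfrac{d}{ds}\|U(s)Af\|^2\leq -\tfrac{2\eps}{3h}\|U(s)Af\|^2+\cO(h^\infty)\|f\|^2,
\]
and Grönwall gives $\|U(t)Af\|=\cO(h^\infty)\|f\|$ for fixed $t>0$; if the orbit exits the damping region, I iterate the argument on finitely many short time intervals, which suffices because $t$ is fixed and $\WFh(A)$ is compact. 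The bound $A\,e^{-it(P-iW)/h}=\cO(h^\infty)$ follows by duality: $U(t)^*=e^{it(P+iW)/h}$ is contractive for $t\geq 0$ (since $\tfrac{1}{2}(L+L^*)=-W/h\leq 0$ for $L=i(P+iW)/h$), generated by $-(P+iW)$ whose anti-self-adjoint part is again $-W\leq 0$, so the same propagation and Gårding argument applies with the reversed flow $\varphi_{-s}$.
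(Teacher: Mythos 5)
Your overall architecture is right and matches the paper: a transport-constructed symbol $q(t)=q_0\circ\varphi_{-t}$, an $L^2$ energy identity using $W\geq 0$, and a sharp-G\aa rding / positivity argument to decay $v(t)=e^{-it(P-iW)/h}Au$ once it is microlocalized in $\{w\geq\epsilon/2\}$. The second part \eqref{eq:AQW} is essentially correct; the only inefficiency is the ``iterate on short time intervals'' remark, which is unnecessary since $\|e^{-it(P-iW)/h}\|\leq 1$ means that once $v(\delta/2)=\cO(h^\infty)$ you are done for all larger $t$; the paper instead modifies $W$ into a globally elliptic $W_1$ agreeing with $W$ on $\WFh(v(t))$ and runs a single differential inequality.

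The genuine gap is in the step you yourself flag as ``the genuinely hard step.'' Your sketch writes $R=R_1+R_2$ with $R_1=\cO(h^\infty)$ and $\WFh(R_2)\subset T^*X\setminus V_t$, and then proposes $R_2(s)U(s)A=R_2(s)(I-\tilde Q(s))U(s)A+\cO(h^\infty)$ and a ``bootstrap.'' But $R_2(s)$ is bounded in $L^2$ with norm $\cO(1)$ (it has a nontrivial principal symbol $-i H_q w$), so this only gives $\|B(t)\|\lesssim\int_0^t\|(I-\tilde Q(s))U(s)A\|\,ds+\cO(h^\infty)$, i.e.\ you bound one ``defect'' by another of the same size, with no gain of $h$ per iteration. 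The paper obtains the gain via an algebraic cancellation rather than further symbol refinement: it constructs a nested family $Q_j(t)$ with $[Q_j(t),hD_t+P]\in h^\infty\Psi^\comp$, sets $w_j(t)=(I-Q_j(t))v(t)$, and proves $\|w_j(t)\|=\cO(h^{j/2})$ by induction. The mechanism that closes the induction is that $(I-Q_{j+1})[W,\,I-Q_{j+1}]=ihB_{j+1}+h^2C_{j+1}$ with $B_{j+1}=B_{j+1}^*$, so in the energy identity
\[
\|w_{j+1}(t)\|^2-\|w_{j+1}(0)\|^2+2\int_0^t\langle Ww_{j+1},w_{j+1}\rangle\,ds
=\frac{2}{h}\int_0^t\Re\langle[W,Q_{j+1}]w_j,w_{j+1}\rangle\,ds+\cO(h^\infty),
\]
the contribution from $ihB_{j+1}$ is purely imaginary and drops, leaving the $h^2C_{j+1}$ term, which together with the induction hypothesis $\|w_j\|=\cO(h^{j/2})$ yields $\|w_{j+1}\|^2\leq\|w_{j+1}(0)\|^2+Ch^{j+1}$. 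This half-power gain per step is what you need and what your bootstrap sketch does not produce. To complete your proof you would have to either reproduce this self-adjointness cancellation or find a substitute for it.
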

\begin{proof}
We first construct $ Q ( t) $ using a semiclassical adaptation 
of a standard microlocal procedure  -- see \cite[\S 23.1]{Hor2}. 
For that, let $ Q ( 0 ) \in \Psi^{\comp} ( X) $ 
be an operator satisfying $ \WFh ( I - Q ( 0 ) ) \cap \WFh ( A ) = 
\emptyset $, and with the principal symbol, $ q_0 ( 0 )$, independent
of $ h $.
Using the fact that the flow $ \varphi_t  $ is defined for all $ t $ 
we put $ q_0 ( t ) \defeq \varphi_{-t}^* q_0 (0) $. In terms of 
the Poisson bracket on the extended phase space
$ T^*( \RR_t \times X )\ni ( t, x, \tau, \xi )$, this means
that the function $q_0(t)$ satisfies the identity
$ \{ \tau +  p , q_0 ( t ) \} = 0  $. Consequently, at the quantum level we have
\begin{gather*}   [ h D_t + P , \Op ( q_0 ( t ) ] = h R_1 ( t ) , \ \  R_1 ( t ) \in
\Psi^\comp ( X) , \\ \Op ( q_0 ( 0 ) ) - Q ( 0 ) = h E_1 , \ \ E_1 \in
\Psi^\comp ( X ) , \end{gather*}  
and the principal symbols of $ R_1 $, $ E_1 $, $ r_1 , e_1 \in \CIc (
T^*X ) $, are independent of $ h$. If $ p_1 = \sigma ( ( P - \Op ( p )
)/ h $, we then solve (in the unknown $q_1(t)$) the equation
\[  \{ \tau + p , q_1 ( t ) \} = r_1 - \{ p_1 , q_0 ( t ) \} ,  \ \
q_1 ( 0 ) = e_1 . 
\]
By iteration of this procedure we obtain $ q_\ell \in \CI ( T^*X ) $ such that
 \begin{gather*}   [ h D_t + P ,  \sum_{ \ell =0}^{N-1} h^j \Op ( q_\ell (
  t )) ] = h^N R_N ( t ) , \ \  R_N ( t ) \in
\Psi^\comp ( X) , \\  \sum_{ \ell =1}^{N-1} h^\ell \Op ( q_\ell ( 0 ) ) - Q ( 0 ) = h^N E_N , \ \ E_N \in
\Psi^\comp ( X ) .  \end{gather*} 
By a standard Borel resummation we may construct $Q(t)\in \Psi^{\rm{comp}}(X)$ such that $ Q ( t ) \sim \sum_{\ell\geq 0}h^j \Op (q_\ell ( t ) ) $.

For any $N>0$ we can iteratively construct a sequence of auxiliary operators 
$ Q_j (  t )= Q_j ( t )^*  \in \Psi^\comp ( X ) $, $ 0\leq j\leq N$,  satisfying
\begin{equation} 
\label{eq:Qtj}
\begin{split}
 \WFh ( I - Q_{ j+1} ( t ) ) \cap \WFh ( Q_{j}  ( t ) )
& = \WF_h ( I - Q_{j}  ( t ) ) \cap \varphi_t ( \WF_h ( A ) ) \\
& = \WFh ( I - Q ( t ) ) \cap \WFh ( Q_j ( t ) )   = \emptyset ,  \\
& \hspace{-3.5cm} [  Q_j ( t ) , h D_t + P ] \in \CI \big( [ 0 , T ];  h^\infty
\Psi^{\comp } ( X ) \big) . 
\end{split}
\end{equation} 
(These assumptions imply that $ \varphi_t( \WFh ( A ) )
\subset \WF ( Q_j ( t ) ) \subset \WF ( Q_{j+1} ( t ) )  \subset \WF ( Q ( t ) ) $.)

Let $ v ( t ) \defeq e^{ - i t ( P - i W ) /h } A u $, $ \| u \|_{ L^2
  } = 1 $. Our aim is to prove the following property: 
\begin{equation}
\label{eq:indh}  w_j ( t) \defeq ( I - Q_j ( t ) ) v ( t ) = {\mathcal O} (
h^{ j/ 2} ) _{ L^2 } , \ \  \text{ for } j=0,\ldots,N,\ \  0 \leq t \leq T . 
\end{equation}
Since $ A \in \Psi^\comp $,  \eqref{eq:Sob} shows
that this property holds for  $ j  = 0 $.  Let us now prove that, if true at the level $j$, it then holds at the level $j+1$.

Noting that 
\begin{equation}
\label{eq:Not} w_{ j+1} = ( I - Q_{j+1} ) w_j + {\mathcal O} (
h^\infty ) _{\CI}, \end{equation}
we have 
\[  ( h D_t + P - i W ) w_{j+1}   = ( I - Q_{j+1} ( t ) ) ( h D_t + P - iW ) 
w_{j}  - i [ W ,Q_{j+1} ] w_j + {\mathcal O} ( h^\infty
) _{L^2 } \]
Dividing by $ h/i $, taking the inner product with $ w_{j+1} $,
taking real parts and integrating gives 
\begin{equation}
\label{eq:hii}   \begin{split} 
& \| w_{ j+1} ( t ) \|_{L^2}^2 - \| w_{ j+1 } ( 0 ) \|^2_{L^2}  +
2 \int_0^t \langle W w_{j+1}( s)  , w_{ j+1} (s) \rangle ds  = 
\\
 &\ \ \ \ \ \ \ 
\frac 2 h  \int_0^t \Re \langle [ W,  Q_{j+1} (s) ] w_j(s)
,   w_{j+1}( s)   \rangle ds + {\mathcal O} (
h^\infty ) , 
\end{split} \end{equation}
Now, 
\begin{gather*}   ( I - Q_{ j+1}( s )  )  [ W,  ( I - Q_{j+1} (s)) ] = i h B_{j+1} (
s) +h^2
C_{ j+1} ( s)  , \\ B_{j+1} ( s ) , C_{j+1}( s )  \in \Psi^\comp ( X ) , \ \
B_{ j+1}( s ) = B_{ j+1} ( s  ) ^* .\end{gather*}
Hence, using \eqref{eq:Not} and the induction hypothesis \eqref{eq:indh}, the right hand side of \eqref{eq:hii} 
becomes
\[  2 h \int_0^t  \Re \langle C_{j+1} ( s)  w_j  ( s) , w_j ( s )
\rangle ds + {\mathcal O} ( h^\infty ) =  {\mathcal O} (h^{j+1} ) .\]
Returning to \eqref{eq:hii}  and using the non-negativity of $W$, we see that
\[ \| w_{ j+1 } ( t ) \|_{L^2}^2 \leq \| w_{j+1} ( 0 ) \|_{L^2 }^2 +
C h^{ j + 1} . \]
Since 
\[ w_{j+1} ( 0 ) = ( I - Q_{ j+1} ) A u = {\mathcal O} (
h^\infty ) _{ L^2}, \]
 we have established \eqref{eq:indh} with $ j $ replaced
by $ j+1$.  

The estimate \eqref{eq:AQA} then follows from 
\[ ( I - Q ( t ) ) v ( t ) = ( I - Q ( t ) ) w_j ( t ) + {\mathcal
  O}_{L^2}  ( h^\infty ),\]
the estimate \eqref{eq:indh} at the level $j=N$,
and the fact that $N$ could be taken arbitrary large.

To see \eqref{eq:AQW} we note that if $ A \in \Psi^\comp ( X ) $ then 
\[  \WFh ( A ) \subset w^{-1} ( [ \epsilon_1 , \infty ) \
\Longrightarrow \ \varphi_t( \WFh( A ))
 \subset w^{-1} ( [ \epsilon_1 /2  , \infty ) \   \text{ for $ 0 \leq t \leq
   \delta $.  } \]
Hence,  by \eqref{eq:AQA}, 
\[  \WF_h ( v ( t ) ) 
\subset w^{-1} ( [ \epsilon_1 /2 , \infty ) , \ \
v ( t) \defeq e^{ - i t ( P - i W ) /h } A u  , \ \ 
  \| u \|_{L^2} = 1, \ \ 0 \leq t \leq \delta .\]
This means that we can modify $ W $ into $W_1$, so that 
$$
\sigma (   W_1 ) 
  ( x, \xi ) \geq \langle \xi \rangle^k/C , \ \ \ 
  W_1 \geq c_0 ,  \text{ for $ 0 < h < h_0 $},
$$
while we have
$$
0 =   ( h D_t + P - i W ) v ( t ) = ( h D_t + P - i W_1 ) v (
t)  + {\mathcal O} ( h^\infty ) _{ \CI }\quad\text{uniformly for }0 \leq t \leq \delta\,.
$$
Taking the imaginary part of the inner product of the above expression with $ v ( t ) $ gives
\[ \frac h 2  \partial_t \| v ( t ) \|^2_{L^2}  = - \langle W_1 v ( t ) , v ( t
) \rangle + {\mathcal O} ( h^\infty ) \leq -c_0 \| v ( t) \|^2 +
{\mathcal O}  ( h^\infty ) , \]
and  hence 
\[   \| v ( t) \|_{L^2}^2 = {\mathcal O} ( h^\infty ) \quad\text{uniformly for } \delta/2\leq t \leq \delta\,.
\]
This proves the first part of \eqref{eq:AQW}. The second
part follows by taking a conjugate: $ A \,e^{ - i t ( P - i W ) /h } = 
\left( e^{ - it ( - P - i W ) / h } A^* \right)^* $, and all the
arguments remain valid for $ P $ replaced by $ - P $.
\end{proof}

The next lemma is needed in \S \ref{pr} and follows immediately from 
Lemma \ref{l:propa}:
\begin{prop}
\label{l:mod2}
Suppose that $ A \in \Psi^\comp ( X ) $ satisfies 
\begin{equation}
\label{eq:suppsi}  \WFh ( A ) \subset  p^{ -1} ( ( - \delta, \delta )) \cap w^{-1} ( [ 0 , \epsilon_1 ) ) , \end{equation}
for some $ \epsilon_1 > 0 $ and that $ T $ is independent of  $ h $.

Then there exists $ B \in \Psi^\comp ( X ) $ for which \eqref{eq:suppsi}
holds with $ B$ in place of $ A$, and
\begin{equation}
\label{eq:mod2}    e^{ -i t ( P - i W ) /h } A = B  e^{ - i t ( P - i W ) /h } A +
{\mathcal O} (  h^\infty  ) _{L^2 \to L^2} ,  \ \ 0 \leq t \leq T . \end{equation}
\end{prop}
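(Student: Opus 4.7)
The plan is to combine Lemma \ref{l:propa} with a direct microlocal cut-off construction, exploiting the rapid $L^2$-decay of the CAP-propagator on states whose wavefront sits in the damping region $\{w>0\}$. First I would introduce the compact flow-out
\[
\Omega_T \defeq \bigcup_{0\leq t\leq T} \varphi_t(\WFh(A))\,,
\]
which lies in $p^{-1}((-\delta,\delta))$ since $p$ is flow-invariant, but may well extend into $w^{-1}([\epsilon_0,\infty))$ because the flow is not required to preserve $w$. Fixing $\epsilon_1 \in (0,\epsilon_0)$ with $\WFh(A) \subset w^{-1}([0,\epsilon_1/2])$, I would pick $B \in \Psi^\comp(X)$ satisfying $\WFh(B) \subset p^{-1}((-\delta,\delta)) \cap w^{-1}([0,\epsilon_0))$ and $B \equiv I$ microlocally on an open neighborhood $\cU$ of the compact set $\Omega_T \cap w^{-1}([0,\epsilon_1])$. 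Such a $B$ exists because the latter is a compact subset of the open set appearing in \eqref{eq:suppsi}.

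To verify \eqref{eq:mod2}, set $C \defeq I - B$ and apply Lemma \ref{l:propa} to obtain a smooth family $Q(t)$ with $\WFh(Q(t))$ contained in an arbitrarily small, uniform neighborhood of $\varphi_t(\WFh(A))$, so that $(I - Q(t))e^{-it(P-iW)/h} A = \cO(h^\infty)_{L^2\to L^2}$ uniformly in $t \in [0,T]$. The claim then reduces to showing $C Q(t)\,e^{-it(P-iW)/h} A = \cO(h^\infty)$ uniformly. By the construction of $B$, the portion of $\WFh(Q(t))$ sitting in $w^{-1}([0,\epsilon_1])$ lies inside $\cU$ where $C$ is microlocally zero, so that $\WFh(C Q(t)) \subset w^{-1}([\epsilon_1/2,\infty))$ for all $t$. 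The second identity of \eqref{eq:AQW} applied to $CQ(t)$ then yields $C Q(t)\,e^{-it(P-iW)/h} = \cO(h^\infty)_{L^2\to L^2}$ for each $t > 0$, with uniformity on $[t_0,T]$ for any fixed $t_0>0$: inspecting its proof, the energy estimate $\tfrac{h}{2}\partial_t\|v(t)\|^2 \leq -c_0\|v(t)\|^2 + \cO(h^\infty)$ produces an $e^{-c_0 t/h}$ decay factor. For the complementary small-$t$ range $[0,t_0]$, continuity of the flow gives $\varphi_t(\WFh(A)) \subset \cU$, whence $\WFh(C) \cap \WFh(Q(t)) = \emptyset$ and $C Q(t) = \cO(h^\infty)$ as an operator.

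The main obstacle is threading the uniform $\cO(h^\infty)$ bound across the whole interval $[0,T]$: \eqref{eq:AQW} is stated at fixed $t>0$, so one must splice together the energy-decay estimate, which governs times bounded below, with the microlocal inclusion $\varphi_t(\WFh(A)) \subset \cU$, which governs small times, ensuring that the two ranges overlap. Compactness of $[0,T]$ and continuous dependence of $Q(t)$, $\varphi_t(\WFh(A))$, and the energy-estimate constants then close the argument.
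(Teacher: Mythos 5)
Your proposal is correct and follows essentially the same strategy as the paper: choose $B$ microlocally equal to $I$ on a neighborhood of the flow-out of $\WFh(A)$ inside the damping-free region, decompose $I-B = (I-B)Q(t) + (I-B)(I-Q(t))$ with the $Q(t)$ from Lemma~\ref{l:propa}, and invoke \eqref{eq:AQA} together with \eqref{eq:AQW}. Your explicit splicing of the $t$-interval into $[0,t_0]$ (where $CQ(t)$ vanishes to infinite order by a pure wavefront-set disjointness) and $[t_0,T]$ (where the exponential energy decay in the proof of \eqref{eq:AQW} gives uniformity) makes explicit a point the paper handles implicitly through its choice of the constants $\epsilon_0/3$, $\epsilon_0/2$, but both arguments rest on the same decomposition and the same lemmas.
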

\begin{proof}
Using again the operator $ Q ( t ) $ constructed in the proof of Lemma \ref{l:propa}, we
take a compact set $L$ containing $ \WFh ( Q ( t ) ) $ for all $ 0 \leq t \leq
T $. By taking $ \WFh ( Q ( 0 ) ) \subset p^{-1} ( ( - \delta, \delta
) ) $ (which is possible due the assumptions on $ A $) we see that we
can assume $ L \subset p^{-1} ( ( - \delta, \delta ) )$. We can now choose
$ B \in \Psi^\comp ( X ) $ such that
\[    \WFh ( I - B ) \cap L \cap w^{-1} ( [0, \epsilon_1/3] ) =
\emptyset , \ \  \WFh ( B ) \subset p^{-1} ( (- \delta, \delta) ) \cap
w^{-1} ( [ 0, \epsilon_1/2 ) . \]
This implies that  $ ( I - B ) Q ( t ) = C ( t ) $, where $ \WFh ( C(
t )) \subset w^{-1} ( [ \epsilon_1/3, \infty ) ) $, and hence, by 
\eqref{eq:AQA} and \eqref{eq:AQW}, 
\[  ( I - B ) e^{ - i t  ( P - i W ) / h} A= \big( C ( t )  +  ( I - B ) ( I
  - Q (  t) \big) e^{ - i t  ( P - i W ) / h} A= {\mathcal O}
    ( h^\infty ) _{L^2 \to L^2}, \]
proving \eqref{eq:mod2}.
\end{proof}

Finally we  present a modification of \cite[Lemma A.1]{NZ3}. The modification lies in slightly different
assumptions
on $ P $ and $ W $, and the proof also corrects a mistake in the proof 
given in \cite{NZ3}. 
From now on we work under the extra assumption \eqref{eq:aux} on the
CAP. We remark that in \cite{NZ3} we only needed Lemma \ref{l:propa}
and hence the assumption \eqref{eq:aux} was not required.
\begin{prop}
\label{l:mod1}
Suppose that $ X $ is a compact manifold, $ P $ is a 
self-adjoint operator, 
 $ P \in \Psi^m ( X) $, $ W \in \Psi^k ( X ) $, $ W \geq 0
$, and that \eqref{eq:CAPe} and \eqref{eq:aux} hold. 
Then 
for any $ t $ independent of $ h$, 
for $ A \in \Psi^\comp ( X )  $ satisfying \eqref{eq:suppsi}, we may write
\[ e^{ i t P/h } e^{ - i t ( P - i W ) / h } A  = V_A ( t ) +
{\mathcal O} ( h^\infty ) _{ L^2 \to L^2} ,\]
where 
\begin{gather}
\label{eq:VAt} 
\begin{gathered}   
V_A ( t) \in \Psi_{\gamma}^{\rm{comp} } ( X ) , \ \ 
\WFh ( V_A ( t ) ) \subset \bigcap_{ 0 \leq s \leq t } ( \varphi_{-s}
(w^{-1} ( 0 )))   \cap \WFh ( A ) ,  \\ 
\sigma ( V_A ( t ) ) = \exp \left( - \frac 1 h \int_0^t \varphi_s^* W ds
\right)\sigma ( A ) \,.
\end{gathered}
\end{gather}
The class of operators $ \Psi_{\gamma}^{\rm{comp}}$ 
was introduced in \S \ref{12c}.
\end{prop}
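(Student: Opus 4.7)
The plan is to treat $U(t)\defeq e^{itP/h}e^{-it(P-iW)/h}$ as the solution of the Heisenberg-type evolution
\[ \partial_t U(t)=-\frac{1}{h}W(t)\,U(t),\qquad U(0)=I, \]
where $W(t)\defeq e^{itP/h}We^{-itP/h}$ is, by standard Egorov (applied to the bounded operator $W$ localized to $\WFh(A)$), a self-adjoint nonnegative pseudodifferential operator with principal symbol $w\circ\varphi_t$. I will construct $V_A(t)$ as an approximate solution of this equation with initial datum $A$, then conclude $U(t)A=V_A(t)+\cO(h^\infty)_{L^2\to L^2}$ by a Duhamel energy estimate.

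The natural Ansatz at the principal-symbol level is
\[ v_0(\rho;t)\defeq\exp\!\Big(-\tfrac{1}{h}W_t(\rho)\Big)\sigma(A)(\rho),\qquad W_t(\rho)\defeq\int_0^t w(\varphi_s(\rho))\,ds, \]
which solves the transport equation $\partial_t v_0+\tfrac{1}{h}(w\circ\varphi_t)v_0=0$. The heart of the argument is to show that $v_0$ lies in a good symbol class in spite of the factor $1/h$ in the exponent. Using assumption \eqref{eq:aux} and integrating along the flow over the compact time interval $[0,t]$ and the compact set $\varphi_{[0,t]}(\supp\sigma(A))$, one obtains $|\partial^\beta W_t(\rho)|\leq C_\beta W_t(\rho)^{1-\gamma}$ in the region $\{W_t\leq 1\}$. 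Differentiating $e^{-W_t/h}$ via Faà di Bruno produces linear combinations of terms
\[ h^{-k}\,(\partial^{\beta_1}W_t)\cdots(\partial^{\beta_k}W_t)\,e^{-W_t/h},\qquad \sum_i|\beta_i|=|\alpha|,\ k\leq|\alpha|. \]
Writing $y=W_t/h$, each such term is bounded by $h^{-k\gamma}\,y^{k(1-\gamma)}e^{-y}$, and since $\sup_{y\geq 0}y^{k(1-\gamma)}e^{-y}<\infty$, we conclude $|\partial^\alpha v_0|\leq C_\alpha h^{-\gamma|\alpha|}$. Because $\gamma<1/2$, this places $v_0$ in $S^{\rm comp}_\delta$ with $\delta=\gamma$.

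Next I iteratively construct corrections $v_j\in S^{\rm comp}_\delta$, $j\geq 1$, solving
\[ \partial_t v_j+\tfrac{1}{h}(w\circ\varphi_t)v_j=r_{j-1},\qquad v_j(0)=a_j, \]
where $r_{j-1}$ is the symbol residual produced at the previous step by the $\Psi_\delta$ calculus (whose composition formula is asymptotic in the small parameter $h^{1-2\delta}>0$, see \cite[\S 4.4.1]{e-z}), and $a_j$ is chosen to match the full symbol expansion of $A$ at $t=0$. The explicit integrating factor $e^{-W_t/h}$ and the same derivative estimate as for $v_0$ show $v_j\in S^{\rm comp}_\delta$. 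Borel summation yields $v(t)\sim\sum_{j}h^j v_j(t)\in S^{\rm comp}_\delta$, and $V_A(t)\defeq\Op(v(t))$ satisfies
\[ \partial_t V_A(t)+\tfrac{1}{h}W(t)V_A(t)=R_\infty(t),\qquad V_A(0)=A+\cO(h^\infty), \]
with $R_\infty(t)=\cO(h^\infty)_{L^2\to L^2}$.

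To conclude, $R(t)\defeq U(t)A-V_A(t)$ satisfies $\partial_t R=-\tfrac{1}{h}W(t)R-R_\infty(t)$ with $R(0)=\cO(h^\infty)$. Since $W(t)=W(t)^*\geq 0$, the energy identity
\[ \frac{d}{dt}\|R(t)\|^2=-\frac{2}{h}\langle W(t)R(t),R(t)\rangle-2\Re\langle R_\infty(t),R(t)\rangle\leq 2\|R_\infty(t)\|\,\|R(t)\| \]
combined with Gr\"onwall gives $\|R(t)\|_{L^2}=\cO(h^\infty)$ uniformly on $[0,T]$. The wavefront statement and the principal-symbol formula in \eqref{eq:VAt} are read off from $v_0$: outside $\bigcap_{0\leq s\leq t}\varphi_{-s}(w^{-1}(0))\cap\WFh(A)$ either $\sigma(A)$ vanishes or $W_t>0$, and in the latter case $e^{-W_t/h}$ together with all derivatives is $\cO(h^\infty)$ by the same $y^{k(1-\gamma)}e^{-y}$ bound combined with an extra factor $e^{-W_t/2h}=\cO(h^\infty)$. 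The main obstacle is the symbol estimate placing $v_0$ (and the $v_j$) in $S_\delta$; once that hinges on the subtle interplay of \eqref{eq:aux} with the exponential weight, the remaining transport and energy arguments are standard.
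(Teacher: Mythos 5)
Your approach is essentially the same as the paper's: reduce to the damped evolution $h\partial_t U + W(t)\,U = 0$ with $W(t)=e^{itP/h}We^{-itP/h}$, build a $\Psi_\delta$-parametrix whose principal symbol is $e^{-W_t/h}\,\sigma(A)$, and close with Duhamel/Gr\"onwall. The key technical point — that \eqref{eq:aux}, combined with Fa\`a di Bruno and $\sup_{y\ge 0}y^{k(1-\gamma)}e^{-y}<\infty$, places $e^{-W_t/h}$ in $S_\gamma$ — is exactly the heart of the paper's Lemma~\ref{l:iw}, and you identify it correctly; the paper is more careful about the microlocalization (via Lemma~\ref{l:propa} and the auxiliary cutoff $B$), about the H\"older/Jensen step needed to pass \eqref{eq:aux} from $w$ to $W_t=\int_0^t w\circ\varphi_s\,ds$, and about tracking the $h^{-2\delta j}$ growth of the correction terms (see \eqref{eq:iw3}), but these are refinements of the argument you sketch rather than genuine gaps.
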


The proof is based on the following lemma inspired by the 
pseudodifferential approach to constructing parametrices for
parabolic equations presented in \cite{Iw}.
\begin{lem}
\label{l:iw}
Suppose that $ t \mapsto p(  t, z , h) $,  $ p ( t, \bullet, h ) 
\in \CIc ( \RR^{2n}; \RR ) $, 
is a family of functions satisfying 
\begin{equation}
\label{eq:ptx} 
\begin{gathered}  \partial_t^k \partial_z^\alpha p ( t, z , h ) = \mathcal O _{ k,
  \alpha} ( 1 ) , \ \ \ p \geq -C h  , \ \ \ 0 < h < h_0 , \\  
|  \partial_z^\alpha p ( t, z, h ) | = {\mathcal
    O}_\alpha  ( p^{ 1 -  \delta } ) , \ \ 0 < \delta < \textstyle{\frac12} . 
\end{gathered} \end{equation}
Then, for $ 0 \leq s \leq t $ there exists $ E ( t, s ) \in
\Psi_{\delta} ( \RR^n) $ such that
\[  ( h \partial_t + p^w ( t , x , h D_x , h ) ) E ( t , s ) = 0,  \ \ t
\geq s \geq 0  , \ \ E ( s, s ) =  \Id .\]
Moreover, $ E ( t, s ) = e^w  ( t , s, x, h D_x , h ) $ where $ e(t,s) \in
S_{\delta} ( \RR^{2n} ) $
has an explicit expansion given in \eqref{eq:iw1} below.
\end{lem}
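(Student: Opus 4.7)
The plan is to construct $E(t,s)$ by a WKB-type expansion in the Weyl calculus, starting from the natural Ansatz
\[ e_0(t,s,z,h) \defeq \exp\Big(-\tfrac{1}{h}\int_s^t p(\tau,z,h)\,d\tau\Big) , \]
and setting $F \defeq \int_s^t p\,d\tau$. The lower bound $p\geq -Ch$ gives $F\geq -Ch(t-s)$, so $e_0$ is uniformly bounded. The first task is to verify that $e_0\in S_\delta(\RR^{2n})$. Differentiating $\alpha$ times produces a chain-rule expression
\[ \partial_z^\alpha e_0 = e_0 \cdot P_\alpha\bigl(h^{-1}\partial_z^{\beta_1}F,\ldots,h^{-1}\partial_z^{\beta_k}F\bigr) , \]
with $P_\alpha$ a universal polynomial and $|\beta_j|\leq|\alpha|$. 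The hypothesis $|\partial_z^\beta p|\leq C_\beta p^{1-\delta}$ transfers to $F$ (H\"older's inequality gives $\int_s^t p^{1-\delta}d\tau\leq (t-s)^\delta F^{1-\delta}$ on $\{p\geq 0\}$; the remainder from $\{p\leq 0\}$ contributes an $\cO(h)$ term absorbable after the harmless shift $p\mapsto p+Ch$), yielding $|\partial_z^\beta F|\leq C_\beta F^{1-\delta}$. The elementary inequality $x^{N(1-\delta)}e^{-x}\leq C_{N,\delta}$ for $x\geq 0$ then converts each factor $|\partial_z^\beta F|/h$ paired with $e_0$ into a loss of at most $h^{-\delta}$, giving $|\partial_z^\alpha e_0|\leq C_\alpha h^{-|\alpha|\delta}$.

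Next I would compute the symbol error of $e_0^w$ via the Moyal product. Since $h\partial_t e_0 = -pe_0$, one has $(h\partial_t + p^w)e_0^w = (p\#e_0 - pe_0)^w$, and the leading Moyal correction $\frac{h}{2i}\{p,e_0\}=-\frac{1}{2i}\{p,F\}e_0$ lies in $h^{1-2\delta}S_\delta$ by the same exponential trick, with higher-order corrections successively in $h^{k(1-2\delta)}S_\delta$. Iteratively I would construct correction symbols $e_j\in S_\delta$, $j\geq 1$, solving transport equations $h\partial_t e_j + p\,e_j = -r_j$, $e_j(s,s,\bullet)=0$, through the explicit formula
\[ e_j(t,s,z,h) = -h^{-1}\int_s^t \exp\Bigl(-h^{-1}\textstyle\int_\tau^t p(\sigma,z,h)\,d\sigma\Bigr)\, r_j(\tau,z,h)\,d\tau , \]
which lies in $S_\delta$ by the same mechanism. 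Borel resummation of $e\sim e_0+\sum_{j\geq 1} h^{j(1-2\delta)}e_j$ (valid since $1-2\delta>0$) produces the genuine symbol $e(t,s)\in S_\delta$ with $e(s,s)\equiv 1$; this is the explicit expansion referenced in the statement. The corresponding Weyl operator satisfies $(h\partial_t + p^w)e^w = \cO(h^\infty)_{L^2\to L^2}$.

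Finally, to upgrade $e^w$ to an exact solution I would invoke a Duhamel/uniqueness argument. The $L^2$ evolution $U(t,s)$ generated by $-p^w/h$ exists and is bounded by $e^{C(t-s)}$ on $L^2$, because $p\geq -Ch$ together with the sharp G\aa rding inequality in the $S_\delta$ calculus gives $\Re(p^w)\geq -C'h$. Setting $E(t,s)\defeq U(t,s)$ and writing
\[ U(t,s) - e^w(t,s) = -h^{-1}\int_s^t U(t,\sigma)\bigl(h\partial_\sigma + p^w\bigr)e^w(\sigma,s)\,d\sigma \]
shows $E(t,s) - e^w(t,s) = \cO(h^\infty)_{L^2\to L^2}$, hence $E\in\Psi_\delta$ with full symbol $e(t,s)$. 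The main obstacle will be the careful verification that $e_0$ and all the correction symbols $e_j$ lie uniformly in $S_\delta$: this hinges on cleanly extending the hypothesis \eqref{eq:ptx} from $p$ to its time integral $F$, and on handling the thin region $\{p<0\}$ (where $p^{1-\delta}$ is undefined) by the preliminary shift $p\to p+Ch$ before applying the elementary exponential bound.
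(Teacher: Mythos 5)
Your proposal is correct and takes essentially the same route as the paper's proof: the same Ansatz $e_0=\exp(-h^{-1}\int_s^t p)$, the same exponential-damping bound $x^N e^{-x}=\cO(1)$ to place $e_0$ and the iterates $e_j$ in $S_\delta$, the same Moyal/transport iteration, and the same Duhamel argument upgrading the parametrix to the exact propagator. Your H\"older-derived bound $|\partial_z^\beta F|\leq C_\beta(t-s)^\delta F^{1-\delta}$ for $F=\int_s^t p$ is a slightly tidier way of deploying the hypothesis than the paper's working estimate \eqref{eq:stand}, but the two are interchangeable and yield the same conclusions.
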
 
\begin{proof}
Replacing $ p $ by $ p + (C+1)h $, gives $ p \geq h $ and
$ p(t, \bullet,h )  \in (C+1)h + \CIc(\RR^{2n}_z ) $. The
multiplicative factor  $ e^{ ( C + 1 ) (t-s) } $ in the
evolution equation is irrelevant to our estimates.

For any $ N \geq 0 $ we try to approximate the symbol $e(t,s,x,\xi,h)$ by an expansion of the form
\be f_N ( t , s, z , h ) \defeq \sum_{ j=0}^N  h^j e_j ( t, s , z , h
) \,. \label{eq:fN-def}
\ee

The symbol of the operator $h\partial_t f_N^w + p^w f_N^w$ can be expanded using 
the standard notation $ a^w \circ b^w = ( a \# b )^w $ and the product formula (see for instance \cite[Theorem 4.12]{e-z}):
\begin{align}
& h \partial_t f_N ( t, s ) + \left[ p ( t ) \# f_N ( t, s) \right]    \nonumber \\
& =\sum_{ j=0}^N h^j \Big( h \partial_t e_j (t,s) 
+ \sum_{ k=0}^{N-j -1 } \frac 1 {
    k!} \left( \textstyle{\frac12}  i h  \omega ( D_z , D_w )
  \right)^k p (t, z) e_j (t,s, w ) |_{ z = w }  \nonumber 
+  h^{N-j} r_{N,j} 
\Big) 
\nonumber \\
& = \nonumber 
\sum_{ j=0}^N h^j \Big( ( h \partial_t + p ( t ))e_j ( t, s) + 
\sum_{ \ell=0}^{j -1 }  \frac 1 {
    (j-\ell)!} \left( \textstyle{\frac12}  i  \omega ( D_z , D_w )
  \right)^{j-\ell} p (t, z) e_\ell (t,s, w ) |_{ z = w }\Big) \\
& \ \ \ \ \ \ \ \ \ \ \ \ \ \ \  + h^{N} r_N ( t, s , z ) , \ \ \ \  \
\ \ \ \ \ \ \ \ \ r_N ( t, s, z ) \defeq  \sum_{ j=0}^{N-1} r_{N,j} (
t, s, z )  .\nonumber
\end{align}
The remainders satisfy the following bounds (see for instance \cite[(3.12)]{SZ10}):
\begin{equation}
\label{eq:312}
\begin{split}
& \sup_{z} | \partial_z^\alpha r_{N,j} ( t, s, z ) | \leq \\
& \ \ \ \ \ \ 
C_{\alpha, N , j}  \sum_{ \alpha_1 + \alpha_2 = \alpha } \sup_{ z, w } \sup_{ | \beta|
  \leq M , \beta \in \NN^{4n} } \left| ( h^{\frac12} \partial_{z,w})^\beta
( \sigma ( D_z, D_w) )^{N-j} \partial_z^{\alpha_1} p ( z
) \partial_w^{\alpha_2 } e_j ( w ) \right|. 
\end{split} \end{equation}
The standard strategy is now to iteratively construct the symbols $ e_j$ so that each term in the above
expansion vanishes. 
The term $j=0$ simply reads $(h\partial_t + p)e_0=0$. From the initial condition $e_0(s,s)\equiv 1$, it is solved by 
\begin{equation}
\label{eq:e0}  e_0 ( t , s , z ,  h ) = \exp \left( - \textstyle{\frac 1 h } \int_s^t p (
s' , z , h ) ds'  \right)\,.
\end{equation}
For $ j \geq 1 $, the symbol $e_j$ is obtained iteratively by solving
\begin{equation}
\label{eq:ejt} 
\begin{gathered}  e_j ( t, s , z ) \defeq  \frac 1 h \int_s^t e_0 ( t, s' , z )
q_j ( s', s , z ) ds', \ \ \ 
e_j ( t, s ,
  \bullet ) \in \CIc ( \RR^{2n} ) , 
\\  q_j ( t, s , z ) \defeq - \sum_{ \ell=0}^{j -1 } \frac 1 {
    (j-\ell)!} \left( \textstyle{\frac12}  i  \omega ( D_z , D_w )
  \right)^{j-\ell} p (t, z) e_\ell (t,s, w ) |_{ z = w }  \in \CIc (
  \RR^{2n}_z  ) \,.
\end{gathered}
\end{equation}
This construction formally leads to an approximate solution:
\begin{equation}
\label{eq:fN} h \partial_t f_N ( t, s , z ) + \left[ p ( t, \bullet  ) \# f_N (
 t, s, \bullet ) \right] ( z )  = h^N r_N ( t,s, z ) . 
\end{equation}
To make the approximation effective, we now need to check that the sum \eqref{eq:fN-def} is indeed an expansion in power of $h$. We thus need to estimate the $ e_j$'s and thereby the remainders $r_{N,j} $'s. 

We will prove the following estimate by induction:
\begin{equation}
\label{eq:iw3}
| \partial_z^\alpha e_j ( t, s , z ) | \leq C_{\alpha, j  }   h^{- 2
  \delta j - \delta   |\alpha|  } \left( 1 + 
\left( \textstyle{\frac 1 h} \int_s^t p ( s', z )    ds' \right)^{  2 j  + | \alpha| } \right)  e_0 ( t, s , z ) . 
\end{equation}
For that we first note that, as $ p \geq h $,  and 
$| \partial^\alpha
p | \leq C_\alpha p^{1-\delta}  $, we have
\begin{equation}
\label{eq:stand}   | \partial^\alpha p | \leq  C_\alpha h^{ - \delta } p.
\end{equation}
Consequently, for $ j = 0 $ we have
\be\begin{split} 
\label{eq:e0-estimate}
 |\partial_z^\alpha e_0 ( t, s , z ) | & \leq 
 \sum_{ \sum_{\ell=1}^k \alpha_\ell = \alpha}
 \prod_{\ell=1}^k \left( \textstyle{\frac 1 h }
   \int_s^t |\partial^{\alpha_\ell} p ( s' , z )|  ds'  \right) e_0 ( t,
 s, z ) 
 \\
& \leq C_\alpha 
\sum_{ \sum_{\ell=1}^k  \alpha_\ell = \alpha} 
\prod_{\ell =1}^k  \left( h^{ -\delta }
\textstyle{\frac 1 h }    \int_s^t  p ( s' , z )   ds'  \right) 
e_0 ( t,  s, z ) 
 \\
& \leq C_\alpha'  h^{ - \delta | \alpha | }  \left( 1 + 
\left(\textstyle{\frac 1 h }    \int_s^t  p ( s' , z )   ds'
\right)^{|\alpha| }\right) e_0 ( t, s , z ) , 
\end{split}
\ee
Here we used the fact that $ k \leq | \alpha | $ and that
\[   A^{k} \leq c_\alpha ( 1 + A^{|\alpha | } ) , \ \ \ A =
\textstyle{\frac 1 h }    \int_s^t  p ( s' , z )   ds' \geq 0 .\]
This gives \eqref{eq:iw3} for $ j = 0 $.

To proceed with the induction we put
\[   a_{j , \alpha} ( t, s , z ) \defeq { 
 \partial^\alpha_z e_j ( t, s , z ) }   / { e_0 ( t, s, z )} , \ \ \ 
b_{ j , \alpha } ( t,s , z ) \defeq { 
 \partial^\alpha_z q_j ( t, s , z ) }   / { e_0 ( t, s, z )} , \]
noting that, for some coefficients, $ c_\bullet $, 
\begin{equation}
\label{eq:bjaj}
\begin{split}
& b_{j, \alpha }( t, s, z )  = \sum_{ \ell=0}^{j-1} 
  \sum_{    \beta_1 + \beta_2 = \alpha } 
c_{ \beta_1, \beta_2, \ell, j}   \omega ( D_z, D_w )
^{ j - \ell }  \partial_z^{\beta_1 } p ( t, z ) a_{ \ell , \beta_2 } ( t ,s,w )
  |_{z = w } , \\
& a_{ j, \alpha} ( t ,s, z ) = \frac 1 h \sum_{ \beta_1 + \beta_2 = \alpha} c_{  \beta_1, \beta_2, j} \int_s^t a_{ 0 , \beta_1 } ( t, s' ,z ) b_{ j,  \beta_2 } ( s', s) d s' ,
\end{split}
\end{equation}
  where the last equality follows from $ e_0 ( t, s' ,z ) e_0 ( s',
  s, z ) = e_0( t, s, z ) $, $ s \leq s' \leq t $.

Our aim is to show
\begin{equation}
\label{eq:ind1} | b_{j, \alpha} ( t, s , z) | \leq 
C_{\alpha, j  }   h^{- 2 \delta  j -\delta |\alpha| }  p ( t ,  z ) 
 \left( 1 + 
\left( \textstyle{\frac 1 h} \int_s^t p ( s', z )
   ds' \right)^{2 j + | \alpha| -1 } \right)  ,
\end{equation}
and
\begin{equation}
\label{eq:ind2} | a_{j, \alpha} ( t, s , z) | \leq 
C'_{\alpha, j  }   h^{-2  \delta  j -  \delta |\alpha| } \left( 1 + 
\left( \textstyle{\frac 1 h} \int_s^t p ( s', z )
   ds' \right)^{ 2j + | \alpha| } \right) ,
\end{equation}
assuming the statements are true for $ j $ replaced by smaller
values. 

We note that the case of $ j =0 $ has been shown in \eqref{eq:e0-estimate}, and since $b_{0, \alpha} \equiv 0$.

The first estimate \eqref{eq:ind1} follows immediately from the
inductive hypothesis on $ a_{\ell, \alpha} $, $ 0 \leq \ell \leq j-1
$ and the estimates on $ p $ in \eqref{eq:stand}.
The second estimate \eqref{eq:ind2} follows from \eqref{eq:e0-estimate}, \eqref{eq:ind1} 
and the obvious fact that $ \int_{s_1}^{s_2} p (s') ds' \leq \int_s^t p
(s') ds' $, $ s\leq s_1 \leq s_2 \leq t$. 

We note that \eqref{eq:iw3} and the definition of $ e_0$ given in 
\eqref{eq:e0}  imply that 
$$ \partial^\alpha_z e_j (t,s, z ) = {\mathcal O} ( h^{ - \delta |\alpha| - 2 \delta j })\,,\quad j\geq 0\,.
$$
so from \eqref{eq:fN-def} we see that the symbol $f_N(t,s)\in S_{\delta}(\IR^{2n})$.

The bounds  \eqref{eq:312} then show that the remainders 
satisfy
 \[ | \partial^\alpha r_N (t, s , z ) | \leq C_{N, \alpha}  
h^{ - 2 \delta N - \delta | \alpha |   } . \]
Going back to \eqref{eq:fN} we get the expression
\begin{equation}
\label{eq:Duh} E( t, s ) = f_N^w ( t,s , x , h D_x ) + h^{N-1}  \int_s^t E (
t, s' ) r_N^w  ( s', s, x , h D_x ) . 
\end{equation}
(We note that, since $ p^w ( t, x,h D_x ) \geq - Ch $ by the sharp G{\aa}rding
inequality \cite[Theorem 4.32]{e-z}, and since $ p^w  $ is bounded on $
L^2$, the operator $ E ( t, s) $ exists and is bounded on $ L^2 $, uniformly in $
h$.) Since operators in $ \Psi_{\delta} $ are uniformly bounded on $ L^2$
\cite[Theorem 4.23]{e-z}, it follows that
\[  E ( t , s ) = f^w_N ( t, s ,x , h D_x ) +  {\mathcal O}( h^{( 1 -  2 \delta ) N } ) _{L^2 \to L^2}
. \]  
To show that $ E ( t , s ) - e^w_0 ( s, t, x , h D_x) 
\in \Psi^\comp_{\delta} ( \RR^n ) $, we use \eqref{eq:Duh} and Beals's
lemma in the form given in \cite[Lemma 3.5, $ \tilde h =1 $]{SZ10}:
$ \ell_j $ are linear functions on $ \RR^{2n} $, $ \ell_j^w = \ell_j^w ( x, h D) $, then 
\[\begin{split} &    \ad_{\ell_1^w } \cdots 
\ad_{ \ell_J^w } E ( s, t ) = \\
& \ \ \   \ad_{\ell_1^w } \cdots \ad_{ \ell_J^w }  f_N^w ( s, t, x , h D_x) + h^{N-1} 
\int_s^t \ad_{\ell_1^w } \cdots \ad_{ \ell_J^w }  
\left( E ( s, s' ) r^w_N ( s', s, x, h D_x ) \right) ds' \\
&\ \ \ = {\mathcal O} ( h^{ ( 1 - 2 \delta)  J} ) _{ L^2 \to L^2 } + 
{\mathcal O} ( h^{ ( 1 - 2 \delta ) N } ) _{ L^2 \to L^2 }
= \mathcal O ( h^{( 1 - 2 \delta ) J } ) _{ L^2 \to L^2 } , \end{split} \]
if $ N $ is large enough. Here we used the fact that $ f_N, r_N  \in S_{\delta} $ and that 
$   \ad_{\ell_1^w } \cdots \ad_{ \ell_J^w }  
E ( s, t )= {\mathcal O} (1 ) _{L^2 \to L^2 }$, which follows from considering the evolutions equation for
the operator on the left hand side.

In conclusion we have shown that $ E ( t, s ) = e^w ( t, s , x , h
D_x) $, where $ e \in S_{\delta} ( \RR^n ) $ admits the expansion
\begin{equation}
\label{eq:iw1}
\begin{gathered}
e( t, s , z , h ) \sim \sum_{ j \geq 0 } h^j e_j ( t , s, z, h ) , \ \
\quad
e_j(t,s) \in h^{- 2 \delta  j} S^\comp_{\delta} (\IR^{2n}) ,  \ \ j \geq 1, 
\end{gathered}
\end{equation}
with $ e_0 $ given by \eqref{eq:e0}. 
\end{proof}

\begin{proof}[Proof of Proposition \ref{l:mod1}]
We first observe that Lemma \ref{l:propa} (applied both to propagators
for  $ P - i W $ and for $ P $) shows that
for  $ B \in \Psi^\comp ( X ) $ satisfying $ \WFh ( I - B ) \cap \WFh
( A ) = \emptyset $, 
\[    e^{ i t P/h } e^{ - i t ( P - i W ) / h } A = B   e^{ i t P/h }
e^{ - i t ( P - i W ) / h } A + {\mathcal O} (h^\infty
) _{L^2 \to L^2}. \]
We can choose  $ B = B^* $.
Since
\[ \begin{split}  h \partial_t \left( B   e^{ i t P/h }
e^{ - i t ( P - i W ) / h } A \right) & = - B e^{it P/h } W e^{ -i t P/h
}  e^{ i t P/h }  e^{ - i t ( P - i W ) / h } A \\ & = 
- \left( B e^{it P/h } W e^{ -i t P/h } B  \right) \left(  B e^{ i t P/h }  e^{
  - i t ( P - i W ) / h } A\right) +  {\mathcal O}
(h^\infty ) _{L^2 \to L^2}, \end{split} \]
it follows that 
\begin{equation}
\label{eq:ABe}
B\,e^{ i t P/h } e^{ - i t ( P - i W ) / h } A = V^{B} ( t )  +
 {\mathcal O} (h^\infty ) _{L^2 \to L^2}, 
\end{equation}
where
\begin{equation}
h \partial_t V^B ( t) = -  W_B ( t) V^B ( t ) , \quad  W_B ( t ) \defeq B
e^{i t P/h } W e^{- it P/h} B . 
\end{equation}
We note that $ W_B ( t ) \in \Psi^\comp ( X)$, $ \WFh ( W_B ( t ) )
\subset \WFh ( B ) $, and that $W_B(t)\geq 0$. 
Hence 
$ V^B ( t ) = {\mathcal O}( 1 ) _{L^2 \to L^2}  $ and \eqref{eq:ABe} 
follows from Duhamel's formula. 

By decomposing $ A $ as a sum of operators, we can assume that $ \WFh
( A ) $ is supported in a neighbourhood of a fiber of a point in $
X$. 
Hence, by choosing $ B$ with a sufficiently small wave front set,
we only need to prove that $ V^B ( t ) \in \Psi_{\delta} $ for $ X =
\RR^n $;  that follows from Lemma \ref{l:iw}, since the
  symbol of $W_B(t)$ satisfies the assumptions \eqref{eq:ptx}. 
The second and  third 
properties
in \eqref{eq:VAt} follows from \eqref{eq:AQW} and 
\eqref{eq:iw1}.
\end{proof}

\medskip

\smallsection{Acknowledgements}
We would like to thank Kiril Datchev,
Semyon Dyatlov, Fr\'ed\'eric Faure and Andr\'as Vasy for helpful discussions of the
material in \S\S \ref{recap}, \ref{decay} and the Appendix, and of 
connections with previous works. We are particularly grateful to 
the anonymous referee for the careful reading for the manuscript
and for many useful suggestions.
The partial supports by the Agence Nationale de la Recherche under
grant ANR-09-JCJC-0099-01 (SN) 
and by the National Science Foundation under the
grant DMS-1201417 (MZ), are also acknowledged.

\end{document}